\numberwithin{equation}{section}
\title{A Mathematical Theory of Integer Quantum Hall Effect in Photonics\thanks{This work was partially supported by Hong Kong RGC grant GRF 16304621, 16307024 and NSFC grant 12371425.}}
\author{Jiayu Qiu\thanks{Department of Mathematics, 
 HKUST,  Clear Water Bay, Kowloon, Hong Kong S.A.R., China.
    \tt jqiuaj@connect.ust.hk.}\,\,\,, Hai Zhang\thanks{Department of Mathematics, 
 HKUST,  Clear Water Bay, Kowloon, Hong Kong S.A.R., China.
    \tt haizhang@ust.hk.}}
\date{\today}
\newtheorem{theorem}{Theorem}[section]
\newtheorem{lemma}[theorem]{Lemma}
\newtheorem{definition}[theorem]{Definition}
\newtheorem{corollary}[theorem]{Corollary}
\newtheorem{remark}[theorem]{Remark}
\newtheorem{proposition}[theorem]{Proposition}
\newtheorem{assumption}[theorem]{Assumption}
\begin{document}

\maketitle
\begin{abstract}
This paper investigates interface modes in a square lattice of photonic crystal composed of gyromagnetic particles with $C_{4v}$ point group symmetry. The study shows that Dirac or linear degenerate points cannot occur at the three high-symmetry points in the Brillouin zone where two Bloch bands touch. Instead, a touch point at the M-point has a quadratic degeneracy in the generic case. It
is further proved that when a magnetic field is applied to the two sides of an interface in opposite directions, two interface modes supported along that interface can bifurcate from the quadratic degenerate point. 
These results provide a mathematical foundation for the first experimental realization of the integer quantum Hall effect in the context of photonics.

\medskip
\textbf{Key words}: Integer quantum Hall effect,  square lattice, interface modes, quadratic degenerate point, topological photonics  

\end{abstract}

\section{Introduction}

\subsection{Background}
Topological photonics and phononics are rapidly growing fields that apply the principles of topological phases of matter, originally discovered in solid-state physics for electronic systems, to the contexts of optics and phononics. The study of topological phases of matter in condensed-matter systems began with the discovery of the integer quantum Hall effect (IQHE) in 1980 \cite{klitzing80IQHE,Klitzing86QHE}. Thouless et al. discovered in 1982 that the integer in the quantized Hall conductance is related to a topological invariant of the system, i.e. the Chern number of Bloch bundles \cite{thouless82QHC}. Haldane and Raghu first proposed an analog of the IQHE in photonics in their seminal 2008 work \cite{Haldane08IQHE_photonic}, and the first experimental realization was achieved one year later by Wang et al. \cite{wang09observation}. 
Since then, there have been great activities in the study of a variety of photonic and phononic systems realizing nontrivial topological properties, leading to the emerging research field of topological photonics and phononics \cite{lu2014topological,Khanikaev2017twod_topological_photonics}.

An important feature of topological insulators or topological phases is the existence of gapless edge/interface modes that are spatially localized at the boundary of the bulk insulator or the interface separating it from another bulk insulator that has a different topological phase. This is known as the bulk-edge/interface correspondence \cite{Rebbi76soliton,hatsugai93BEC_1,Qi06BEC}. A rigorous justification of the bulk-edge/interface correspondence is one of the most interesting and challenging mathematical problems in the study of topological materials. Since the first proof for the Harper model by Hatsugai in 1993 \cite{hatsugai93BEC_1,hatsugai93BEC_2}, much progress has been made for discrete models in electronic systems by the K-theory approach \cite{KELLENDONK02QHE,Elbau2002EqualityOB,bourne2017k-theory,braverman2019spectral,kubota2017controlled} and functional analysis approach \cite{Elgart2005equality,Graf2012BulkEdgeCF,Avila2013topological_invariants}. For continuous models where partial differential equations are involved, we refer to \cite{kellendonk2004quantization,taarabt2014equality,Combes2005edge_impurity,Bourne2016ChernNL} for electronic systems, and \cite{bal22topo_invariant,bal2023topological_charge,quinn2024approximations} for Dirac systems. There are also a few concerned with the photonic/phononic systems; see \cite{thiang23bec} for a result in one dimension and \cite{Drouot2021microlocal} in two dimensions. We note that the smoothness assumption of the coefficients of the concerned partial differential operator in \cite{Drouot2021microlocal} may hinder a direct application of its result to photonics structures where the coefficients are typically piecewise constant.
With rapid experiment developments, an applicable mathematical theory for photonic systems that illustrates the connection between the existence of interface modes and topological phases of bulk materials is highly desirable. 


In addition to the bulk-interface correspondence principle, the existence of interface modes can be proved through the bifurcation of Dirac points. A Dirac point is a special degenerate point in the spectral bands of a periodic operator, where two bands intersect in a linear or conic manner \cite{Ammari-F-H-L-Y-2020, fefferman12, Li23, berkolaiko2018symmetry,ammari2020topological,Ammari2020high_freq}. Dirac points occur in electronic and photonic/phononic systems with a honeycomb structure. Typically, a topological phase transition occurs near a Dirac point, and an in-gap eigenvalue can be generated by applying proper perturbations to the periodic operator. The bifurcation of eigenvalues from Dirac points was rigorously analyzed for one-dimensional Schr\"{o}dinger operators \cite{fefferman2017topologically}, two-dimensional Schr\"{o}dinger operators \cite{fefferman2016honeycomb_edge}, and two-dimensional elliptic operators with smooth coefficients \cite{lee2019elliptic}, all using domain wall models. It is worth mentioning that Raghu and Haldane's original proposal on IQHE in photonics can be interpreted as an example of Dirac point bifurcation. The need for a Dirac point led them to focus on TE modes in a triangular lattice of gyroelectric particles. However, such a proposal was not adopted in the experiment \cite{wang09observation}, since the gyroelectric effect that can break the time-reversal symmetry in realistic materials is too weak, resulting in a too-small band gap that is not robust against disorder. Instead, they used a 2D square lattice of gyromagnetic particles whose band diagram contains a quadratic degenerate M-point. In that case, the bifurcation of eigenvalues from a quadratic degenerate point cannot be treated as in the linear case of Dirac points; see \cite{weinstein18quadratic, chaban2024instability} for discussion on those two types of degenerate points. This suggests the need for a new framework to study interface modes that bifurcate from quadratic degenerate points, as established in this paper.

This paper develops a mathematical theory to explain the first experimental realization of IQHE in photonics \cite{wang09observation}. Specifically, we rigorously prove the existence of interface modes in a 2D square lattice of gyromagnetic particles under an appropriate perturbation. Our primary contribution is a mathematical framework that reveals the mechanism of creating interface modes from a quadratic degenerate point. We demonstrate that a topological phase transition occurs at this degenerate point, characterized by a change in parity of the Bloch modes and their momentum derivatives when the direction of the applied magnetic field is reversed. This phase transition leads to the emergence of interface modes that bifurcate from the spectral degenerate point. We note that the original argument for the existence of interface modes in \cite{wang09observation} is based on the bulk-edge correspondence principle, whose applicability to realistic continuous photonic systems remains unclear. Furthermore, the two-scale asymptotic expansion approach developed in \cite{fefferman2017topologically, fefferman2016honeycomb_edge, drouot2020edge, lee2019elliptic} for the domain-wall models is not applied to our setting, where an interface sharply separates two differential periodic structures. 

In contrast to the case of Dirac points, studying interface modes that bifurcate from a quadratic degenerate point requires a more dedicated approach, including higher-order perturbation analysis and treatment of momentum derivatives of Bloch modes at the quadratic degenerate point, as established in this paper. Additionally, we highlight that the Chern number argument provides limited information for structures with time-reversal symmetry due to its triviality. Our method, however, can be applied to systems with time-reversal symmetry and used to study the existence of interface modes, as is done in \cite{qiu2023mathematical,li2024interface}.

\subsection{Model description and main results}
We start with the following periodic elliptic operator in $L^2(\mathbf{R}^2)$
\begin{equation*}
\mathcal{L}^{A}:D(\mathcal{L}^{A})=\{u\in H^1(\mathbf{R}^2):\, -\nabla\cdot A\nabla u\in {L^2(\mathbf{R}^2)}\}\subset L^2(\mathbf{R}^2)\to L^2(\mathbf{R}^2),\quad
u\mapsto -\nabla\cdot A\nabla u,
\end{equation*}
where the coefficient matrix $A=A(\bm{x})$ satisfies the following assumption:  
\begin{assumption} \label{assum_coefficient_matrix}
$A(\bm{x})=(1+a(\bm{x}))\cdot I_{2\times 2}$, where $a(\bm{x})=c\cdot \sum_{n_1,n_2\in\mathbf{Z}}\chi_{D_{n_1,n_2}}(\bm{x})$. Here $c>0$ is a positive constant, $\chi$ is the indicator function and $D_{n_1,n_2}:=D+n_1\bm{e}_1+n_2\bm{e}_2$ are inclusions. We assume that $D\subset (0,1)\times (0,1)$ is simply connected and the square lattice structure is symmetric under the $C_{4v}$ point group in the sense that
\begin{equation*}
\bm{x}\in \cup_{n_1,n_2\in\mathbf{Z}}D_{n_1,n_2}\Longrightarrow R\bm{x},M_2\bm{x}\in \cup_{n_1,n_2\in\mathbf{Z}}D_{n_1,n_2},
\end{equation*}
where $R=
    \begin{pmatrix}
    0 & -1 \\ 1 & 0
    \end{pmatrix}
    $ ($\pi/2-$rotation) and $M_{2}=
    \begin{pmatrix}
    1 & 0 \\ 0 & -1
    \end{pmatrix}
$ (reflection).
\end{assumption}
\noindent The operator $\mathcal{L}^A$ models the propagation of time-harmonic TE polarized electromagnetic waves in a 2D square lattice of gyromagnetic particles without external magnetic field (see Figure \ref{fig_domain} where the inclusions are disks, as in the original setting of \cite{wang09observation}). It can be equivalently defined through the following sesquilinear form on $L^2(\mathbf{R}^2)$
\begin{equation*} \label{eq_A_form}
\mathfrak{a}^{A}(u,v)=
\int_{\mathbf{R}^3}\big(A(\bm{x})\nabla u(\bm{x})\big)\cdot \overline{\nabla v(\bm{x})}d\bm{x},\quad
u,v\in H^1(\mathbf{R}^2)\subset L^2(\mathbf{R}^2).
\end{equation*}

\begin{figure}
\centering
\begin{tikzpicture}[scale=0.45]
\tikzset{->-/.style={decoration={markings,mark=at position #1 with 
{\arrow{latex}}},postaction={decorate}}};
\draw[dashed] (-5,0)--(5,0);
\draw[dashed] (-5,2)--(5,2);
\draw[dashed] (-5,4)--(5,4);
\draw[dashed] (-5,-2)--(5,-2);
\draw[dashed] (-5,-4)--(5,-4);
\draw[dashed] (0,-5)--(0,5);
\draw[dashed] (2,-5)--(2,5);
\draw[dashed] (4,-5)--(4,5);
\draw[dashed] (-2,-5)--(-2,5);
\draw[dashed] (-4,-5)--(-4,5);
\draw[fill=black,opacity=1] (1,1) ellipse(0.5 and 0.5);
\draw[fill=black,opacity=1] (3,1) ellipse(0.5 and 0.5);
\draw[fill=black,opacity=1] (-1,1) ellipse(0.5 and 0.5);
\draw[fill=black,opacity=1] (-3,1) ellipse(0.5 and 0.5);
\draw[fill=black,opacity=1] (1,-1) ellipse(0.5 and 0.5);
\draw[fill=black,opacity=1] (3,-1) ellipse(0.5 and 0.5);
\draw[fill=black,opacity=1] (-1,-1) ellipse(0.5 and 0.5);
\draw[fill=black,opacity=1] (-3,-1) ellipse(0.5 and 0.5);
\draw[fill=black,opacity=1] (1,3) ellipse(0.5 and 0.5);
\draw[fill=black,opacity=1] (3,3) ellipse(0.5 and 0.5);
\draw[fill=black,opacity=1] (-1,3) ellipse(0.5 and 0.5);
\draw[fill=black,opacity=1] (-3,3) ellipse(0.5 and 0.5);
\draw[fill=black,opacity=1] (1,-3) ellipse(0.5 and 0.5);
\draw[fill=black,opacity=1] (3,-3) ellipse(0.5 and 0.5);
\draw[fill=black,opacity=1] (-1,-3) ellipse(0.5 and 0.5);
\draw[fill=black,opacity=1] (-3,-3) ellipse(0.5 and 0.5);
\node[above,scale=0.6] at (1,1.35) {$D_{0,0}$};
\node[above,scale=0.6] at (-1,1.35) {$D_{-1,0}$};
\node[below,scale=0.6] at (1,-1.3) {$D_{0,-1}$};
\node[below,scale=0.6] at (-1,-1.3) {$D_{-1,-1}$};
\draw[dashed,very thick,blue] (1,1)--(-1,1)--(-1,-1)--(1,-1)--(1,1);

\draw[dashed] (8,-2)--(12,-2);
\draw[dashed] (8,2)--(12,2);
\draw[dashed] (8,-2)--(8,2);
\draw[dashed] (12,-2)--(12,2);
\filldraw [black, draw = black] (12,2) -- (10.75,2) arc(180:270:1.25) -- cycle;
\filldraw [black, draw = black] (8,2) -- (8,0.75) arc(270:360:1.25) -- cycle;
\filldraw [black, draw = black] (8,-2) -- (9.25,-2) arc(0:90:1.25) -- cycle;
\filldraw [black, draw = black] (12,-2) -- (12,-0.75) arc(90:180:1.25) -- cycle;
\node[below,scale=0.8] at (10,-2.5) {$Y=(-\frac{1}{2},\frac{1}{2})\times (-\frac{1}{2},\frac{1}{2})$};
\end{tikzpicture}
\caption{Left: structure of the problem in the case the inclusions are taken as disks. Right: illustration of the unit cell.}
\label{fig_domain}
\end{figure}
\noindent Assumption \ref{assum_coefficient_matrix}
implies that $\mathcal{L}^A$ is invariant under translation, rotation, and reflection in the sense that
\begin{equation}
\label{eq_LA_symmetry}
[\mathcal{L}^A,\mathcal{T}_i]=[\mathcal{L}^A,\mathcal{R}]=
[\mathcal{L}^A,\mathcal{M}_2]=0
\quad \text{on $D(\mathcal{L}^A)$},
\end{equation}
where
\begin{equation*}
(\mathcal{T}_{i}u)(\bm{x}):=u(\bm{x}+\bm{e}_i) \enspace (i=1,2),\quad
(\mathcal{R}u)(\bm{x}):=u(R\bm{x}),\quad
(\mathcal{M}_2 u)(\bm{x}):=u(M_2\bm{x}).
\end{equation*}
The spectral theory of periodic operators implies that
$$
\sigma(\mathcal{L}^{A}) = \bigcup_{\bm{\kappa}\in Y^*}\sigma(\mathcal{L}^{A}(\bm{\kappa})),
$$ 
where $Y_*:=[0,2\pi]^2$ is the Brillouin zone, $\mathcal{L}^{A}(\bm{\kappa})$ is the restriction of $\mathcal{L}^{A}$ on the space
\begin{equation*} 
L^2_{\bm{\kappa}}(\mathbf{R}^2):=
\{u\in L^2_{loc}(\mathbf{R}^2):\, u(\bm{x}+n_1\bm{e}_1+n_2\bm{e}_2)=e^{i(n_1\kappa_1+n_2\kappa_2)}u(\bm{x})\}.
\end{equation*}
The following identities demonstrate the symmetry properties of $\mathcal{L}^{A}(\bm{\kappa})$:
\begin{equation} \label{eq_symm_momen_sapce_3}
\mathcal{L}^{A}(\bm{\kappa}+n_1\cdot 2\pi \bm{e}_1+n_2\cdot 2\pi \bm{e}_2)=\mathcal{L}^{A}(\bm{\kappa}),\quad 
n_1,n_2\in\mathbf{Z},
\end{equation}
\begin{equation} \label{eq_symm_momen_sapce_1}
\mathcal{R}\mathcal{L}^{A}(\bm{\kappa})\mathcal{R}^{-1}=\mathcal{L}^{A}(R^{-1}\bm{\kappa}),
\end{equation}
\begin{equation} \label{eq_symm_momen_sapce_2}
\mathcal{M}_2\mathcal{L}^{A}(\bm{\kappa})\mathcal{M}_{2}^{-1}=\mathcal{L}^{A}(M_2\bm{\kappa}).
\end{equation}
Let $\{\lambda_n(\bm{\kappa}),u_n(\bm{x};\bm{\kappa})\}_{n\geq 1}$ be the Floquet-Bloch eigenpair of $\mathcal{L}^{A}(\bm{\kappa})$. It's known that $\lambda_n(\bm{\kappa})$ and $u_n(\bm{x};\bm{\kappa})\}_{n\geq 1}$ are piecewisely smooth on $\bm{\kappa}\in Y^*$ \cite{kuchment2016overview}. Singular points may occur at degenerate points in the spectrum of $\sigma(\mathcal{L}^{A}(\bm{\kappa}))$. The most common degenerate points are linear, referred to as Dirac points, defined below.
\begin{definition}[Linear degenerate point or Dirac point in square lattice]
\label{def_linear_degenracy}
A point $(\bm{\kappa_*},\lambda_*)\in Y^*\times \mathbf{R}$ is called a Dirac point if there exist $n_*\in \mathbf{Z}$, $\alpha_*>0$ such that

(1)$\lambda_{n_*}(\bm{\kappa_*})=\lambda_{n_*+1}(\bm{\kappa_*})=\lambda_*$. 

(2)For $|\bm{\kappa}-\bm{\kappa}_*|\ll 1$, 
    \begin{equation} \label{eq_conical_dispersion}
    \begin{aligned}
    \lambda_{n_*}(\bm{\kappa})-\lambda_{n_*}(\bm{\kappa}_*)&=-\alpha_*|\bm{\kappa}-\bm{\kappa}_*|+\mathcal{O}(|\bm{\kappa}-\bm{\kappa}_*|^2), \\
    \lambda_{n_*+1}(\bm{\kappa})-\lambda_{n_*}(\bm{\kappa}_*)&=\alpha_*|\bm{\kappa}-\bm{\kappa}_*|+\mathcal{O}(|\bm{\kappa}-\bm{\kappa}_*|^2).
    \end{aligned}
    \end{equation}
\end{definition}

The band structure featuring a Dirac point is illustrated in Figure \ref{fig_band_structure_dirac_cone}, plotted along the slice $\bm{\kappa}(\kappa_1)=\bm{\kappa}_*-\bm{e}_1+\kappa_1\bm{e}_1$.
Dirac points typically occur in \textit{triangular lattices} \cite{Kane10insulators,berkolaiko2018symmetry,fefferman2016honeycomb_edge}. In contrast, we prove the following theorem on the absence of Dirac points in the \textit{square lattices} at the three high-symmetry points in $Y_*$: $\bm{\kappa}^{(1)}=(0,0)$, $\bm{\kappa}^{(2)}=(\pi,\pi)$ and $\bm{\kappa}^{(3)}=(0,\pi)$; see Section 3. Note that $\bm{\kappa}^{(2)}$ is also called the M-point. Those high-symmetry points arise naturally as the only fixed points of the symmetry operations \eqref{eq_symm_momen_sapce_3}-\eqref{eq_symm_momen_sapce_2}.
\begin{theorem} \label{thm_non_exist_dirac_point}
The linear degenerate points cannot appear at $\bm{\kappa}^{(i)}$ ($i=1,2,3$).
\end{theorem}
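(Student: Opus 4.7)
The plan is to argue by contradiction using $k\cdot p$ perturbation theory on the degenerate eigenspace. Suppose $(\bm\kappa^{(i)},\lambda_*)$ is a Dirac point; then by items (1)--(2) of Definition \ref{def_linear_degenracy}, $\lambda_*$ is exactly doubly degenerate with a two-dimensional eigenspace $V\subset L^2_{\bm\kappa^{(i)}}(\mathbf R^2)$. Conjugating by $e^{i\bm\kappa\cdot\bm x}$ produces an analytic family $\tilde{\mathcal L}^A(\bm\kappa):=e^{-i\bm\kappa\cdot\bm x}\mathcal L^A e^{i\bm\kappa\cdot\bm x}$ of self-adjoint operators on $L^2(Y)$, with the expansion
\[
\tilde{\mathcal L}^A(\bm\kappa^{(i)}+\bm\delta)=\tilde{\mathcal L}^A(\bm\kappa^{(i)})+\delta_1\mathcal P_1+\delta_2\mathcal P_2+O(|\bm\delta|^2).
\]
By Kato's analytic perturbation theory, the leading-order splitting of $\lambda_*$ is given by the eigenvalues of the $2\times 2$ Hermitian matrix $H_1(\bm\delta)=\alpha\delta_1+\beta\delta_2$, where $\alpha,\beta\in\mathbf C^{2\times 2}$ are the matrices of $P_V\mathcal P_j P_V$ in an orthonormal basis of $V$. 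The isotropic conical law \eqref{eq_conical_dispersion} requires these eigenvalues to equal $\pm\alpha_*|\bm\delta|$ with $\alpha_*>0$; in particular $H_1$ cannot depend on only one of $\delta_1,\delta_2$, for otherwise the splitting would vanish to first order along the orthogonal axis.

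\smallskip
\noindent\emph{Case $\bm\kappa^{(1)}$ and $\bm\kappa^{(2)}$.} Both points are fixed by the full point group $C_{4v}$. Using \eqref{eq_symm_momen_sapce_1}--\eqref{eq_symm_momen_sapce_2} with $R^{-1}\bm\delta=(\delta_2,-\delta_1)$ and $M_2\bm\delta=(\delta_1,-\delta_2)$ and restricting to $V$, I obtain
\[
R_V^{-1}\alpha R_V=-\beta,\quad R_V^{-1}\beta R_V=\alpha,\quad M_V\alpha M_V=\alpha,\quad M_V\beta M_V=-\beta,
\]
where $R_V,M_V$ are the $2\times 2$ matrices representing $\mathcal R,\mathcal M_2$ on $V$. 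Iterating the rotation identities gives $R_V^{-2}\alpha R_V^2=-\alpha$; but $R^2$ is central in $C_{4v}$, and its image on any $2$D unitary representation of $C_{4v}$ is a scalar ($R_V^2=-I$ on the irreducible $E$-rep and $R_V^2=I$ on any direct sum of one-dimensional characters), so $R_V^{-2}\alpha R_V^2=\alpha$. Thus $\alpha=0$, and analogously $\beta=0$. Hence $H_1\equiv 0$, contradicting the Dirac condition above.

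\smallskip
\noindent\emph{Case $\bm\kappa^{(3)}=(0,\pi)$.} Only $\mathcal M_2$ remains as a linear symmetry, which alone is insufficient. The additional input is the antilinear time-reversal symmetry $\mathcal T u:=\bar u$, which commutes with $\mathcal L^A$ (as $A$ is real) and sends $L^2_{\bm\kappa}$ into $L^2_{-\bm\kappa}$; since $-\bm\kappa^{(3)}\equiv\bm\kappa^{(3)}$ modulo $(0,2\pi)$, $\mathcal T$ restricts to an antiunitary involution on $V$ and yields $\mathcal T H_1(\bm\delta)\mathcal T=-H_1(\bm\delta)$. I would pick an eigenbasis of $M_V$ and split into two sub-cases: if both basis vectors share the same $\mathcal M_2$-parity then $M_V=\pm I$ and $M_V\beta M_V=-\beta$ forces $\beta=0$, so $H_1(\bm\delta)=\alpha\delta_1$; if the parities differ then $M_V=\mathrm{diag}(1,-1)$, and the $\mathcal M_2$-relations make $\alpha$ diagonal and $\beta$ off-diagonal. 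In the opposite-parity sub-case, after rescaling the basis so that $\mathcal T|_V$ acts as pointwise complex conjugation (possible thanks to $[\mathcal M_2,\mathcal T]=0$ and $\mathcal T^2=I$), the antilinear relation turns into $\bar\alpha=-\alpha$, and combined with hermiticity and the diagonality of $\alpha$ this forces $\alpha=0$, leaving $H_1(\bm\delta)=\beta\delta_2$. In both sub-cases $H_1$ depends on only one of $\delta_1,\delta_2$, again violating the Dirac condition.

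\smallskip
The hard part will be the analytic foundation: one must verify that the divergence-form operator with the piecewise constant coefficient from Assumption \ref{assum_coefficient_matrix} produces an analytic family $\tilde{\mathcal L}^A(\bm\kappa)$ in Kato's sense of holomorphic families of type (B), and justify the Lyapunov--Schmidt reduction onto the degenerate eigenspace so that the matrix $H_1(\bm\delta)$ really controls the splitting to first order. Once these analytic preliminaries are in place, the remainder of the proof is the finite-dimensional representation-theoretic computation sketched above.
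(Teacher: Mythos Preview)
Your approach for $\bm{\kappa}^{(1)}$ and $\bm{\kappa}^{(2)}$ is essentially the same as the paper's: both compute the $2\times 2$ first-order perturbation matrices $h_1,h_2$ (your $\alpha,\beta$) on the degenerate eigenspace and use the $C_{4v}$ rotation relation to force $h_1=h_2=0$. The paper splits into the explicit cases of the two-dimensional irrep $\rho_5$ (where $\mathcal R\mapsto i\sigma_3$) versus a sum of one-dimensional characters, computing $\sigma_3 h_1\sigma_3=-h_2$, $\sigma_3 h_2\sigma_3=h_1$ and iterating. Your observation that $R^2$ is central in $C_{4v}$ and hence $R_V^2=\pm I$ on any two-dimensional $C_{4v}$-module is exactly the same mechanism, packaged more uniformly across the two cases.

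For $\bm{\kappa}^{(3)}=(0,\pi)$ the paper only says the proof is ``similar'' and gives no details; you correctly notice that $\mathcal R$ does \emph{not} fix $\bm{\kappa}^{(3)}$, so the paper's $C_{4v}$ argument does not carry over verbatim. Your route via $\mathcal M_2$ together with the antilinear time-reversal $\mathcal T u=\bar u$ (which does fix $\bm{\kappa}^{(3)}$ since $-\bm{\kappa}^{(3)}\equiv\bm{\kappa}^{(3)}$) is a valid substitute and yields the same conclusion that $H_1(\bm\delta)$ depends on only one of $\delta_1,\delta_2$, contradicting the isotropic cone \eqref{eq_conical_dispersion}. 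An alternative in the spirit of the paper would be to use the spatial inversion $R^2$ (which \emph{does} fix $\bm{\kappa}^{(3)}$) together with $\mathcal M_2$: the little group $C_{2v}=\{I,R^2,M_1,M_2\}$ has only one-dimensional irreps, and the relations $R_V^2 h_j R_V^{-2}=-h_j$, $M_{2,V}h_1 M_{2,V}=h_1$, $M_{2,V}h_2 M_{2,V}=-h_2$ force, after a short case analysis on the parities, that at least one of $h_1,h_2$ vanishes. Either argument works; your use of time-reversal is a legitimate and clean variant.

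Your remark that the analytic foundation (holomorphic family of type~(B), Lyapunov--Schmidt reduction) needs verification is apt; the paper handles this via the sesquilinear-form formulation \eqref{eq_tildeL_sesquilinear} and the expansion \eqref{eq_sec4_3} in Section~4, which works just as well for the piecewise-constant $A$ of Assumption~\ref{assum_coefficient_matrix}.
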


\begin{figure}
\centering
\subfigure[]{
\label{fig_band_structure_dirac_cone}
\begin{tikzpicture}[scale=0.1]

\draw[thick,->] (0,0)--(15,0);
\draw[thick,->] (0,0)--(0,28);
\node[below] at (0,0) {$0$};
\node[below] at (14,0) {$2\pi$};
\node[right] at (15.2,0) {$\kappa_1$};
\node[above] at (0,28) {$\lambda$};
\draw[dashed] (14,0)--(14,27.2);

\node[above] at (7,23) {$\vdots$};
\node[above] at (7,25) {$\vdots$};
\draw plot [smooth] coordinates {(0,22.2)  (4,20)  (10,4)  (14,1.8)};
\draw plot [smooth] coordinates {(0,1.8)  (4,4)  (10,20)  (14,22.2)};
\draw[dashed] (7,12)--(0,12);
\draw[thick,green] (0,1.8)--(0,12);
\draw[thick,yellow] (0,22.2)--(0,12);
\node[left,green] at (0,6.9) {1st band};
\node[left,yellow] at (0,17.1) {2nd band};
\node[left,blue] at (0,26) {Higher bands};
\end{tikzpicture}
}
\subfigure[]{
\label{fig_band_structure_quadratic}
\begin{tikzpicture}[scale=0.1]

\draw[thick,->] (0,0)--(15,0);
\draw[thick,->] (0,0)--(0,28);
\node[below] at (0,0) {$0$};
\node[below] at (14,0) {$2\pi$};
\node[right] at (15.2,0) {$\kappa_1$};
\node[above] at (0,28) {$\lambda$};
\draw[dashed] (14,0)--(14,27.2);

\node[above] at (7,23) {$\vdots$};
\node[above] at (7,25) {$\vdots$};
\draw plot [smooth] coordinates {(0,22.2) (3,20) (5,13.5) (7,12) (9,13.5)(11,20) (14,22.2)};
\draw plot [smooth] coordinates {(0,1.8) (3,4) (5,10.5) (7,12) (9,10.5)(11,4) (14,1.8)};
\draw[dashed] (7,12)--(0,12);
\draw[thick,green] (0,1.8)--(0,12);
\draw[thick,yellow] (0,22.2)--(0,12);
\node[left,green] at (0,6.9) {1st band};
\node[left,yellow] at (0,17.1) {2nd band};
\node[left,blue] at (0,26) {Higher bands};
\end{tikzpicture}
}
\subfigure[]{
\label{fig_band_structure_quadratic_lift}
\begin{tikzpicture}[scale=0.1]

\draw[thick,->] (0,0)--(15,0);
\draw[thick,->] (0,0)--(0,28);
\node[below] at (0,0) {$0$};
\node[below] at (14,0) {$2\pi$};
\node[right] at (15.2,0) {$\kappa_1$};
\node[above] at (0,28) {$\lambda$};
\draw[dashed] (14,0)--(14,27.2);

\node[above] at (7,23) {$\vdots$};
\node[above] at (7,25) {$\vdots$};
\draw plot [smooth] coordinates {(0,23.2) (3,21) (5,14.5) (7,13) (9,14.5)(11,21) (14,23.2)};
\draw plot [smooth] coordinates {(0,0.8) (3,3) (5,9.5) (7,11) (9,9.5)(11,3) (14,0.8)};
\draw[dashed] (7,13)--(0,13);
\draw[dashed] (7,11)--(0,11);
\draw[thick,green] (0,0.8)--(0,11);
\draw[thick,yellow] (0,23.2)--(0,13);
\node[left,green] at (0,5.9) {1st band};
\node[left,yellow] at (0,18.1) {2nd band};
\node[left,blue] at (0,26) {Higher bands};
\end{tikzpicture}
}
\caption{(a) Band structure supposed in Definition \ref{def_linear_degenracy}. A Dirac point exists between the first and second bands; (b) Band structure supposed in Assumption \ref{def_quadratic_degenracy}. The first and second bands touch at $\bm{\kappa}^{(2)}$ quadratically; (c) Band structure of the parity-broken crystal (described by \eqref{eq_perturbed_operator}). The perturbation lifts the quadratic degeneracy and opens a band gap.}
\label{fig_band_structures}
\end{figure}
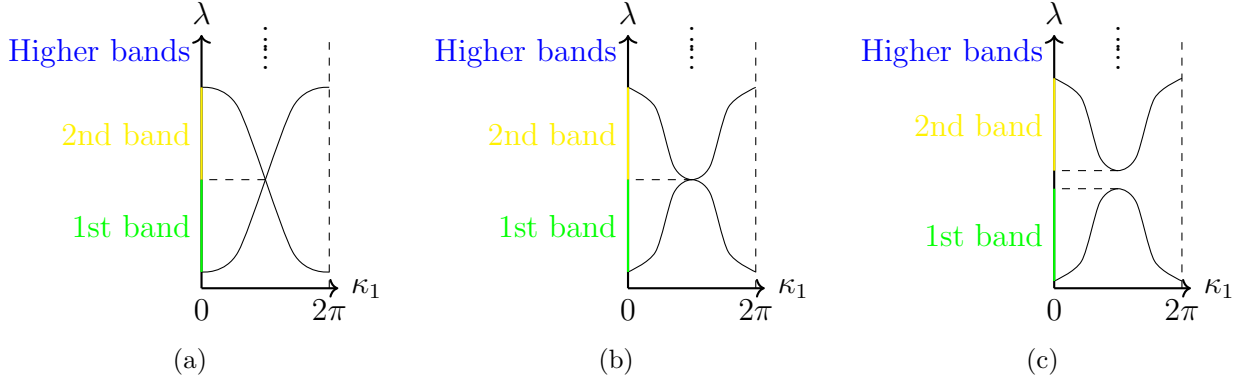

Note that similar results have been obtained in \cite{weinstein18quadratic} for the Schrödinger operator. As shown in the proof of Theorem \ref{thm_non_exist_dirac_point}, when two bands touch at $\bm{\kappa}^{(i)}$ ($i=1,2,3$), the symmetry of the two Bloch modes therein can be classified by the representation of $C_{4v}$ point group. Following physics literature, the degenerate point is called \textit{accidental} when the two Bloch modes belong to two copies of one-dimensional representations of $C_{4v}$, meaning a perturbation can lift the degeneracy without breaking the $C_{4v}$-symmetry. On the other hand, the degenerate point is called \textit{generic} when the two modes belong to a two-dimensional irreducible representation of $C_{4v}$, where a symmetry-preserving perturbation cannot lift the degeneracy. Following Theorem \ref{thm_non_exist_dirac_point}, we focus on the case where the first and second bands touch at $\bm{\kappa}=\bm{\kappa}^{(2)}$ quadratically, with the two Bloch modes therein classified by the $2D$ irrep of $C_{4v}$. We also assume the slice of $\lambda_1(\bm{\kappa})$ and $\lambda_2(\bm{\kappa})$ along $\kappa_2=\pi$ are opposite up to fourth order near $\bm{\kappa}^{(2)}$ to simplify the analysis. More precisely, we assume that 

\begin{assumption}\label{def_quadratic_degenracy}

(1) $\lambda_{1}(\bm{\kappa}^{(2)})=\lambda_{2}(\bm{\kappa}^{(2)})=\lambda_*$. 

(2) $\lambda_{1}(\bm{\kappa})\neq \lambda_*$, $\lambda_{2}(\bm{\kappa})\neq \lambda_*$ for any $\bm{\kappa}\in Y^*\backslash \{\bm{\kappa}^{(2)}\}$, and $\lambda_{n}(\bm{\kappa})\neq \lambda_*$ for any $n\geq 3$ and $\bm{\kappa}\in Y^*$. 

(3) For $|p|\ll 1$, there exist $\gamma_*,\eta_*>0$ such that
    \begin{equation} \label{eq_quadratic_dispersion}
    \begin{aligned}
    \lambda_{1}(\bm{\kappa}^{(2)}+p\bm{e}_1)-\lambda_{1}(\bm{\kappa}^{(2)})&=-\frac{1}{2}\gamma_*p^2-\eta_*p^4+\mathcal{O}(p^6), \\
    \lambda_{2}(\bm{\kappa}^{(2)}+p\bm{e}_1)-\lambda_{2}(\bm{\kappa}^{(2)})&=\frac{1}{2}\gamma_*p^2+\eta_*p^4+\mathcal{O}(p^6).
    \end{aligned}
    \end{equation}

(4) The Bloch modes at $\bm{\kappa}=\bm{\kappa}^{(2)}$ form a basis of the 2D irreducible representation of $C_{4v}$ in the sense that (see also \eqref{eq_2d_rep})
\begin{equation} \label{eq_assump_mode_symmetry}
\begin{aligned}
&u_{1}(R\bm{x};\bm{\kappa}^{(2)})=i\cdot u_{2}(\bm{x};\bm{\kappa}^{(2)}),\quad
u_{1}(M_2\bm{x};\bm{\kappa}^{(2)})= u_{1}(\bm{x};\bm{\kappa}^{(2)}), \\
&u_{2}(R\bm{x};\bm{\kappa}^{(2)})=i\cdot u_{1}(\bm{x};\bm{\kappa}^{(2)}),\quad
u_{2}(M_2\bm{x};\bm{\kappa}^{(2)})= -u_{2}(\bm{x};\bm{\kappa}^{(2)}).
\end{aligned}
\end{equation}
\end{assumption}
\begin{remark}
Assumption \ref{def_quadratic_degenracy}(2) is essentially the spectral no-fold condition introduced in \cite{fefferman2016honeycomb_edge,fefferman12}. When the no-fold condition fails, either interface modes or resonant modes can bifurcate from the spectral degenerate point, as proved in \cite{qiu2023bifurcationdiracpointphotonic}.
\end{remark}
\begin{remark}
The odd-order terms in the expansion \eqref{eq_quadratic_dispersion} vanish because of the reflection symmetry, specifically, $\lambda_{n}(\bm{\kappa}^{(2)}+p\bm{e}_1)=\lambda_{n}(\bm{\kappa}^{(2)}-p\bm{e}_1)$, which follows from $[\mathcal{L}^{A},\mathcal{M}_2]=0$.
A band structure described in Assumption \ref{def_quadratic_degenracy} is illustrated in Figure \ref{fig_band_structure_quadratic}. For a general characterization of the quadratic degeneracy of the band structure, see \cite{chaban2024instability,weinstein18quadratic}.
\end{remark}
Now we consider the following perturbed operator  
\begin{equation}
\label{eq_perturbed_operator}
\mathcal{L}^{A+\delta\cdot B}:D(\mathcal{L}^{A+\delta\cdot B})\subset L^2(\mathbf{R}^2)\to L^2(\mathbf{R}^2),\quad
u\mapsto \nabla\cdot (A+\delta\cdot B)\nabla u
\end{equation}
where the coefficient matrix $B$ describes the parity-breaking permeability induced by an external magnetic field.  We assume 
\begin{assumption} \label{assum_perturb_coefficient_matrix}
$B(\bm{x})=b(\bm{x})\cdot \sigma_2$, where $b(\bm{x})=\sum_{n_1,n_2\in\mathbf{Z}}\chi_{D_{n_1,n_2}}(\bm{x})$ and $\sigma_2
=\begin{pmatrix}
0 & -i \\ i & 0
\end{pmatrix}
$.
\end{assumption}
The operator $\mathcal{L}^{A+\delta\cdot B}$ is defined in the weak sense by the sesquilinear form \eqref{eq_A_form} with the replacement $A\to A+\delta\cdot B$. 
The coefficient matrix $B$ breaks the reflection symmetry while preserving the rotation symmetry, i.e. the following identities (compared with \eqref{eq_LA_symmetry}) hold
\begin{equation*}
[\mathcal{L}^B,\mathcal{R}]=\{\mathcal{L}^B,\mathcal{M}_2\}=0
\quad \text{on $D(\mathcal{L}^B)$},
\end{equation*}
where $[\cdot, \cdot]$ and $\{\cdot, \cdot\}$ denote the commutator and anticommutator of two operators respectively. 
We note that $B$ also breaks the time-reversal symmetry. 
Such a symmetry-breaking perturbation lifts the quadratic degeneracy at $\bm{\kappa}^{(2)}$. To be more precise, denote 
\[
\Omega=\mathbf{R}\times (-\frac{1}{2},\frac{1}{2}), \quad L_{\pi}^2(\Omega)=
\{u\in L^2_{loc}(\mathbf{R}^2)\cap L^2(\Omega):\, u(\bm{x}+\bm{e}_2)=e^{i\pi}u(\bm{x})\}.
\]
We have
\begin{theorem}[Band gap opening at $\kappa_2=\pi$]
\label{thm_gap_open_corollary}
Let $0<c_0<1$ be a constant close to one and assume that $t_*$ defined in \eqref{eq_perturb_constant} is nonzero. 
Then, for $|\delta|$ being sufficiently small and nonzero, the operator $\mathcal{L}^{A+\delta\cdot B}\Big|_{L_{\pi}^2(\Omega)}$
has a spectral gap $\mathcal{I}_\delta=(\lambda_*-c_0|t_*\delta|,\lambda_*+c_0|t_*\delta|)$ near $\lambda=\lambda_*$.   
\end{theorem}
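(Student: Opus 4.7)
The plan is to apply a partial Floquet--Bloch decomposition along the $\bm{e}_1$-direction and then perform a two-band degenerate perturbation analysis at the quadratic point $\bm{\kappa}^{(2)}=(\pi,\pi)$. Since every element of $L_\pi^2(\Omega)$ is $(-1)$-quasi-periodic in $\bm{e}_2$, the fiber decomposition yields
\begin{equation*}
\sigma\!\left(\mathcal{L}^{A+\delta B}\big|_{L_\pi^2(\Omega)}\right)=\bigcup_{\kappa_1\in[0,2\pi)}\sigma\!\left(\mathcal{L}^{A+\delta B}(\kappa_1,\pi)\right),
\end{equation*}
so it suffices to keep every eigenvalue of each fiber out of $\mathcal{I}_\delta$. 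By the no-fold condition of Assumption \ref{def_quadratic_degenracy}, the bands $\lambda_n(\kappa_1,\pi)$ of $\mathcal{L}^A$ are uniformly separated from $\lambda_*$ once $\kappa_1$ lies in any fixed compact set avoiding $\pi$; standard analytic perturbation theory in $\delta$ then handles this regime for $|\delta|$ small.

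The nontrivial regime is $\kappa_1$ close to $\pi$. Here I would perform a Lyapunov--Schmidt / Feshbach--Schur reduction on the sesquilinear form for $\mathcal{L}^{A+\delta B}(\kappa_1,\pi)-\lambda$, projecting onto $V=\operatorname{span}\{u_1(\cdot;\bm{\kappa}^{(2)}),u_2(\cdot;\bm{\kappa}^{(2)})\}$ and its orthogonal complement in $L^2_{\bm{\kappa}^{(2)}}$. The Schur complement produces a Hermitian $2\times 2$ effective matrix $H_{\mathrm{eff}}(\eta,\delta)$, with $\eta:=\kappa_1-\pi$, whose eigenvalues are precisely the two bands of $\mathcal{L}^{A+\delta B}(\kappa_1,\pi)$ lying near $\lambda_*$. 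The $\delta=0$ part is determined by the dispersion \eqref{eq_quadratic_dispersion}, and the $O(\delta)$ part is the perturbation matrix
\begin{equation*}
M_{ij}=\int_Y b(\bm{x})\,\sigma_2\nabla u_i(\bm{x};\bm{\kappa}^{(2)})\cdot\overline{\nabla u_j(\bm{x};\bm{\kappa}^{(2)})}\,d\bm{x},\qquad i,j\in\{1,2\}.
\end{equation*}
Using the symmetry identities \eqref{eq_assump_mode_symmetry} together with the matrix relations $M_2\sigma_2 M_2=-\sigma_2$ and $R\sigma_2 R^T=\sigma_2$, direct changes of variables force $M_{11}=M_{22}=0$ and $M_{12}=\overline{M_{21}}\in i\mathbf{R}$, so that $M=t_*\sigma_2$ for the real constant $t_*$ of the hypothesis. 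A parallel symmetry argument for the $\eta$-derivative terms shows that, after subtracting $\lambda_* I$, the $\delta=0$ piece of $H_{\mathrm{eff}}$ is a Hermitian combination of $\sigma_1$ and $\sigma_3$ of norm $\tfrac12 (H_*)_{11}\eta^2+O(|\eta|^3)$, and therefore anti-commutes with the $\delta$-piece that lies on $\sigma_2$.

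Diagonalising the two-by-two problem then gives the explicit bands
\begin{equation*}
\lambda_{1,2}(\kappa_1,\pi;\delta)=\lambda_*\pm\sqrt{\tfrac14(H_*)_{11}^2\eta^4+(t_*\delta)^2}+\mathcal{O}\!\left(|\eta|^3+|\delta\eta|+\delta^2\right),
\end{equation*}
so $|\lambda_{1,2}-\lambda_*|\geq|t_*\delta|\,(1+o(1))$ uniformly in $\eta$. Choosing $|\delta|$ small enough renders the right-hand side strictly larger than $c_0|t_*\delta|$, which delivers the gap $\mathcal{I}_\delta$. The principal obstacle is the symmetry computation identifying the Pauli-$\sigma_2$ structure of $M$ and the anti-commuting $\{\sigma_1,\sigma_3\}$-structure of the $\eta$-piece: this anti-commutation is exactly what converts the quadratic touching into a hyperbolic avoidance of $\lambda_*$ rather than a mere shift that could still cross it. A secondary but nontrivial technical point is to bound the higher-order Schur-complement remainder uniformly in $\eta$ over a fixed neighborhood of $\pi$, so that it stays dominated by $|t_*\delta|$ across the entire near-regime.
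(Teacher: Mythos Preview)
Your strategy is precisely the paper's: Floquet decomposition in $\kappa_1$, ordinary perturbation theory away from $\kappa_1=\pi$ using the no-fold hypothesis, and a Lyapunov--Schmidt reduction onto $\mathrm{span}\{u_1,u_2\}$ near $\kappa_1=\pi$, leading to the square-root dispersion. The paper records this as Theorem~\ref{thm_gap_open} (with the $2\times2$ reduction carried out via the conjugated operator \eqref{eq_transformed_operator} and Lemmas~\ref{lem_non-degeneracy}--\ref{lem_curvature_expression}).

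However, your Pauli identification is wrong. In the basis of Assumption~\ref{def_quadratic_degenracy}(4) one has $\mathcal{R}u_1=iu_2$, $\mathcal{R}u_2=iu_1$ and $[\mathcal{L}^B,\mathcal{R}]=0$; a direct change of variables then gives
\[
M_{12}=(\mathcal{L}^B u_1,u_2)=(\mathcal{L}^B u_2,u_1)=M_{21}=\overline{M_{12}},
\]
so $M_{12}\in\mathbf{R}$ and $M=-t_*\sigma_1$, not $t_*\sigma_2$ (this is the paper's Lemma~\ref{lem_non-degeneracy}). Likewise the $\eta^2$ curvature matrix is \emph{purely} $\sigma_3$: its off-diagonal vanishes because $\tilde{\mathcal{L}}^A_{11}-\tilde{\mathcal{L}}^A_1(\tilde{\mathcal{L}}^A_{\Omega,\pi}(\pi)-\lambda_*)^{-1}Q_\perp\tilde{\mathcal{L}}^A_1$ commutes with $\mathcal{M}_2$, while $\mathcal{M}_2u_1=u_1$ and $\mathcal{M}_2u_2=-u_2$ (Lemma~\ref{lem_curvature_expression}).

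Your mis-identification is not fatal---$\sigma_1$ and $\sigma_3$ still anticommute, so the dispersion you wrote down is correct---but your argument as stated has a gap: you allowed the $\eta$-piece to carry a $\sigma_1$ component, which under the \emph{correct} identification of $M\propto\sigma_1$ would \emph{not} anticommute and could close the gap. The proof genuinely needs the off-diagonal of the second-order curvature to vanish, and that requires the $\mathcal{M}_2$-symmetry computation, not merely a ``parallel'' argument.
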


\begin{remark}
The assumption $t_* \neq 0$ can be verified numerically.    
\end{remark}

Theorem \ref{thm_gap_open_corollary} follows directly from Theorem \ref{thm_gap_open}. Note that in Theorem \ref{thm_gap_open}, it is also proved that the Bloch eigenspace of 
$\mathcal{L}^{A+\delta\cdot B}\Big|_{L_{\pi}^2(\Omega)}$ near $\lambda=\lambda_*$ exchanges its parity for $\delta>0$ and $\delta<0$ (see Remark \ref{rmk_parity_changing}). This is the so-called \textit{band-inversion phenomenon}, a type of topological phase transition \cite{vanderbilt2018berry}, in the physics literature. When such a phase transition occurs, one expects the existence of localized modes at the interface between the two lattices associated with the two operators $\mathcal{L}^{A+\delta\cdot B}$ and $\mathcal{L}^{A-\delta\cdot B}$. This is indeed the case as is proved in this paper. To be more precise, 
we consider the following operator
\begin{equation} \label{eq-interface}
\mathcal{L}^{inter}:
D(\mathcal{L}^{inter})\subset L^2(\mathbf{R}^2)\to L^2(\mathbf{R}^2)
,\quad
u(\bm{x})\mapsto
\left\{
\begin{aligned}
&(\mathcal{L}^{A-\delta\cdot B}u)(\bm{x}),\quad x_1<0, \\
&(\mathcal{L}^{A+\delta\cdot B}u)(\bm{x}),\quad x_1>0.
\end{aligned}
\right.
\end{equation}
The main result of this paper establishes that
\begin{theorem}[Interface modes at $\kappa_2=\pi$]
\label{thm_interface_mode}
Assume that $t_*$ is defined in \eqref{eq_perturb_constant} is nonzero. For $\delta>0$ being sufficiently small, there exist exactly two eigenvalues $\lambda_n^\star(\pi)$ ($n=1,2$) of $\mathcal{L}^{inter}\Big|_{L_{\pi}^2(\Omega)}$ inside the band gap $\mathcal{I}_{\delta}$, with the corresponding eigenmode (or eigenfunction) $u_n^{\star}(\bm{x};\pi)\in L_{\pi}^2(\Omega)$.
\end{theorem}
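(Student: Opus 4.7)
The strategy is a bifurcation argument from the quadratic degenerate point at $\bm{\kappa}^{(2)}=(\pi,\pi)$ via a multiscale envelope reduction. Since we work on $L_\pi^2(\Omega)$ we have fixed $\kappa_2=\pi$, so the problem is effectively one-dimensional in $x_1$ with the degeneracy occurring at $\kappa_1=\pi$. By Assumption~\ref{def_quadratic_degenracy}(2) the only bands close to $\lambda_*$ at the M-point are $\lambda_1,\lambda_2$, all other bands being bounded away uniformly in $\bm{\kappa}$. Thus an in-gap interface mode should, to leading order, be built from the two Bloch eigenfunctions $u_1(\bm{x};\bm{\kappa}^{(2)}), u_2(\bm{x};\bm{\kappa}^{(2)})$ modulated by slowly varying envelopes. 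I would make the ansatz
\begin{equation*}
u^\star(\bm{x}) \;=\; \alpha_1(\delta x_1)\,u_1(\bm{x};\bm{\kappa}^{(2)}) \;+\; \alpha_2(\delta x_1)\,u_2(\bm{x};\bm{\kappa}^{(2)}) \;+\; \delta\, u^{(1)}(\bm{x}) \;+\; O(\delta^2),
\end{equation*}
with the spectral parameter $\lambda^\star = \lambda_* + \delta\,E$ and $X = \delta x_1$ the slow variable.

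Next, I would substitute into $\mathcal{L}^{inter}u^\star = \lambda^\star u^\star$ and carry out a formal asymptotic expansion order-by-order in $\delta$. The $O(1)$ equation is automatically satisfied since $u_1,u_2$ are degenerate eigenfunctions. The $O(\delta)$ equation, after projecting onto the degenerate subspace, should produce a reduced $2\times2$ matrix differential equation of the form
\begin{equation*}
\Bigl( -\tfrac{1}{2}h_{11}\,\partial_X^2\,\sigma_3 \;+\; t_*\,\mathrm{sgn}(X)\,\mathcal{S}\Bigr)\vec{\alpha}(X) \;=\; E\,\vec{\alpha}(X),
\end{equation*}
where $h_{11}$ is the $\partial_{\kappa_1}^2$-component of $H_*$, the matrix $\mathcal{S}$ (a linear combination of Pauli matrices) is dictated by how $B$ acts inside the 2D irreducible representation determined by \eqref{eq_assump_mode_symmetry}, and $\mathrm{sgn}(X)$ reflects the sign flip of $\delta$ across the interface. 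The constant $t_*$ is the same one featuring in Theorem~\ref{thm_gap_open_corollary}, so its nonvanishing ensures a genuine mass term on each side. Then I would solve this effective domain-wall problem explicitly: on each half-line the dispersion $E^2 = (\tfrac12 h_{11}k^2)^2 + t_*^2$ gives four complex roots for $k$, hence two linearly independent exponentially decaying modes per side; imposing $C^1$-matching at $X=0$ yields a $4\times 4$ homogeneous system whose secular determinant I expect to vanish at exactly two values $E_1^\star, E_2^\star \in (-|t_*|,|t_*|)$, corresponding to the two promised interface eigenvalues.

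Finally, I would upgrade this formal construction to a rigorous proof by Lyapunov--Schmidt reduction. Decompose $L_\pi^2(\Omega) = \mathcal{H}_\parallel \oplus \mathcal{H}_\perp$, where $\mathcal{H}_\parallel$ is the (infinite-dimensional) envelope space corresponding to the two nearly-degenerate bands and $\mathcal{H}_\perp$ collects all higher bands. Because of the uniform spectral gap between the first two bands and the rest, the restriction $(\mathcal{L}^{inter}-\lambda)\big|_{\mathcal{H}_\perp}$ is boundedly invertible for $\lambda$ near $\lambda_*$ and $|\delta|$ small. Eliminating the $\mathcal{H}_\perp$-component turns the full eigenvalue problem into a nonlinear $2\times 2$ eigenvalue problem on $\mathcal{H}_\parallel$ that is an $O(\delta)$-perturbation of the effective operator in the previous paragraph. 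Standard perturbation theory combined with the implicit function theorem then pins down exactly two eigenvalues near $\lambda_*+\delta E_{1,2}^\star$ inside $\mathcal{I}_\delta$ and rules out spurious in-gap modes.

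The main obstacle is the quadratic, rather than linear, character of the degeneracy. In the Dirac-point setting, the effective equation is a first-order Jackiw--Rebbi operator with a topologically protected zero mode whose existence is immediate. Here the effective operator is second-order, the matching problem at $X=0$ involves four decaying exponentials instead of two, and the exact count of two in-gap bound states is no longer automatic: it requires a careful determinantal analysis exploiting the specific $\sigma_3$-vs-$\mathcal{S}$ structure enforced by the $C_{4v}$ symmetry of Assumption~\ref{def_quadratic_degenracy}(4). Furthermore, because the effective problem is second-order, the Lyapunov--Schmidt remainder estimates need to accommodate losses of derivative in $X$, so the choice of weighted Sobolev norms controlling the envelope must be tuned to both the quadratic dispersion and the exponential decay rate $\sim \sqrt{|\delta t_*|}$ of the interface modes.
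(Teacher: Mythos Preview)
Your multiscale envelope plus Lyapunov--Schmidt route is a genuinely different strategy from the paper's. The paper never writes an effective $2\times2$ ODE; instead it recasts the interface eigenvalue problem as a characteristic-value equation for layer-potential operators on $\Gamma$ (equations \eqref{eq_matrix_equation_2}--\eqref{eq_existence_criterion_2_scaled}), computes the $\delta\to0$ limit of that operator through Green-function asymptotics (Proposition~\ref{prop_matrix_convergence}), and then counts characteristic values by the Gohberg--Sigal generalized Rouch\'e theorem. Your approach, if completed, would recover the same in-gap energies $E=\pm|t_*|/\sqrt{2}$ and is conceptually closer to the Dirac-point analyses you allude to.

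There is, however, a concrete error that makes the argument fail as written: the slow variable must be $X=\sqrt{\delta}\,x_1$, not $X=\delta x_1$. By Theorem~\ref{thm_gap_open} the perturbed dispersion is $(\mu-\lambda_*)^2=\tfrac14\gamma_*^2(\kappa_1-\pi)^4+t_*^2\delta^2$, so balancing the quartic kinetic term against the $O(\delta^2)$ mass forces $\kappa_1-\pi\sim\delta^{1/2}$. With your scaling one has $\partial_{x_1}^2=\delta^2\partial_X^2$, hence the curvature term $-\tfrac12 h_{11}\partial_X^2\sigma_3$ enters only at $O(\delta^2)$; the $O(\delta)$ solvability condition then collapses to the algebraic relation $(t_*\,\mathrm{sgn}(X)\,\mathcal{S}-E)\vec\alpha=0$, which has no nontrivial $L^2$ solution for $|E|<|t_*|$. (You already sense this: the decay rate $\sim\sqrt{|\delta t_*|}$ you quote at the end corresponds to the $\sqrt{\delta}$ scaling, not the $\delta$ one.) With the corrected scaling the expansion runs in powers of $\sqrt{\delta}$: an intermediate $O(\sqrt{\delta})$ step determines a first corrector whose solvability is automatic by Lemma~\ref{lem_vanish_linear_terms}, and only after feeding that corrector back does the $O(\delta)$ solvability produce the second-order system, with $\gamma_*$ supplied by Lemma~\ref{lem_curvature_expression}. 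Finally, the claim that the effective domain-wall problem has \emph{exactly} two bound states is the heart of the argument and must be proved, not expected; the $4\times4$ matching determinant does vanish precisely at $E=\pm|t_*|/\sqrt{2}$, but that computation has to be carried out.
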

\begin{remark} \label{rmk-disperse}
With Theorem \ref{thm_interface_mode}, one can immediately prove there exist exactly two eigenvalues $\lambda_n^\star(\kappa_2)$ ($n=1,2$) of $\mathcal{L}^{inter}\Big|_{L_{\kappa_2}^2(\Omega)}$ inside the band gap for $|\kappa_2-\pi|\ll 1$ by a standard perturbation argument. One can also apply the tools developed in this paper to calculate the slope $(\lambda_{n}^\star)^\prime(\pi)$ of the two dispersion curves for the interface modes, as in \cite{li2024interface}.
\end{remark}

\begin{remark}
The IQHE in photonics is featured by robust unidirectional interface or edge modes that propagate along interfaces or boundaries without backscattering. In this paper, we only establish the existence of these interface modes. The unidirectionality can be established by examining the slope of the dispersion curves of interface eigenvalues, see Remark \ref{rmk-disperse}. However, the robustness against structural impurities requires further investigation.

\end{remark}

\begin{remark}
We have analyzed the approach of using gyromagnetic materials to generate the IQHE in photonics.
Some proposals rely on different materials including 1) suitably designed temporal modulation of the coupling in optical resonator lattices that creates effective magnetic fields for light \cite{Fang2012} and 2) waveguiding geometries with longitudinal refractive index modulations \cite{Rechtsman2013}. The analysis of these proposals is beyond the scope of this paper. 
    
\end{remark}

\subsection{Outline}
The rest of this paper is organized as follows:

Section 2: We introduce notations and review some basic Floquet theory of periodic operators that are used in this paper.

Section 3: We prove the absence of Dirac points at the three high-symmetry points, $\bm{\kappa}^{(1)}$, $\bm{\kappa}^{(2)}$ and $\bm{\kappa}^{(3)}$,  in the Brillouin zone $Y_*$. This leads to the consideration of a quadratic degenerate point as per Assumption \ref{def_quadratic_degenracy}.

Section 4: We examine the Bloch modes and their momentum derivatives at the quadratic degenerate point. We explicitly construct two analytical branches of the Bloch modes near the quadratic degenerate point and show that the constructed Bloch modes at the quadratic degenerate point have opposite parities to their momentum derivatives. This finding highlights how Bloch mode parity evolves across spectral degenerate points.


Section 5: We derive the asymptotic of the unperturbed Green function in the strip $\Omega$ at the energy level of the quadratic degenerate point, by applying the limiting absorption principle (Theorem \ref{thm_asymp_unperturbed_green}). The asymptotic expansion reveals the distinguished role played by the Bloch modes and their momentum derivatives at the quadratic degenerate point,  providing important information for studying localized modes under perturbation.

Section 6: We examine the interactions between Bloch modes and their momentum derivatives at the quadratic degenerate point by using an energy flux functional. The result shows that the momentum derivatives of Bloch modes act as the ``dual vectors" of Bloch modes. These interactions are essential for analyzing the boundary integral operators associated with interface modes in Section 9.

Section 7: We derive the asymptotic expansion of Bloch eigenpairs near the degenerate point under perturbation (Theorem \ref{thm_gap_open}). We show that a band gap can be opened in the band structure of the perturbed system and that a phase transition occurs during the perturbation process, which is crucial for the existence of interface modes (Remark \ref{rmk_parity_changing}). To accurately capture the phase transition encoded in the momentum derivatives of the Bloch modes, we perform a high-order perturbation analysis.

Section 8: Building on the results from Section 7, we calculate the asymptotics of the perturbed Green function $G^{\delta}(\bm{x},\bm{y};\lambda)$ when $\lambda$ lies within the band gap. This analysis characterizes the impact of perturbations on the Green function, particularly concerning Bloch modes and their momentum derivatives, thereby providing all the necessary information to determine the existence of interface modes.

Section 9: We establish the existence of interface modes, which form the main result of this paper (Theorem \ref{thm_interface_mode}). Our approach is based on the layer-potential techniques. We reformulate the eigenvalue problem of $\mathcal{L}^{inter}$ using a boundary integral equation defined on the interface. Based on the result from Section 8 and the application of an appropriate scaling, we analyze the limiting behavior of the corresponding boundary integral operator. These analyses enable the application of Gohberg-Sigal theory, thereby demonstrating the bifurcation of interface modes from the quadratic degenerate point.

\section{Notations and Prelimiaries}
\subsection{Operators and relations}
Throughout, $R=\begin{pmatrix}
    0 & -1 \\ 1 & 0
    \end{pmatrix}, 
M_{1}=\begin{pmatrix}
    -1 & 0 \\ 0 & 1
    \end{pmatrix},
M_{2}=\begin{pmatrix}
    1 & 0 \\ 0 & -1
\end{pmatrix}$ denote the $\pi/2-$rotation, $x_2$- and $x_1$-axis reflection matrix in $\mathbf{R}^2$. Their counterparts in the function space are denoted by
\begin{equation*}
\mathcal{R}:u(\bm{x})\mapsto u(R\bm{x}),\quad
\mathcal{M}_{1}:u(\bm{x})\mapsto u(M_{1}\bm{x}),\quad
\mathcal{M}_{2}:u(\bm{x})\mapsto u(M_{2}\bm{x}).
\end{equation*}
We denote $u\sim v$ for two functions if there exists a unit $c\in \mathbf{C}$ such that $u=c\cdot v$.

We denote the commutator and anticommutator between two operators $A,B$ by $[A,B]:=AB-BA$ and $\{A,B\}:=AB+BA$, respectively.

\subsection{Geometry}
\noindent $Y=[-\frac{1}{2},\frac{1}{2}]^2$, $Y^*=[0,2\pi]^2$;

\noindent $\Omega=\mathbf{R}\times (-\frac{1}{2},\frac{1}{2}),\quad \Omega^{\text{right}}:=\Omega\cap (\mathbf{R}^{+} \times \mathbf{R})$, \quad $\Omega^{\text{left}}:=\Omega\cap (\mathbf{R}^{-} \times \mathbf{R})$;

\noindent $\Gamma=\{0\}\times (-\frac{1}{2},\frac{1}{2})\subset \Omega,\quad \Gamma^{\text{right}}:=\partial (\Omega^{\text{right}})$, \quad $\Gamma^{\text{left}}:=\partial (\Omega^{\text{left}}),\quad \Gamma^{\pm}=\mathbf{R}\times \{\pm\frac{1}{2}\}$.

\subsection{Funtion spaces and brackets}
\noindent 
$L_{\pi}^2(\Omega):=
\{u\in L^2_{loc}(\mathbf{R}^2)\cap L^2(\Omega):\, u(\bm{x}+\bm{e}_2)=e^{i\pi}u(\bm{x})\}$.

\noindent
$L^2_{\bm{\kappa}}(\mathbf{R}^2):=
\{u\in L^2_{loc}(\mathbf{R}^2):\, u(\bm{x}+n_1\bm{e}_1+n_2\bm{e}_2)=e^{i(n_1\kappa_1+n_2\kappa_2)}u(\bm{x})\}$, equipped with $L^2(Y)-$inner product.

\noindent $H^1_{\bm{\kappa}}(\mathbf{R}^2):=
\{u\in H^1_{loc}(\mathbf{R}^2):\, \partial_{\alpha}u\in L^2_{\bm{\kappa}}(\mathbf{R}^2)\, (|\alpha|\leq 1)\}$, equipped with $H^1(Y)-$inner product;

\noindent $L_{\kappa_2}^2(\Omega):=
\{u\in L^2_{loc}(\mathbf{R}^2): u|_{\Omega}\in L^2(\Omega),\,u(\bm{x}+n\bm{e}_2)=e^{in\kappa_2}u(\bm{x})\}$, equipped with $L^2(\Omega)-$ inner product.

\noindent $H^1_{\kappa_2}(\Omega):=
\{u\in H^1_{loc}(\mathbf{R}^2): u|_{\Omega}\in H^1(\Omega),\, u(\bm{x}+n\bm{e}_2)=e^{in\kappa_2}u(\bm{x}),\partial_{x_2}u(\bm{x}+n\bm{e}_2)=e^{in\kappa_2}\partial_{x_2}u(\bm{x})\}$.

\noindent $(H^1(U))^*$ ($U\subset \mathbf{R}^2$ is an open set): dual of $H^1(U)$ under the dual pairing induced by the $L^2(U)-$inner product.

\noindent $H^{\frac{1}{2}}(\Gamma):=\{u=U|_{\Gamma}:U\in H^{\frac{1}{2}}(\Gamma^{\text{right}})\}$, where $H^{\frac{1}{2}}(\Gamma^{\text{right}})$ is defined in the standard way.

\noindent $\tilde{H}^{-\frac{1}{2}}(\Gamma):=\{u=U|_{\Gamma}:U\in H^{-\frac{1}{2}}(\Gamma^{\text{right}})\text{ and }supp (U)\subset \overline{\Gamma}\}$, where $H^{-\frac{1}{2}}(\Gamma^{\text{right}})$ is the dual of $H^{\frac{1}{2}}(\Gamma^{\text{right}})$ under the dual pair $\langle \varphi,\phi\rangle:=\int_{\Gamma^{\text{right}}}\varphi(\cdot)\phi(\cdot)$.

\medskip

\noindent $(\cdot,\cdot)$: $L^2(Y)-$inner product and dual pairing induced by the $L^2(Y)-$inner product.

\noindent $(\cdot,\cdot)_{\Omega}$: $L^2(\Omega)-$inner product and dual pairing induced by the $L^2(\Omega)-$inner product.

\noindent $\langle\cdot,\cdot\rangle$: the $\tilde{H}^{-\frac{1}{2}}(\Gamma)-H^{\frac{1}{2}}(\Gamma)$ dual pairing.

\subsection{Floquet theory}
Consider the strip domain $\Omega=\mathbf{R}\times (-\frac{1}{2},\frac{1}{2})$. We define 
\begin{equation*}
\mathcal{L}^{A}_{\Omega,\pi}:D(\mathcal{L}^{A}_{\Omega,\pi})\subset L_{\pi}^2(\Omega)\to L_{\pi}^2(\Omega),\quad
u\mapsto -div(A\nabla)u.
\end{equation*}
$\mathcal{L}^{A}_{\Omega,\pi}$ is the section of $\mathcal{L}^{A}(\bm{\kappa})$ along the line $\kappa_2=\pi$. The spectrum of $\mathcal{L}^{A}_{\Omega,\pi}$ can be decomposed by its Floquet components, i.e. 
\[
\sigma(\mathcal{L}^{A}_{\Omega,\pi})=\cup_{0\leq \kappa_1<2\pi}\sigma(\mathcal{L}^{A}_{\Omega,\pi}(\kappa_1)),
\]
where $\mathcal{L}^{A}_{\Omega,\pi}(\kappa_1)=\mathcal{L}^A((\kappa_1,\pi))$. Consequently,
\begin{equation*}
\sigma(\mathcal{L}^{A}_{\Omega,\pi})=\cup_{n\geq 1}\{\lambda_{n}(\kappa_1;\pi),\, 0\leq \kappa_1<2\pi\}
=\cup_{n\geq 1}\{\lambda_{n}((\kappa_1,\pi)),\, 0\leq \kappa_1<2\pi\},
\end{equation*}
where $\{\lambda_{n}(\kappa_1;\pi): n\geq 1\}$ are the Floquet-Bloch eigenvalues of $\mathcal{L}^{A}_{\Omega,\pi}$. On the other hand, since $\mathcal{L}^{A}_{\Omega,\pi}(\kappa_1)$ depends analytically on $\kappa_1$, the Kato-Rellich theorem \cite{kato2013perturbation} indicates that there exist a family of analytic functions $\{\mu_{n}(\kappa_1;\pi)\}_{n\geq 1}$ which forms a rearrangement of $\{\lambda_{n}(\kappa_1;\pi)\}_{n\geq 1}$
\begin{equation*}
\cup_{n\geq 1}\{\mu_{n}(\kappa_1;\pi)\}
=\cup_{n\geq 1}\{\lambda_{n}(\kappa_1;\pi)\},\quad 0\leq \kappa_1<2\pi.
\end{equation*}
We call $\{\mu_{n}(\kappa_1;\pi)\}_{n\geq 1}$ the analytically labeled Floquet-Bloch eigenvalues of $\mathcal{L}^{A}_{\Omega,\pi}$. For simplicity, we abbreviate $\mu_{n}(\kappa_1; \pi)$ as $\mu_{n}(\kappa_1)$ when the context is clear. The normalized Bloch modes associated with $\mu_{n}(\kappa_1)$ can be chosen to depend analytically on $\kappa_1$ and are denoted by $v_n(\mathbf{x}; \kappa_1)$. 
\begin{proposition}[\cite{joly2016solutions}] \label{prop_analytic_label}
For each integer $n\geq 1$, there exists a complex neighborhood $\mathcal{D}_n$ that contains the real line $\mathbf{R}$, and two analytic maps 
\begin{equation*}
\kappa_1\in \mathcal{D}_n \mapsto \mu_{n}(\kappa_1)\in\mathbf{C},\quad
\kappa_1\in \mathcal{D}_n \mapsto v_n(\bm{x};\kappa_1)\in H^1_{(\kappa_1,\pi)}(\mathbf{R}^2),
\end{equation*}
such that $(\mu_{n}(\kappa_1),v_n(\bm{x};\kappa_1))$ $(n\geq 1)$ represent all Floquet-Bloch eigenpairs of $\mathcal{L}^{A}_{\Omega,\pi}(\kappa_1)$ for $0\leq \kappa_1<2\pi$.
\end{proposition}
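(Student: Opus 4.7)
The plan is to reduce the problem to a standard analytic self-adjoint perturbation problem on a fixed Hilbert space, and then invoke the Rellich--Kato analytic theorem for operator families with compact resolvent. The only nontrivial point is the $\kappa_1$-dependence of the domain $H^1_{(\kappa_1,\pi)}(\mathbf{R}^2)$, which we handle by a gauge transformation.

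First, I would introduce the unitary gauge map $U_{\kappa_1}:u(\bm{x})\mapsto e^{-i\kappa_1 x_1}u(\bm{x})$, which sends $H^1_{(\kappa_1,\pi)}(\mathbf{R}^2)$ onto the fixed space $\mathcal{H}:=H^1_{(0,\pi)}(\mathbf{R}^2)$, and consider the conjugated family
\begin{equation*}
\widetilde{\mathcal{L}}(\kappa_1)\;:=\;U_{\kappa_1}\,\mathcal{L}^{A}_{\Omega,\pi}(\kappa_1)\,U_{\kappa_1}^{-1}
\;=\;-(\nabla+i\kappa_1\bm{e}_1)\cdot\bigl(A(\bm{x})(\nabla+i\kappa_1\bm{e}_1)\bigr).
\end{equation*}
This is a polynomial of degree two in $\kappa_1$ with operator coefficients, hence it defines an analytic family of type (A) on $\mathcal{H}$ in the sense of Kato, with common domain $D(\widetilde{\mathcal{L}}(0))\subset\mathcal{H}$. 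For $\kappa_1\in\mathbf{R}$, $\widetilde{\mathcal{L}}(\kappa_1)$ is self-adjoint; for $\kappa_1$ complex, it is a closed sectorial operator with numerical range staying in a fixed sector near the real axis provided $|\text{Im}\,\kappa_1|$ is small.

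Next, I would use that the underlying Floquet cell is bounded to conclude that the resolvent of $\widetilde{\mathcal{L}}(\kappa_1)$ is compact for each $\kappa_1$ in a complex neighborhood of $\mathbf{R}$; the spectrum is therefore purely discrete and consists of eigenvalues of finite multiplicity. For each $\kappa_1^{0}\in\mathbf{R}$, the Rellich--Kato theorem for analytic self-adjoint families then yields a complex neighborhood $\mathcal{V}(\kappa_1^0)$ and a set of analytic functions $\{\widetilde{\mu}_{n}^{(0)}(\kappa_1)\}_{n\ge 1}$ together with analytic eigenfunctions $\{\widetilde{v}_{n}^{(0)}(\bm{x};\kappa_1)\}_{n\ge 1}\subset\mathcal{H}$ enumerating, with multiplicity, all eigenpairs of $\widetilde{\mathcal{L}}(\kappa_1)$ for $\kappa_1\in\mathcal{V}(\kappa_1^0)$. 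The key point of this local theorem is that eigenvalues that cross on the real axis remain individually analytic after a permutation of labels: the analytic branches may exchange values at a crossing but no genuine branch point appears on $\mathbf{R}$.

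Third, I would patch these local labelings into a global one on a common complex neighborhood $\mathcal{D}_n\supset\mathbf{R}$. On each compact real interval this is a finite covering argument: cover $[0,2\pi]$ (or $\mathbf{R}$) by finitely many neighborhoods $\mathcal{V}(\kappa_1^{(j)})$, and on each overlap identify the local analytic eigenvalues by continuation, using the fact that the union of all local labelings coincides with the full spectrum. Periodicity in $\kappa_1\mapsto\kappa_1+2\pi$ (after accounting for the gauge, i.e. $U_{\kappa_1+2\pi}U_{\kappa_1}^{-1}$ is multiplication by a Bloch-periodic unit) lets us extend to all of $\mathbf{R}$. Finally, setting $v_n(\bm{x};\kappa_1):=U_{\kappa_1}^{-1}\widetilde{v}_n(\bm{x};\kappa_1)=e^{i\kappa_1 x_1}\widetilde{v}_n(\bm{x};\kappa_1)$ and $\mu_n(\kappa_1):=\widetilde{\mu}_n(\kappa_1)$ produces the claimed analytic maps, with $v_n(\cdot;\kappa_1)\in H^1_{(\kappa_1,\pi)}(\mathbf{R}^2)$ by construction.

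The main obstacle is the bookkeeping of the global labeling at crossings: when two analytic branches meet at a real $\kappa_1^{*}$ with identical value, one must make a consistent choice of local labels on overlapping patches so that the $n$th branch extends analytically across $\kappa_1^{*}$. This is resolved by the standard Rellich permutation argument --- in the self-adjoint real-analytic category, branch points do not occur on $\mathbf{R}$, and any two local labelings on an overlap agree up to a permutation of the indices, so one can relabel once and for all to obtain a single globally analytic sequence.
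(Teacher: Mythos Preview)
The paper does not supply its own proof of this proposition: it is stated with the citation \cite{joly2016solutions} and taken as a known result. In the surrounding text the paper explicitly attributes the existence of the analytic labeling to the Kato--Rellich theorem, which is exactly the tool you invoke. Your sketch --- gauge-conjugating to a fixed space, observing that the resulting family is polynomial in $\kappa_1$ and hence analytic of type (A) with compact resolvent, applying the Rellich--Kato real-analytic eigenvalue theorem locally, and patching labels across crossings --- is the standard route and matches what the cited reference does; there is nothing to correct.
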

The following follows directly from Assumption \ref{def_quadratic_degenracy} and the reflection symmetry \eqref{eq_symm_momen_sapce_2}.
\begin{proposition} \label{prop_mu12_asymptotic_even}
Under Assumption \ref{def_quadratic_degenracy}, the first two branches of analytic Floquet-Bloch eigenvalues can be chosen so that they meet at $(\pi, \lambda_*)$ and satisfy $\mu_n(\kappa_1)=\lambda_{n}(\bm{\kappa}^{(2)}+(\kappa_1-\pi)\bm{e}_1)$ for $|\kappa_1-\pi|\ll 1$.  
Moreover,  
\begin{equation*}
\mu_n(\kappa_1)=\mu_n(2\pi-\kappa_1), \quad \mbox{for $n=1,2$.}
\end{equation*}
\end{proposition}

\section{Absence of linear degeneracies}
In this section, we prove Theorem \ref{thm_non_exist_dirac_point} for $\bm{\kappa}=\bm{\kappa}^{(2)}$ by following the argument in \cite{berkolaiko2018symmetry}. The proof for $\bm{\kappa}=\bm{\kappa}^{(1)}$ and $\bm{\kappa}^{(3)}$ is similar.

\subsection{Representation theory of $C_{4v}$ point group and symmetry in momentum space}
We first recall some essential representation theory. Let $\mathcal{L}$ be a self-adjoint operator acting on a separable Hilbert space $\mathcal{X}$. Let $\mathcal{S}$ be a finite group of unitary operators that act on $\mathcal{X}$ and commute with $\mathcal{L}$. Then $\mathcal{X}$ can be decomposed into an orthogonal direct sum of subspaces as follows  
\begin{equation*}
\mathcal{X}=\oplus_{\rho}\mathcal{X}_{\rho},
\end{equation*}
where each $\mathcal{X}_{\rho}$ comprising multiple isomorphic copies of the irreducible representation 
$\rho$ of $\mathcal{S}$. Each copy is referred to as an isotypic component \cite{berkolaiko2018symmetry}.

Since $\mathcal{L}$ commutes with $\mathcal{S}$, each isotypic component $\mathcal{X}_{\rho}$ is invariant under $\mathcal{L}$. Moreover, if $\mathcal{L}$ has a discrete spectrum, then the restriction of $\mathcal{L}$ to $\mathcal{X}_{\rho}$ has eigenvalues with multiplicities divisible by the dimension of $\rho$.

We now restrict to the case $\mathcal{S}=C_{4v}$, where $\mathcal{S}$ is generated by $\mathcal{R}$ ($\pi/2$-rotation) and $\mathcal{M}_2$ ($x_1$-reflection), with the following rules
\begin{equation*}
\mathcal{R}^4=\mathcal{M}_2^2=I,\quad
\mathcal{M}_2\mathcal{R}^{-1}=\mathcal{R}\mathcal{M}_2.
\end{equation*}
Note that $\mathcal{S}$ acts invariantly on the spaces $L^2_{\bm{\kappa}^{(i)}}$ for $i=1,2$, since the quasi-periodicity of $u$ is preserved under $\mathcal{S}$ if $u\in L^2_{\bm{\kappa}^{(i)}}$. Additionally,  
$\mathcal{L}^{A}(\bm{\kappa}^{(i)})$ commutes with $\mathcal{S}$.  Within each space $L^2_{\bm{\kappa}^{(i)}}$, $\mathcal{S}$ admits four one-dimensional representations and a unique two-dimensional representation, as listed below. 
\begin{equation*} 
\rho_1:\quad
\mathcal{R}\mapsto (1),\quad
\mathcal{M}_2\mapsto (1); \quad
\rho_2:\quad
\mathcal{R}\mapsto (1),\quad
\mathcal{M}_2\mapsto (-1);
\end{equation*}
\begin{equation*}
\rho_3:\quad
\mathcal{R}\mapsto (-1),\quad
\mathcal{M}_2\mapsto (1);\quad
\rho_4:\quad
\mathcal{R}\mapsto (-1),\quad
\mathcal{M}_2\mapsto (-1);
\end{equation*}
\begin{equation} \label{eq_2d_rep}
\rho_5:\quad
\mathcal{R}\mapsto i\sigma_3=
\begin{pmatrix}
i & 0 \\ 0 & -i
\end{pmatrix}
,\quad
\mathcal{M}_2\mapsto \sigma_1=
\begin{pmatrix}
0 & 1 \\ 1 & 0
\end{pmatrix}.
\end{equation}
Particularly, the following holds 
\begin{proposition} \label{prop_classify_degenerate_mode}
Suppose $\{\lambda,u\}\in\mathbf{R}\times L^2_{\bm{\kappa}^{(2)}}$ is an eigenpair of $\mathcal{L}^{A}(\bm{\kappa}^{(2)})$. Then the exists a unique $\rho_k$ ($1\leq k\leq 5$) such that $u\in L^2_{\bm{\kappa}^{(2)},\rho_k}$. Here $L^2_{\bm{\kappa}^{(2)},\rho_k}$ denotes an isotypic component of $L^2_{\bm{\kappa}^{(2)}}$ associated with the representation $\rho_k$. 
\end{proposition}

\subsection{Proof of Theorem \ref{thm_non_exist_dirac_point} }
We prove Theorem \ref{thm_non_exist_dirac_point} in this subsection by contradiction.  Without loss of generality, we assume that the first and second spectral bands of the operator $\mathcal{L}^{A}$ touch at $(\bm{\kappa}^{(2)},\lambda_*)$ and that $(\bm{\kappa}^{(2)},\lambda_*)$ is a linear degenerate point, i.e. the conditions in Definition \ref{def_linear_degenracy} hold for $n_*=1$ and ($\bm{\kappa}_*,\lambda_*)=(\bm{\kappa}^{(2)},\lambda_*)$.  We denote $u_1=u_1(\bm{x};\bm{\kappa}^{(2)})$ and $u_2=u_2(\bm{x};\bm{\kappa}^{(2)})$ for ease of notations. By Proposition \ref{prop_classify_degenerate_mode}, there are two cases:
\begin{itemize}
    \item Case 1. $u_1,u_2\in L^2_{\bm{\kappa}^{(2)},\rho_5}$;
    \item Case 2. $u_1,u_2\in \oplus_{k=1}^{4}L^2_{\bm{\kappa}^{(2)},\rho_k}$.
\end{itemize}
We shall derive contradictions in both cases to prove Theorem \ref{thm_non_exist_dirac_point}.

$\mathbf{Proof \,\,for \,\,Case \,1}$: Without loss of generality, we set $\mathcal{R}u_1=iu_1$, $\mathcal{R}u_2=-iu_2$. In the first-order perturbation argument (i.e. ignoring the 
$\mathcal{O}(|\bm{\kappa}-\bm{\kappa}^{(2)}|^2)$ term in \eqref{eq_conical_dispersion}), the dispersion surface near $(\bm{\kappa}^{(2)},\lambda_*)$ can be obtained by solving the following equation (See \cite{qiu2023mathematical,fefferman2017topologically,fefferman2016honeycomb_edge})
\begin{equation*}
\det(\delta \kappa_1\cdot h_1+\delta \kappa_2\cdot h_2-\delta\lambda)=0,
\end{equation*}
where $\delta \kappa_{i}:=(\bm{\kappa}_{i}-\bm{\kappa}^{(2)}_{i})\cdot \bm{e}_i$, $\delta \lambda:= \lambda-\lambda_*$, and
\begin{equation*}
h_i=
\begin{pmatrix}
\big( u_1,\frac{\partial \mathcal{L}^A}{\partial \kappa_i}(\bm{\kappa}^{(2)})u_1\big)
& \big( u_1,\frac{\partial \mathcal{L}^A}{\partial \kappa_i}(\bm{\kappa}^{(2)})u_2\big) \\
\big( u_2,\frac{\partial \mathcal{L}^A}{\partial \kappa_i}(\bm{\kappa}^{(2)})u_1\big)
& \big( u_2,\frac{\partial \mathcal{L}^A}{\partial \kappa_i}(\bm{\kappa}^{(2)})u_2\big)
\end{pmatrix}
,\quad i=1,2.
\end{equation*}
Here $(\cdot,\cdot)$ denotes the $L^2(Y)-$inner product. We claim that $h_1=h_2=0$. Then the first-order terms in the asymptotic expansion of the dispersion function $\lambda=\lambda(\bm{\kappa})$ near $\bm{\kappa}=\bm{\kappa}^{(2)}$ vanishes, and this contradicts to the definition of Dirac points. We now prove the claim by exploiting the symmetry of $u_1$ and $u_2$. Taking $\bm{\kappa}=\bm{\kappa}^{(2)}+\delta \kappa_{1}\cdot \bm{e}_1+\delta \kappa_{2}\cdot \bm{e}_2$ in \eqref{eq_symm_momen_sapce_1}, we have
\begin{equation*}
\begin{aligned}
\mathcal{R}\mathcal{L}^{A}(\bm{\kappa}^{(2)}+\delta \kappa_{1}\cdot \bm{e}_1+\delta \kappa_{2}\cdot \bm{e}_2)\mathcal{R}^{-1}
&=\mathcal{L}^{A}\Big(R^{-1}\big(\bm{\kappa}^{(2)}+\delta \kappa_{1}\cdot \bm{e}_1+\delta \kappa_{2}\cdot \bm{e}_2\big)\Big) \\
&=\mathcal{L}^{A}\big(\bm{\kappa}^{(2)}-2\pi\bm{e}_2+\delta \kappa_{2}\cdot \bm{e}_1-\delta \kappa_{1}\cdot \bm{e}_2\big).
\end{aligned}
\end{equation*}
The equality \eqref{eq_symm_momen_sapce_3} gives
\begin{equation*}
\begin{aligned}
\mathcal{R}\mathcal{L}^{A}(\bm{\kappa}^{(2)}+\delta \kappa_{1}\cdot \bm{e}_1+\delta \kappa_{2}\cdot \bm{e}_2)\mathcal{R}^{-1}
=\mathcal{L}^{A}\big(\bm{\kappa}^{(2)}+\delta \kappa_{2}\cdot \bm{e}_1-\delta \kappa_{1}\cdot \bm{e}_2\big).
\end{aligned}
\end{equation*}
By expanding the equation to the first order, we arrive at
\begin{equation} \label{eq_sec3_1}
\begin{aligned}
\mathcal{R}\big(
\delta \kappa_{1}\cdot\frac{\partial \mathcal{L}^A}{\partial \kappa_1}(\bm{\kappa}^{(2)})
+
\delta \kappa_{2}\cdot\frac{\partial \mathcal{L}^A}{\partial \kappa_2}(\bm{\kappa}^{(2)})
\big)\mathcal{R}^{-1}
=
-
\delta \kappa_{1}\cdot\frac{\partial \mathcal{L}^A}{\partial \kappa_2}(\bm{\kappa}^{(2)})
+
\delta \kappa_{2}\cdot\frac{\partial \mathcal{L}^A}{\partial \kappa_1}(\bm{\kappa}^{(2)}).
\end{aligned}
\end{equation}
Note that the matrix of $\mathcal{R}$ under the basis $\{u_1,u_2\}$ is $i\sigma_3$ by \eqref{eq_2d_rep}. Thus, by conjugating both sides of \eqref{eq_sec3_1} with $u_i,u_j$ ($i,j\in\{1,2\}$), we obtain
\begin{equation*}
(-i\sigma_3)(\delta \kappa_{1}\cdot h_1+\delta \kappa_{2}\cdot h_2)(i\sigma_3)
=-\delta \kappa_{1}\cdot h_2+\delta \kappa_{2}\cdot h_1.
\end{equation*}
Separating the variables $\delta \kappa_1$ and $\delta \kappa_2$ yields
\begin{equation} \label{eq_sec3_2}
(-i\sigma_3)h_1(i\sigma_3)=-h_2,\quad
(-i\sigma_3)h_2(i\sigma_3)=h_1.
\end{equation}
Hence
\begin{equation*}
h_1=(-i\sigma_3)h_2(i\sigma_3)=-(-i\sigma_3)^2 h_1(i\sigma_3)^2=-h_1.
\end{equation*}
Thus $h_1=0$, and consequently, $h_2=-(-i\sigma_3)h_1(i\sigma_3)=0$.

\medskip

$\mathbf{Proof \,\,for \,\,Case \,2}$: The proof is similar to Case 1. In fact, in all subcases, equation \eqref{eq_sec3_2} holds, which then implies that $h_1=h_2=0$, and hence leads to a contradiction. For instance, 
when $u_1\in L^2_{\bm{\kappa}^{(2)},\rho_1}$ and $u_2\in L^2_{\bm{\kappa}^{(2)},\rho_3}$, the matrix of $\mathcal{R}$ under the basis $\{u_1,u_2\}$ is $\sigma_3$ since $\mathcal{R}u_1=u_1$, $\mathcal{R}u_2=-u_2$. Then $\sigma_3h_1\sigma_3=-h_2, \sigma_3h_2\sigma_3=h_1$, i.e. equation \eqref{eq_sec3_2} holds. The other subcases can be proved similarly.


\section{Bloch modes and their momentum derivatives at the quadratic degenerate point}

In this section, we investigate the Bloch modes and their momentum derivatives at the quadratic degenerate point, which are crucial for understanding the phase transition phenomena in its vicinity and for analyzing the interface modes that bifurcate from it. We explicitly construct two analytical branches of Bloch modes in the vicinity of the quadratic degenerate point using a perturbation argument and demonstrate that the modes at this point possess opposite parities compared to their momentum derivatives. The perturbation argument also paves the way for the more dedicated perturbation arguments in Section 7. 

We first examine the parity of the Bloch modes at the quadratic degenerate point. 
Note that the analytic Floquet-Bloch eigenvalues $\mu_n(\kappa_1)$ coincide with $\lambda_n(\kappa_1;\pi)$ for $n=1,2$ and $|\kappa_1-\pi|\ll 1$, as in Proposition \ref{prop_mu12_asymptotic_even}. Consequently, Assumption \ref{def_quadratic_degenracy}(4) specifies the parity of $v_n(\bm{x}; \pi)$ as follows:
\begin{equation} \label{eq_v12_parity_1}
(\mathcal{M}_1v_1)(\cdot;\pi)=-  v_1(\cdot;\pi),\quad (\mathcal{M}_1v_2)(\cdot;\pi)=  v_2(\cdot;\pi).
\end{equation}


We next focus on the momentum derivative of 
$v_n(\bm{x};\kappa_1)$ ($n=1,2$), i.e. $\partial_{\kappa_1} v_n(\cdot;\kappa_1)$, which satisfy the following equation:
\begin{equation} \label{eq_partial_kappa1_vn_interior}
\big(-\nabla\cdot A\nabla-\mu_n(\kappa_1)\big)\partial_{\kappa_1} v_n(\cdot;\kappa_1)
=\mu_{n}^{\prime}(\kappa_1)v_n(\cdot;\kappa_1).
\end{equation}
Since $\mu^{\prime}_{n}(\pi)=0$ for $n=1,2$, it follows that $(\partial_{\kappa_1}v_n)(\bm{x};\pi)$ are generalized eigenmodes of the operator $\mathcal{L}^{A}$. However, these modes are not Bloch modes due to the boundary conditions
\begin{equation*}
\partial_{\kappa_1}v_n(\bm{x}+\bm{e}_1;\pi)
=e^{i\pi}\partial_{\kappa_1}v_n(\bm{x};\pi)
+ie^{i\pi}v_n(\bm{x};\pi).
\end{equation*}
Nevertheless, $(\partial_{\kappa_1}v_n)(\bm{x};\pi)$ are extended modes of the degenerate eigenvalue $\lambda=\lambda_*$. 

\begin{remark}
As established in our previous works \cite{qiu2023mathematical,li2024interface}, the interface modes bifurcated from a Dirac point are determined by the extended modes at the Dirac point that consist solely of Bloch modes. The presence of additional extended modes, $(\partial_{\kappa_1}v_n)(\bm{x};\pi)$, highlights a critical difference between quadratic degenerate points and Dirac points. Consequently, the study of interface modes necessitates a new framework that incorporates these momentum derivatives, as developed in this paper. 
\end{remark}

\subsection{Parity of momentum derivative of Bloch modes at the quadratic degenerate point}
We investigate the parity of the momentum derivative of Bloch modes at the quadratic degenerate point. Specifically, we demonstrate that, with an appropriate gauge choice, $\partial_{\kappa_1}v_n(\bm{x};\pi)$ exhibits opposite parity to $v_n(\bm{x};\pi)$. 
This property elucidates how the parity of Bloch modes evolves across the spectral degenerate point and significantly simplifies our calculations, particularly the energy flux computation in Section 6, facilitating the analysis of interface modes in Section 9. The construction of $v_n(\bm{x};\pi)$, $n=1,2$ that satisfies this gauge condition will be provided in Section 4.2. The main result of this section is:
\begin{proposition} \label{prop_local_gauge}
If
\begin{equation} \label{eq_orthogonal_gauge_criterion}
\text{Im}(\partial_{\kappa_1}v_n(\cdot;\pi),v_n(\cdot;\pi))=0,\quad n=1,2,
\end{equation}
then
\begin{equation} \label{eq_v12_parity_2}
(\mathcal{M}_1\partial_{\kappa_1}v_1)(\cdot;\pi)= (\partial_{\kappa_1}v_1)(\cdot;\pi),\quad
(\mathcal{M}_1\partial_{\kappa_1}v_2)(\cdot;\pi)=- (\partial_{\kappa_1}v_2)(\cdot;\pi).
\end{equation}
\end{proposition}

\begin{proof}
We prove \eqref{eq_v12_parity_2} for $n=1$. The proof of $n=2$ is similar. By Proposition \ref{prop_mu12_asymptotic_even}, $\mu_{1}(\kappa_1)=\mu_{1}(2\pi-\kappa_1)$. By the reflection symmetry 
$[\mathcal{L}^{A},\mathcal{M}_1]=0$, for each $\kappa_1$ with $|\kappa_1-\pi|\ll 1$, $(\mathcal{M}_1 v_1)(\cdot;\kappa_1)$ is a Bloch mode of $\mathcal{L}^{A}_{\Omega,\pi}(2\pi-\kappa_1)$ associated with the eigenvalue $\mu_1(2\pi-\kappa_1)$. Consequently, there exists  $c(\kappa_1)\in\mathbf{C}$ with $|c(\kappa_1)|=1$ such that
\begin{equation} \label{eq_local_parity_vn}
(\mathcal{M}_1 v_1)(\cdot;\kappa_1)=c(\kappa_1)\cdot v_1(\cdot;2\pi-\kappa_1).
\end{equation}
By letting $\kappa_1=\pi$ in \eqref{eq_local_parity_vn}, we see that $v_1(\cdot;\pi)$ is an eigenfunction of the reflection operator $\mathcal{M}_1$ with eigenvalue $c(\pi)$. Hence, by \eqref{eq_v12_parity_1}, we know $c(\pi)=-1$.

By conjugating the equation \eqref{eq_local_parity_vn} with $v_1(\cdot;2\pi-\kappa_1)$, we have 
\begin{equation}
\label{eq_sec3_7}
c(\kappa_1)=\big((\mathcal{M}_1 v_1)(\cdot;\kappa_1),v_1(\cdot;2\pi-\kappa_1)\big).
\end{equation}
The analyticity of $v_1(\cdot;\kappa_1)$ implies $c(\kappa_1)$ is smooth. Differentiating \eqref{eq_sec3_7} gives
\footnotesize
$$
\begin{aligned}
c^{\prime}(\pi)
&=\big(\mathcal{M}_1(\partial_{\kappa_1}v_1)(\cdot;\pi),v_1(\bm{x};\pi)\big)-\big(\mathcal{M}_1v_1(\cdot;\pi),\partial_{\kappa_1}v_1(\cdot;\pi)\big) =\big(\partial_{\kappa_1}v_1(\cdot;\pi),\mathcal{M}_1 v_1(\bm{x};\pi)\big)-\big(\mathcal{M}_1v_1(\cdot;\pi),\partial_{\kappa_1}v_1(\cdot;\pi)\big).
\end{aligned}
$$
\normalsize
Using \eqref{eq_v12_parity_1},
\begin{equation*}
c^{\prime}(\pi)=-\big(\partial_{\kappa_1}v_1(\cdot;\pi),v_1(\bm{x};\pi)\big)+\big(v_1(\cdot;\pi),\partial_{\kappa_1}v_1(\cdot;\pi)\big)
=-2i\text{Im}(\partial_{\kappa_1}v_1(\cdot;\pi),v_1(\cdot;\pi))=0.
\end{equation*}
Hence \eqref{eq_orthogonal_gauge_criterion} implies $c^{\prime}(\pi)=0$. Differentiating \eqref{eq_local_parity_vn} at $\kappa_1=\pi$ gives
\begin{equation*}
\begin{aligned}
(\mathcal{M}_1 \partial_{\kappa_1} v_n)(\cdot;\pi)
&=-c(\pi)\cdot (\partial_{\kappa_1} v_n)(\cdot;\pi)
+c^{\prime}(\pi)\cdot v_n(\cdot;\pi) 
=(\partial_{\kappa_1} v_n)(\cdot;\pi),
\end{aligned}
\end{equation*}
which concludes the proof of \eqref{eq_v12_parity_2} for $n=1$.
\end{proof}

\subsection{Construction of Bloch modes near the quadratic degenerate point}

In this section, we explicitly construct the Bloch modes $v_n(\bm{x};\kappa_1)$ for $n=1, 2$ and $|\kappa_1-\pi|\ll 1$ that satisfy \eqref{eq_orthogonal_gauge_criterion} by using the Bloch modes at $(\bm{\kappa}=\bm{\kappa}^{(2)}, \lambda_*)$ (see \eqref{eq_assump_mode_symmetry}). These constructed Bloch modes will be frequently utilized in the subsequent analysis. 

\begin{proposition} \label{prop_pert_theory_for_unpertrubed}
There exists two analytic branches of Bloch modes $v_1(\bm{x};\kappa_1)$ and $v_1(\bm{x};\kappa_1)$ that satisfy \eqref{eq_orthogonal_gauge_criterion} (constructed in \eqref{eq_perturb_theory_for_unperturbed_proof_14_v1} and \eqref{eq_perturb_theory_for_unperturbed_proof_14_v2}). Moreover, they admit the following asymptotics for $|\kappa_1-\pi|\ll 1$
\footnotesize
\begin{equation} \label{eq_ansatz_unperturb_eigenfunction_1}
v_1(\bm{x};\kappa_1)
=\frac{v_1+(\kappa_1-\pi)\partial_{\kappa_1}v_1(\bm{x};\pi)
+\sum_{k=2}^{5}v^{(k)}_{1}(\bm{x};\pi)
+\mathcal{O}((\kappa_1-\pi)^6)}{\sqrt{1+N_1^{(2)}(\kappa_1-\pi)^2+N_1^{(3)}(\kappa_1-\pi)^3+\mathcal{O}((\kappa_1-\pi)^4)}},
\end{equation}
\begin{equation} \label{eq_ansatz_unperturb_eigenfunction_2}
v_2(\bm{x};\kappa_1)
=\frac{v_1+(\kappa_1-\pi)\partial_{\kappa_1}v_2(\bm{x};\pi)
+\sum_{k=2}^{5}v^{(k)}_2(\bm{x};\pi)
+\mathcal{O}((\kappa_1-\pi)^6)}{\sqrt{1+N_2^{(2)}(\kappa_1-\pi)^2+N_2^{(3)}(\kappa_1-\pi)^3+\mathcal{O}((\kappa_1-\pi)^4)}},
\end{equation}
\normalsize
where $v_{i}^{(j)}$ and the normalization constants $N_{i}^{(j)}$ are introduced in \eqref{eq_perturb_theory_for_unperturbed_final_v1k_def}, \eqref{eq_perturb_theory_for_unperturbed_final_v2k_def}, \eqref{eq_normalization_factor_n1k_def} and \eqref{eq_normalization_factor_n2k_def}.
\end{proposition}

\begin{remark}
In equations \eqref{eq_ansatz_unperturb_eigenfunction_1} and \eqref{eq_ansatz_unperturb_eigenfunction_2}, we provide an expansion of the Bloch modes up to the fifth order, which exceeds the requirements for establishing the first-order condition \eqref{eq_orthogonal_gauge_criterion}. This comprehensive expansion is necessary to characterize the phase transition phenomenon and to capture its impact on the interface modes bifurcating from the quadratic degenerate point (see Remark \ref{rmk_importance_expansion}).
\end{remark}

To prove Proposition \ref{prop_pert_theory_for_unpertrubed}, we first lay the groundwork for the perturbation analysis. Introduce the operator
\begin{equation} \label{eq_transformed_operator_unperturbed}
\begin{aligned}
\tilde{\mathcal{L}}^{A}_{\Omega,\pi}(\kappa_1):
&\quad H\to H^*,\\
&\quad u\mapsto 
e^{-i\kappa_1 x_1}\circ \mathcal{L}^{A}_{\Omega,\pi}(\kappa_1)\circ e^{i\kappa_1 x_1}u 
=-(\nabla+i\kappa_1 \bm{e}_1)\cdot A(\nabla+i\kappa_1 \bm{e}_1)u,
\end{aligned}
\end{equation}
where $H=H_{(0,\pi)}^1(\mathbf{R}^2)$ and $H^*$ denotes the dual of $H$. The $H-H^*$ pairing is the natural extension of the $L^2(Y)-$inner product, still denoted as $(\cdot,\cdot)$. For each $u\in H$, we have for all $v\in H$
\begin{equation*}
( \tilde{\mathcal{L}}^{A}_{\Omega,\pi}(\kappa_1)u,v)=\mathfrak{a}^{A}_{\kappa_1}(u,v)
:=\int_{Y}\big(A(\bm{x})(\nabla+i\kappa_1 \bm{e}_1) u(\bm{x})\big)\cdot \overline{(\nabla+i\kappa_1 \bm{e}_1) v(\bm{x})}d\bm{x}.
\end{equation*}
It is evident that the eigenpairs of the operator $\mathcal{L}^{A+\delta\cdot B}_{\Omega,\pi}(\kappa_1)$ for $|(\kappa_1,\lambda)-(\pi,\lambda_*)|\ll 1$ can be obtained by solving those of $\tilde{\mathcal{L}}^{A+\delta\cdot B}_{\Omega,\pi}(\kappa_1)$. Moreover, the latter operators are more suitable for perturbation arguments as their domain $H$ is independent of both $\kappa_1$ and $\delta$.  We expand $\tilde{\mathcal{L}}^{A}_{\Omega,\pi}(\kappa_1)$ near $\kappa_1=\pi$: 
\begin{equation} \label{eq_expansion_operator_unperturbed}
\begin{aligned}
\tilde{\mathcal{L}}^{A}_{\Omega,\pi}(\kappa_1)
&=-(\nabla+i\pi \bm{e}_1)\cdot A(\nabla+i\pi \bm{e}_1) \\
&\quad +(\kappa_1-\pi)\big(-i(\nabla+i\pi \bm{e}_1)\cdot A\bm{e}_1
-i\bm{e}_1\cdot A(\nabla+i\pi \bm{e}_1) \big) \\
&\quad +(\kappa_1-\pi)^2(\bm{e}_1\cdot A\bm{e}_1) \\
&=:\tilde{\mathcal{L}}^{A,0}
+(\kappa_1-\pi)\tilde{\mathcal{L}}^{A,1}
+(\kappa_1-\pi)^2\tilde{\mathcal{L}}^{A,2}.
\end{aligned}
\end{equation}
Note that all these operators are $\mathcal{M}_2-$invariant:
\begin{equation} \label{eq_perturb_theory_for_unperturbed_proof_1}
[\tilde{\mathcal{L}}^{A,k},\mathcal{M}_2]=0
\quad (k=0,1,2)
\end{equation}
Let 
$$
v_n(\bm{x}):=u_{n}(\bm{x};\bm{\kappa}^{(2)}),\quad
\tilde{v}_n(\bm{x}):=e^{-i\pi x_1}v_n(\bm{x}), \quad n=1, 2,
$$
where $u_{n}(\bm{x};\bm{\kappa}^{(2)})$ are the Bloch modes at $\bm{\kappa}=\bm{\kappa}^{(2)}$. The parity of $u_{n}(\bm{x};\bm{\kappa}^{(2)})$ in \eqref{eq_assump_mode_symmetry} gives
\begin{equation}
\label{eq_perturb_theory_for_unperturbed_proof_7}
\mathcal{M}_2 \tilde{v}_1=\tilde{v}_1,\quad \mathcal{M}_2 \tilde{v}_2=-\tilde{v}_2.
\end{equation}
The symmetry of operators and functions is extensively utilized in subsequent analysis and calculations. The key points are summarized below:
\begin{lemma} \label{lem_symmetry_products}
Suppose $P_1$ commutes with $\mathcal{M}_2$ and $P_2$ anti-commutes, i.e. $[P_1,\mathcal{M}_2]=0$ and $\{P_2,\mathcal{M}_2\}=0$. Then
\begin{equation} \label{eq_symmetry_products_1}
(P_1\tilde{v}_i,\tilde{v}_j)=0 \quad \text{for $i\neq j$},
\end{equation}
\begin{equation} \label{eq_symmetry_products_2}
(P_2\tilde{v}_i,\tilde{v}_j)=0 \quad \text{for $i= j$}.
\end{equation}
\end{lemma}
\begin{proof}
\eqref{eq_symmetry_products_1} follows from the following identity
\begin{equation*}
\begin{aligned}
\big(P\tilde{v}_{i},\tilde{v}_{j}\big)
=\big(\mathcal{M}_2 P\tilde{v}_{i},\mathcal{M}_2 \tilde{v}_{j}\big)
=\big( P\mathcal{M}_2\tilde{v}_{i},\mathcal{M}_2 \tilde{v}_{j}\big)
=-\big(P\tilde{v}_{i},\tilde{v}_{j}\big).
\end{aligned}
\end{equation*}
and \eqref{eq_symmetry_products_2} is proved similarly.
\end{proof}

We also require the following properties of 
the operator $\tilde{\mathcal{L}}^{A}_{\Omega,\pi}(\pi)$, which are derived from the Fredholm alternative for second-order elliptic operators.
\begin{lemma} \label{lem_fred_alternative}
$\tilde{\mathcal{L}}^{A}_{\Omega,\pi}(\pi):H\to H^*$ is a Fredholm operator with zero index. Moreover, $\ker (\tilde{\mathcal{L}}^{A,0})-\lambda_*)=\text{span}\{\tilde{v}_1(\bm{x}),\tilde{v}_2(\bm{x})\}$.
\end{lemma}

We now are ready to prove Proposition \ref{prop_pert_theory_for_unpertrubed}. 

\begin{proof}
{\color{blue}Step 1.} We construct eigenfunctions of $\tilde{\mathcal{L}}^{A}_{\Omega,\pi}(\kappa_1)$ for $|\kappa_1-\pi|\ll 1$ in the following form
\begin{equation} \label{eq_perturb_theory_for_unperturbed_proof_2}
\tilde{v}=\tilde{v}^{(0)}+\tilde{v}^{(1)}
\quad \text{with}\quad
\tilde{v}^{(0)}=a\cdot\tilde{v}_{1}(\bm{x})+b\cdot\tilde{v}_{2}(\bm{x})\in H_1,\quad
\tilde{v}^{(1)}\in H_2,
\end{equation}
where $H_1:=\ker (\tilde{\mathcal{L}}^{A,0}-\lambda_*)$ and $H_2$ is the orthogonal complement of $H_1$ in the Hilbert space $H$. Denote the eigenvalue of $\tilde{v}$ by \[
\mu=\lambda_*+\mu^{(1)}, \quad |\mu^{(1)}|\ll 1, 
\]
and $p=\kappa_1-\pi$.
The equation $(\tilde{\mathcal{L}}^{A}_{\Omega,\pi}(\kappa_1)-\mu)\tilde{v}=0$ can be rewritten using \eqref{eq_expansion_operator_unperturbed} and \eqref{eq_perturb_theory_for_unperturbed_proof_2} as: 
\begin{equation} \label{eq_perturb_theory_for_unperturbed_proof_3}
\begin{aligned}
(\tilde{\mathcal{L}}^{A,0}-\lambda_*)\tilde{v}^{(1)}
&=\Big(\mu^{(1)}-p\tilde{\mathcal{L}}^{A,1}-p^2\tilde{\mathcal{L}}^{A,2}\Big)\tilde{v}^{(0)} 
+\Big(\mu^{(1)}-p\tilde{\mathcal{L}}^{A,1}-p^2\tilde{\mathcal{L}}^{A,2}\Big)\tilde{v}^{(1)}.
\end{aligned}
\end{equation}
Define the projector
\begin{equation} \label{eq_perturb_theory_for_unperturbed_proof_4}
\begin{aligned}
Q_{\perp}:\quad & H^*\to \text{Ran}(\tilde{\mathcal{L}}^{A,0}-\lambda_*),\quad
f\mapsto f-( f,\tilde{v}_1(\bm{x}))\tilde{v}_1(\bm{x})-( f,\tilde{v}_2(\bm{x})) \tilde{v}_2(\bm{x}).
\end{aligned}
\end{equation}
Applying $Q_{\perp}$ to \eqref{eq_perturb_theory_for_unperturbed_proof_3} yields
\begin{equation*}
\begin{aligned}
(\tilde{\mathcal{L}}^{A,0}-\lambda_*)\tilde{v}^{(1)}
&=Q_{\perp}\Big(\mu^{(1)}-p\tilde{\mathcal{L}}^{A,1}-p^2\tilde{\mathcal{L}}^{A,2}\Big)\tilde{v}^{(0)} 
+Q_{\perp}\Big(\mu^{(1)}-p\tilde{\mathcal{L}}^{A,1}-p^2\tilde{\mathcal{L}}^{A,2}\Big)\tilde{v}^{(1)}.
\end{aligned}
\end{equation*}
By Lemma \ref{lem_fred_alternative}, $\big(\tilde{\mathcal{L}}^{A,0}-\lambda_*\big)^{-1}\in\mathcal{B}(\text{Ran}(\tilde{\mathcal{L}}^{A,0}-\lambda_*),H_2)$. Hence the above equation can be rewritten as
\begin{equation} \label{eq_perturb_theory_for_unperturbed_proof_5}
\begin{aligned}
(I-T)\tilde{v}^{(1)}
=T\tilde{v}^{(0)},
\end{aligned}
\end{equation}
where
\begin{equation} \label{eq_T_expansion}
\begin{aligned}
T=T(p,\mu^{(1)})=(\tilde{\mathcal{L}}^{A,0}-\lambda_*)^{-1}Q_{\perp}(\mu^{(1)}-p\tilde{\mathcal{L}}^{A,1}-p^2\tilde{\mathcal{L}}^{A,2}) 
=\mu^{(1)}T_{\mu}+pT_{p}+p^2T_{p^2},
\end{aligned}
\end{equation}
with
\[
T_{\mu}=(\tilde{\mathcal{L}}^{A,0}-\lambda_*)^{-1}Q_{\perp}, \quad
T_{p}= (\tilde{\mathcal{L}}^{A,0}-\lambda_*)^{-1}\tilde{\mathcal{L}}^{A,1}, \quad 
T_{p^2}= (\tilde{\mathcal{L}}^{A,0}-\lambda_*)^{-1}\tilde{\mathcal{L}}^{A,2}). 
\]
For sufficiently small $p$ and $\mu^{(1)}$, $(I-T)^{-1}\in \mathcal{B}(H_2)$ is expanded as a Neumann series. Hence, \eqref{eq_perturb_theory_for_unperturbed_proof_5} implies
\begin{equation} \label{eq_perturb_theory_for_unperturbed_proof_6}
\begin{aligned}
\tilde{v}^{(1)}=(I-T)^{-1}T\tilde{v}^{(0)}
=\sum_{k\geq 1}T^{k}(p,\mu^{(1)})\tilde{v}^{(0)}.
\end{aligned}
\end{equation}
Thus, by solving $\tilde{v}^{(0)}$ and its corresponding eigenvalue, we obtain the expansion of the Bloch modes using \eqref{eq_perturb_theory_for_unperturbed_proof_2} and \eqref{eq_perturb_theory_for_unperturbed_proof_6}. Before proceeding, observe that equations \eqref{eq_perturb_theory_for_unperturbed_proof_1} and \eqref{eq_perturb_theory_for_unperturbed_proof_7} imply
\begin{equation*}
[(\tilde{\mathcal{L}}^{A,0}-\lambda_*)^{-1},\mathcal{M}_2]=[Q_{\perp},\mathcal{M}_2]=0.
\end{equation*}
Hence all operators in \eqref{eq_T_expansion} are $\mathcal{M}_2-$invariant, i.e., 
\begin{equation} \label{eq_perturb_theory_for_unperturbed_proof_9}
[T_{\mu},\mathcal{M}_2]=0,\quad [T_{p},\mathcal{M}_2]=0,\quad
[T_{p^2},\mathcal{M}_2]=0,\quad [T,\mathcal{M}_2]=0.
\end{equation}

{\color{blue}Step 2.} In this step, we solve the following finite-dimensional problem for $\tilde{v}^{(0)}=a\cdot\tilde{v}_{1}+b\cdot\tilde{v}_{2}$, obtained by substituting \eqref{eq_perturb_theory_for_unperturbed_proof_6} to \eqref{eq_perturb_theory_for_unperturbed_proof_3}:
\footnotesize
\begin{equation*}
(\tilde{\mathcal{L}}^{A,0}-\lambda_*)\sum_{k\geq 1}T^{k}(p,\mu^{(1)})\tilde{v}^{(0)}
=\Big(\mu^{(1)}-p\tilde{\mathcal{L}}^{A,1}-p^2\tilde{\mathcal{L}}^{A,2}\Big)\tilde{v}^{(0)} 
+\Big(\mu^{(1)}-p\tilde{\mathcal{L}}^{A,1}-p^2\tilde{\mathcal{L}}^{A,2}\Big)\sum_{k\geq 1}T^{k}(p,\mu^{(1)})\tilde{v}^{(0)}.
\end{equation*}
\normalsize
Taking dual pairs with $\tilde{v}_n$ ($n=1,2$) gives
\footnotesize
\begin{equation} \label{eq_perturb_theory_for_unperturbed_proof_10}
\Bigg(\Big(\mu^{(1)}-p\tilde{\mathcal{L}}^{A,1}-p^2\tilde{\mathcal{L}}^{A,2}\Big)\sum_{k\geq 0}T^{k}(p,\mu^{(1)})(a\cdot\tilde{v}_{1}+b\cdot\tilde{v}_{2})
,\tilde{v}_n\Bigg)=0.
\end{equation}
\normalsize
By \eqref{eq_perturb_theory_for_unperturbed_proof_1} and \eqref{eq_perturb_theory_for_unperturbed_proof_9}, all operators in the above identity commutes with $\mathcal{M}_2$. Hence, by Lemma \ref{lem_symmetry_products}, the cross terms involving both $\tilde{v}_{1}$ and $\tilde{v}_{2}$ in \eqref{eq_perturb_theory_for_unperturbed_proof_10} vanish. Therefore, equation \eqref{eq_perturb_theory_for_unperturbed_proof_10} reduces to a diagonal system:
\footnotesize
\begin{equation}
\label{eq_perturb_theory_for_unperturbed_proof_11}
\begin{aligned}
&\begin{pmatrix}
\Big((\mu^{(1)}-p\tilde{\mathcal{L}}^{A,1}-p^2\tilde{\mathcal{L}}^{A,2})\sum_{k\geq 0}T^{k}(p,\mu^{(1)})\tilde{v}_{1},\tilde{v}_1\Big) & 0 \\
0 & \Big((\mu^{(1)}-p\tilde{\mathcal{L}}^{A,1}-p^2\tilde{\mathcal{L}}^{A,2})\sum_{k\geq 0}T^{k}(p,\mu^{(1)})\tilde{v}_{2},\tilde{v}_2\Big)
\end{pmatrix}
\begin{pmatrix}
a \\ b
\end{pmatrix} \\
&=0.
\end{aligned}
\end{equation}
\normalsize
Equation \eqref{eq_perturb_theory_for_unperturbed_proof_11} admits the following two independent solutions, corresponding to the two dispersion curves intersecting at the quadratic degenerate point $(\pi,\lambda_*)$ as in Proposition \ref{prop_mu12_asymptotic_even}: 
\footnotesize
\begin{equation}
\label{eq_perturb_theory_for_unperturbed_proof_12_eigenfunction}
\tilde{v}^{(0)}=\tilde{v}_{1},
\end{equation}
\begin{equation}
\label{eq_perturb_theory_for_unperturbed_proof_12_eigenvalue}
\Big((\mu_1(\pi+p)-\lambda_*-p\tilde{\mathcal{L}}^{A,1}-p^2\tilde{\mathcal{L}}^{A,2})\sum_{k\geq 0}T^{k}(p,\mu_1(\pi+p)-\lambda_*)\tilde{v}_{1},\tilde{v}_1\Big)=0,
\end{equation}
\normalsize
and
\footnotesize
\begin{equation}
\label{eq_perturb_theory_for_unperturbed_proof_13_eigenfunction}
\tilde{v}^{(0)}=\tilde{v}_{2},
\end{equation}
\begin{equation}
\label{eq_perturb_theory_for_unperturbed_proof_13_eigenvalue}
\Big((\mu_2(\pi+p)-\lambda_*-p\tilde{\mathcal{L}}^{A,1}-p^2\tilde{\mathcal{L}}^{A,2})\sum_{k\geq 0}T^{k}(p,\mu_1(\pi+p)-\lambda_*)\tilde{v}_{2},\tilde{v}_2\Big)=0.
\end{equation}
\normalsize
With \eqref{eq_perturb_theory_for_unperturbed_proof_2}, \eqref{eq_perturb_theory_for_unperturbed_proof_6} and \eqref{eq_perturb_theory_for_unperturbed_proof_12_eigenfunction}, we construct the first branch of normalized Bloch mode that is analytic in $\kappa_1$: 
\footnotesize
\begin{equation} \label{eq_perturb_theory_for_unperturbed_proof_14_v1}
\begin{aligned}
v_1(\bm{x};\kappa_1)
&=\frac{e^{i\kappa_1 x_1}\sum_{k\geq 0}T^{k}(p,\mu_1(\pi+p)-\lambda_*)\tilde{v}_{1}}{\|e^{i\kappa_1 x_1}\sum_{k\geq 0}T^{k}(p,\mu_1(\pi+p)
-\lambda_*)\tilde{v}_{1}\|_{L^2(Y)}} 
=\frac{e^{i\kappa_1 x_1}\sum_{k\geq 0}T^{k}(p,\mu_1(\pi+p)-\lambda_*)\tilde{v}_{1}}{\|\sum_{k\geq 0}T^{k}(p,\mu_1(\pi+p)-\lambda_*)\tilde{v}_{1}\|_{L^2(Y)}}.
\end{aligned}
\end{equation}
\normalsize
The second branch is obtained similarly: 
\footnotesize
\begin{equation} \label{eq_perturb_theory_for_unperturbed_proof_14_v2}
\begin{aligned}
v_2(\bm{x};\kappa_1)
&=\frac{e^{i\kappa_1 x_1}\sum_{k\geq 0}T^{k}(p,\mu_1(\pi+p)-\lambda_*)\tilde{v}_{2}}{\|\sum_{k\geq 0}T^{k}(p,\mu_1(\pi+p)-\lambda_*)\tilde{v}_{2}\|_{L^2(Y)}}.
\end{aligned}
\end{equation}
\normalsize

{\color{blue}Step 3.} We verify that the Bloch mode constructed in \eqref{eq_perturb_theory_for_unperturbed_proof_14_v1} and \eqref{eq_perturb_theory_for_unperturbed_proof_14_v2} satisfy \eqref{eq_orthogonal_gauge_criterion}. We only do this for $v_1(\bm{x};\kappa_1)$. Instead of merely calculating the derivative $\partial_{\kappa_1}v_1(\bm{x};\pi)$, we expand the numerator of \eqref{eq_perturb_theory_for_unperturbed_proof_14_v1} to 
the fifth order and its denominator to the fourth order. These expansions will be used in Section 5-7. 

Using \eqref{eq_T_expansion} and expanding $\mu_n(\kappa_1)$ using \eqref{eq_quadratic_dispersion}, we obtain the following expansion 
\footnotesize
\begin{equation} \label{eq_perturb_theory_for_unperturbed_proof_15}
\begin{aligned}
\sum_{k\geq 0}T^{k}(p,\mu_1(\kappa_1)-\lambda_*)
&=1+pT_p
+p^2\Big(T_{p^2}+\frac{-\gamma_*}{2}T_{\mu}+T_{p}^2 \Big) \\
&\quad +p^3\Big(\frac{-\gamma_*}{2}(T_{\mu}T_{p}+T_{p}T_{\mu})+T_{p}T_{p^2}+T_{p^2}T_{p}+T_{p}^3 \Big) \\
&\quad +p^4\Big(-\eta_*T_{\mu}+\frac{\gamma_*^2}{4}T_{\mu}^2+T_{p^2}^2+
\frac{-\gamma_*}{2}(T_{\mu}T_{p^2}+T_{p^2}T_{\mu}) \\
&\quad\quad\quad\quad  
+\frac{-\gamma_*}{2}(T_{\mu}T_{p}^2+T_{p}T_{\mu}T_{p}+T_{p}^2T_{\mu})
+T_{p^2}T_{p}^2+T_{p}T_{p^2}T_{p}+T_{p}^2 T_{p^2}+T_{p}^4 \Big) \\
&\quad +p^5\Big(
-\eta_*(T_{\mu}T_{p}+T_{p}T_{\mu})
+\frac{\gamma_*^2}{4}(T_{\mu}^2T_{p}+T_{\mu}T_{p}T_{\mu}+T_{p}T_{\mu}^2 )\\
&\quad\quad\quad\quad 
+\frac{-\gamma_*}{2}(T_{\mu}T_{p^2}T_p+T_{\mu}T_pT_{p^2}+T_{p^2}T_{\mu}T_p+T_{p^2}T_pT_{\mu}+T_pT_{\mu}T_{p^2}+T_pT_{p^2}T_{\mu} ) \\
&\quad\quad\quad\quad 
+(T_{p^2}^2 T_p+T_{p^2}T_p T_{p^2}+T_p T_{p^2}^2) +\frac{-\gamma_*}{2}(T_{\mu}T_p^3+T_p T_{\mu} T_p^2+T_p^2T_{\mu}T_p +T_p^3T_{\mu} ) \\
&\quad\quad\quad\quad  
+(T_{p^2}T_p^3+T_p T_{p^2} T_p^2+T_p^2T_{p^2}T_p +T_p^3T_{p^2} ) +T_p^5 \Big) +\mathcal{O}(p^6) \\
&=:\sum_{k=0}^{5}p^k S^{(k)}(\gamma_*,\eta_*) +\mathcal{O}(p^6).
\end{aligned}
\end{equation}
\normalsize
We next expand the normalization factor $N_1(\kappa_1):=\|\sum_{k\geq 0}T^{k}(\kappa_1-\pi,\mu_1(\kappa_1)-\lambda_*)\tilde{v}_{1}\|_{L^2(Y)}$. Note that $\tilde{v}_1\perp \text{Ran }((\tilde{\mathcal{L}}^{A,0}-\lambda_*)^{-1}Q_{\perp})$ implies 
\begin{equation} \label{eq_perturb_theory_for_unperturbed_proof_20}
(\tilde{v}_1,S^{(k)}(\gamma_*,\eta_*)\tilde{v}_1)=0
\quad \forall k\geq 1,
\end{equation}
by the definition of $S^{(k)}$ and \eqref{eq_T_expansion}. Hence 
\begin{equation} \label{eq_perturb_theory_for_unperturbed_proof_16}
\begin{aligned}
N_1(\kappa_1)
=\sqrt{1+N_1^{(2)}(\kappa_1-\pi)^2+N_1^{(3)}(\kappa_1-\pi)^3+\mathcal{O}((\kappa_1-\pi)^4)}
\end{aligned}
\end{equation}
with
\begin{equation} \label{eq_normalization_factor_n1k_def}
\begin{aligned}
N_1^{(2)}=\|S^{(1)}(\gamma_*,\eta_*)\tilde{v}_1\|^2, \quad
N_1^{(3)}=2\text{Re}\big(S^{(1)}(\gamma_*,\eta_*)\tilde{v}_1,S^{(2)}(\gamma_*,\eta_*)\tilde{v}_1\big).
\end{aligned}
\end{equation}
We then expand the numerator in \eqref{eq_perturb_theory_for_unperturbed_proof_14_v1}. Using \eqref{eq_perturb_theory_for_unperturbed_proof_15} and the Taylor expansion of $e^{i\kappa_1x_1}$, 
\footnotesize
\begin{equation} \label{eq_perturb_theory_for_unperturbed_proof_17}
\begin{aligned}
&e^{i\kappa_1 x_1}(\tilde{v}_{1}+\sum_{k\geq 1}T^{k}(\kappa_1-\pi,\mu_1(\kappa_1)-\lambda_*)\tilde{v}_{1}) \\
&=v_1+(\kappa_1-\pi)\big(ix_1v_1+e^{i\pi x_1}S^{(1)}(\gamma_*,\eta_*)\tilde{v}_1\big)
+\sum_{k=2}^{5}\Big(\sum_{j+\ell=k}\frac{\big(i(\kappa_1-\pi)x_1\big)^{j}}{j!}S^{(\ell)}(\gamma_*,\eta_*)\tilde{v}_1 \Big)
+\mathcal{O}((\kappa_1-\pi)^6).
\end{aligned}
\end{equation}
\normalsize
Therefore, combining \eqref{eq_perturb_theory_for_unperturbed_proof_14_v1}, \eqref{eq_perturb_theory_for_unperturbed_proof_16}, and \eqref{eq_perturb_theory_for_unperturbed_proof_17}, we obtain
\footnotesize
\begin{equation} \label{eq_perturb_theory_for_unperturbed_proof_18}
v_1(\bm{x};\kappa_1)
=\frac{v_1+(\kappa_1-\pi)\big(ix_1v_1+e^{i\pi x_1}S^{(1)}(\gamma_*,\eta_*)\tilde{v}_1\big)
+\sum_{k=2}^{5}\big(\sum_{j+\ell=k}\frac{\big(i(\kappa_1-\pi)x_1\big)^{j}}{j!}S^{(\ell)}(\gamma_*,\eta_*)\tilde{v}_1 \big)
+\mathcal{O}((\kappa_1-\pi)^6)}{\sqrt{1+N_1^{(2)}(\kappa_1-\pi)^2+N_1^{(3)}(\kappa_1-\pi)^3+\mathcal{O}((\kappa_1-\pi)^4)}}.
\end{equation}
\normalsize

It follows that
\begin{equation} \label{eq_perturb_theory_for_unperturbed_proof_21}
\partial_{\kappa_1}v_1(\bm{x};\pi)
=ix_1v_1+e^{i\pi x_1}S^{(1)}(\gamma_*,\eta_*)\tilde{v}_1. 
\end{equation}
Note that $ix_1v_1$ is even in $x_1$ while $v_1$ is odd. Therefore $(ix_1v_1,v_1)=0$. On the other hand, \eqref{eq_perturb_theory_for_unperturbed_proof_20} gives
\begin{equation*}
\big(e^{i\pi x_1}S^{(1)}(\gamma_*,\eta_*)\tilde{v}_1,v_1 \big)
=\big(S^{(1)}(\gamma_*,\eta_*)\tilde{v}_1,\tilde{v}_1 \big)
=0.
\end{equation*}
So we have $(\partial_{\kappa_1}v_1(\bm{x};\pi),v_1(\bm{x};\pi))=0$. This proves \eqref{eq_orthogonal_gauge_criterion}.

With \eqref{eq_perturb_theory_for_unperturbed_proof_21}, we can further simplify \eqref{eq_perturb_theory_for_unperturbed_proof_18} as 
\begin{equation} \label{eq_perturb_theory_for_unperturbed_final_1}
v_1(\bm{x};\kappa_1)
=\frac{v_1+(\kappa_1-\pi)\partial_{\kappa_1}v_1(\bm{x};\pi)
+\sum_{k=2}^{5}v^{(k)}_{1}(\bm{x};\pi)
+\mathcal{O}((\kappa_1-\pi)^6)}{\sqrt{1+N_1^{(2)}(\kappa_1-\pi)^2+N_1^{(3)}(\kappa_1-\pi)^3+\mathcal{O}((\kappa_1-\pi)^4)}},
\end{equation}
where
\begin{equation} \label{eq_perturb_theory_for_unperturbed_final_v1k_def}
v^{(k)}_{1}(\bm{x};\pi):=\sum_{j+\ell=k}\frac{\big(i(\kappa_1-\pi)x_1\big)^{j}}{j!}S^{(\ell)}(\gamma_*,\eta_*)\tilde{v}_1. 
\end{equation}

Similarly,
\begin{equation} \label{eq_perturb_theory_for_unperturbed_final_2}
v_2(\bm{x};\kappa_1)
=\frac{v_1+(\kappa_1-\pi)\partial_{\kappa_1}v_2(\bm{x};\pi)
+\sum_{k=2}^{5}v^{(k)}_2(\bm{x};\pi)
+\mathcal{O}((\kappa_1-\pi)^6)}{\sqrt{1+N_2^{(2)}(\kappa_1-\pi)^2+N_2^{(3)}(\kappa_1-\pi)^3+\mathcal{O}((\kappa_1-\pi)^4)}}
\end{equation}
where
\begin{equation} \label{eq_normalization_factor_n2k_def}
N_2^{(2)}=\|S^{(1)}(-\gamma_*,-\eta_*)\tilde{v}_2\|^2, \quad
N_2^{(3)}=2\text{Re}\big(S^{(1)}(-\gamma_*,-\eta_*)\tilde{v}_2,S^{(2)}(-\gamma_*,-\eta_*)\tilde{v}_2\big)
\end{equation}
and
\begin{equation} \label{eq_perturb_theory_for_unperturbed_final_v2k_def}
\partial_{\kappa_1}v_2(\bm{x};\pi)=ix_1v_2+e^{i\pi x_1}S^{(1)}(-\gamma_*,-\eta_*)\tilde{v}_2,\quad
v^{(k)}_2(\bm{x};\pi)=\sum_{j+\ell=k}\frac{\big(i(\kappa_1-\pi)x_1\big)^{j}}{j!}S^{(\ell)}(-\gamma_*,-\eta_*)\tilde{v}_2 .
\end{equation}
Here $S^{k}$ is defined similarly as in \eqref{eq_perturb_theory_for_unperturbed_proof_15} with the change $\gamma_*\to -\gamma_*$ and $\eta_*\to -\eta_*$ (because $\mu_1(\kappa_1)$ and $\mu_2(\kappa_1)$ are locally opposite by \eqref{eq_quadratic_dispersion}).
\end{proof}

As a by-product of the proof of Proposition \ref{prop_pert_theory_for_unpertrubed}, we derive a set of identities that will be utilized in the perturbation analysis in Section 7. These identities are obtained by expanding equations \eqref{eq_perturb_theory_for_unperturbed_proof_12_eigenvalue} and \eqref{eq_perturb_theory_for_unperturbed_proof_13_eigenvalue} into power series in $p$ using equation \eqref{eq_perturb_theory_for_unperturbed_proof_15}, and matching terms up to the fourth order. More precisely, we have: 
\begin{corollary} \label{corol_order_match}
\begin{equation} \label{eq_perturb_theory_for_unperturbed_proof_23}
(\tilde{\mathcal{L}}^{A,1}\tilde{v}_i,\tilde{v}_j)=0,\quad i,j\in\{1,2\},
\end{equation}
\begin{equation} \label{eq_perturb_theory_for_unperturbed_proof_24}
\begin{aligned}
-\frac{\gamma_*}{2}&=(\tilde{\mathcal{L}}^{A,1}T_{p}\tilde{v}_1,\tilde{v}_1)
+(\tilde{\mathcal{L}}^{A,2}\tilde{v}_1,\tilde{v}_1), \\
\frac{\gamma_*}{2}&=(\tilde{\mathcal{L}}^{A,1}T_{p}\tilde{v}_2,\tilde{v}_2)
+(\tilde{\mathcal{L}}^{A,2}\tilde{v}_2,\tilde{v}_2),
\end{aligned}
\end{equation}
\begin{equation} \label{eq_perturb_theory_for_unperturbed_proof_25}
\begin{aligned}
(\tilde{\mathcal{L}}^{A,2}T_{p}\tilde{v}_1,\tilde{v}_1)+(\tilde{\mathcal{L}}^{A,1}(T_{p^2}+T_{p}^2)\tilde{v}_1,\tilde{v}_1)&=0, \\
(\tilde{\mathcal{L}}^{A,2}T_{p}\tilde{v}_2,\tilde{v}_2)+(\tilde{\mathcal{L}}^{A,1}(T_{p^2}+T_{p}^2)\tilde{v}_2,\tilde{v}_2)&=0,
\end{aligned}
\end{equation}
\begin{equation} \label{eq_perturb_theory_for_unperturbed_proof_26}
\begin{aligned}
-\eta_*&=(\tilde{\mathcal{L}}^{A,2}S^{(2)}(\gamma_*)\tilde{v}_1,\tilde{v}_1)
+(\tilde{\mathcal{L}}^{A,1}S^{(3)}(\gamma_*)\tilde{v}_1,\tilde{v}_1), \\
\eta_*&=(\tilde{\mathcal{L}}^{A,2}S^{(2)}(-\gamma_*)\tilde{v}_2,\tilde{v}_2)
+(\tilde{\mathcal{L}}^{A,1}S^{(3)}(-\gamma_*)\tilde{v}_2,\tilde{v}_2).
\end{aligned}
\end{equation}
Here for $k=2,3$ we write $S^{(k)}(\gamma_*,\eta_*)=S^{(k)}(\gamma_*)$ since these terms are independent of $\eta_*$.
\end{corollary}

\section{Asymptotics of the unperturbed Green function at the quadratic degenerate point}


In this section, we investigate the Green function $G(\bm{x},\bm{y};\lambda_*+i\epsilon)$, which satisfies the following equations
\begin{equation*} 
    \left\{
    \begin{aligned}
        &(\nabla\cdot A\nabla+\lambda_*+i\epsilon)G(\bm{x},\bm{y};\lambda_*+i\epsilon)=\delta(\bm{x}-\bm{y}),\quad x,y \in \Omega, \\
&G(\bm{x},\bm{y};\lambda_*+i\epsilon)\big|_{\Gamma^{+}}=e^{i\pi}G(\bm{x},\bm{y};\lambda_*+i\epsilon)\big|_{\Gamma^{-}}, \\
        &\frac{\partial G}{\partial x_2}(\bm{x},\bm{y};\lambda_*+i\epsilon)\big|_{\Gamma^{+}}=e^{i\pi}\frac{\partial G}{\partial x_2}(\bm{x},\bm{y};\lambda_*+i\epsilon)\big|_{\Gamma^{-}}.
    \end{aligned}
    \right.
\end{equation*}

Applying the Floquet transform, for each $\epsilon>0$, the Green function can be expressed as
\begin{equation*}
G(\bm{x},\bm{y};\lambda_*+i\epsilon)=
\frac{1}{2\pi}\int_{0}^{2\pi}
\sum_{n\geq 1}\frac{v_{n}(\bm{x};\kappa_1)\overline{v_{n}(\bm{y};\kappa_1)}}{\lambda_*+i\epsilon-\mu_{n}(\kappa_1)}d\kappa_1.
\end{equation*}
We define the integral operator
\begin{equation*}
\mathcal{G}(\lambda_*+i\epsilon)f:=\int_{\Omega}G(\bm{x},\bm{y};\lambda_*+i\epsilon)f(\bm{y})d\bm{y}.
\end{equation*}
The main result of this section concerns the asymptotic behavior of $\mathcal{G}(\lambda_*+i\epsilon)$ as $\epsilon\to 0^+$. Such asymptotic behavior has been established for the situation where $\mu_1(\kappa_1)$ and $\mu_2(\kappa_1)$ intersect linearly at $(\pi,\lambda_*)$ (forming a Dirac point) in \cite{joly2016solutions,qiu2023mathematical,li2024interface}. However, in our case, where $\mu_1(\kappa_1)$ and $\mu_2(\kappa_1)$ touch quadratically, $\mathcal{G}(\lambda_*+i\epsilon)$ diverges as $\epsilon\to 0^+$. To characterize the asymptotic behavior of $\mathcal{G}(\lambda_*+i\epsilon)$, especially  the divergent term and the remaining finite part, we leverage the analyticity of the Bloch eigenpairs, following the approach presented in \cite{joly2016solutions}. 

By Proposition \ref{prop_analytic_label}, the Floquet-Bloch eigenpairs $(\mu_{n}(\kappa_1),v_n(\bm{x};\kappa_1))$ $(n=1,2)$ are analytic in $\mathcal{D}=\mathcal{D}_1\cap \mathcal{D}_2$, which encompasses the closed rectangle (see Figure \ref{fig_integral_contour})
$$
\mathcal{R}_{\nu}=\{z\in\mathbf{C}:0\leq \text{Re}z\leq 2\pi,-\nu\leq \text{Im}z\leq \nu\}
$$ 
for some $\nu>0$.  
By Assumption \ref{def_quadratic_degenracy}, $\kappa_1=\pi$ is the unique solution to $\mu_{n}(\kappa_1)=\lambda_*$ for $\kappa_1\in [0,2\pi]$ and $n=1,2$. Thus, by selecting $\nu$ sufficiently small, we can ensure that $\kappa_1=\pi$ remains the unique solution to $\mu_{n}(\kappa_1)=\lambda_*$ within the domain $\mathcal{D}$ for $n=1,2$. 
With these preparations, we have the following detailed asymptotic characterization of $\mathcal{G}(\lambda_*+i\epsilon)$:

\begin{figure}
\centering
\begin{tikzpicture}[scale=0.4]
\tikzset{->-/.style={decoration={markings,mark=at position #1 with 
{\arrow{latex}}},postaction={decorate}}};

\node[left,scale=1] at (-11.2,2.3) {(a)};
\draw[->] (-11,0)--(11,0);
\node[right] at (11.2,0) {$Re(\kappa_1)$};
\draw[thick] (-10,-0.1)--(-10,0.1);
\node[below] at (-10,-0.29) {$0$};
\draw[thick] (0,-0.1)--(0,0.1);
\draw[thick] (10,-0.1)--(10,0.1);
\node[below] at (10,-0.29) {$2\pi$};
\node[below,scale=0.9] at (-1.8,-0.2) {$\pi-\epsilon^{\frac{1}{9}}$};
\node[below,scale=0.9] at (1.8,-0.2) {$\pi+\epsilon^{\frac{1}{9}}$};

\draw[->-=0.5,very thick,red,domain=0:180] plot ({1.5*cos(\x)},{1.5*sin(\x)});
\node[above,scale=1] at (0,1.6) {$C_{\epsilon}$};

\draw[->-=0.5,very thick,red] (-10,0)--(-10,5);
\draw[->-=0.5,very thick,blue] (-10,0)--(-1.5,0);
\draw[->-=0.5,very thick,blue] (1.5,0)--(10,0);
\draw[->-=0.5,very thick,red] (10,5)--(10,0);
\draw[->-=0.5,very thick,red] (-10,5)--(10,5);
\node[above] at (-10,5.2) {$i\nu$};
\node[above] at (10,5.2) {$2\pi+i\nu$};

\node[left,scale=1] at (-11.2,-5.3) {(b)};
\draw[->] (-11,-3)--(11,-3);
\node[right] at (11.2,-3) {$Re(\kappa_1)$};
\draw[thick] (-10,-3.1)--(-10,-2.9);
\node[above] at (-10,-2.81) {$0$};
\draw[thick] (0,-3.1)--(0,-2.9);
\draw[thick] (10,-3.1)--(10,-2.9);
\node[above] at (10,-2.81) {$2\pi$};
\node[above,scale=0.9] at (-1.8,-2.8) {$\pi-\epsilon^{\frac{1}{9}}$};
\node[above,scale=0.9] at (1.8,-2.8) {$\pi+\epsilon^{\frac{1}{9}}$};
\draw[->-=0.5,very thick,red,domain=360:180] plot ({1.5*cos(\x)},{-3+1.5*sin(\x)});

\draw[->-=0.5,very thick,blue] (-10,-3)--(-1.5,-3);
\draw[->-=0.5,very thick,blue] (1.5,-3)--(10,-3);
\draw[->-=0.5,very thick,red] (10,-8)--(10,-3);
\draw[->-=0.5,very thick,red] (-10,-8)--(10,-8);
\draw[->-=0.5,very thick,red] (-10,-3)--(-10,-8);
\node[below] at (-10,-8.2) {$-i\nu$};
\node[below] at (10,-8.2) {$2\pi-i\nu$};
\end{tikzpicture}
\caption{Analytic domain and integral contour used in the proof.}
\label{fig_integral_contour}
\end{figure}

\begin{theorem} \label{thm_asymp_unperturbed_green}
For each $f\in (H^{1}(\Omega))^*$ with compact support, we have in $H_{loc}^1(\Omega)$
\begin{equation} \label{eq_asymp_unperturbed_green_1}
\begin{aligned}
\mathcal{G}(\lambda_*+i\epsilon)f
=&
\big(\epsilon^{-\frac{1}{2}}+o(\epsilon^{-\frac{1}{2}})\big)\cdot
\Big(\frac{1-i}{2\gamma_*^{\frac{1}{2}}}v_1(\bm{x};\pi)\big( f(\cdot),v_1(\cdot;\pi)\big)_{\Omega}-\frac{1+i}{2\gamma_*^{\frac{1}{2}}}v_2(\bm{x};\pi)\big( f(\cdot),v_2(\cdot;\pi)\big)_{\Omega}\Big) \\
&+\mathcal{G}_0(\lambda_*)f +\mathcal{O}(\epsilon^{\frac{2}{9}}),
\end{aligned}
\end{equation}
where $\mathcal{G}_0(\lambda_*)$ is the integral operator associated with the kernel function
\footnotesize
\begin{equation}
\label{eq_asymp_unperturbed_green_2}
\begin{aligned}
G_0(\bm{x},\bm{y};\lambda_*):=
&\frac{1}{2\pi}\int_{0}^{2\pi}\sum_{n\geq 3}\frac{v_n(\bm{x};\kappa_1)\overline{v_n(\bm{y};\kappa_1)}}{\lambda_*-\mu_{n}(\kappa_1)}d\kappa_1 \\
&+\frac{1}{2\pi}\int_{0}^{2\pi}
\frac{v_1(\bm{x};\kappa_1+i\nu)\overline{v_1(\bm{y};\overline{\kappa_1+i\nu})}}{\lambda_*-\mu_{1}(\kappa_1+i\nu)}d\kappa_1
+\frac{i}{\gamma_*}\Big(\partial_{\kappa_1}v_{1}(\bm{x};\pi)\overline{v_{1}(\bm{y};\pi)}
+v_{1}(\bm{x};\pi)\overline{(\partial_{\kappa_1}v_{1})(\bm{y};\pi)}\Big) \\
&+\frac{1}{2\pi}\int_{0}^{2\pi}
\frac{v_2(\bm{x};\kappa_1+i\nu)\overline{v_2(\bm{y};\overline{\kappa_1+i\nu})}}{\lambda_*-\mu_{2}(\kappa_1+i\nu)}d\kappa_1 -\frac{i}{\gamma_*}\Big(\partial_{\kappa_1}v_{2}(\bm{x};\pi)\overline{v_{2}(\bm{y};\pi)}
+v_{2}(\bm{x};\pi)\overline{(\partial_{\kappa_1}v_{2})(\bm{y};\pi)}\Big),
\end{aligned}
\end{equation}
\normalsize
Moreover,  $G_0(\bm{x},\bm{y};\lambda_*)$ has the following equivalent form: 
\footnotesize
\begin{equation}
\label{eq_asymp_unperturbed_green_limit_form}
\begin{aligned}
G_0(\bm{x},\bm{y};\lambda_*)=
&\frac{1}{2\pi}\int_{0}^{2\pi}\sum_{n\geq 3}\frac{v_n(\bm{x};\kappa_1)\overline{v_n(\bm{y};\kappa_1)}}{\lambda_*-\mu_{n}(\kappa_1)}d\kappa_1 \\
&+\lim_{\epsilon\to 0^+}
\Big(
\frac{1}{2\pi}\int_{[0,\pi-\epsilon^\frac{1}{9})\cup (\pi+\epsilon^\frac{1}{9},2\pi]}\frac{v_1(\bm{x};\kappa_1)\overline{v_1(\bm{y};\kappa_1)}}{\lambda_*-\mu_{1}(\kappa_1)}d\kappa_1
-\epsilon^{-\frac{1}{9}}\cdot\frac{2}{\pi\gamma_*}v_1(\bm{x};\pi)\overline{v_1(\bm{y};\pi)} \\
&\qquad +\frac{1}{2\pi}\int_{[0,\pi-\epsilon^\frac{1}{9})\cup (\pi+\epsilon^\frac{1}{9},2\pi]}\frac{v_2(\bm{x};\kappa_1)\overline{v_2(\bm{y};\kappa_1)}}{\lambda_*-\mu_{2}(\kappa_1)}d\kappa_1
+\epsilon^{-\frac{1}{9}}\cdot\frac{2}{\pi\gamma_*}v_2(\bm{x};\pi)\overline{v_2(\bm{y};\pi)} \Big).
\end{aligned}
\end{equation}
\normalsize
The complex integrals in \eqref{eq_asymp_unperturbed_green_2} are illustrated in Figure \ref{fig_integral_contour}(a).
\end{theorem}
We further demonstrate that the kernel function $G_0$ introduced in the above theorem possesses the following properties,  and is therefore referred to as the physical Green function: 
\begin{proposition}
\label{prop_G0_fundamental_solution}
The kernel function $G_0(\bm{x},\bm{y};\lambda_*)$ defined in \eqref{eq_asymp_unperturbed_green_2} is the fundamental solution to the following system of equations
\begin{equation} \label{eq_G0_fund}
    \left\{
    \begin{aligned}
        &(\nabla \cdot A\nabla +\lambda_*)G_0(\bm{x},\bm{y};\lambda_*)=\delta(\bm{x}-\bm{y}),\quad x,y \in \Omega,\\
        &G_0(\bm{x},\bm{y};\lambda_*)\big|_{\Gamma^{+}}=e^{i\pi}G_0(\bm{x},\bm{y};\lambda_*)\big|_{\Gamma^{-}}, \\
        &\frac{\partial G_0}{\partial x_2}(\bm{x},\bm{y};\lambda_*)\big|_{\Gamma^{+}}=e^{i\pi}\frac{\partial G_0}{\partial x_2}(\bm{x},\bm{y};\lambda_*)\big|_{\Gamma^{-}}.
    \end{aligned}
    \right.
\end{equation}
Additionally, for $\bm{x},\bm{y}\in\Omega$, 
\begin{equation} \label{eq_G0_parity}
G_0(\bm{x},\bm{y};\lambda_*)=G_0(\mathcal{M}_1\bm{x},\mathcal{M}_1\bm{y};\lambda_*),
\end{equation}
\begin{equation} \label{eq_G0_argument_symmetry}
G_0(\bm{x},\bm{y};\lambda_*)=\overline{G_0(\bm{y},\bm{x};\lambda_*)}. 
\end{equation}
Furthermore, for fixed $y\in \Omega$, $G_0(\bm{x},\bm{y};\lambda_*)$ admits the following decomposition:
\begin{equation} \label{eq_G0_decay_1}
\begin{aligned}
G_0(\bm{x},\bm{y};\lambda_*)=G_0^{+}(\bm{x},\bm{y};\lambda_*)
&+\frac{i}{\gamma_*}v_1(\bm{x};\pi)\overline{(\partial_{\kappa_1}v_1)(\bm{y};\pi)}
+\frac{i}{\gamma_*}(\partial_{\kappa_1}v_1)(\bm{x};\pi)\overline{v_1(\bm{y};\pi)} \\
&-\frac{i}{\gamma_*}v_2(\bm{x};\pi)\overline{(\partial_{\kappa_1}v_2)(\bm{y};\pi)}-\frac{i}{\gamma_*}(\partial_{\kappa_1}v_2)(\bm{x};\pi)\overline{v_2(\bm{y};\pi)},
\end{aligned}
\end{equation}
and
\begin{equation} \label{eq_G0_decay_2}
\begin{aligned}
G_0(\bm{x},\bm{y};\lambda_*)=G_0^{-}(\bm{x},\bm{y};\lambda_*)
&-\frac{i}{\gamma_*}v_1(\bm{x};\pi)\overline{(\partial_{\kappa_1}v_1)(\bm{y};\pi)}
-\frac{i}{\gamma_*}(\partial_{\kappa_1}v_1)(\bm{x};\pi)\overline{v_1(\bm{y};\pi)} \\
&+\frac{i}{\gamma_*}v_2(\bm{x};\pi)\overline{(\partial_{\kappa_1}v_2)(\bm{y};\pi)}+\frac{i}{\gamma_*}(\partial_{\kappa_1}v_2)(\bm{x};\pi)\overline{v_2(\bm{y};\pi)},
\end{aligned}
\end{equation}
where $G_0^{+}(\bm{x},\bm{y};\lambda_*)$ ($G_0^{-}(\bm{x},\bm{y};\lambda_*)$, resp.) decays exponentially as $x_1\to \infty$ ($x_1\to -\infty$, resp.).
\end{proposition}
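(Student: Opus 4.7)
The plan is to prove the three assertions of Proposition~\ref{prop_G0_fundamental_solution} by exploiting, in turn: (a) the Floquet--Bloch representation used to define $G_0$, (b) the symmetries of $\mathcal{L}^{A}$ in $C_{4v}$ together with its self-adjointness, and (c) a contour-deformation of the $\kappa_1$-integral that isolates the exponentially decaying tail from a residue contribution at the quadratic degeneracy.

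For the fundamental solution identity, I would apply $\operatorname{div}(A\nabla)+\lambda_*$ termwise to the defining formula \eqref{eq_asymp_unperturbed_green_2}. Since $\mathcal{L}^{A}v_n(\cdot;\kappa_1) = \mu_n(\kappa_1) v_n(\cdot;\kappa_1)$, for every $\kappa_1$ in the support of each integral the quotient $v_n(\bm{x};\kappa_1)\overline{v_n(\bm{y};\kappa_1)}/(\lambda_*-\mu_n(\kappa_1))$ is mapped to $-v_n(\bm{x};\kappa_1)\overline{v_n(\bm{y};\kappa_1)}$, while the two regularizing counter-terms $\pm \epsilon^{-1/6}(2/\pi\gamma_*)v_j(\bm{x};\pi)\overline{v_j(\bm{y};\pi)}$ are annihilated since $\mu_j(\pi)=\lambda_*$ for $j=1,2$. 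Passing to the limit $\epsilon\to 0^+$, the three pieces of \eqref{eq_asymp_unperturbed_green_2} reassemble into the complete Bloch sum
\[
\frac{1}{2\pi}\int_0^{2\pi}\sum_{n\geq 1} v_n(\bm{x};\kappa_1)\overline{v_n(\bm{y};\kappa_1)}\, d\kappa_1 = \delta(\bm{x}-\bm{y}),
\]
by completeness of the Floquet--Bloch basis in $L^2_{\pi}(\Omega)$. The two quasi-periodic boundary conditions in \eqref{eq_G0_fund} are inherited directly from the quasi-periodicity of each $v_n(\cdot;\kappa_1)$.

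For the reflection symmetry \eqref{eq_G0_parity}, note that $\mathcal{M}_1=\mathcal{R}^2\mathcal{M}_2$ belongs to $C_{4v}$ and hence commutes with $\mathcal{L}^{A}$, so $\mathcal{M}_1\mathcal{L}^{A}(\bm{\kappa})\mathcal{M}_1^{-1}=\mathcal{L}^{A}(M_1\bm{\kappa})$. Consequently $v_n(M_1\bm{x};\kappa_1)$ is a Bloch mode at quasi-momentum $2\pi-\kappa_1$ (mod $2\pi\mathbf{Z}$) with eigenvalue $\mu_n(\kappa_1)$, and combined with \eqref{eq_mu12_even} the change of variable $\kappa_1\mapsto 2\pi-\kappa_1$ in every integral of \eqref{eq_asymp_unperturbed_green_2} leaves the expression invariant, yielding $G_0(M_1\bm{x}, M_1\bm{y};\lambda_*) = G_0(\bm{x},\bm{y};\lambda_*)$. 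The Hermitian symmetry \eqref{eq_G0_argument_symmetry} is a direct consequence of self-adjointness: for each $\epsilon>0$ the resolvent kernel satisfies $G(\bm{x},\bm{y};\lambda_*+i\epsilon)=\overline{G(\bm{y},\bm{x};\lambda_*-i\epsilon)}$, and matching the $O(\epsilon^0)$ coefficient in Theorem~\ref{thm_asymp_unperturbed_green} transfers this identity to $G_0$.

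The decay decomposition \eqref{eq_G0_decay_1}--\eqref{eq_G0_decay_2} is the main obstacle, and is established by contour deformation in $\kappa_1$. Writing $v_n(\bm{x};\kappa_1)=e^{i\kappa_1 x_1}p_n(\bm{x};\kappa_1)$ with $p_n$ one-periodic in $x_1$, each integrand in \eqref{eq_asymp_unperturbed_green_2} factors as $e^{i\kappa_1(x_1-y_1)}\Phi_n(\bm{x},\bm{y};\kappa_1)$, where $\Phi_n$ extends analytically in $\kappa_1$ to a complex strip containing $\mathbf{R}$ by Proposition~\ref{prop_analytic_label}. For $x_1\to+\infty$ with $\bm{y}$ fixed, I would replace the deleted interval $(\pi-\epsilon^{1/6},\pi+\epsilon^{1/6})$ by a small clockwise semicircle in the upper half-plane, so that the deformed $\kappa_1$-contour avoids the pole at $\kappa_1=\pi$ while $e^{i\kappa_1(x_1-y_1)}$ generates exponential decay; the integral over this deformed contour then defines $G_0^{+}$, and the higher bands $n\geq 3$ contribute entirely to it. For $n=1,2$, the evenness \eqref{eq_mu12_even} and the quadratic dispersion \eqref{eq_quadratic_mu_dispersion} give the Laurent expansion
\[
\frac{v_n(\bm{x};\kappa_1)\overline{v_n(\bm{y};\kappa_1)}}{\lambda_*-\mu_n(\kappa_1)} = \pm\frac{2\,v_n(\bm{x};\pi)\overline{v_n(\bm{y};\pi)}}{\gamma_*(\kappa_1-\pi)^2} \pm \frac{2\,\partial_{\kappa_1}\!\bigl[v_n(\bm{x};\cdot)\overline{v_n(\bm{y};\cdot)}\bigr](\pi)}{\gamma_*(\kappa_1-\pi)} + O(1),
\]
with the $+$ sign for $n=1$ and the $-$ sign for $n=2$ (coming from the opposite signs of the quadratic term in $\lambda_*-\mu_n$). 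A direct computation on the semicircle shows that the integral of the $(\kappa_1-\pi)^{-2}$ term produces a divergence which, after the $1/(2\pi)$ prefactor, equals $\mp\,(2/\pi\gamma_*)\epsilon^{-1/6}v_n(\bm{x};\pi)\overline{v_n(\bm{y};\pi)}$ and thus cancels exactly against the corresponding counter-term in \eqref{eq_asymp_unperturbed_green_2}, while the $(\kappa_1-\pi)^{-1}$ term integrates over the clockwise semicircle to $-i\pi$ times its residue, producing the non-decaying contribution $\pm (i/\gamma_*)\,\partial_{\kappa_1}[v_n(\bm{x};\cdot)\overline{v_n(\bm{y};\cdot)}](\pi)$. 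Expanding this derivative yields the four explicit terms with coefficients $\pm i/\gamma_*$ in \eqref{eq_G0_decay_1}. Deforming instead into the lower half-plane for $x_1\to -\infty$ reverses the orientation of the semicircle and flips the sign of the residue contribution, producing \eqref{eq_G0_decay_2}. The delicate point is justifying the interchange of the $\epsilon\to 0^+$ regularization with the contour deformation, which is precisely what the exact matching of the $\epsilon^{-1/6}$-divergence with the counter-term accomplishes.
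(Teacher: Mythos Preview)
Your proposal is correct and follows essentially the same route as the paper. For \eqref{eq_G0_fund} the paper works in weak form, subtracting the $\epsilon^{-1/2}$ layer of Theorem~\ref{thm_asymp_unperturbed_green} from the known identity for $\mathcal{G}(\lambda_*+i\epsilon)$ and letting $\epsilon\to 0^+$; your termwise application of $\operatorname{div}(A\nabla)+\lambda_*$ to \eqref{eq_asymp_unperturbed_green_2} is the distributional version of the same computation (note a sign slip: the quotient is mapped to $+v_n\overline{v_n}$, not $-v_n\overline{v_n}$). For \eqref{eq_G0_parity}--\eqref{eq_G0_argument_symmetry} both arguments use the $\mathcal{M}_1$ and time-reversal symmetries to permute the Bloch data in \eqref{eq_asymp_unperturbed_green_2}; your resolvent identity $G(\bm{x},\bm{y};\lambda_*+i\epsilon)=\overline{G(\bm{y},\bm{x};\lambda_*-i\epsilon)}$ works but is slightly more indirect than reading Hermiticity off the formula directly. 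For \eqref{eq_G0_decay_1}--\eqref{eq_G0_decay_2} the paper performs exactly the contour shift you describe: it rewrites the deleted-interval integral via Cauchy's theorem as an integral over a semicircular arc $C_\epsilon$ plus a rectangular contour at height $i\nu$ (Figure~4(a)), computes the $C_\epsilon$ contribution from the Laurent expansion at $\kappa_1=\pi$ (the $\epsilon^{-1/6}$ piece cancelling the counter-term, the simple-pole residue producing the $\pm i/\gamma_*$ terms), and cites \cite{joly2016solutions} for the exponential decay of the remaining shifted piece. Your semicircle is oriented oppositely to the paper's $C_\epsilon$, which accounts for the $-i\pi$ versus $+i\pi$ and the implicit subtraction in your ``producing the non-decaying contribution'' step; the outcome matches.
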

The proof of Theorem \ref{thm_asymp_unperturbed_green} and Proposition \ref{prop_G0_fundamental_solution} are provided in Section 5.1 and 5.2, respectively.
We remark that Theorem \ref{thm_asymp_unperturbed_green} essentially provides an asymptotic expansion for the solution to the equation 
$$\big(-\nabla\cdot A\nabla -(\lambda_*+i\epsilon)\big)u_{\epsilon}=f 
$$ 
in $\Omega$. The results show that $u_{\epsilon}$ comprises two distinct parts: an extended part that involves the Bloch functions $v_n(\cdot;\pi)$ and their derivatives $(\partial_{\kappa_1}v_n)(\cdot;\pi)$, and an evanescent part represented by $G_0^{\pm}$ in \eqref{eq_G0_decay_1}-\eqref{eq_G0_decay_2}. This decomposition also characterizes the interplay between the extended modes at the energy level $\lambda=\lambda_*$ in the unperturbed structure. Such an understanding is crucial for determining the interface modes that bifurcate from the quadratic degenerate point. 
We also note that Theorem \ref{thm_asymp_unperturbed_green} extends previous studies on the asymptotic of the Green function in periodic media.  Earlier works \cite{MURATA203greenfunction, kuchment12greenfunction, kuchment17greenfunction} focused on the whole space, while \cite{joly2016solutions} examined a waveguide at energies away from degeneracies. Our theorem, however, addresses the situation when the energy level is at a quadratic degenerate point. 

\medskip

We now introduce the following layer potential operators associated with the Green function $G_0(\bm{x},\bm{y};\lambda_*)$:
\begin{equation*}
\begin{aligned}
&\mathcal{S}(\lambda_*;G_0):\tilde{H}^{-\frac{1}{2}}(\Gamma)
\to H_{loc}^{1}(\Omega),\quad
\varphi\mapsto \int_{\Gamma}G_0(\bm{x},\bm{y};\lambda_*)\varphi(\bm{y})ds(\bm{y}), \\
&\mathcal{D}(\lambda_*;G_0):H^{\frac{1}{2}}(\Gamma)
\to H_{loc}^{1}(\Omega),\quad
\phi\mapsto \int_{\Gamma}\frac{\partial G_0}{\partial y_1}(\bm{x},\bm{y};\lambda_*)\phi(\bm{y})ds(\bm{y}),\\
&\mathcal{N}(\lambda_*;G_0):H^{\frac{1}{2}}(\Gamma)
\to \tilde{H}^{-\frac{1}{2}}(\Gamma),\quad
\phi\mapsto \int_{\Gamma}\frac{\partial^2 G_0}{\partial x_1\partial y_1}(\bm{x},\bm{y};\lambda_*)\phi(\bm{y})ds(\bm{y}).
\end{aligned}
\end{equation*}

In the above notations, $\mathcal{S}(\lambda;G)$ denotes the single-layer potential operator associated with the Green function $G$ at energy $\lambda$, and $\mathcal{D}(\lambda;G)$ stands for the double-layer potential operator. Although the Green function $G_0$ differs from the one for the empty waveguide $\Omega$, its associated layer potential operators satisfy similar jump relations as shown below. These jump relations will be used to analyze the boundary integral equations related to the interface modes.

\begin{proposition}
\label{prop_G0_jump_formula}
For $\bm{x}\in\Gamma$, we have
\begin{equation} \label{eq_G0_jump_1}
\lim_{t\to 0}\Big(\mathcal{S}(\lambda_*;G_0)[\varphi]\Big)(\bm{x}+t\bm{e}_1)=\mathcal{S}(\lambda_*;G_0)[\varphi](\bm{x})
\end{equation}
\begin{equation} \label{eq_G0_jump_2}
\lim_{t\to 0}\frac{\partial}{\partial x_1}\Big(\mathcal{D}(\lambda_*;G_0)[\phi]\Big)(\bm{x}+t\bm{e}_1)=\mathcal{N}(\lambda_*;G_0)[\phi](\bm{x})
\end{equation}
\begin{equation} \label{eq_G0_jump_3}
\lim_{t\to 0^{\pm}}\frac{\partial}{\partial x_1}\Big(\mathcal{S}(\lambda_*;G_0)[\varphi]\Big)(\bm{x}+t\bm{e}_1)=\pm\frac{1}{2} \varphi(\bm{x}),
\end{equation}
\begin{equation} \label{eq_G0_jump_4}
\lim_{t\to 0^{\pm}}\Big(\mathcal{D}(\lambda_*;G_0)[\phi]\Big)(\bm{x}+t\bm{e}_1)=\mp\frac{1}{2} \phi(\bm{x}).
\end{equation}
\end{proposition}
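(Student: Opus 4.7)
The plan is to reduce the four identities to the classical jump relations for planar Laplacian layer potentials on a straight segment, by isolating the singular part of $G_0$ on the diagonal.

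\medskip

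\textbf{Step 1 (Local reduction to the Laplacian).} Under Assumption \ref{assum_coefficient_matrix}, every inclusion $D_{n_1,n_2}$ lies in the open square $(n_1,n_1{+}1)\times(n_2,n_2{+}1)$, which is disjoint from the vertical line $\{x_1=0\}$. Hence there exists $r_0>0$ with $A(\bm{x})\equiv I_{2\times 2}$ on the tubular neighborhood
\begin{equation*}
\Gamma_{r_0}:=\{\bm{x}\in\Omega:\operatorname{dist}(\bm{x},\Gamma)<r_0\},
\end{equation*}
so the operator $-\operatorname{div}(A\nabla)-\lambda_*$ coincides with $-\Delta-\lambda_*$ on $\Gamma_{r_0}$.

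\medskip

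\textbf{Step 2 (Parametrix decomposition).} Let $\Phi(\bm{z})=-\tfrac{i}{4}H_0^{(1)}(\sqrt{\lambda_*}\,|\bm{z}|)$ be an outgoing free-space fundamental solution of $-\Delta-\lambda_*$ in $\mathbf{R}^2$ and let $\chi\in C_c^\infty(\mathbf{R}^2)$ equal $1$ near $0$ with $\operatorname{supp}\chi\subset\{|\bm{z}|<r_0/2\}$. I write
\begin{equation*}
G_0(\bm{x},\bm{y};\lambda_*)=\chi(\bm{x}-\bm{y})\,\Phi(\bm{x}-\bm{y})+R(\bm{x},\bm{y}).
\end{equation*}
The distributional identity $(-\Delta-\lambda_*)[\chi\Phi]=\delta+F$ with $F\in C^\infty$, together with Proposition \ref{prop_G0_fundamental_solution}, shows that $R(\cdot,\bm{y})$ satisfies an elliptic equation on $\Omega$ with a smooth right-hand side near $\Gamma$ and the same quasi-periodic conditions across $\Gamma^{\pm}$. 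Standard elliptic regularity then gives $R\in C^2$ on a neighborhood of $\Gamma\times\Gamma$.

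\medskip

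\textbf{Step 3 (Jump computation).} Each layer-potential operator splits as the sum of the principal contribution driven by $\chi\Phi$ and a regular contribution driven by $R$. The $R$-contributions and their first derivatives in $x_1$ extend continuously across $\Gamma$ by Step 2 and therefore feed only into the continuous limits, producing no jump. For the principal part, I invoke the classical jump formulas for Helmholtz single and double layers on a smooth arc: the single layer has continuous trace, the normal derivative of the single layer jumps by $\pm\tfrac{1}{2}\varphi$, the double layer jumps by $\mp\tfrac{1}{2}\phi$, and the hypersingular operator admits a two-sided continuous limit in the finite-part sense. The crucial simplification here is that $\Gamma$ is a straight segment in $\{x_1=0\}$: for $\bm{x},\bm{y}\in\Gamma$ the difference $\bm{x}-\bm{y}$ is purely tangential, so $(\bm{x}-\bm{y})\cdot\bm{n}_{\bm{y}}=0$ and the Neumann--Poincar\'e-type principal value integrals accompanying the $\pm\tfrac{1}{2}$ mass terms in \eqref{eq_G0_jump_3} and \eqref{eq_G0_jump_4} vanish identically. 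Combining these gives \eqref{eq_G0_jump_1}--\eqref{eq_G0_jump_4}.

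\medskip

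The principal obstacle is Step 2: one must ensure that the parametrix $\chi\Phi$, chosen as a free-space object on $\Gamma_{r_0}$, is compatible with the quasi-periodic boundary conditions imposed on $G_0$ at $\Gamma^{\pm}$, and that the non-local tail of $G_0$ (which encodes contributions from distant inclusions and from the Floquet modes $v_n(\bm{x};\kappa_1)$) does not spoil the $C^2$-regularity of the remainder $R$ across $\Gamma$. Once this regularity is established, the remaining computation is a purely local reduction to the classical jump relations, which exploits the flatness of $\Gamma$ to kill the principal-value parts.
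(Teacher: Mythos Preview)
Your overall route---localize to a neighborhood of $\Gamma$ where $A\equiv I$, subtract a Helmholtz parametrix, and read off the jumps from the classical formulas---is sound, and Step~2 is more routine than you suggest: once Proposition~\ref{prop_G0_fundamental_solution} gives $(\Delta+\lambda_*)R=-F$ with $F$ smooth near $\Gamma$, interior elliptic regularity and smooth parameter dependence in $\bm y$ yield the joint $C^2$ regularity you need. The quasi-periodic boundary conditions do not interfere, since $\chi(\cdot-\bm y)$ is supported well inside the strip for $\bm y\in\Gamma$ away from the endpoints, and the function spaces $H^{\frac12}(\Gamma)$, $\tilde H^{-\frac12}(\Gamma)$ already encode the correct behavior there.

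The genuine gap is in Step~3. For \eqref{eq_G0_jump_3} and \eqref{eq_G0_jump_4} you must show not only that the $R$-contribution produces \emph{no jump} but that it \emph{vanishes identically} on $\Gamma$: as stated, the right-hand side of \eqref{eq_G0_jump_3} is exactly $\pm\tfrac12\varphi$, with no residual $\mathcal{K}^*$-type term. Your argument kills the principal-value kernel coming from $\chi\Phi$ by flatness, but for the remainder you only assert continuity across $\Gamma$, which would a~priori leave a nonzero term $\int_\Gamma \partial_{x_1}R(\bm x,\bm y)\big|_{x_1=0}\,\varphi(\bm y)\,ds(\bm y)$. To kill it you must invoke the reflection symmetry \eqref{eq_G0_parity}: since $G_0(\mathcal{M}_1\bm x,\mathcal{M}_1\bm y)=G_0(\bm x,\bm y)$ and $\chi\Phi$ depends only on $|\bm x-\bm y|$, the remainder inherits $R(\mathcal{M}_1\bm x,\bm y)=R(\bm x,\bm y)$ for $\bm y\in\Gamma$, whence $\partial_{x_1}R|_{x_1=0}=0$. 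The same symmetry handles \eqref{eq_G0_jump_4}.

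For comparison, the paper takes a slightly different path: it first proves the jump formulas for the \emph{resolvent} Green function $G(\cdot,\cdot;\lambda_*+i\epsilon)$ by comparing with the quasi-periodic Laplacian Green function $G^{\text{empty}}$ (rather than the free-space one), uses the reflection symmetry at that stage to eliminate both $\mathcal{K}^{empty,*}$ and the normal derivative of the correction, and then passes to $G_0$ via the asymptotic expansion of Theorem~\ref{thm_asymp_unperturbed_green}, checking that the intermediate error terms have vanishing normal derivative on $\Gamma$ thanks to the parities \eqref{eq_v12_parity_1}. Your direct approach avoids the $\epsilon\to0$ limit and is cleaner once the symmetry argument for $R$ is supplied; the paper's detour through $G(\lambda_*+i\epsilon)$ has the advantage that the comparison kernel $G^{\text{empty}}$ already carries the right boundary behavior on $\Gamma^{\pm}$.
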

Proof of Proposition \ref{prop_G0_jump_formula} is given in Section 5.3. 

\subsection{Proof of Theorem \ref{thm_asymp_unperturbed_green}}

We decompose the operator $\mathcal{G}(\lambda_*+i\epsilon)$ as
\begin{equation*}
\mathcal{G}
=\mathcal{G}_{1}^{\epsilon,prop}+\mathcal{G}_{2}^{\epsilon,prop}
+\mathcal{G}_{1}^{\epsilon,evan}+\mathcal{G}_{2,1}^{\epsilon,evan}+\mathcal{G}_{2,2}^{\epsilon,evan},
\end{equation*}
where
\begin{equation*}
\begin{aligned}
\mathcal{G}_{n}^{\epsilon,prop}(\lambda_*+i\epsilon)f:=&
\frac{1}{2\pi}\int_{\pi-\epsilon^{\frac{1}{9}}}^{\pi+\epsilon^{\frac{1}{9}}}
\frac{v_{n}(\bm{x};\kappa_1)(f(\cdot),v_{n}(\cdot;\kappa_1))_{\Omega}}{\lambda_*+i\epsilon-\mu_{n}(\kappa_1)}d\kappa_1, \quad n=1, 2, \\
\mathcal{G}_{1}^{\epsilon,evan}(\lambda_*+i\epsilon)f:=
&\frac{1}{2\pi}\int_{0}^{2\pi}\sum_{n\geq 3}\frac{v_n(\bm{x};\kappa_1)(f(\cdot),v_{n}(\cdot;\kappa_1))_{\Omega}}{\lambda_*+i\epsilon-\mu_{n}(\kappa_1)}d\kappa_1, \\
\mathcal{G}_{n}^{\epsilon,evan}(\lambda_*+i\epsilon)f:=
&\frac{1}{2\pi}\int_{[0,\pi-\epsilon^\frac{1}{9})\cup (\pi+\epsilon^\frac{1}{9},2\pi]}\frac{v_n(\bm{x};\kappa_1)(f(\cdot),v_{n}(\cdot;\kappa_1))_{\Omega}}{\lambda_*+i\epsilon-\mu_{n}(\kappa_1)}d\kappa_1, \quad n=1, 2.
\end{aligned}
\end{equation*}
Here $v_{1}(\bm{x};\kappa_1)$ and $v_{2}(\bm{x};\kappa_1)$ are chosen in accordance with Proposition \ref{prop_pert_theory_for_unpertrubed}.  The scaling factor $\epsilon^{-\frac{1}{9}}$ in the integrals above 
are meticulously chosen to balance various reminder terms in the asymptotic analysis for interface modes in later sections (See Remark \ref{rmk_1/9_scaling}). Theorem \ref{thm_asymp_unperturbed_green} then follows from the three lemmas below. 

\begin{lemma}[Asymptotics of $\mathcal{G}_{n}^{\epsilon,prop}(\lambda_*+i\epsilon), n=1,2$] \label{lem_app_A_1}
\footnotesize
\begin{equation} \label{eq_app_A_1}
\begin{aligned}
\mathcal{G}_{1}^{\epsilon,prop}(\lambda_*+i\epsilon)f
&=\big(\epsilon^{-\frac{1}{2}}+o(-\epsilon^{\frac{1}{2}})\big)\cdot \frac{1-i}{2\gamma_*^{\frac{1}{2}}}v_1(\bm{x};\pi)\big( f(\cdot),v_1(\cdot;\pi)\big)_{\Omega} \\
&\quad +\frac{2}{\pi\gamma_*}\epsilon^{\frac{1}{9}}
\sum_{j+\ell =2}v_1^{(j)}(\bm{x};\pi)(f(\cdot),v_1^{(\ell)}(\bm{x};\pi))_{\Omega}+\mathcal{O}(\epsilon^{\frac{2}{9}}).
\end{aligned}
\end{equation}

\begin{equation} \label{eq_app_A_2}
\begin{aligned}
\mathcal{G}_{2}^{\epsilon,prop}(\lambda_*+i\epsilon)f
&=-\big(\epsilon^{-\frac{1}{2}}+o(\epsilon^{-\frac{1}{2}})\big)\cdot \frac{1+i}{2\gamma_*^{\frac{1}{2}}}v_2(\bm{x};\pi)\big( f(\cdot),v_2(\cdot;\pi)\big)_{\Omega} \\
&\quad -\frac{2}{\pi\gamma_*}\epsilon^{\frac{1}{9}}
\sum_{j+\ell =2}v_2^{(j)}(\bm{x};\pi)(f(\cdot),v_2^{(\ell)}(\bm{x};\pi))_{\Omega}  +\mathcal{O}(\epsilon^{\frac{2}{9}}).
\end{aligned}
\end{equation}
\normalsize
where $v_{i}^{(j)}$ is introduced in Proposition \ref{prop_pert_theory_for_unpertrubed} and we denote $v_{i}^{(0)}=v_i(\bm{x};\pi)$, $v_{i}^{(1)}=(\partial_{\kappa_1}v_i)(\bm{x};\pi)$ for $i=1,2$.
\end{lemma}

\begin{lemma}[Asymptotics of $\mathcal{G}_{1}^{\epsilon,evan}(\lambda_*+i\epsilon)$: $n\geq 3$ part] \label{lem_app_A_2_ngeq3}
\footnotesize
\begin{equation} \label{eq_app_A_3}
\begin{aligned}
\mathcal{G}_{1}^{\epsilon,evan}(\lambda_*+i\epsilon)f
&=\frac{1}{2\pi}\int_{0}^{2\pi}
\sum_{n\geq 3}\frac{v_{n}(\bm{x};\kappa_1)(f(\cdot),v_{n}(\cdot;\kappa_1))_{\Omega}}{\lambda_*-\mu_{n}(\kappa_1)}d\kappa_1
+\mathcal{O}(\epsilon).
\end{aligned}
\end{equation}
\normalsize
\end{lemma}

\begin{lemma}[Asymptotics of $\mathcal{G}_{2,n}^{\epsilon,evan}(\lambda_*+i\epsilon)$: $n=1,2$] \label{lem_app_A_2_n=12}
\footnotesize
\begin{equation} \label{eq_app_A_4}
\begin{aligned}
\mathcal{G}_{2, 1}^{\epsilon,evan}(\lambda_*+i\epsilon)f  
&=(\epsilon^{-\frac{1}{9}}\cdot \frac{2}{\pi\gamma_*}+\mathcal{O}(\epsilon^{\frac{1}{9}}))\cdot v_1(\bm{x};\pi)(f(\cdot),v_1(\cdot;\pi))_{\Omega} \\
&\quad +\mathcal{G}_0^{(1)}(\lambda_*)f  -\frac{2\epsilon^{\frac{1}{9}}}{\pi\gamma_*}\sum_{j+\ell =2}v_1^{(j)}(\bm{x};\pi)(f(\cdot),v_1^{(\ell)}(\bm{x};\pi))_{\Omega}  +\mathcal{O}(\epsilon^{\frac{2}{9}})
\end{aligned}
\end{equation}
\normalsize
where the integral operator $\mathcal{G}_0^{(1)} (\lambda_*)$ attains the following kernel function
\footnotesize
\begin{equation*}
G_0^{(1)}(\bm{x},\bm{y};\lambda_*)=
\frac{1}{2\pi}\int_{0}^{2\pi}
\frac{v_1(\bm{x};\kappa_1+i\nu)\overline{v_1(\bm{y};\overline{\kappa_1+i\nu})}}{\lambda_*-\mu_{1}(\kappa_1+i\nu)}d\kappa_1
+\frac{i}{\gamma_*}\Big(\partial_{\kappa_1}v_{1}(\bm{x};\pi)\overline{v_{1}(\bm{y};\pi)}
+v_{1}(\bm{x};\pi)\overline{(\partial_{\kappa_1}v_{1})(\bm{y};\pi)}\Big).
\end{equation*}
\normalsize
Moreover, $G_0^{(1)}(\bm{x},\bm{y};\lambda_*)$ has the following alternative expression
\footnotesize
\begin{equation*}
G_0^{(1)}(\bm{x},\bm{y};\lambda_*)=\lim_{\epsilon\to 0^+}
\Big(
\frac{1}{2\pi}\int_{[0,\pi-\epsilon^\frac{1}{9})\cup (\pi+\epsilon^\frac{1}{9},2\pi]}\frac{v_1(\bm{x};\kappa_1)\overline{v_1(\bm{y};\kappa_1)}}{\lambda_*-\mu_{1}(\kappa_1)}d\kappa_1
-\epsilon^{-\frac{1}{9}}\cdot\frac{2}{\pi\gamma_*}v_1(\bm{x};\pi)\overline{v_1(\bm{y};\pi)} \Big).
\end{equation*}
\normalsize
Similarly,
\footnotesize
\begin{equation} \label{eq_app_A_5}
\begin{aligned}
\mathcal{G}_{2,2}^{\epsilon,evan}(\lambda_*+i\epsilon)f 
&=(-\epsilon^{-\frac{1}{9}}\cdot \frac{2}{\pi\gamma_*}+\mathcal{O}(\epsilon^{\frac{1}{9}}))\cdot v_2(\bm{x};\pi)(f(\cdot),v_2(\cdot;\pi))_{\Omega} \\
&\quad +\mathcal{G}_0^{(2)}(\lambda_*)f   +\frac{2\epsilon^{\frac{1}{9}}}{\pi\gamma_*}\sum_{j+\ell =2}v_2^{(j)}(\bm{x};\pi)(f(\cdot),v_2^{(\ell)}(\bm{x};\pi))_{\Omega} +\mathcal{O}(\epsilon^{\frac{2}{9}}),
\end{aligned}
\end{equation}
\normalsize
where the integral operator $\mathcal{G}_0^{(2)} (\lambda_*)$ attains the following kernel function
\footnotesize
\begin{equation*}
\begin{aligned}
G_0^{(2)}(\bm{x},\bm{y};\lambda_*)&=\frac{1}{2\pi}\int_{0}^{2\pi}
\frac{v_2(\bm{x};\kappa_1+i\nu)\overline{v_2(\bm{y};\overline{\kappa_1+i\nu})}}{\lambda_*-\mu_{2}(\kappa_1+i\nu)}d\kappa_1 -\frac{i}{\gamma_*}\Big(\partial_{\kappa_1}v_{2}(\bm{x};\pi)\overline{v_{2}(\bm{y};\pi)}
+v_{2}(\bm{x};\pi)\overline{(\partial_{\kappa_1}v_{2})(\bm{y};\pi)}\Big).
\end{aligned}
\end{equation*}
\normalsize
Moreover, $G_0^{(2)}(\bm{x},\bm{y};\lambda_*)$ has the following alternative expression
\footnotesize
\begin{equation*}
\begin{aligned}
G_0^{(2)}(\bm{x},\bm{y};\lambda_*)&=\lim_{\epsilon\to 0^+}
\Big(
\frac{1}{2\pi}\int_{[0,\pi-\epsilon^\frac{1}{9})\cup (\pi+\epsilon^\frac{1}{9},2\pi]}\frac{v_2(\bm{x};\kappa_1)\overline{v_2(\bm{y};\kappa_1)}}{\lambda_*-\mu_{2}(\kappa_1)}d\kappa_1
+\epsilon^{-\frac{1}{9}}\cdot\frac{2}{\pi\gamma_*}v_2(\bm{x};\pi)\overline{v_2(\bm{y};\pi)} \Big).
\end{aligned}
\end{equation*}
\normalsize
\end{lemma}

\begin{proof} [Proof of Lemma \ref{lem_app_A_1}]
We prove \eqref{eq_app_A_1} by direct calculation and elementary estimate of integrals. \eqref{eq_app_A_2} is proved similarly. By the expansion of Bloch mode \eqref{eq_ansatz_unperturb_eigenfunction_1} and eigenvalue \eqref{eq_quadratic_dispersion}
\footnotesize
\begin{equation} \label{eq_app_A_6}
\begin{aligned}
&v_1(\bm{x};\kappa_1)=\Big(\sum_{j=0}^{5}(\kappa_1-\pi)^{j}v_1^{(j)}(\bm{x};\pi)+\mathcal{O}(|\kappa_1-\pi|^6)\Big)\Big(1-\frac{N_1^{(2)}}{2}(\kappa_1-\pi)^2-\frac{N_1^{(3)}}{2}(\kappa_1-\pi)^3+\mathcal{O}(|\kappa_1-\pi|^4)\Big), \\
&\quad\quad\quad\quad =(1+\mathcal{O}(|\kappa_1-\pi|^2))v_1^{(0)}(\bm{x};\pi)+\sum_{j=1}^{2}(\kappa_1-\pi)^{j}v_1^{(j)}(\bm{x};\pi)+\mathcal{O}(|\kappa_1-\pi|^3) \\
&\mu_1(\kappa_1)=\lambda_*-\frac{1}{2}\gamma_*(\kappa_1-\pi)^2-\eta_*(\kappa_1-\pi)^4+\mathcal{O}(|\kappa_1-\pi|^6).
\end{aligned}
\end{equation}
\normalsize
It follows that
\begin{equation} \label{eq_app_A_7}
\begin{aligned}
\frac{1}{\lambda_*+i\epsilon-\mu_{1}(\kappa_1)}
=\frac{1+\mathcal{O}(|\kappa_1-\pi|^2)}{i\epsilon+\frac{1}{2}\gamma_*(\kappa_1-\pi)^2}. 
\end{aligned}
\end{equation}
Therefore
\footnotesize
\begin{equation} \label{eq_app_A_8}
\begin{aligned}
\mathcal{G}_{1}^{\epsilon,prop}(\lambda_*+i\epsilon)f
&=\Big(\frac{1}{2\pi}\int_{\pi-\epsilon^{\frac{1}{9}}}^{\pi+\epsilon^{\frac{1}{9}}}
\frac{1+\mathcal{O}(|\kappa_1-\pi|^2)}{i\epsilon+\frac{1}{2}\gamma_*(\kappa_1-\pi)^2}d\kappa_1 \Big)\cdot v_{1}(\bm{x};\pi)(f(\cdot),v_{1}(\cdot;\pi))_{\Omega} \\
&\quad+ \Big(\frac{1}{2\pi}\int_{\pi-\epsilon^{\frac{1}{9}}}^{\pi+\epsilon^{\frac{1}{9}}}
\frac{\kappa_1-\pi+\mathcal{O}(|\kappa_1-\pi|^3)}{i\epsilon+\frac{1}{2}\gamma_*(\kappa_1-\pi)^2}d\kappa_1 \Big)\cdot \sum_{j+\ell =1}v_1^{(j)}(\bm{x};\pi)(f(\cdot),v_1^{(\ell)}(\bm{x};\pi))_{\Omega} \\
&\quad +\Big(\frac{1}{2\pi}\int_{\pi-\epsilon^{\frac{1}{9}}}^{\pi+\epsilon^{\frac{1}{9}}}
\frac{(\kappa_1-\pi)^2+\mathcal{O}(|\kappa_1-\pi|^4)}{i\epsilon+\frac{1}{2}\gamma_*(\kappa_1-\pi)^2}d\kappa_1 \Big)\cdot \sum_{j+\ell =2}v_1^{(j)}(\bm{x};\pi)(f(\cdot),v_1^{(\ell)}(\bm{x};\pi))_{\Omega} \\
&\quad +\mathcal{O}\Big(\int_{\pi-\epsilon^{\frac{1}{9}}}^{\pi+\epsilon^{\frac{1}{9}}}
\frac{|\kappa_1-\pi|^3}{i\epsilon+\frac{1}{2}\gamma_*(\kappa_1-\pi)^2}d\kappa_1 \Big)\\
&=\big(\epsilon^{-\frac{1}{2}}+o(-\epsilon^{\frac{1}{2}})\big)\cdot \frac{1-i}{2\gamma_*^{\frac{1}{2}}}v_1(\bm{x};\pi)\big( f(\cdot),v_1(\cdot;\pi)\big)_{\Omega} \\
&\quad +\frac{2}{\pi\gamma_*}\epsilon^{\frac{1}{9}}
\sum_{j+\ell =2}v_1^{(j)}(\bm{x};\pi)(f(\cdot),v_1^{(\ell)}(\bm{x};\pi))_{\Omega}+\mathcal{O}(\epsilon^{\frac{2}{9}}).
\end{aligned}
\end{equation}
\normalsize
This gives \eqref{eq_app_A_1}.
\end{proof}

\begin{proof}[Proof of Lemma \ref{lem_app_A_2_ngeq3}]
\medskip
To prove \eqref{eq_app_A_3}, we introduce the projectors
\begin{equation*}
\mathbb{P}_{n}(\kappa_1)g=\big(g(\cdot),v_n(\cdot;\overline{\kappa_1}) \big)v_n(\cdot;\kappa_1)\,\, (n=1,2), \quad
\mathbb{Q}(\kappa_1)=I-\mathbb{P}_{1}(\kappa_1)-\mathbb{P}_{2}(\kappa_1).
\end{equation*}
Note that $\mathbb{P}_{n}$ and $\mathbb{Q}$ are periodic by the quasi-periodicity of Bloch mode
\begin{equation} \label{eq_app_A_11_periodicity}
\mathbb{P}_{n}(\kappa_1+2\pi)=\mathbb{P}_{n}(\kappa_1),\quad
\mathbb{Q}(\kappa_1+2\pi)=\mathbb{Q}(\kappa_1).
\end{equation}
Then, by Floquet transform,
\footnotesize
\begin{equation} \label{eq_app_A_12}
\begin{aligned}
\frac{1}{2\pi}\int_{0}^{2\pi}\sum_{n\geq 3}
\frac{v_{n}(\bm{x};\kappa_1)(f(\cdot),v_{n}(\cdot;\kappa_1))_{\Omega}}{\lambda_*+i\epsilon-\mu_{n}(\kappa_1)}d\kappa_1
&=\frac{1}{2\pi}\int_{0}^{2\pi}
\big(\lambda_*+i\epsilon-\mathcal{L}_{\Omega,\pi}^A(\kappa_1)\mathbb{Q}(\kappa_1)\big)^{-1}\mathbb{Q}(\kappa_1)\hat{f}(\kappa_1)d\kappa_1,\quad \\
\frac{1}{2\pi}\int_{0}^{2\pi}\sum_{n\geq 3}
\frac{v_{n}(\bm{x};\kappa_1)(f(\cdot),v_{n}(\cdot;\kappa_1))_{\Omega}}{\lambda_*-\mu_{n}(\kappa_1)}d\kappa_1
&=\frac{1}{2\pi}\int_{0}^{2\pi}
\big(\lambda_*-\mathcal{L}_{\Omega,\pi}^A(\kappa_1)\mathbb{Q}(\kappa_1)\big)^{-1}\mathbb{Q}(\kappa_1)\hat{f}(\kappa_1)d\kappa_1,
\end{aligned}
\end{equation}
\normalsize
where $\hat{f}(\kappa_1)$ denotes the Floquet transform of $f$. Notice that
\begin{equation*}
\sigma(\mathcal{L}^A_{\Omega,\pi}(\kappa_1)\mathbb{Q}(\kappa_1))=\{0\}\cup\{\mu_n(\kappa_1)\}_{n\geq 3}.
\end{equation*}
Consequently, $\lambda_*\notin \sigma(\mathcal{L}^A_{\Omega,\pi}(\kappa_1)\mathbb{Q}(\kappa_1))$ for any $\kappa_1\in [0,2\pi)$ by Assumption \ref{def_quadratic_degenracy}(2). Thus, for $\epsilon$ being sufficiently small, the resolvent $(\lambda_*+i\epsilon-\mathcal{L}_{\Omega,\pi}^A(\kappa_1)\mathbb{Q}(\kappa_1))^{-1}:(H^{1}(Y))^*\to H^{1}(Y)$ is uniformly bounded. Hence \eqref{eq_app_A_12} and the resolvent identity shows that
\footnotesize
\begin{equation*}
\begin{aligned}
&\frac{1}{2\pi}\int_{0}^{2\pi}
\sum_{n\geq 3}\frac{v_{n}(\bm{x};\kappa_1)(f(\cdot),v_{n}(\cdot;\kappa_1))_{\Omega}}{\lambda_*+i\epsilon-\mu_{n}(\kappa_1)}d\kappa_1
-
\frac{1}{2\pi}\int_{0}^{2\pi}
\sum_{n\geq 3}\frac{v_{n}(\bm{x};\kappa_1)(f(\cdot),v_{n}(\cdot;\kappa_1))_{\Omega}}{\lambda_*-\mu_{n}(\kappa_1)}d\kappa_1 \\
&=i\epsilon\cdot \frac{1}{2\pi}\int_{0}^{2\pi}
\Big[(\lambda_*+i\epsilon-\mathcal{L}^A_{\Omega,\pi}(\kappa_1)\mathbb{Q}(\kappa_1))^{-1}\cdot (\lambda_*-\mathcal{L}^A_{\Omega,\pi}(\kappa_1)\mathbb{Q}(\kappa_1))^{-1}\Big]
\mathbb{Q}(\kappa_1)\hat{f}(\kappa_1)d\kappa_1 \\
&=\mathcal{O}(\epsilon),
\end{aligned}
\end{equation*}
\normalsize
which concludes the proof of \eqref{eq_app_A_3}.
\end{proof}

\begin{proof}[Proof of Lemma \ref{lem_app_A_2_n=12}]
We prove \eqref{eq_app_A_4}. The proof of \eqref{eq_app_A_5} is similar. First, by Assumption \ref{def_quadratic_degenracy}, 
\begin{equation*}
|\lambda_*-\mu_1(\kappa_1)|\gtrsim \epsilon^{\frac{2}{9}}\quad \text{for}\,
|\kappa_1-\pi|>\epsilon^{\frac{1}{9}}.
\end{equation*}
We can derive that
\footnotesize
\begin{equation} \label{eq_app_A_13}
\begin{aligned}
\mathcal{G}_{1}^{\epsilon,evan}(\lambda_*+i\epsilon)f 
&=I +\mathcal{O}(\epsilon^{\frac{5}{9}}),
\end{aligned}
\end{equation}
\normalsize
where
\footnotesize
\begin{equation*}
I=\frac{1}{2\pi}\int_{[0,\pi-\epsilon^\frac{1}{9})\cup (\pi+\epsilon^\frac{1}{9},2\pi]}\frac{v_1(\bm{x};\kappa_1)(f(\cdot),v_{1}(\cdot;\kappa_1))_{\Omega}}{\lambda_*-\mu_{1}(\kappa_1)}d\kappa_1
=\frac{1}{2\pi}\int_{[0,\pi-\epsilon^\frac{1}{9})\cup (\pi+\epsilon^\frac{1}{9},2\pi]}
\frac{\mathbb{P}_{1}(\kappa_1)\hat{f}(\kappa_1)}{\lambda_*-\mu_{1}(\kappa_1)}d\kappa_1.
\end{equation*}
\normalsize
Note that $\mathbb{P}_{1}(\kappa_1)$ and $\mu_1(\kappa_1)$ are both analytic in $\kappa_1\in \mathcal{D}\supset\mathcal{R}_{\nu}=\{z\in\mathbf{C}:0\leq \text{Re}z\leq 2\pi,-\nu\leq \text{Im}z\leq \nu\}$. The Cauchy theorem gives
\footnotesize
\begin{equation} \label{eq_app_A_14}
\begin{aligned}
I&=\frac{1}{2\pi}\int_{0}^{\nu}
\frac{\mathbb{P}_{1}(it)\hat{f}(it)}{\lambda_*-\mu_{1}(it)}idt
-\frac{1}{2\pi}\int_{0}^{\nu}
\frac{\mathbb{P}_{1}(2\pi+it)\hat{f}(2\pi+it)}{\lambda_*-\mu_{1}(2\pi+it)}idt  +\frac{1}{2\pi}\int_{0}^{2\pi}
\frac{\mathbb{P}_{1}(\kappa_1+i\nu)\hat{f}(\kappa_1+i\nu)}{\lambda_*-\mu_{1}(\kappa_1+i\nu)}d\kappa_1 \\
&\quad +\frac{1}{2\pi}\int_{C_{\epsilon}}
\frac{\mathbb{P}_{1}(\kappa_1)\hat{f}(\kappa_1)}{\lambda_*-\mu_{1}(\kappa_1)}d\kappa_1,
\end{aligned}
\end{equation}
\normalsize
where $C_{\epsilon}:=\{\pi+\epsilon^{\frac{1}{9}}e^{i\theta}:0\leq \theta\leq \pi\}$. The contour is illustrated in Figure \ref{fig_integral_contour}(a). Note that $\nu$ is independent of $\epsilon$. Since the periodicity \eqref{eq_app_A_11_periodicity} holds within the rectangle $\kappa_1\in \mathcal{R}_{\nu}$ by the analytic continuation, the first and second integral in \eqref{eq_app_A_14} cancel. Hence
\footnotesize
\begin{equation} \label{eq_app_A_14_cancelled}
\begin{aligned}
I&=\frac{1}{2\pi}\int_{0}^{2\pi}
\frac{\mathbb{P}_{1}(\kappa_1+i\nu)\hat{f}(\kappa_1+i\nu)}{\lambda_*-\mu_{1}(\kappa_1+i\nu)}d\kappa_1 +\frac{1}{2\pi}\int_{C_{\epsilon}}
\frac{\mathbb{P}_{1}(\kappa_1)\hat{f}(\kappa_1)}{\lambda_*-\mu_{1}(\kappa_1)}d\kappa_1.
\end{aligned}
\end{equation}
\normalsize

Using the Taylor expansion \eqref{eq_app_A_6} and a similar calculation as in \eqref{eq_app_A_8}, 
\footnotesize
\begin{equation} \label{eq_app_A_15}
\begin{aligned}
\frac{1}{2\pi}\int_{C_{\epsilon}}
\frac{\mathbb{P}_{1}(\kappa_1)\hat{f}(\kappa_1)}{\lambda_*-\mu_{1}(\kappa_1)}d\kappa_1 
&=\big(v_1^{(0)}(\bm{x};\pi)(f(\cdot),v_1^{(0)}(\bm{x};\pi))_{\Omega} \big)\cdot\Big(\frac{2\epsilon^{-\frac{1}{9}}}{\pi\gamma_*}+\mathcal{O}(\epsilon^{\frac{1}{9}}) \Big) \\
&\quad +\big(\sum_{j+\ell =1}v_1^{(j)}(\bm{x};\pi)(f(\cdot),v_1^{(\ell)}(\bm{x};\pi))_{\Omega} \big)\cdot\Big(\frac{i}{\gamma_*}+\mathcal{O}(\epsilon^{\frac{1}{3}}) \Big) \\
&\quad +\big(\sum_{j+\ell =2}v_1^{(j)}(\bm{x};\pi)(f(\cdot),v_1^{(\ell)}(\bm{x};\pi))_{\Omega} \big)\cdot\Big(\frac{-2\epsilon^{\frac{1}{9}}}{\pi\gamma_*}+\frac{2}{3\pi\gamma_*}\big(\frac{2\eta_*}{\gamma_*}+N_1^{(2)}\big)\epsilon^{\frac{1}{3}} \Big) \\
&\quad +\big(\sum_{j+\ell =4}v_1^{(j)}(\bm{x};\pi)(f(\cdot),v_1^{(\ell)}(\bm{x};\pi))_{\Omega} \big)\cdot\big(\frac{-2}{3\pi\gamma_*}\epsilon^{\frac{1}{3}} \big)
 +\mathcal{O}(\epsilon^{\frac{5}{9}}).
\end{aligned}
\end{equation}
\normalsize
In summary, \eqref{eq_app_A_14_cancelled} and \eqref{eq_app_A_15} give
\footnotesize
\begin{equation} \label{eq_first_branch_evanescent_summary}
\begin{aligned}
I&=\frac{1}{2\pi}\int_{[0,\pi-\epsilon^\frac{1}{9})\cup (\pi+\epsilon^\frac{1}{9},2\pi]}\frac{v_1(\bm{x};\kappa_1)(f(\cdot),v_{1}(\cdot;\kappa_1))_{\Omega}}{\lambda_*-\mu_{1}(\kappa_1)}d\kappa_1 \\
&=\big(v_1^{(0)}(\bm{x};\pi)(f(\cdot),v_1^{(0)}(\bm{x};\pi))_{\Omega} \big)\cdot\Big(\frac{2\epsilon^{-\frac{1}{9}}}{\pi\gamma_*}+\mathcal{O}(\epsilon^{\frac{1}{9}}) \Big)   \\
&\quad +\frac{1}{2\pi}\int_{0}^{2\pi}
\frac{\mathbb{P}_{1}(\kappa_1+i\nu)\hat{f}(\kappa_1+i\nu)}{\lambda_*-\mu_{1}(\kappa_1+i\nu)}d\kappa_1 +\big(\sum_{j+\ell =1}v_1^{(j)}(\bm{x};\pi)(f(\cdot),v_1^{(\ell)}(\bm{x};\pi))_{\Omega} \big)\cdot\Big(\frac{i}{\gamma_*}+\mathcal{O}(\epsilon^{\frac{1}{3}}) \Big) \\
&\quad +\big(\sum_{j+\ell =2}v_1^{(j)}(\bm{x};\pi)(f(\cdot),v_1^{(\ell)}(\bm{x};\pi))_{\Omega} \big)\cdot\Big(\frac{-2\epsilon^{\frac{1}{9}}}{\pi\gamma_*}+\frac{2}{3\pi\gamma_*}\big(\frac{2\eta_*}{\gamma_*}+N_1^{(2)}\big)\epsilon^{\frac{1}{3}} \Big) \\
&\quad +\big(\sum_{j+\ell =4}v_1^{(j)}(\bm{x};\pi)(f(\cdot),v_1^{(\ell)}(\bm{x};\pi))_{\Omega} \big)\cdot\big(\frac{-2}{3\pi\gamma_*}\epsilon^{\frac{1}{3}} \big)
 +\mathcal{O}(\epsilon^{\frac{5}{9}}). 
\end{aligned}
\end{equation}
\normalsize
This, together with \eqref{eq_app_A_13}, conclude the proof of \eqref{eq_app_A_4}. 

\end{proof}

\subsection{Proof of Proposition \ref{prop_G0_fundamental_solution}}
{\color{blue}Step 1:} We first prove $(\nabla\cdot A\nabla+\lambda_*)G_0(\bm{x},\bm{y};\lambda_*)=\delta(\bm{x}-\bm{y})$. It's equivalent to showing that
\begin{equation} \label{eq_app_A_18}
\mathfrak{a}^{A}_{\Omega}(\mathcal{G}_0(\lambda_*)f,g)-\lambda_*\cdot(\mathcal{G}_0(\lambda_*)f,g)_{\Omega}
=(f,g)_{\Omega},\quad \text{for any}\, f,g\in C_c^{\infty}(\Omega)
\end{equation}
where the form $\mathfrak{a}^{A}_{\Omega}(\cdot,\cdot)$ on $H^1(\Omega)$ is defined similarly to \eqref{eq_A_form}. From \eqref{eq_asymp_unperturbed_green_1}, we obtain
\footnotesize
\begin{equation} \label{eq_app_A_34}
\begin{aligned}
&\mathfrak{a}^{A}_{\Omega}(\mathcal{G}_0(\lambda_*)f,g)-\lambda_*\cdot(\mathcal{G}_0(\lambda_*)f,g)_{\Omega} \\
&= \frac{(1-i)\cdot \big(\epsilon^{-\frac{1}{2}}+o(\epsilon^{-\frac{1}{2}})\big)}{2\gamma_*^{\frac{1}{2}}}\big( f(\cdot),v_1(\cdot;\pi)\big)\cdot 
\Big(\mathfrak{a}^{A}(v_1(\cdot;\pi),g(\cdot))-\lambda_*\cdot(v_1(\cdot;\pi),g(\cdot))_{\Omega}\Big) \\
&\quad -\frac{(1+i)\cdot \big(\epsilon^{-\frac{1}{2}}+o(\epsilon^{-\frac{1}{2}})\big)}{2\gamma_*^{\frac{1}{2}}}\big( f(\cdot),v_2(\cdot;\pi)\big)\cdot 
\Big(\mathfrak{a}^{A}(v_2(\cdot;\pi),g(\cdot))-\lambda_*\cdot(v_2(\cdot;\pi),g(\cdot))_{\Omega}\Big) \\
&\quad +\mathfrak{a}^{A}_{\Omega}(\mathcal{G}(\lambda_*+i\epsilon)f,g)-\lambda_*\cdot(\mathcal{G}(\lambda_*+i\epsilon)f,g)_{\Omega}+\mathcal{O}(\epsilon^{\frac{2}{9}}).
\end{aligned}
\end{equation}
\normalsize
In addition, the weak formulation of $(\nabla\cdot A\nabla+\lambda_*)v_n(\bm{x};\pi)=0$ ($n=1,2$) and $(\nabla\cdot A\nabla+\lambda_*+i\epsilon)G(\bm{x},\bm{y};\lambda_*+i\epsilon)=\delta(\bm{x}-\bm{y})$ yield that for any $f,g\in C_c^{\infty}(\Omega)$,
\begin{equation} \label{eq_app_A_35}
\mathfrak{a}^{A}_{\Omega}(v_n(\cdot;\pi),g(\cdot))-\lambda_*\cdot(v_n(\cdot;\pi),g(\cdot))_{\Omega}=0,\quad n=1,2, 
\end{equation}
\begin{equation} \label{eq_unperturb_Green_function_definition_weak}
\mathfrak{a}^{A}_{\Omega}(\mathcal{G}(\lambda_*+i\epsilon)f,g)-(\lambda_*+i\epsilon)\cdot(\mathcal{G}(\lambda_*+i\epsilon)f,g)_{\Omega}
=(f,g)_{\Omega}.
\end{equation}
Substituting \eqref{eq_app_A_35} and \eqref{eq_unperturb_Green_function_definition_weak} into \eqref{eq_app_A_34}, we derive that
\begin{equation} \label{eq_app_A_20}
\begin{aligned}
\mathfrak{a}^{A}_{\Omega}(\mathcal{G}_0(\lambda_*)f,g)-\lambda_*\cdot(\mathcal{G}_0(\lambda_*)f,g)_{\Omega}
=(f,g)_{\Omega}+i\epsilon\cdot(\mathcal{G}(\lambda_*+i\epsilon)f,g)_{\Omega}+\mathcal{O}(\epsilon^{\frac{2}{9}}).
\end{aligned}
\end{equation}
Note that by \eqref{eq_asymp_unperturbed_green_1},
\begin{equation*}
\epsilon\cdot(\mathcal{G}(\lambda_*+i\epsilon)f,g)_{\Omega}
=\mathcal{O}(\epsilon^{\frac{1}{2}}).
\end{equation*}
Hence, by letting $\epsilon\to 0^+$ in \eqref{eq_app_A_20}, we obtain \eqref{eq_app_A_18}. The boundary condition of $G_0(\bm{x},\bm{y};\lambda_*)$ as stated in \eqref{eq_G0_fund} can be checked directly.

{\color{blue}Step 2:} We prove \eqref{eq_G0_parity} by exploiting the reflection symmetry $[\mathcal{L}^A,\mathcal{M}_1]=0$. Indeed, $[\mathcal{L}^A,\mathcal{M}_1]=0$ implies the Bloch eigenpairs of $\mathcal{L}^A$ are reflection symmetric in the sense that
\begin{equation*}
\mu_n(\kappa_1)=\mu_{n}(2\pi-\kappa_1),\quad
v_n(\bm{x};\kappa_1)\sim v_{n}(\mathcal{M}_1\bm{x};2\pi-\kappa_1), \quad n=1,2,
\end{equation*}
and for each $n\geq 3$, there exists $n^{\prime}\geq 3$ such that
\begin{equation*}
\mu_n(\kappa_1)=\mu_{n^{\prime}}(2\pi-\kappa_1),\quad
v_n(\bm{x};\kappa_1)\sim v_{n^{\prime}}(\mathcal{M}_1\bm{x};2\pi-\kappa_1).
\end{equation*}
Therefore, the quantity inside the limit of formula \eqref{eq_asymp_unperturbed_green_limit_form} is invariant under the substitution $\bm{x}\to \mathcal{M}_1\bm{x}$, $\bm{y}\to \mathcal{M}_1\bm{y}$. Taking $\epsilon\to 0$, this implies $G_0(\bm{x},\bm{y};\lambda_*)=G_0(\mathcal{M}_1\bm{x},\mathcal{M}_1\bm{y};\lambda_*)$. 

Equation \eqref{eq_G0_argument_symmetry} follows from a similar argument using the time-reversal symmetry of the operator $\mathcal{L}^A$.

{\color{blue}Step 3:} We prove \eqref{eq_G0_decay_1}. Note that $G_0(\bm{x},\bm{y};\lambda_*)$ equals the following by Theorem \ref{thm_asymp_unperturbed_green}
\begin{equation} \label{eq_app_A_16}
\begin{aligned}
G_0(\bm{x},\bm{y};\lambda_*)
&=G_0^{+}(\bm{x},\bm{y};\lambda_*)  +\frac{i}{\gamma_*}\Big(\partial_{\kappa_1}v_{1}(\bm{x};\pi)\overline{v_{1}(\bm{y};\pi)}
+v_{1}(\bm{x};\pi)\overline{(\partial_{\kappa_1}v_{1})(\bm{y};\pi)}\Big) \\
&\quad -\frac{i}{\gamma_*}\Big(\partial_{\kappa_1}v_{2}(\bm{x};\pi)\overline{v_{2}(\bm{y};\pi)}
+v_{2}(\bm{x};\pi)\overline{(\partial_{\kappa_1}v_{2})(\bm{y};\pi)}\Big),
\end{aligned}
\end{equation}
where
\footnotesize
\begin{equation*}
\begin{aligned}
G_0^{+}(\bm{x},\bm{y};\lambda_*)
&:=\frac{1}{2\pi}\int_{0}^{2\pi}\sum_{n\geq 3}\frac{v_n(\bm{x};\kappa_1)\overline{v_n(\bm{y};\kappa_1)}}{\lambda_*-\mu_{n}(\kappa_1)}d\kappa_1 +\frac{1}{2\pi}\int_{0}^{2\pi}\sum_{n=1,2}
\frac{v_n(\bm{x};\kappa_1+i\nu)\overline{v_n(\bm{y};\overline{\kappa_1+i\nu})}}{\lambda_*-\mu_{n}(\kappa_1+i\nu)}d\kappa_1.
\end{aligned}
\end{equation*}
\normalsize
Since $G_0^{+}(\bm{x},\bm{y};\lambda_*)$ decays exponentially as $x_1\to \infty$ (see Theorem 7 in \cite{joly2016solutions}),   \eqref{eq_G0_decay_1} follows from \eqref{eq_app_A_16}. \eqref{eq_G0_decay_2} is proved similarly by using a similar expression of $G_0$ to \eqref{eq_asymp_unperturbed_green_2}, obtained using the contour integral as in Figure \ref{fig_integral_contour}(b).

\subsection{Proof of Proposition \ref{prop_G0_jump_formula}}
Here we only prove \eqref{eq_G0_jump_1} and \eqref{eq_G0_jump_3}. The proof of \eqref{eq_G0_jump_2} and \eqref{eq_G0_jump_4} is similar. The strategy is to first demonstrate that the jump formula \eqref{eq_G0_jump_1} and \eqref{eq_G0_jump_3} hold for the Green function $G(\bm{x},\bm{y};\lambda_*+i\epsilon)$. We then derive the jump formula for $G_0(\bm{x},\bm{y};\lambda_*)$ by letting $\epsilon\to 0^+$. 

To show the jump formula \eqref{eq_G0_jump_1} and \eqref{eq_G0_jump_3} holds $G(\bm{x},\bm{y};\lambda_*+i\epsilon)$, we first fix a neighborhood of $\Gamma$, say, $U:=(-\frac{1-2r_0}{4},\frac{1-2r_0}{4})\times (-\frac{1}{2},\frac{1}{2})$, where $r_0:=\text{diam}(V)$. Let 
\begin{equation*}
G^{empty}(\bm{x},\bm{y};\lambda)
=\frac{1}{2\pi}\int_{0}^{2\pi}\sum_{\bm{n}\in\mathbf{Z}^2}\frac{e^{i(2\pi\bm{n}+(\kappa_1,\pi))\cdot (\bm{x}-\bm{y})}}{\lambda-|2\pi\bm{n}+(\kappa_1,\pi)|}d\kappa_1
\end{equation*}
be the Green function for the Laplacian operator $-\Delta$ in $\Omega$. Since $U\cap (\cup_{n_1,n_2\in\mathbf{Z}}V_{n_1,n_2})=\emptyset$, $\mathcal{L}^A=-\Delta$ in $U$,
\begin{equation} \label{eq_app_A_23}
\mathcal{S}(\lambda_*+i\epsilon;G)[\varphi](\bm{x})
-\mathcal{S}(\lambda_*+i\epsilon;G^{empty})[\varphi](\bm{x})=f(\bm{x}),\quad \bm{x}\in U 
\end{equation}
for some function $f\in H^1(U)$. By \cite{colton2013integral}, the following jump relations hold 
\begin{equation} \label{eq_app_A_24}
\lim_{t\to 0}\Big(\mathcal{S}(\lambda_*+i\epsilon;G^{empty})[\varphi]\Big)(\bm{x}+t\bm{e}_1)=\mathcal{S}(\lambda_*+i\epsilon;G^{empty})[\varphi](\bm{x}),
\end{equation}
\begin{equation} \label{eq_app_A_25}
\lim_{t\to 0^{\pm}}\frac{\partial}{\partial x_1}\Big(\mathcal{S}(\lambda_*+i\epsilon;G^{empty})[\varphi]\Big)(\bm{x}+t\bm{e}_1)=\pm\frac{1}{2} \varphi(\bm{x})+\frac{1}{2}\mathcal{K}^{empty,*}[\varphi](\bm{x}),
\end{equation}
where 
\begin{equation*}
\mathcal{K}^{empty,*}(\lambda):\tilde{H}^{-\frac{1}{2}}(\Gamma)
\to \tilde{H}^{-\frac{1}{2}}(\Gamma),\,\,
\varphi\mapsto p.v.\int_{\Gamma}\frac{\partial G^{empty}}{\partial n_{\bm{x}}}(\bm{x},\bm{y};\lambda)\varphi(\bm{y})ds(\bm{y}).
\end{equation*}
Thus, by \eqref{eq_app_A_23} and \eqref{eq_app_A_24}, 
\begin{equation} \label{eq_app_A_26}
\begin{aligned}
\lim_{t\to 0^{\pm}}\mathcal{S}(\lambda_*+i\epsilon;G)[\varphi](\bm{x}+t\bm{e}_1)
&=\lim_{t\to 0^{\pm}}\mathcal{S}(\lambda_*+i\epsilon;G^{empty})[\varphi](\bm{x}+t\bm{e}_1)+f(\bm{x}) \\
&=\mathcal{S}(\lambda_*+i\epsilon;G^{empty})[\varphi](\bm{x})+f(\bm{x}) \\
&= \mathcal{S}(\lambda_*+i\epsilon;G)[\varphi](\bm{x}).
\end{aligned}
\end{equation}
Therefore \eqref{eq_G0_jump_1} holds for $G(\bm{x},\bm{y};\lambda_*+i\epsilon)$. On the other hand, the reflection symmetry implies the following identities (analogous to \eqref{eq_G0_parity})
\begin{equation*}
G(\bm{x},\bm{y};\lambda_*+i\epsilon)=G(\mathcal{M}_1\bm{x},\mathcal{M}_1\bm{y};\lambda_*+i\epsilon),\,\,
G^{empty}(\bm{x},\bm{y};\lambda_*+i\epsilon)=G^{empty}(\mathcal{M}_1\bm{x},\mathcal{M}_1\bm{y};\lambda_*+i\epsilon).
\end{equation*}
It follows that $f(\bm{x})=f(\mathcal{M}_1\bm{x})$ by \eqref{eq_app_A_23}, and that $\mathcal{K}^{empty,*}(\lambda_*+i\epsilon)=0$. Thus \eqref{eq_app_A_23} and \eqref{eq_app_A_25} yield
\begin{equation} \label{eq_app_A_28}
\begin{aligned}
\lim_{t\to 0^{\pm}}\frac{\partial}{\partial x_1}\Big(\mathcal{S}(\lambda_*+i\epsilon;G)[\varphi]\Big)(\bm{x}+t\bm{e}_1)
=\pm\frac{1}{2} \varphi(\bm{x})+\frac{\partial f}{\partial x_1}(\bm{x})
=\pm\frac{1}{2} \varphi(\bm{x}),
\end{aligned}
\end{equation}
where we've applied $\frac{\partial f}{\partial x_1}|_{\Gamma}=0$ (since $f(\bm{x})=f(\mathcal{M}_1\bm{x})$). Hence \eqref{eq_G0_jump_3} holds for $G(\bm{x},\bm{y};\lambda_*+i\epsilon)$.

We now prove \eqref{eq_G0_jump_1} and \eqref{eq_G0_jump_3} using \eqref{eq_app_A_26} and \eqref{eq_app_A_28}. Note that \eqref{eq_asymp_unperturbed_green_1} implies
\begin{equation} \label{eq_app_A_29}
\mathcal{S}(\lambda_*;G_0)[\varphi](\bm{x})-\mathcal{S}(\lambda_*+i\epsilon;G)[\varphi](\bm{x})
=f_{\epsilon}(\bm{x})\in H^1(U),
\end{equation}
with
\footnotesize
\begin{equation} \label{eq_app_A_30}
\begin{aligned}
f_{\epsilon}(\bm{x})
&=-\big(\epsilon^{-\frac{1}{2}}+o(\epsilon^{-\frac{1}{2}})\big)\cdot
\Big(\frac{(1-i)\cdot \int_{\Gamma}\varphi(\cdot)\overline{v_1(\cdot;\pi)}}{2\gamma_*^{\frac{1}{2}}}v_1(\bm{x};\pi)-\frac{(1+i)\cdot \int_{\Gamma}\varphi(\cdot)\overline{v_2(\cdot;\pi)}}{2\gamma_*^{\frac{1}{2}}}v_2(\bm{x};\pi)\Big)
+\mathcal{O}(\epsilon^{\frac{2}{9}}).
\end{aligned}
\end{equation}
\normalsize
Then, by \eqref{eq_app_A_26}, 
\begin{equation*}
\begin{aligned}
\lim_{t\to 0^{\pm}}\mathcal{S}(\lambda_*;G_0)[\varphi](\bm{x}+t\bm{e}_1)
&=\lim_{t\to 0^{\pm}}\Big(\mathcal{S}(\lambda_*+i\epsilon;G)[\varphi](\bm{x}+t\bm{e}_1)+f_{\epsilon}(\bm{x}+t\bm{e}_1)\Big) \\
&=\mathcal{S}(\lambda_*+i\epsilon;G)[\varphi](\bm{x})+f_{\epsilon}(\bm{x}) \\
&= \mathcal{S}(\lambda_*;G_0)[\varphi](\bm{x}),
\end{aligned}
\end{equation*}
which gives \eqref{eq_G0_jump_1}. For \eqref{eq_G0_jump_3}, note that the parity of $v_n(\bm{x};\pi)$ ($n=1,2$) in \eqref{eq_v12_parity_1} implies that
\begin{equation*}
v_1(\bm{x};\pi)=0,\quad \frac{\partial v_2}{\partial x_2}(\bm{x};\pi)=0,\quad \bm{x}\in\Gamma .
\end{equation*}
Thus \eqref{eq_app_A_30} indicates
\begin{equation*}
\frac{\partial f_{\epsilon}}{\partial x_1}(\bm{x})=\mathcal{O}(\epsilon^{\frac{2}{9}}),\quad \bm{x}\in\Gamma .
\end{equation*}
By \eqref{eq_app_A_28} and \eqref{eq_app_A_29},
\begin{equation*}
\lim_{t\to 0^{\pm}}\frac{\partial}{\partial x_1}\Big(\mathcal{S}(\lambda_*;G_0)[\varphi]\Big)(\bm{x}+t\bm{e}_1)
=\pm\frac{1}{2} \varphi(\bm{x})+\frac{\partial f_{\epsilon}}{\partial x_1}(\bm{x})
=\pm\frac{1}{2} \varphi(\bm{x})+\mathcal{O}(\epsilon^{\frac{2}{9}}).
\end{equation*}
Hence, by letting $\epsilon\to 0^+$, we conclude the proof of \eqref{eq_G0_jump_3}.

\section{Energy flux of Bloch modes at the quadratic degenerate point}
In this section, we characterize the interactions/couplings between Bloch modes and their momentum derivatives at the quadratic degenerate point along the interface $\Gamma$ using an energy flux functional. We show that $(\partial_{\kappa_1}v_n)(\bm{x};\pi)$ act as ``dual vectors'' to $v_n(\bm{x};\pi)$ for $n=1, 2$. This characterization is crucial for analyzing the boundary integral operators related to the interface modes. 

We first introduce the energy flux functional that takes the following sesquilinear form:
\begin{equation*}
\mathfrak{q}(u,v;\Gamma_s)=\int_{\Gamma_s}\Big(\frac{\partial u}{\partial x_1}\overline{v}-u\frac{\partial \overline{v}}{\partial x_1}\Big)dx_2,\quad
\Gamma_s:=\{s\}\times (-\frac{1}{2},\frac{1}{2})\subset \Omega.
\end{equation*}
When $s=0$, we write $\mathfrak{q}(u,v;\Gamma)=\mathfrak{q}(u,v;\Gamma_s)$. Theorem 3 in \cite{joly2016solutions} gives that
\begin{proposition} \label{prop_energy_flux_1}
For all $0\leq \kappa_1<2\pi$, and $n\geq 1$,
\begin{equation} \label{eq_energy_flux_1_1}
\mathfrak{q}(v_n(\cdot;\kappa_1),v_n(\cdot;\kappa_1);\Gamma)=i\mu^{\prime}(\kappa_1).
\end{equation}
If $\mu_n(\kappa_1)=\mu_{n^{\prime}}(\kappa_1^{\prime})$, then
\begin{equation} \label{eq_energy_flux_1_2}
\mathfrak{q}(v_n(\cdot;\kappa_1),v_{n^{\prime}}(\cdot;\kappa_1^{\prime});\Gamma)=0,\quad \text{if }\,
n\neq n^{\prime}\,\text{ or }\, \kappa_1\neq \kappa_1^{\prime}.
\end{equation}
If $\mu_{n}(\kappa_1)\neq \mu_{m}(\kappa_1)$, then
\begin{equation} \label{eq_energy_flux_1_3}
\mathfrak{q}(v_n(\cdot;\kappa_1),v_m(\cdot;\kappa_1);\Gamma)=0.
\end{equation}
\end{proposition}
Proposition \ref{prop_energy_flux_1} states that the energy flux of a Bloch mode equals its group velocity. Bloch modes of different eigenvalues or the same eigenvalue but different momentum or branches are decoupled in terms of energy flux. 

Note that the following identities hold on the interface $\Gamma$ by \eqref{eq_v12_parity_1}-\eqref{eq_v12_parity_2}:
\begin{equation} \label{eq_sec3_10}
v_1(\bm{x};\pi)=\partial_{\kappa_1}v_2(\bm{x};\pi)=0,\quad \frac{\partial v_2}{\partial x_1}(\bm{x};\pi)=\frac{\partial }{\partial x_1}(\partial_{\kappa_1}v_1)(\bm{x};\pi)=0,\quad \bm{x}\in\Gamma.
\end{equation}

We have the following result on the coupling between the Bloch modes $v_n(\bm{x};\pi)$ and their momentum-derivative $(\partial_{\kappa_1}v_n)(\bm{x};\pi)$ in terms of energy flux across $\Gamma$.
Specifically,

\begin{proposition} \label{prop_12flux_alternative}
Assume \eqref{eq_v12_parity_2} holds. Then we have
\begin{equation}
\label{eq_12flux_1}
\int_{\Gamma}\frac{\partial v_1}{\partial x_1}(\cdot;\pi)\cdot \overline{(\partial_{\kappa_1}v_1)(\cdot;\pi)}=-\frac{i}{2}\gamma_* ,
\end{equation}
\begin{equation}
\label{eq_12flux_2}
\int_{\Gamma}\frac{\partial }{\partial x_1}(\partial_{\kappa_1}v_2)(\cdot;\pi)\cdot \overline{v_2(\cdot;\pi)}=\frac{i}{2}\gamma_*.
\end{equation}
Moreover,
\begin{equation}
\label{eq_12flux_3}
\int_{\Gamma}\frac{\partial v_1}{\partial x_1}(\cdot;\pi)\cdot \overline{v_2(\cdot;\pi)}=0,
\end{equation}
\begin{equation}
\label{eq_12flux_4}
\int_{\Gamma}\frac{\partial }{\partial x_1}(\partial_{\kappa_1}v_2)(\cdot;\pi)\cdot \overline{(\partial_{\kappa_1}v_1)(\cdot;\pi)}=0.
\end{equation}
\end{proposition}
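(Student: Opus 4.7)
My plan is to extract all four identities from the energy flux formulas of Propositions~\ref{prop_energy_flux_1}--\ref{prop_energy_flux_2}, combined with Taylor expansions at $\kappa_1=\pi$ that exploit the vanishings in \eqref{eq_sec3_10}, the quadratic dispersion \eqref{eq_quadratic_mu_dispersion}, and reflection/time-reversal symmetries of the problem. The two off-diagonal identities \eqref{eq_12flux_3}--\eqref{eq_12flux_4} are soft and follow from symmetry alone; the two diagonal identities \eqref{eq_12flux_1}--\eqref{eq_12flux_2} require a quantitative matching of coefficients in a perturbation expansion plus an argument that the relevant integrands are purely imaginary.

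I first dispatch \eqref{eq_12flux_3} as an immediate consequence of \eqref{eq_energy_flux_1_2} with $(n,\kappa_1)=(1,\pi)$ and $(n',\kappa_1')=(2,\pi)$, which is legitimate because $\lambda_1(\pi)=\lambda_2(\pi)=\lambda_*$ and $n\neq n'$: substituting $v_1(\cdot;\pi)|_\Gamma=0$ and $\partial_{x_1}v_2(\cdot;\pi)|_\Gamma=0$ from \eqref{eq_sec3_10} into $\mathfrak{q}(v_1(\cdot;\pi),v_2(\cdot;\pi);\Gamma)=0$ yields exactly \eqref{eq_12flux_3}. For \eqref{eq_12flux_4}, I use $\mathcal{M}_2$-parity. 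Since $\mathcal{M}_2$ commutes with $\mathcal{L}^A_{\Omega,\pi}(\kappa_1)$ for every $\kappa_1$ and Assumption~\ref{def_quadratic_degenracy}(4) prescribes that $v_1(\cdot;\pi)$ is $\mathcal{M}_2$-even while $v_2(\cdot;\pi)$ is $\mathcal{M}_2$-odd, the analytic branches $v_n(\cdot;\kappa_1)$ can be chosen with definite $\mathcal{M}_2$-parity throughout a neighborhood of $\pi$. Consequently $\partial_{\kappa_1}v_1$ is $\mathcal{M}_2$-even and $\partial_{x_1}\partial_{\kappa_1}v_2$ is $\mathcal{M}_2$-odd. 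On $\Gamma=\{0\}\times(-\tfrac12,\tfrac12)$ the action of $\mathcal{M}_2$ is just $x_2\mapsto -x_2$, so the integrand in \eqref{eq_12flux_4} is odd in $x_2$ and integrates to zero.

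For \eqref{eq_12flux_1} I start from \eqref{eq_energy_flux_1_1} in the form $\mathfrak{q}(v_1(\cdot;\kappa_1),v_1(\cdot;\kappa_1);\Gamma)=i\mu_1'(\kappa_1)$ and Taylor expand both sides at $\kappa_1=\pi$. Using $v_1(\cdot;\pi)|_\Gamma=0$ and $\partial_{x_1}\partial_{\kappa_1}v_1(\cdot;\pi)|_\Gamma=0$ from \eqref{eq_sec3_10}, the restrictions to $\Gamma$ are $v_1(\cdot;\kappa_1)=(\kappa_1-\pi)\partial_{\kappa_1}v_1(\cdot;\pi)+O((\kappa_1-\pi)^2)$ and $\partial_{x_1}v_1(\cdot;\kappa_1)=\partial_{x_1}v_1(\cdot;\pi)+O((\kappa_1-\pi)^2)$. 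Combined with $i\mu_1'(\kappa_1)=-i\gamma_*(\kappa_1-\pi)+O((\kappa_1-\pi)^3)$, which follows from \eqref{eq_quadratic_mu_dispersion} and the evenness $\mu_1(\pi+s)=\mu_1(\pi-s)$, matching the $O(\kappa_1-\pi)$ coefficients produces
\[
\text{Im}\int_\Gamma \partial_{x_1}v_1(\cdot;\pi)\,\overline{\partial_{\kappa_1}v_1(\cdot;\pi)}\,dx_2=-\tfrac{\gamma_*}{2}.
\]
An analogous expansion of $\mathfrak{q}(v_2(\cdot;\kappa_1),v_2(\cdot;\kappa_1);\Gamma)=i\mu_2'(\kappa_1)$, using this time $\partial_{\kappa_1}v_2(\cdot;\pi)|_\Gamma=\partial_{x_1}v_2(\cdot;\pi)|_\Gamma=0$, gives $\text{Im}\int_\Gamma \partial_{x_1}\partial_{\kappa_1}v_2(\cdot;\pi)\,\overline{v_2(\cdot;\pi)}\,dx_2=\tfrac{\gamma_*}{2}$.

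The principal obstacle is upgrading these imaginary-part identities to the exact values $\mp i\gamma_*/2$; equivalently, showing that the two integrals are purely imaginary. For this I would invoke time-reversal symmetry at the TRS-invariant quasi-momentum $\bm{\kappa}^{(2)}=(\pi,\pi)$: complex conjugation $T:u\mapsto \bar u$ is anti-unitary, preserves $L^2_{\bm{\kappa}^{(2)}}$, and commutes with both $\mathcal{L}^A(\bm{\kappa}^{(2)})$ and the $C_{4v}$-action. Using $T$ together with \eqref{eq_assump_mode_symmetry}, I can fix the residual constant phase left by Proposition~\ref{prop_local_gauge} so that $v_1(\cdot;\pi)$ is real-valued and $v_2(\cdot;\pi)$ is purely imaginary. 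Writing $\overline{v_1(\cdot;\pi-s)}=c(s)v_1(\cdot;\pi+s)$ with $|c(s)|=1$ and $c(0)=1$ and differentiating at $s=0$ yields $2\,\text{Re}\,\partial_{\kappa_1}v_1(\cdot;\pi)=-c'(0)v_1(\cdot;\pi)$; since $c'(0)\in i\mathbb{R}$ (from $|c|\equiv 1$) and $v_1(\cdot;\pi)$ is real while the left side is real-valued, both sides must vanish, forcing $\partial_{\kappa_1}v_1(\cdot;\pi)$ to be purely imaginary. The integrand in \eqref{eq_12flux_1} is then (real)$\times$(imaginary) and hence purely imaginary, so \eqref{eq_12flux_1} follows from the imaginary-part computation. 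The parallel argument with $c_2(0)=-1$ shows $\partial_{\kappa_1}v_2(\cdot;\pi)$ is real, giving \eqref{eq_12flux_2}. The most delicate check is verifying that this reality gauge is compatible with the phase normalization that produced \eqref{eq_v12_parity_2} and hence \eqref{eq_sec3_10}, which reduces to showing the residual global-phase freedom left by \eqref{eq_orthogonal_gauge_criterion} is enough to realize $v_1(\cdot;\pi)\in\mathbb{R}$.
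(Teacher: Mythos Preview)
Your proof is essentially correct but takes a genuinely different route from the paper in two places, and leaves one small loose end that you should close explicitly.

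\medskip
\textbf{Comparison with the paper.}
For \eqref{eq_12flux_3} your argument via \eqref{eq_energy_flux_1_2} matches the paper's (which uses Proposition~\ref{prop_energy_flux_2} and takes $\kappa_1\to\pi$). For \eqref{eq_12flux_4}, your $\mathcal{M}_2$-parity argument is much more direct than what the paper does: the paper establishes \eqref{eq_12flux_4} through a lengthy detour involving the Green function $G_0$, applying Green's formula between $\partial_{\kappa_1}v_1$ and $G_0$ on $(0,N)\times(-\tfrac12,\tfrac12)$, passing $N\to\infty$ via the decomposition \eqref{eq_G0_decay_1}, and invoking the jump formulas of Proposition~\ref{prop_G0_jump_formula}. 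Your observation that for $\kappa_1\neq\pi$ the eigenvalue $\mu_n(\kappa_1)$ is simple, so $v_n(\cdot;\kappa_1)$ is automatically an $\mathcal{M}_2$-eigenfunction with the sign fixed by continuity from Assumption~\ref{def_quadratic_degenracy}(4), bypasses all of this. (Note: the parity is not a ``choice'' but is forced; you should phrase it that way.)

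For \eqref{eq_12flux_1}--\eqref{eq_12flux_2}, both you and the paper rely on time-reversal, but packaged differently. The paper substitutes $v_1(\cdot;\kappa_1)=c(\kappa_1)\overline{v_1(\cdot;2\pi-\kappa_1)}$ directly into \eqref{eq_energy_flux_1_1}, obtaining
\[
\int_\Gamma \partial_{x_1}v_1(\cdot;\kappa_1)\,\overline{v_1(\cdot;\kappa_1)}-\overline{v_1(\cdot;2\pi-\kappa_1)}\,\partial_{x_1}v_1(\cdot;2\pi-\kappa_1)=i\mu_1'(\kappa_1),
\]
and then differentiates at $\kappa_1=\pi$; the chain rule on the $2\pi-\kappa_1$ argument makes the two surviving terms \emph{equal} rather than complex conjugates, giving $2\int_\Gamma\partial_{x_1}v_1\,\overline{\partial_{\kappa_1}v_1}=-i\gamma_*$ without any reality-gauge discussion. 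Your two-step approach (first extract the imaginary part, then argue the real part vanishes via a reality gauge) reaches the same conclusion but is more circuitous.

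\medskip
\textbf{The loose end.}
Your ``most delicate check'' is actually immediate and you should say so: the integrals in \eqref{eq_12flux_1}--\eqref{eq_12flux_2} are manifestly invariant under $v_n(\cdot;\kappa_1)\mapsto e^{i\theta_n}v_n(\cdot;\kappa_1)$ with constant $\theta_n$, as are \eqref{eq_orthogonal_gauge_criterion} and \eqref{eq_v12_parity_2}. Hence you may, without loss of generality, rotate $v_1(\cdot;\pi)$ to be real (possible since $\overline{v_1(\cdot;\pi)}=c(0)v_1(\cdot;\pi)$ with $|c(0)|=1$), and your argument that $\partial_{\kappa_1}v_1(\cdot;\pi)$ is then purely imaginary goes through verbatim. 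Once you add this one sentence, your proof of \eqref{eq_12flux_1}--\eqref{eq_12flux_2} is complete.
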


Or equivalently, 
\begin{corollary} \label{prop_12flux}
Assume \eqref{eq_v12_parity_2} holds. Then
\begin{equation} \label{eq_energy_flux_with_derivative_full}
\mathfrak{q}(v_1(\cdot;\pi),(\partial_{\kappa_1}v_1)(\cdot;\pi);\Gamma)=-\frac{i}{2}\gamma_*,\quad
\mathfrak{q}(v_2(\cdot;\pi),(\partial_{\kappa_1}v_2)(\cdot;\pi);\Gamma)=\frac{i}{2}\gamma_*.
\end{equation}
Moreover,
\begin{equation*}
\mathfrak{q}(v_1(\cdot;\pi),v_2(\cdot;\pi);\Gamma)=0,
\end{equation*}
\begin{equation*}
\mathfrak{q}(v_1(\cdot;\pi),(\partial_{\kappa_1}v_2)(\cdot;\pi);\Gamma)=0,\quad
\mathfrak{q}(v_2(\cdot;\pi),(\partial_{\kappa_1}v_1)(\cdot;\pi);\Gamma)=0,
\end{equation*}
\begin{equation*}
\mathfrak{q}((\partial_{\kappa_1}v_n)(\cdot;\pi),(\partial_{\kappa_1}v_m)(\cdot;\pi);\Gamma)=0,\quad n,m\in\{1,2\}.
\end{equation*}
\end{corollary}

Before showing the proof, we emphasize the significance of these identities. As indicated by Proposition \ref{prop_energy_flux_1}, the energy flux carried by the Bloch modes at the quadratic degenerate point equals zero, i.e., $\mathfrak{q}(v_n(\cdot;\pi),v_n(\cdot;\pi);\Gamma)=0$, because $\mu_n^{\prime}(\pi)=0$ for $n=1,2$. Hence, \eqref{eq_energy_flux_with_derivative_full} states that the ``dual vector'' of $v_n(\cdot;\pi)$ in terms of energy flux is 
not itself (as it should be at a Dirac point \cite{li2024interface,joly2016solutions}) but its momentum-derivative $(\partial_{\kappa_1}v_n)(\cdot;\pi)$. This illustrates the critical interplay between Bloch modes and their momentum derivatives at the quadratic degenerate point.

\begin{proof}[Proof of Proposition \ref{prop_12flux_alternative}]
We first prove \eqref{eq_12flux_1}. The proof of \eqref{eq_12flux_2} is similar. By \eqref{eq_energy_flux_1_1}, 
\begin{equation} \label{eq_sec3_9}
\int_{\Gamma}\frac{\partial v_1}{\partial x_1}(\cdot;\kappa_1)\overline{v_1(\cdot;\kappa_1)}-v_1(\cdot;\kappa_1)\overline{\frac{\partial v_1}{\partial x_1}(\cdot;\kappa_1)}=i\mu_{1}^{\prime}(\kappa_1).
\end{equation}
Using the time-reversal symmetry, there exists $c(\kappa_1)\in\mathbf{C}$ such that $|c(\kappa_1)|=1$ and
\begin{equation*}
v_1(\bm{x};\kappa_1)=c(\kappa_1)\cdot \overline{v_1(\bm{x};2\pi-\kappa_1)}. 
\end{equation*}
Then \eqref{eq_sec3_9} implies that
\begin{equation*}
\int_{\Gamma}\frac{\partial v_1}{\partial x_1}(\cdot;\kappa_1)\overline{v_1(\cdot;\kappa_1)}-\overline{v_1(\cdot;2\pi-\kappa_1)}\frac{\partial v_1}{\partial x_1}(\cdot;2\pi-\kappa_1)=i\mu_{1}^{\prime}(\kappa_1).
\end{equation*}
Taking derivative at $\kappa_1=\pi$ and applying \eqref{eq_sec3_10}, we obtain
\begin{equation*}
\int_{\Gamma}\frac{\partial v_1}{\partial x_1}(\cdot;\pi)\overline{(\partial_{\kappa_1}v_1)(\cdot;\pi)}+\overline{(\partial_{\kappa_1}v_1)(\cdot;\pi)}\frac{\partial v_1}{\partial x_1}(\cdot;\pi)=i\mu_{1}^{\prime\prime}(\pi)=-i\gamma_*
\end{equation*}
This gives \eqref{eq_12flux_1}.

We next prove \eqref{eq_12flux_3}. By Proposition \ref{prop_energy_flux_1}, we have
\begin{equation*}
\int_{\Gamma}\frac{\partial v_1}{\partial x_1}(\cdot;\kappa_1)\overline{v_2(\cdot;\kappa_1)}-v_1(\cdot;\kappa_1)\overline{\frac{\partial v_2}{\partial x_1}(\cdot;\kappa_1)}=0,\quad \kappa_1\neq \pi.
\end{equation*}
Since the functions inside the integral are smooth in $\kappa_1$, we can let $\kappa_1\to\pi$ to obtain
\begin{equation*}
\int_{\Gamma}\frac{\partial v_1}{\partial x_1}(\cdot;\pi)\overline{v_2(\cdot;\pi)}-v_1(\cdot;\pi)\overline{\frac{\partial v_2}{\partial x_1}(\cdot;\pi)}=0.
\end{equation*}
By \eqref{eq_sec3_10}, we have $\int_{\Gamma}v_1(\cdot;\pi)\overline{\frac{\partial v_2}{\partial x_1}(\cdot;\pi)}=0$. Thus $\int_{\Gamma}\frac{\partial v_1}{\partial x_1}(\cdot;\pi)\overline{v_2(\cdot;\pi)}=0$, which is \eqref{eq_12flux_3}.

Finally, we prove \eqref{eq_12flux_4}. 
By \eqref{eq_partial_kappa1_vn_interior}
$$
(\nabla\cdot A\nabla+\lambda_*)\partial_{\kappa_1} v_1(\cdot;\pi)=0.
$$ 
Applying the Green's formula to $\partial_{\kappa_1} v_1(\cdot;\pi)$ and the fundamental solution $G_0$ in \eqref{eq_G0_fund} inside the rectangle $(0,N)\times (-\frac{1}{2},\frac{1}{2})\subset \Omega$, we obtain
\footnotesize
\begin{equation} \label{eq_sec3_11}
\begin{aligned}
\overline{\partial_{\kappa_1} v_1(\bm{y};\pi)}
&=\int_{(0,N)\times (-\frac{1}{2},\frac{1}{2})}\Big(
(\nabla\cdot A\nabla+\lambda_*)G_0(\bm{x},\bm{y};\lambda_*)\cdot \overline{\partial_{\kappa_1} v_1(\bm{x};\pi)} 
-(\nabla\cdot A\nabla+\lambda_*)\overline{\partial_{\kappa_1} v_1(\bm{x};\pi)}\cdot G_0(\bm{x},\bm{y};\lambda_*) \Big) \\
&=\int_{\Gamma_N}\frac{\partial G_0}{\partial x_1}(\cdot,\bm{y};\lambda_*)\overline{\partial_{\kappa_1} v_1(\cdot;\pi)}-G_0(\cdot,\bm{y};\lambda_*)\overline{\frac{\partial}{\partial x_1}(\partial_{\kappa_1} v_1)(\bm{x};\pi)} \\
&\quad -\int_{\Gamma}\frac{\partial G_0}{\partial x_1}(\cdot,\bm{y};\lambda_*)\overline{\partial_{\kappa_1} v_1(\cdot;\pi)}-G_0(\cdot,\bm{y};\lambda_*)\overline{\frac{\partial}{\partial x_1}(\partial_{\kappa_1} v_1)(\bm{x};\pi)},
\end{aligned}
\end{equation}
\normalsize
where in the last line above we used $A(\bm{x})=I$ for $\bm{x}\in \Gamma\cup\Gamma_N$. By \eqref{eq_G0_decay_1}, 
\footnotesize
\begin{equation} \label{eq_sec3_12}
\begin{aligned}
&\lim_{N\to \infty} \int_{\Gamma_N}\frac{\partial G_0}{\partial x_1}(\cdot,\bm{y};\lambda_*)\overline{\partial_{\kappa_1} v_1(\cdot;\pi)}-G_0(\cdot,\bm{y};\lambda_*)\overline{\frac{\partial}{\partial x_1}(\partial_{\kappa_1} v_1)(\bm{x};\pi)} \\
&=\frac{i}{\gamma_*}\big(\lim_{N\to \infty}\mathfrak{q}(v_1(\cdot;\pi),\partial_{\kappa_1} v_1(\cdot;\pi);\Gamma_N) \big)\cdot \overline{\partial_{\kappa_1} v_1(\bm{y};\pi)}  +\frac{i}{\gamma_*}\big(\lim_{N\to \infty}\mathfrak{q}(\partial_{\kappa_1} v_1(\cdot;\pi),\partial_{\kappa_1} v_1(\cdot;\pi);\Gamma_N) \big)\cdot \overline{v_1(\bm{y};\pi)} \\
&\quad -\frac{i}{\gamma_*}\big(\lim_{N\to \infty}\mathfrak{q}(v_2(\cdot;\pi),\partial_{\kappa_1} v_1(\cdot;\pi);\Gamma_N) \big)\cdot \overline{\partial_{\kappa_1} v_2(\bm{y};\pi)} 
-\frac{i}{\gamma_*}\big(\lim_{N\to \infty}\mathfrak{q}(\partial_{\kappa_1} v_2(\cdot;\pi),\partial_{\kappa_1} v_1(\cdot;\pi);\Gamma_N) \big)\cdot \overline{v_2(\bm{y};\pi)}.
\end{aligned}
\end{equation}
\normalsize
Since all the energy fluxes on the right side are independent of $N$ (a consequence of the fact that
$(\nabla\cdot A\nabla+\lambda_*)\partial_{\kappa_1} v_n(\cdot;\pi)=(\nabla\cdot A\nabla+\lambda_*)v_n(\cdot;\pi)=0$  for $n=1,2$),
\footnotesize
\begin{equation} \label{eq_sec3_13}
\begin{aligned}
&\lim_{N\to \infty} \int_{\Gamma_N}\frac{\partial G_0}{\partial x_1}(\cdot,\bm{y};\lambda_*)\overline{\partial_{\kappa_1} v_1(\cdot;\pi)}-G_0(\cdot,\bm{y};\lambda_*)\overline{\frac{\partial}{\partial x_1}(\partial_{\kappa_1} v_1)(\bm{x};\pi)} \\
&=\frac{i}{\gamma_*}\mathfrak{q}(v_1(\cdot;\pi),\partial_{\kappa_1} v_1(\cdot;\pi);\Gamma) \cdot \overline{\partial_{\kappa_1} v_1(\bm{y};\pi)}  +\frac{i}{\gamma_*}\mathfrak{q}(\partial_{\kappa_1} v_1(\cdot;\pi),\partial_{\kappa_1} v_1(\cdot;\pi);\Gamma) \cdot \overline{v_1(\bm{y};\pi)} \\
&\quad -\frac{i}{\gamma_*}\mathfrak{q}(v_2(\cdot;\pi),\partial_{\kappa_1} v_1(\cdot;\pi);\Gamma) \cdot \overline{\partial_{\kappa_1} v_2(\bm{y};\pi)}
-\frac{i}{\gamma_*}\mathfrak{q}(\partial_{\kappa_1} v_2(\cdot;\pi),\partial_{\kappa_1} v_1(\cdot;\pi);\Gamma) \cdot \overline{v_2(\bm{y};\pi)}.
\end{aligned}
\end{equation}
\normalsize
On the other hand, \eqref{eq_G0_jump_1}-\eqref{eq_G0_jump_4} and  \eqref{eq_G0_argument_symmetry} imply that
\footnotesize
\begin{equation*}
\begin{aligned}
\lim_{\bm{y}\to \Gamma}\int_{\Gamma}\frac{\partial G_0}{\partial x_1}(\cdot,\bm{y};\lambda_*)\overline{\partial_{\kappa_1} v_1(\cdot;\pi)}-G_0(\cdot,\bm{y};\lambda_*)\overline{\frac{\partial}{\partial x_1}(\partial_{\kappa_1} v_1)(\cdot;\pi)}
&=-\frac{1}{2}\overline{\partial_{\kappa_1} v_1(\bm{y};\pi)}-\overline{\int_{\Gamma}G_0(\bm{y},\cdot;\lambda_*)\frac{\partial}{\partial x_1}(\partial_{\kappa_1} v_1)(\cdot;\pi)}\\
&=-\frac{1}{2}\overline{\partial_{\kappa_1} v_1(\bm{y};\pi)},
\end{aligned}
\end{equation*}
\normalsize
where we used \eqref{eq_sec3_10} in the last equality. By letting $N\to\infty$ and $\bm{y}\to \Gamma$ in \eqref{eq_sec3_11}, we obtain
\begin{equation*}
\begin{aligned}
\overline{\partial_{\kappa_1} v_1(\bm{y};\pi)}
&=\frac{i}{\gamma_*}\mathfrak{q}(v_1(\cdot;\pi),\partial_{\kappa_1} v_1(\cdot;\pi);\Gamma) \cdot \overline{\partial_{\kappa_1} v_1(\bm{y};\pi)}  
+\frac{i}{\gamma_*}\mathfrak{q}(\partial_{\kappa_1} v_1(\cdot;\pi),\partial_{\kappa_1} v_1(\cdot;\pi);\Gamma) \cdot \overline{v_1(\bm{y};\pi)} \\
&\quad -\frac{i}{\gamma_*}\mathfrak{q}(v_2(\cdot;\pi),\partial_{\kappa_1} v_1(\cdot;\pi);\Gamma) \cdot \overline{\partial_{\kappa_1} v_2(\bm{y};\pi)} 
-\frac{i}{\gamma_*}\mathfrak{q}(\partial_{\kappa_1} v_2(\cdot;\pi),\partial_{\kappa_1} v_1(\cdot;\pi);\Gamma) \cdot \overline{v_2(\bm{y};\pi)} \\
&\quad +\frac{1}{2}\overline{\partial_{\kappa_1} v_1(\bm{y};\pi)}.
\end{aligned}
\end{equation*}
Using \eqref{eq_sec3_10} and \eqref{eq_12flux_1}, 
we further obtain
\begin{equation*}
\overline{\partial_{\kappa_1} v_1(\bm{y};\pi)}=\frac{1}{2}\overline{\partial_{\kappa_1} v_1(\bm{y};\pi)}+\frac{1}{2}\overline{\partial_{\kappa_1} v_1(\bm{y};\pi)}
-\frac{i}{\gamma_*}\mathfrak{q}(\partial_{\kappa_1} v_2(\cdot;\pi),\partial_{\kappa_1} v_1(\cdot;\pi);\Gamma) \cdot \overline{v_2(\bm{y};\pi)}.
\end{equation*}
It follows that
\begin{equation*}
\mathfrak{q}(\partial_{\kappa_1} v_2(\cdot;\pi),\partial_{\kappa_1} v_1(\cdot;\pi);\Gamma) \cdot \overline{v_2(\bm{y};\pi)}=0,\quad
\bm{y}\in\Gamma .
\end{equation*}
Since $v_2(\cdot;\pi)|_{\Gamma}\neq 0$ (otherwise we have an contradiction to \eqref{eq_12flux_2}), we conclude that 
$$
\mathfrak{q}(\partial_{\kappa_1} v_2(\cdot;\pi),\partial_{\kappa_1} v_1(\cdot;\pi);\Gamma)=0.
$$
Equation \eqref{eq_12flux_4} follows by using \eqref{eq_sec3_10} again. 
\end{proof}

\section{Band-gap opening at the quadratic degenerate point}
In this section, we derive asymptotic expansion of the Bloch eigen-pairs $(\mu_{n,\delta}(\kappa_1), v_{n,\delta}(\bm{x};\kappa_1))$ of the operator $\mathcal{L}^{A+ \delta\cdot B}_{\Omega,\pi}(\kappa_1)$ for $n=1,2$ and $0< |\delta|\ll 1$. 
The main result is
\begin{theorem}
\label{thm_gap_open}
Suppose $|\delta|\ll 1$ and $|\kappa_1|\ll 1$. Then
\footnotesize
\begin{equation}
\label{eq_asymp_eigenvalue_positive}
\begin{aligned}
&\mu_{1,\delta}(\kappa_1)
=\lambda_*-\sqrt{\big(\frac{\gamma_*}{2}(\kappa_1-\pi)^2+\eta_* (\kappa_1-\pi)^4\big)^2
+\delta^2|t_*+r_*(\kappa_1-\pi)|^2}\cdot \left(1+\mathcal{O}(|\kappa_1-\pi|^4+|\delta|)\right), \\
&\mu_{2,\delta}(\kappa_1)
=\lambda_*+\sqrt{\big(\frac{\gamma_*}{2}(\kappa_1-\pi)^2+\eta_* (\kappa_1-\pi)^4\big)^2
+\delta^2|t_*+r_*(\kappa_1-\pi)|^2}\cdot \left(1+\mathcal{O}(|\kappa_1-\pi|^4+|\delta|)\right),
\end{aligned}
\end{equation}
\normalsize
where
\begin{equation}
\label{eq_perturb_constant}
t_*:=-(\mathcal{L}^B_{\Omega,\pi}(\pi)v_1(\cdot;\pi),v_2(\cdot;\pi))\in\mathbf{R},
\end{equation}
and the constant $r_*\in\mathbf{C}$ is introduced in \eqref{eq_sec4_9}. In addition, 
\footnotesize
\begin{equation}
\label{eq_asymp_eigenfunction_1}
\begin{aligned}
v_{1,\delta}(\bm{x};\kappa_1)
&=v_1(\bm{x};\pi)+(\kappa_1-\pi)\partial_{\kappa_1}v_1(\bm{x};\pi)
+\sum_{k=2}^{5}(\kappa_1-\pi)^{k}v^{(k)}_{1}(\bm{x};\pi) \\
&\quad+ f_*(\kappa_1;\delta)\big(1+ r(\kappa_1-\pi;\delta)\cdot (\kappa_1-\pi)\big) v_2(\bm{x};\pi)+
f_*(\kappa_1;\delta)(\kappa_1-\pi)\partial_{\kappa_1}v_2(\bm{x};\pi) \\
&\quad +R_1(\bm{x};\kappa_1,\delta), 
\end{aligned}
\end{equation}
\begin{equation}
\label{eq_asymp_eigenfunction_2}
\begin{aligned}
v_{2,\delta}(\bm{x};\kappa_1)
&=-f_*(\kappa_1;\delta)\big(1+ \overline{r(\kappa_1-\pi;\delta)}\cdot (\kappa_1-\pi)\big) v_1(\bm{x};\pi)-
f_*(\kappa_1;\delta)(\kappa_1-\pi)\partial_{\kappa_1}v_1(\bm{x};\pi) \\
&\quad +v_2(\bm{x};\pi)+(\kappa_1-\pi)\partial_{\kappa_1}v_2(\bm{x};\pi)
+\sum_{k=2}^{5}(\kappa_1-\pi)^{k}v^{(k)}_{2}(\bm{x};\pi) \\
&\quad +R_2(\bm{x};\kappa_1,\delta).
\end{aligned}
\end{equation}
\normalsize
Here $v_k^{(i)}$ are defined in  \eqref{eq_ansatz_unperturb_eigenfunction_1}-\eqref{eq_ansatz_unperturb_eigenfunction_2}, $r(p;\delta)$ is introduced in \eqref{eq_sec4_28}, and
\begin{equation*}
f_*(\kappa_1;\delta)=\frac{t_*\delta}{\frac{\gamma_*}{2}(\kappa_1-\pi)^2+\sqrt{\frac{\gamma_*^2}{4}(\kappa_1-\pi)^4+t_*^2\delta^2}}.
\end{equation*}
Moreover, 
\begin{equation*}
\|R_i(\bm{x};\kappa_1,\delta)\|_{H^1(Y)}=\mathcal{O}(|\delta|+|\kappa_1-\pi|^6), \,\,i=1,2, 
\end{equation*}
and 
\footnotesize
\begin{equation} \label{eq_perturbed_normalization_factor_1}
\begin{aligned}
N^2_{1,\delta}(\kappa_1)
&=\|v_{1,\delta}(\bm{x};\kappa_1)\|^2 \\
&=1+f_*^2(\kappa_1;\delta)
+N_1^{(2)}(\kappa_1-\pi)^2+N_1^{(3)}(\kappa_1-\pi)^3
+(\kappa_1-\pi)f_*^2(\kappa_1;\delta)\cdot 2\text{Re}\big(r(\kappa_1-\pi)\big)
+\mathcal{O}(\delta+(\kappa_1-\pi)^4),
\end{aligned}
\end{equation}
\begin{equation} \label{eq_perturbed_normalization_factor_2}
\begin{aligned}
N^2_{2,\delta}(\kappa_1)
&=\|v_{2,\delta}(\bm{x};\kappa_1)\|^2 \\
&=1+f_*^2(\kappa_1;\delta)
+N_2^{(2)}(\kappa_1-\pi)^2+N_2^{(3)}(\kappa_1-\pi)^3
+(\kappa_1-\pi)f_*^2(\kappa_1;\delta)\cdot 2\text{Re}\big(r(\kappa_1-\pi)\big)
+\mathcal{O}(\delta+(\kappa_1-\pi)^4),
\end{aligned}
\end{equation}
\normalsize
where the constants $N_i^{(j)}$ are same as in Proposition \ref{prop_pert_theory_for_unpertrubed}.
\end{theorem}

\begin{remark}
By \eqref{eq_asymp_eigenvalue_positive}, a band gap of the size $|t_* \delta|$ is opened near $\lambda=\lambda_*$, thereby proving Theorem \ref{thm_gap_open_corollary}. Moreover, we emphasize that the constant $t_*$ characterizes the leading-order behavior of the perturbation.  
\end{remark}

\begin{remark}
\label{rmk_parity_changing}
Suppose $\delta>0$ and $t_*\neq 0$ and $|\kappa-\pi|\ll 1$. Compared to the expansion of Bloch modes in Proposition \ref{prop_pert_theory_for_unpertrubed}, the additional leading order term with the $f_*(\kappa_1;\delta)$-coefficient manifests the perturbation effect that mixes the two Bloch modes at the quadratic degenerate point. By \eqref{eq_asymp_eigenfunction_1} and \eqref{eq_asymp_eigenfunction_2}, 
\begin{equation*}
\begin{aligned}
&v_{1,\delta}(\bm{x};\pi)=v_{1}(\bm{x};\pi)+\text{sgn}(t_*)v_{1}(\bm{x};\pi)+\mathcal{O}(\delta),\quad
v_{2,\delta}(\bm{x};\pi)=-\text{sgn}(t_*)v_{1}(\bm{x};\pi)+v_{1}(\bm{x};\pi)+\mathcal{O}(\delta), \\
&v_{1,-\delta}(\bm{x};\pi)=v_{1}(\bm{x};\pi)-\text{sgn}(t_*)v_{1}(\bm{x};\pi)+\mathcal{O}(\delta),\quad
v_{2,-\delta}(\bm{x};\pi)=\text{sgn}(t_*)v_{1}(\bm{x};\pi)+v_{1}(\bm{x};\pi)+\mathcal{O}(\delta).
\end{aligned}
\end{equation*}
Therefore 
\begin{equation*}
v_{1,\delta}(\bm{x};\pi)\sim v_{2,-\delta}(\bm{x};\pi),\quad
v_{2,\delta}(\bm{x};\pi)\sim v_{1,-\delta}(\bm{x};\pi)
\end{equation*}
if the $\mathcal{O}(\delta)$ terms are ignored. 
This implies that $\mathcal{L}^{A+ \delta\cdot B}_{\Omega,\pi}(\pi)$ and $\mathcal{L}^{A- \delta\cdot B}_{\Omega,\pi}(\pi)$ approximately exchange their eigenspaces near the energy $\lambda_*$. One can view this band-inversion phenomenon as a topological phase transition. This observation plays an important role in our proof of Theorem \ref{thm_interface_mode}, especially, in the calculation of $\mathbb{E}(h)$ and $\mathbb{F}(h)$ operator in Proposition \ref{prop_matrix_convergence}.
\end{remark}

\begin{remark} \label{rmk_importance_expansion}
In \eqref{eq_asymp_eigenfunction_1} and \eqref{eq_asymp_eigenfunction_2}, the phase transition phenomenon is observed not only between the Bloch modes $v_{1}(\bm{x};\pi)$ and $v_{2}(\bm{x};\pi)$, but also between their momentum derivatives $(\partial_{\kappa_1}v_{1})(\bm{x};\pi)$ and $(\partial_{\kappa_1}v_{2})(\bm{x};\pi)$. 
However, the effect of the derivatives on the bifurcation of interface modes is relatively small (of order $\delta^{\frac{1}{2}}$) compared with the effect of the Bloch modes (of order $\delta^{-\frac{1}{2}}$). This is detailed in Theorem \ref{thm_asymptotics_perturbed_Green_function} of Section 8, and is evident from the differing scales of the second and third diagonals of the operator $\mathbb{M}^{\delta}$ in \eqref{eq_matrix_operator_def} of Section 9. 

To capture this subtle yet important effect, we must eliminate interference from larger perturbative terms (greater than $\delta^{\frac{1}{2}}$). This necessitates detailed expansions, as shown in Theorem \ref{thm_gap_open}, and ensuring that all auxiliary terms larger than $\delta^{\frac{1}{2}}$ cancel out in the analysis of interface modes, as proven in Proposition \ref{prop_matrix_convergence}.

It's worth noting that such high-order perturbation analysis is unnecessary when considering interface modes bifurcating from Dirac points \cite{qiu2023mathematical,li2024interface}. This highlights a key distinction between Dirac points and quadratic degenerate points.
\end{remark}

\subsection{Preliminaries for the proof of Theorem \ref{thm_gap_open}}
Before proving Theorem \ref{thm_gap_open}, 
we establish the perturbative framework for solving the concerned eigenvalue problem  
\begin{equation} \label{eq_sec4_1}
(\tilde{\mathcal{L}}^{A+\delta\cdot B}_{\Omega,\pi}(\kappa_1)-\mu)\tilde{v}=0.
\end{equation}
As in Section 4.2, we introduce the following operator, which shares the same eigenpairs as $\mathcal{L}^{A+\delta\cdot B}_{\Omega,\pi}(\kappa_1)$ and is particularly suited for perturbative analysis:
\footnotesize
\begin{equation*}
\begin{aligned}
\tilde{\mathcal{L}}^{A+\delta\cdot B}_{\Omega,\pi}(\kappa_1):
&\quad H\to H^*,\\
&\quad u\mapsto 
e^{-i\kappa_1 x_1}\circ \mathcal{L}^{A+\delta\cdot B}_{\Omega,\pi}(\kappa_1)\circ e^{i\kappa_1 x_1}u =-(\nabla+i\kappa_1 \bm{e}_1)\cdot(A+\delta\cdot B)(\nabla+i\kappa_1 \bm{e}_1)u.
\end{aligned}
\end{equation*}
\normalsize
Here $H$ and $H^*$ are same as in \eqref{eq_transformed_operator_unperturbed}. For each $u\in H$, the following holds for all $v\in H$
\footnotesize
\begin{equation*}
( \tilde{\mathcal{L}}^{A+\delta\cdot B}_{\Omega,\pi}(\kappa_1)u,v)=\mathfrak{a}^{A+\delta B}_{\kappa_1}(u,v)
:=\int_{Y}\big((A(\bm{x})+\delta B(\bm{x}))(\nabla+i\kappa_1 \bm{e}_1) u(\bm{x})\big)\cdot \overline{(\nabla+i\kappa_1 \bm{e}_1) v(\bm{x})}d\bm{x}.
\end{equation*}
\normalsize
Since $A$ and $B$ are Hermitian, it follows that
\begin{equation*}
( \tilde{\mathcal{L}}^{A+\delta\cdot B}_{\Omega,\pi}(\kappa_1)u,v)=( u,\tilde{\mathcal{L}}^{A+\delta\cdot B}_{\Omega,\pi}(\kappa_1)v).
\end{equation*}
Similar to \eqref{eq_expansion_operator_unperturbed}, we expand $\tilde{\mathcal{L}}^{A+\delta\cdot B}_{\Omega,\pi}(\kappa_1)$ near $\kappa_1=\pi$ and $\delta=0$ to obtain
\footnotesize
\begin{equation} \label{eq_expansion_operator_perturbed}
\begin{aligned}
\tilde{\mathcal{L}}^{A+\delta\cdot B}_{\Omega,\pi}(\kappa_1)
=\tilde{\mathcal{L}}^{A,0}
+(\kappa_1-\pi)\tilde{\mathcal{L}}^{A,1}
+\delta \tilde{\mathcal{L}}^{B,0}
+(\kappa_1-\pi)^2\tilde{\mathcal{L}}^{A,2}  +\delta(\kappa_1-\pi)\tilde{\mathcal{L}}^{B,1}
+\delta(\kappa_1-\pi)^2\tilde{\mathcal{L}}^{B,2}.
\end{aligned}
\end{equation}
\normalsize
The following properties hold by a straightforward calculation:
\begin{lemma}
 \begin{equation}
( \tilde{\mathcal{L}}^{A,k}u,v)=( u,\tilde{\mathcal{L}}^{A,k}v), \quad ( \tilde{\mathcal{L}}^{B,k}u,v)=( u,\tilde{\mathcal{L}}^{B,k}v),\quad \forall \,\, k=0,1,2. 
\end{equation}
Moreover, the operators $\tilde{\mathcal{L}}^{B,k}$ anti-commute with $\mathcal{M}_2$:
\begin{equation} \label{eq_perturbation_operators_anti_commute}
\{\tilde{\mathcal{L}}^{B,k},\mathcal{M}_2\}=0
\quad k=0,1,2,
\end{equation} 
contrasting to the operators $\tilde{\mathcal{L}}^{A,k}$ which commute with
$\mathcal{M}_2$. 
\end{lemma}

For ease of notation, we denote
$$
v_n(\bm{x}):=u_{n}(\bm{x};\bm{\kappa}^{(2)}),\quad
\tilde{v}_n(\bm{x}):=e^{-i\pi x_1}v_n(\bm{x}), \quad n=1, 2.
$$
The following identity characterizes the parity-breaking effect of the perturbation operator $\tilde{\mathcal{L}}^{B,0}$ by introducing off-diagonal terms into equation \eqref{eq_perturb_theory_for_unperturbed_proof_11}. This mechanism is pivotal for the phase transition that occurs at the quadratic degenerate point. 
\begin{lemma}
\label{lem_non-degeneracy}
\begin{equation*}
\Big((\tilde{\mathcal{L}}^{B,0}\tilde{v}_n,\tilde{v}_m) \Big)_{1\leq n,m\leq 2}
=\begin{pmatrix}
0 & -t_* \\ -t_* & 0
\end{pmatrix},
\end{equation*}
where $t_*$ is defined in \eqref{eq_perturb_constant}. Moreover, $t_*\in\mathbf{R}$. 
\end{lemma}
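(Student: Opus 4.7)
The plan is to reduce the $2\times 2$ matrix identity to relations for the operator $\mathcal{L}^B_{\Omega,\pi}(\pi)$ acting on the Bloch pair $v_1, v_2$ at the quadratic degenerate point $\bm{\kappa}^{(2)}$, and then to harvest the $C_{4v}$ representation information in \eqref{eq_assump_mode_symmetry} together with the commutation relations of $\mathcal{L}^B$. First, since $\tilde{v}_n = e^{-i\pi x_1} v_n$ and $\tilde{\mathcal{L}}^B_{\Omega,\pi}(\pi) = e^{-i\pi x_1}\circ \mathcal{L}^B_{\Omega,\pi}(\pi)\circ e^{i\pi x_1}$ by \eqref{eq_transformed_operator}, the unimodular factors cancel in the sesquilinear pairing and
\begin{equation*}
(\tilde{\mathcal{L}}^B_{\Omega,\pi}(\pi)\tilde{v}_n,\tilde{v}_m) = (\mathcal{L}^B_{\Omega,\pi}(\pi)v_n, v_m).
\end{equation*}
The lemma therefore reduces to three claims: (i) $(\mathcal{L}^B v_n, v_n) = 0$ for $n=1,2$; (ii) $(\mathcal{L}^B v_1, v_2) = -t_*$, which is the definition \eqref{eq_perturb_constant}; and (iii) $t_*\in\mathbf{R}$ and $(\mathcal{L}^B v_2, v_1) = -t_*$.

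For the diagonal entries I would use the anti-commutation $\mathcal{L}^B \mathcal{M}_2 = -\mathcal{M}_2 \mathcal{L}^B$ stated in the paper, combined with the definite $\mathcal{M}_2$-parity $\mathcal{M}_2 v_n = \epsilon_n v_n$ with $\epsilon_n\in\{\pm 1\}$ read off from \eqref{eq_assump_mode_symmetry}. Writing $v_n = \epsilon_n\mathcal{M}_2 v_n$ and using that $\mathcal{M}_2$ is unitary and self-adjoint, one computes
\begin{equation*}
(\mathcal{L}^B v_n,v_n) = \epsilon_n(\mathcal{L}^B\mathcal{M}_2 v_n,v_n) = -\epsilon_n(\mathcal{M}_2\mathcal{L}^B v_n,v_n) = -\epsilon_n(\mathcal{L}^B v_n,\mathcal{M}_2 v_n) = -\epsilon_n^2(\mathcal{L}^B v_n,v_n),
\end{equation*}
which forces $(\mathcal{L}^B v_n,v_n) = 0$.

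For the off-diagonal entries, the $(1,2)$ entry is $-t_*$ by definition. Since $\sigma_2$ is Hermitian and $b$ is real, $B = b\sigma_2$ is Hermitian, so $\mathcal{L}^B$ is self-adjoint in the weak sense induced by \eqref{eq_A_form}, giving $(\mathcal{L}^B v_2,v_1) = \overline{(\mathcal{L}^B v_1,v_2)} = -\overline{t_*}$. To close the loop I would invoke rotational invariance $\mathcal{L}^B\mathcal{R} = \mathcal{R}\mathcal{L}^B$ together with $\mathcal{R}v_1 = iv_2$ and $\mathcal{R}v_2 = iv_1$:
\begin{equation*}
(\mathcal{L}^B v_1,v_2) = (\mathcal{L}^B\mathcal{R}v_1,\mathcal{R}v_2) = (i\mathcal{L}^B v_2, iv_1) = (\mathcal{L}^B v_2,v_1),
\end{equation*}
so $-t_* = \overline{-t_*}$, i.e. $t_*\in\mathbf{R}$, and both off-diagonal entries coincide with $-t_*$. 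The proof is essentially a symmetry bookkeeping exercise; the only care needed is to track the sign arising from the $\mathcal{M}_2$ anti-commutation and the $i\bar{i}=1$ phase cancellation from the rotation through the conjugate-linear second slot of the pairing.
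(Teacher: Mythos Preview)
Your proposal is correct and follows essentially the same approach as the paper's proof. The paper reduces to the same three identities and uses the same symmetry ingredients—reflection anti-commutation $BM_2=-M_2B$ for the vanishing diagonal entries, rotation $\mathcal{R}v_1=iv_2,\ \mathcal{R}v_2=iv_1$ together with $[B,R^T]=0$ for the equality of the off-diagonal entries, and Hermiticity of $B$ for reality of $t_*$—the only cosmetic difference being that the paper unpacks everything at the level of the sesquilinear form $\int_Y (B\nabla v_n)\cdot\overline{\nabla v_m}$ rather than invoking the operator identities $\mathcal{L}^B\mathcal{R}=\mathcal{R}\mathcal{L}^B$ and $\mathcal{L}^B\mathcal{M}_2=-\mathcal{M}_2\mathcal{L}^B$ directly.
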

\begin{proof}
First note that the diagonals $(\tilde{\mathcal{L}}^{B,0}\tilde{v}_n,\tilde{v}_n)$ vanish by \eqref{eq_perturbation_operators_anti_commute} and Lemma \ref{lem_symmetry_products}. We next consider the off-diagonal terms. By definition, 
\[
\big(\tilde{\mathcal{L}}^{B,0}\tilde{v}_{i},\tilde{v}_{j}\big) = (\mathcal{L}^B_{\Omega,\pi}(\pi)v_i(\cdot;\pi),v_j(\cdot;\pi)), \quad 1\leq i, j \leq 2, \,\, i\neq j.
\]
To complete the proof, it suffices to show that 
\begin{equation} \label{eq_app_B_1}
\int_{Y}(B\nabla v_1(\bm{x}))\cdot \overline{\nabla v_2(\bm{x})}d\bm{x}
=\int_{Y}(B\nabla v_2(\bm{x}))\cdot \overline{\nabla v_1(\bm{x})}d\bm{x},
\end{equation}
\begin{equation} \label{eq_app_B_2}
\int_{Y}(B\nabla v_1(\bm{x}))\cdot \overline{\nabla v_2(\bm{x})}d\bm{x}\in\mathbf{R}.
\end{equation}
Note that \eqref{eq_app_B_2} follows from \eqref{eq_app_B_1} and the fact that $B$ is Hermitian. Thus we only need to prove \eqref{eq_app_B_1}. It is proved by exploiting the rotation symmetry.
Recall that $v_n(\bm{x}):=u_{n}(\bm{x};\bm{\kappa}^{(2)})$.  \eqref{eq_assump_mode_symmetry} in Assumption \ref{def_quadratic_degenracy} implies that
\begin{equation*} 
\mathcal{R}v_1=iv_2,\quad
\mathcal{R}v_2=iv_1.
\end{equation*}
Thus,
\begin{equation} \label{eq_app_B_17}
\int_{Y}(B\nabla v_1(\bm{x}))\cdot \overline{\nabla v_2(\bm{x})}d\bm{x}
=\int_{Y}(B\nabla (v_2(R\bm{x})))\cdot \overline{\nabla (v_1(R\bm{x}))}d\bm{x}.
\end{equation}
Since for any functions $f(\bm{x})$, it holds that
\begin{equation*} 
\nabla(f(R\bm{x}))=R^T(\nabla f)(R\bm{x}).
\end{equation*}
Therefore, \eqref{eq_app_B_17} yields
\begin{equation*}
\begin{aligned}
\int_{Y}(B\nabla v_1(\bm{x}))\cdot \overline{\nabla v_2(\bm{x})}d\bm{x}
&=\int_{Y}(BR^T(\nabla v_2)(R\bm{x}))\cdot \overline{R^T (\nabla v_1)(R\bm{x})}d\bm{x}.
\end{aligned}
\end{equation*}
By Assumption \ref{assum_perturb_coefficient_matrix}, $B$ commutes with $R^T$. Thus
\begin{equation*}
\begin{aligned}
\int_{Y}(B\nabla v_1(\bm{x}))\cdot \overline{\nabla v_2(\bm{x})}d\bm{x}
&=\int_{Y}B(\bm{x})(\nabla v_2)(R\bm{x})\cdot \overline{(\nabla v_1)(R\bm{x})}d\bm{x}.
\end{aligned}
\end{equation*}
Then \eqref{eq_app_B_1} follows by a change of variable $R\bm{x}\to\bm{x}$ and the identity $B(\bm{x})=B(R\bm{x})$.
\end{proof}

\subsection{Proof of Theorem \ref{thm_gap_open}}

{\color{blue}Step 0}. The solvability of \eqref{eq_sec4_1} for $|\kappa_1-\pi|\ll 1$, $|\mu-\lambda_*|\ll 1$ and $|\delta|\ll 1$ follows from the fact that \eqref{eq_sec4_1} is solvable when $\kappa_1=\pi,\mu=\lambda_*$ and $\delta=0$. Indeed, the expansion \eqref{eq_expansion_operator_perturbed} shows
\begin{equation*}
\|\tilde{\mathcal{L}}^{A+\delta\cdot B}_{\Omega,\pi}(\kappa_1)-\tilde{\mathcal{L}}^{A}_{\Omega,\pi}(\pi)\|_{\mathcal{B}(H,H^*)}\ll 1,\quad\text{for }\,
|\kappa_1-\pi|,|\delta|\ll 1.
\end{equation*}
Since $\lambda_*$ is an isolated eigenvalue of $\tilde{\mathcal{L}}^{A}_{\Omega,\pi}(\pi)$ with multiplicity equal to two, 
the stability theorem of eigenvalues (see Chapter VI of \cite{kato2013perturbation}) guarantees that $\tilde{\mathcal{L}}^{A+\delta\cdot B}_{\Omega,\pi}(\kappa_1)$ has two isolated eigenvalues near $\lambda_*$ for $|\kappa_1-\pi|\ll 1$ and $|\delta|\ll 1$.

{\color{blue}Step 1}. We construct solutions of \eqref{eq_sec4_1} in the following form for $|\kappa_1-\pi|\ll 1$ and $|\delta|\ll1$
\begin{equation} \label{eq_sec4_2}
\mu=\lambda_*+\mu^{(1)}, \quad
\kappa_1=\pi+p,\quad
\tilde{v}=\tilde{v}^{(0)}+\tilde{v}^{(1)},
\end{equation}
with
\begin{equation*}
|\mu^{(1)}|,|p|\ll 1 ,\quad
\tilde{v}^{(0)}=a\cdot\tilde{v}_{1}(\bm{x})+b\cdot\tilde{v}_{2}(\bm{x})\in H_1,\quad
\tilde{v}^{(1)}\in H_2.
\end{equation*}
Here $H_1:=\ker (\tilde{\mathcal{L}}^{A}_{\Omega,\pi}(\pi)-\lambda_*)$ and $H_2$ is the orthogonal complement of $H_1$ in $H$. Plugging \eqref{eq_sec4_2} and \eqref{eq_expansion_operator_perturbed} into \eqref{eq_sec4_1}, we get
\footnotesize
\begin{equation} \label{eq_sec4_4}
\begin{aligned}
(\tilde{\mathcal{L}}^{A}_{\Omega,\pi}(\pi)-\lambda_*)\tilde{v}^{(1)}
&=\Big(\mu^{(1)}-p\tilde{\mathcal{L}}^{A,1}-p^2\tilde{\mathcal{L}}^{A,2}-\delta\tilde{\mathcal{L}}^{B,0}-p\delta \tilde{\mathcal{L}}^{B,1}-p^2\delta\tilde{\mathcal{L}}^{B,2}\Big)\tilde{v}^{(0)} \\
&\quad
+\Big(\mu^{(1)}-p\tilde{\mathcal{L}}^{A,1}-p^2\tilde{\mathcal{L}}^{A,2}-\delta\tilde{\mathcal{L}}^{B,0}-p\delta \tilde{\mathcal{L}}^{B,1}-p^2\delta\tilde{\mathcal{L}}^{B,2}\Big)\tilde{v}^{(1)}.
\end{aligned}
\end{equation}
\normalsize
{\color{blue}Step 2}. We solve \eqref{eq_sec4_4} by following a Lyapunov-Schmidt reduction argument as in Section 4.2. Using the projector $Q_{\perp}$ defined in \eqref{eq_perturb_theory_for_unperturbed_proof_4}, we obtain
\footnotesize
\begin{equation*}
\begin{aligned}
(\tilde{\mathcal{L}}^{A}_{\Omega,\pi}(\pi)-\lambda_*)\tilde{v}^{(1)}
&=Q_{\perp}\Big(\mu^{(1)}-p\tilde{\mathcal{L}}^{A,1}-p^2\tilde{\mathcal{L}}^{A,2}-\delta\tilde{\mathcal{L}}^{B,0}-p\delta \tilde{\mathcal{L}}^{B,1}-p^2\delta\tilde{\mathcal{L}}^{B,2}\Big)\tilde{v}^{(0)} \\
&\quad
+Q_{\perp}\Big(\mu^{(1)}-p\tilde{\mathcal{L}}^{A,1}-p^2\tilde{\mathcal{L}}^{A,2}-\delta\tilde{\mathcal{L}}^{B,0}-p\delta \tilde{\mathcal{L}}^{B,1}-p^2\delta\tilde{\mathcal{L}}^{B,2}\Big)\tilde{v}^{(1)}.
\end{aligned}
\end{equation*}
\normalsize
As in \eqref{eq_perturb_theory_for_unperturbed_proof_5}, the above equation can be rewritten as
\begin{equation*}
\begin{aligned}
(I-T)\tilde{v}^{(1)}
=T\tilde{v}^{(0)},
\end{aligned}
\end{equation*}
where the operator $T$ differs from \eqref{eq_T_expansion} by the terms involving the $\delta$-perturbation
\footnotesize
\begin{equation} \label{eq_T_expansion_perturbed}
\begin{aligned}
T&=T(\delta,p,\mu^{(1)})=(\tilde{\mathcal{L}}^{A,0}-\lambda_*)^{-1}Q_{\perp}(\mu^{(1)}-p\tilde{\mathcal{L}}^{A,1}-p^2\tilde{\mathcal{L}}^{A,2}-\delta \tilde{\mathcal{L}}^{B,0}-p\delta \tilde{\mathcal{L}}^{B,1}-p^2\delta \tilde{\mathcal{L}}^{B,2}) \\
&=:\mu^{(1)}T_{\mu}+pT_{p}+p^2T_{p^2}
+\delta T_{\delta}+p\delta T_{p\delta}+p^2\delta T_{p^2\delta}
\end{aligned}
\end{equation}
\normalsize
Note that
\begin{equation} \label{eq_T_delta_operator_symmetry_summary}
[P,\mathcal{M}_2]=0\quad \text{for }P\in\{T_{\mu},T_{p},T_{p^2}\}, \quad \text{and}
\{P,\mathcal{M}_2\}=0\quad \text{for }P\in\{T_{\delta},T_{p\delta},T_{p^2\delta}\}.
\end{equation}
For $\delta,p,\mu^{(1)}$ sufficiently small, $(I-T)^{-1}\in \mathcal{B}(H_2)$ can be expanded as a Neumann series, leading to
\begin{equation} \label{eq_sec4_5}
\begin{aligned}
\tilde{v}^{(1)}
=(I-T)^{-1}T\tilde{v}^{(0)}
=\sum_{k\geq 1}T^{k}(\delta,p,\mu^{(1)})\tilde{v}^{(0)}.
\end{aligned}
\end{equation}
By substituting equation \eqref{eq_sec4_5} into \eqref{eq_sec4_4} and taking the dual pairs with $\tilde{v}_n(\bm{x})$ for $n=1,2$, we obtain a two-by-two linear system of equations for $\tilde{v}^{(0)}=a\cdot\tilde{v}_{1}+b\cdot\tilde{v}_{2}$. The corresponding matrix elements are multivariate polynomials in $\mu^{(1)}$, $p$ and $\delta$,  derived by expanding each term in the Neumann series \eqref{eq_sec4_5} using equation \eqref{eq_T_expansion_perturbed}. We decompose this matrix into two parts: one contains the lower-order terms involving $\mu^{(1)},p,\delta,p^2,p\delta,p^3,\mu^{(1)}p, \mu^{(1)}p^2$ and $p^4$, and the other the remaining higher-order terms. Specifically, 
\begin{equation} \label{eq_perturbation_matrix_prep}
\Big(\mathcal{M}^{(0)}(\mu^{(1)},p,\delta)
+\mathcal{M}^{(1)}(\mu^{(1)},p,\delta)\Big)
\begin{pmatrix}
a \\ b
\end{pmatrix}=0,
\end{equation}
where
\footnotesize
\begin{equation*}
\begin{aligned}
\mathcal{M}^{(0)}_{ij}
&= \mu^{(1)}\big(\tilde{v}_{i},\tilde{v}_{j}\big)
-p\big(\tilde{\mathcal{L}}^{A,1}\tilde{v}_{i},\tilde{v}_{j}\big)
-\delta\big(\tilde{\mathcal{L}}^{B,0}\tilde{v}_{i},\tilde{v}_{j}\big)
-p^2\Big[\big(\tilde{\mathcal{L}}^{A,2}\tilde{v}_{i},\tilde{v}_{j}\big)+\big(\tilde{\mathcal{L}}^{A,1}T_p\tilde{v}_{i},\tilde{v}_{j}\big) \Big] \\
&\quad -p\delta \Big[\big(\tilde{\mathcal{L}}^{B,1}\tilde{v}_{i},\tilde{v}_{j}\big)+\big(\tilde{\mathcal{L}}^{A,1}T_{\delta}\tilde{v}_{i},\tilde{v}_{j}\big) +\big(\tilde{\mathcal{L}}^{B,0}T_{p}\tilde{v}_{i},\tilde{v}_{j}\big) \Big]  -p^3 \Big[\big(\tilde{\mathcal{L}}^{A,1}(T_{p^2}+T_{p}^2)\tilde{v}_{i},\tilde{v}_{j}\big) +\big(\tilde{\mathcal{L}}^{A,2}T_{p}\tilde{v}_{i},\tilde{v}_{j}\big) \Big] \\
&\quad -\mu^{(1)}p^2 \big(\tilde{\mathcal{L}}^{A,1}T_{\mu}T_p\tilde{v}_{i},\tilde{v}_{j}\big) -p^4 \Big[\big(\tilde{\mathcal{L}}^{A,1}(T_{p^2}T_{p}+T_{p}T_{p^2}+T_{p}^3)\tilde{v}_{i},\tilde{v}_{j}\big)+\big(\tilde{\mathcal{L}}^{A,2}(T_{p^2}+T_{p}^2)\tilde{v}_{i},\tilde{v}_{j}\big) \Big].
\end{aligned}
\end{equation*}
\normalsize
Here the terms 
$\mu^{(1)}p^2 ((T_{p^2}+T_{p}^2)\tilde{v}_{i},\tilde{v}_{j})$, $\mu^{(1)}p^2 ((\tilde{\mathcal{L}}^{A,1}T_{p}+\tilde{\mathcal{L}}^{A,2})T_{\mu}\tilde{v}_{i},\tilde{v}_{j})$ and all $\mu^{(1)}p$-terms (such as $\mu^{(1)}p ((T_{p}+\tilde{\mathcal{L}}^{A,1}T_{\mu})\tilde{v}_{i},\tilde{v}_{j})$) do not occur since they equal to zero, a consequence of the fact $\tilde{v}_i\perp \text{Ran }Q_{\perp}$ and $\tilde{v}_i\perp \text{Ran }((\tilde{\mathcal{L}}^{A,0}-\lambda_*)^{-1}Q_{\perp})$.

\medskip

{\color{blue}Step 3.} 
In this step, we simplify the matrix $\mathcal{M}^{(0)}$ and estimate the remainder matrix $\mathcal{M}^{(1)}$.

{\color{blue}Step 3.1: Estimation of $\mathcal{M}^{(1)}$.} Instead of examining each multivariate monomial within the matrix elements of $\mathcal{M}^{(1)}$, we first utilize the symmetry of operators to simplify the analysis. 

First we consider the off-diagonal elements $\mathcal{M}^{(1)}_{ij}$ for $i\neq j$. We claim that each monomial in $\mathcal{M}^{(1)}_{ij}$ is at least first order in $\delta$. 
Indeed, for those monomial without $\delta$-factor, their coefficients are of the form $(P\tilde{v}_{i},\tilde{v}_{j})$, where $P$ is a composition of operators in the set $\{\mathcal{L}^{A,k}, T_{\mu}, T_{p}, T_{p^2}\}$ and is thus $\mathcal{M}_2$-invariant. Lemma \ref{lem_symmetry_products} implies that $(P\tilde{v}_{i},\tilde{v}_{j})=0$ and hence the claim follows. In addition, note that the $\delta$-term is incorporated in $\mathcal{M}^{(0)}$ and hence excluded from $\mathcal{M}^{(1)}$. 
Consequently, we have the estimate
\begin{equation*} 
\mathcal{M}^{(1)}_{ij}=\mathcal{O}\Big(\delta
(\delta+p+\mu^{(1)})\Big)\quad \text{if $i\neq j$}. 
\end{equation*}

We further observe that: (1) No $p\delta-$terms appears in $\mathcal{M}^{(1)}_{ij}$; (2) 
No $\delta\mu^{(1)}-$terms appears in $\mathcal{M}^{(1)}_{ij}$.
Observation (1) holds because all such terms are included in $\mathcal{M}^{(0)}$.  
Observation (2) holds because all such terms have coefficient $(\mathcal{L}^{B,0}T_{\mu}\tilde{v}_{i},\tilde{v}_{j})=(\mathcal{L}^{B,0}(\tilde{\mathcal{L}}^{A,0}-\lambda_*)^{-1}Q_{\perp}\tilde{v}_{i},\tilde{v}_{j})$ or $(T_{\delta}\tilde{v}_{i},\tilde{v}_{j})=-((\tilde{\mathcal{L}}^{A,0}-\lambda_*)^{-1}Q_{\perp}\mathcal{L}^{B,0}\tilde{v}_{i},\tilde{v}_{j})$.  Both are equal to zero since $\tilde{v}_i\perp \text{Ran }Q_{\perp}$ and $\tilde{v}_i\perp \text{Ran }((\tilde{\mathcal{L}}^{A,0}-\lambda_*)^{-1}Q_{\perp})$. With these two observations, we further obtain   
\begin{equation} \label{eq_perturb_matrix_remainder_estimate_1}
\begin{aligned}
\mathcal{M}^{(1)}_{ij}&=\mathcal{O}\Big(\delta
\big(\delta+(p^2+p\mu^{(1)}+p\delta)+((\mu^{(1)})^2+p\mu^{(1)}+ \mu^{(1)}\delta)\big)\Big) \\
&=\mathcal{O}\Big(\delta
\big(\delta+p^2+(\mu^{(1)})^2\big)\Big)
\quad\quad \text{for $i\neq j$}.
\end{aligned}
\end{equation}
The diagonal elements of $\mathcal{M}^{(1)}$ are estimated similarly. We observe that
\begin{itemize}
    \item $\mathcal{M}^{(1)}_{ii}$ does not contain any first-order terms in $\delta$ (such as $p\mu^{(1)}\delta$). In fact, such terms have coefficients in the form $(P\tilde{v}_{i},\tilde{v}_{i})$ where $P$ is the composition of one operator from $\{\mathcal{L}^{B,k},T_{\delta},T_{p\delta},T_{p^2\delta}\}$, which anti-commutes with $\mathcal{M}_2$ by \eqref{eq_perturbation_operators_anti_commute} and \eqref{eq_T_delta_operator_symmetry_summary}, and another from $\{\mathcal{L}^{A,k},T_{\mu},T_{p},T_{p^2}\}$,  which commute with $\mathcal{M}_2$. Consequently, $P$ anti-commutes with $\mathcal{M}_2$, implying that $(P\tilde{v}_{i},\tilde{v}_{i})=0$ by Lemma \ref{lem_symmetry_products}. 
    \item The monomials in $\mathcal{M}^{(1)}_{ii}$ that depend only on $\mu^{(1)}$ and $p$ 
    are of at least third order, specifically of the form  $(\mu^{(1)})^{i}p^{j}$ with $i+j=3$ for reasons analogous to those in Observation (2). In addition, the $\mu^{(1)}p^2$- and $p^3-$ terms are incorporated within $\mathcal{M}^{(0)}$ and is therefore excluded from $\mathcal{M}^{(1)}_{ii}$. 
    \item The monomials in $\mathcal{M}^{(1)}_{ii}$ that depend solely on $p$ are at least of the order $\mathcal{O}(p^5)$ since all lower orders are included in $\mathcal{M}^{(0)}$.
    \end{itemize}
These observations lead to the following estimate
\begin{equation} \label{eq_perturb_matrix_remainder_estimate_2}
\begin{aligned}
\mathcal{M}^{(1)}_{ii}=\mathcal{O}\Big(\delta^2(p+\mu^{(1)})+(\mu^{(1)})^3 +(\mu^{(1)})^2 p+\mu^{(1)}p^3+p^5\Big).
\end{aligned}
\end{equation}

{\color{blue}Step 3.2: Simplify $\mathcal{M}^{(0)}$.} First, by \eqref{eq_perturb_theory_for_unperturbed_proof_23}, the $p$-term in $\mathcal{M}^{(0)}_{ij}$ vanishes. Moreover, the $p\delta$-term vanishes on the diagonals and the $p^k$-terms ($k=2,3,4$) vanish on the off-diagonals, as demonstrated in the analysis of $\mathcal{M}^{(1)}_{ij}$. Additionally, the $p^3$-term also vanishes on the diagonals by \eqref{eq_perturb_theory_for_unperturbed_proof_25}. In conclusion, with \eqref{eq_perturb_theory_for_unperturbed_proof_24}, \eqref{eq_perturb_theory_for_unperturbed_proof_26} and Lemma \ref{lem_non-degeneracy}, 
\footnotesize
\begin{equation} \label{eq_perturb_matrix_major_prep}
\begin{aligned}
\mathcal{M}^{(0)}
&=\begin{pmatrix}
\mu^{(1)}+\frac{\gamma_*}{2}p^2+\eta_* p^4 & \delta (t_*+r_*p) \\
\delta (t_*+\overline{r_*}p) & \mu^{(1)}-\frac{\gamma_*}{2}p^2-\eta_* p^4
\end{pmatrix} \\
&\quad +\begin{pmatrix}
-p^2(\mu^{(1)}+\frac{\gamma_*}{2}p^2)\big(\tilde{\mathcal{L}}^{A,1}T_{\mu}T_p\tilde{v}_{1},\tilde{v}_{1}\big) & 0 \\
0 & -p^2(\mu^{(1)}-\frac{\gamma_*}{2}p^2)\big(\tilde{\mathcal{L}}^{A,1}T_{\mu}T_p\tilde{v}_{2},\tilde{v}_{2}\big)
\end{pmatrix},
\end{aligned}
\end{equation}
\normalsize
where
\begin{equation} \label{eq_sec4_9}
r_{*}=-\Big[\big(\tilde{\mathcal{L}}^{B,1}\tilde{v}_{2},\tilde{v}_{1}\big)+\big(\tilde{\mathcal{L}}^{A,1}T_{\delta}\tilde{v}_{2},\tilde{v}_{1}\big) +\big(\tilde{\mathcal{L}}^{B,0}T_{p}\tilde{v}_{2},\tilde{v}_{1}\big) \Big].
\end{equation}
Combining \eqref{eq_perturb_matrix_remainder_estimate_1}, \eqref{eq_perturb_matrix_remainder_estimate_2} and \eqref{eq_perturb_matrix_major_prep}, we can rewrite \eqref{eq_perturbation_matrix_prep} as
\begin{equation} \label{eq_perturbation_matrix}
\Big(\tilde{\mathcal{M}}^{(0)}(\mu^{(1)},p,\delta)
+\tilde{\mathcal{M}}^{(1)}(\mu^{(1)},p,\delta)\Big)
\begin{pmatrix}
a \\ b
\end{pmatrix}=0,
\end{equation}
with
\begin{equation*}
\tilde{\mathcal{M}}^{(0)}(\mu^{(1)},p,\delta)
=\begin{pmatrix}
\mu^{(1)}+\frac{\gamma_*}{2}p^2+\eta_* p^4 & \delta (t_*+r_*p) \\
\delta (t_*+\overline{r_*}p) & \mu^{(1)}-\frac{\gamma_*}{2}p^2-\eta_* p^4
\end{pmatrix},
\end{equation*}
\footnotesize
\begin{equation} \label{eq_perturbation_matrix_minor}
\tilde{\mathcal{M}}^{(1)}(\mu^{(1)},p,\delta)
=\begin{pmatrix}
\substack{-p^2(\mu^{(1)}+\frac{\gamma_*}{2}p^2)\big(\tilde{\mathcal{L}}^{A,1}T_{\mu}T_p\tilde{v}_{1},\tilde{v}_{1}\big) \\ +\mathcal{O}\Big(\delta^2(p+\mu^{(1)})+(\mu^{(1)})^3 +(\mu^{(1)})^2 p+\mu^{(1)}p^3+p^5\Big)} & \mathcal{O}\Big(\delta
\big(\delta+p^2+(\mu^{(1)})^2\big)\Big) \\
\mathcal{O}\Big(\delta
\big(\delta+p^2+(\mu^{(1)})^2\big)\Big) & \substack{-p^2(\mu^{(1)}-\frac{\gamma_*}{2}p^2)\big(\tilde{\mathcal{L}}^{A,1}T_{\mu}T_p\tilde{v}_{2},\tilde{v}_{2}\big) \\+\mathcal{O}\Big(\delta^2(p+\mu^{(1)})+(\mu^{(1)})^3 +(\mu^{(1)})^2 p+\mu^{(1)}p^3+p^5\Big)}
\end{pmatrix}.
\end{equation}
\normalsize

{\color{blue}Step 4}. We prove \eqref{eq_asymp_eigenvalue_positive} by solving $\mu^{(1)}=\mu^{(1)}(p,\delta)$ from \eqref{eq_perturbation_matrix}.  
Specifically, we solve
\footnotesize
\begin{equation} \label{eq_sec4_10}
F(\mu^{(1)},p,\delta):=\det \Big(\tilde{\mathcal{M}}^{(0)}
+\tilde{\mathcal{M}}^{(1)}\Big)
=(\mu^{(1)})^2-\Big(\big(\frac{\gamma_*}{2}p^2+\eta_* p^4\big)^2
+\delta^2|t_*+r_*p|^2\Big)+\rho(\mu^{(1)},p,\delta)=0, 
\end{equation}
\normalsize
where
\footnotesize
\begin{equation} \label{eq_sec4_11}
\begin{aligned}
\rho(\mu^{(1)},p,\delta) 
&=\tilde{\mathcal{M}}^{(0)}_{11}\tilde{\mathcal{M}}^{(1)}_{22}+\tilde{\mathcal{M}}^{(1)}_{11}\tilde{\mathcal{M}}^{(0)}_{22}-\tilde{\mathcal{M}}^{(0)}_{12}\tilde{\mathcal{M}}^{(1)}_{21}-\tilde{\mathcal{M}}^{(1)}_{12}\tilde{\mathcal{M}}^{(0)}_{21}+\det \tilde{\mathcal{M}}^{(1)} \\
&=-2p^2\big((\mu^{(1)})^2-\frac{\gamma_*^2}{4}p^4\big)+\mathcal{O}\Big(\mu^{(1)}(p\delta^2+p^5)+(\mu^{(1)})^2(p^3+\delta^2)+(\mu^{(1)})^3 p+(\mu^{(1)})^4+\delta^3+\delta^2p^3+p^7\Big).
\end{aligned}
\end{equation}
\normalsize
We have
\begin{equation} \label{eq_sec4_16}
\frac{\partial \rho}{\partial \mu^{(1)}}(\mu^{(1)},p,\delta) =\mathcal{O}(\mu^{(1)}(p^2+\delta^2)+(\mu^{(1)})^2 p+(\mu^{(1)})^3+p\delta^2+p^5).
\end{equation}
We then solve $\mu^{(1)}=\mu^{(1)}(p,\delta)$ from \eqref{eq_sec4_10} for each $|p|\ll 1$ and $|\delta|\ll 1$. First, note that without the remainder $\rho$, 
two branches of solutions are given by
\begin{equation*}
\pm \sqrt{\big(\frac{\gamma_*}{2}p^2+\eta_* p^4\big)^2
+\delta^2|t_*+r_*p|^2}. 
\end{equation*}
Thus, we seek a solution to \eqref{eq_sec4_10} in the following form
\begin{equation} \label{eq_sec4_12}
\mu^{(1)}(p,\delta)=x\cdot \sqrt{\big(\frac{\gamma_*}{2}p^2+\eta_* p^4\big)^2
+\delta^2|t_*+r_*p|^2}
\end{equation}
with $|x|$ close to $1$. By substituting \eqref{eq_sec4_12} into \eqref{eq_sec4_10}, we obtain the following equation of $x$ 
\begin{equation} \label{eq_sec4_13}
\begin{aligned}
H(x;p,\delta)&:=\frac{1}{\big(\frac{\gamma_*}{2}p^2+\eta_* p^4\big)^2
+\delta^2|t_*+r_*p|^2}F(x\cdot\sqrt{\big(\frac{\gamma_*}{2}p^2+\eta_* p^4\big)^2
+\delta^2|t_*+r_*p|^2},p,\delta) \\
&=x^2-1+\rho_1(x;p,\delta)=0,
\end{aligned}
\end{equation}
where 
\footnotesize
\begin{equation*}
\rho_1(x;p,\delta)
=\frac{\rho(x\cdot\sqrt{\big(\frac{\gamma_*}{2}p^2+\eta_* p^4\big)^2
+\delta^2|t_*+r_*p|^2},p,\delta)}{\big(\frac{\gamma_*}{2}p^2+\eta_* p^4\big)^2
+\delta^2|t_*+r_*p|^2}. 
\end{equation*}
\normalsize
The estimate \eqref{eq_sec4_11} and \eqref{eq_sec4_16} indicate that  $\rho_1(x;p,\delta) \ll 1, \,\, \frac{\partial \rho_1}{\partial x}(x;p,\delta) \ll 1$
for $p,\delta$ sufficiently small and $x$ close to $1$. Hence \eqref{eq_sec4_13} can be solved by the implicit function theorem. Denote the solution close to $x=1$ by $x_s(p,\delta)$. Using the estimate \eqref{eq_sec4_11} and \eqref{eq_sec4_16}, 
$$x_s(p,\delta)=1+\mathcal{O}(|p|^3+|\delta|).$$ 
It follows that
\footnotesize
\begin{equation*}
\begin{aligned}
\mu^{(1)}_{+}(p,\delta)=x_s(p,\delta)\cdot \sqrt{\big(\frac{\gamma_*}{2}p^2+\eta_* p^4\big)^2
+\delta^2|t_*+r_*p|^2}
=\sqrt{\big(\frac{\gamma_*}{2}p^2+\eta_* p^4\big)^2
+\delta^2|t_*+r_*p|^2}\cdot \left(1+\mathcal{O}(|p|^3+|\delta|)\right).
\end{aligned}
\end{equation*}
\normalsize
The estimate of the remainder above can be further refined by noting that the above expansion should coincide with the one in Assumption \ref{def_quadratic_degenracy}(4) when we set $\delta=0$. Consequently, the remainder is of the order $\mathcal{O}(|p|^4)$ when $\delta=0$ and $\mathcal{O}(|p|^4+|\delta|)$
for $\delta \neq 0$. 
Therefore
\begin{equation*}
\begin{aligned}
\mu^{(1)}_{+}(p,\delta)
=\sqrt{\big(\frac{\gamma_*}{2}p^2+\eta_* p^4\big)^2
+\delta^2|t_*+r_*p|^2}\cdot \left(1+\mathcal{O}(|p|^4+|\delta|)\right).
\end{aligned}
\end{equation*}
Compared with Assumption \ref{def_quadratic_degenracy}(4), the additional $\delta$-terms in the above expansion illustrate the perturbation effect on the Bloch eigenvalue. Similarly, the other solution to \eqref{eq_sec4_10} is
\begin{equation} \label{eq_sec4_23}
\begin{aligned}
\mu^{(1)}_{-}(p,\delta)
=-\sqrt{\big(\frac{\gamma_*}{2}p^2+\eta_* p^4\big)^2
+\delta^2|t_*+r_*p|^2}\cdot \left(1+\mathcal{O}(|p|^4+|\delta|)\right).
\end{aligned}
\end{equation}
Note that $\mu_{1,\delta}(\kappa_1)=\lambda_*+\mu^{(1)}_{-}(\kappa_1-\pi,\delta)$ and $\mu_{2,\delta}(\kappa_1)=\lambda_*+\mu^{(1)}_{+}(\kappa_1-\pi,\delta)$. This proves \eqref{eq_asymp_eigenvalue_positive}.


{\color{blue}Step 5}. We prove \eqref{eq_asymp_eigenfunction_1}. The proof of \eqref{eq_asymp_eigenfunction_2} is similar. By substituting $\mu^{(1)}=\mu^{(1)}_{-}(\kappa_1-\pi,\delta)$ into \eqref{eq_perturbation_matrix}, we obtain the following eigenvector
\begin{equation*}
\begin{pmatrix}
a(\kappa_1,\delta) \\ b(\kappa_1,\delta)
\end{pmatrix}
=
\begin{pmatrix}
1 \\ \frac{\delta(t_*+\overline{r_*}p)+\tilde{\mathcal{M}}^{(1)}_{21}(\mu^{(1)}_{-}(p,\delta),p,\delta)}{-\mu^{(1)}_{-}(p,\delta)+\frac{\gamma_*}{2}p^2+\eta_* p^4-\tilde{\mathcal{M}}^{(1)}_{22}(\mu^{(1)}_{-}(p,\delta),p,\delta)}
\end{pmatrix}.
\end{equation*}
By \eqref{eq_sec4_23}, the leading term order of $\mu^{(1)}_{-}$ is $-\sqrt{\frac{\gamma_*^2}{4}p^4+t_*^2\delta^2}$. This combined with the estimate of $\tilde{\mathcal{M}}^{(1)}$ in \eqref{eq_perturbation_matrix_minor} yield
\footnotesize
\begin{equation*}
b(\kappa_1,\delta)
=\frac{t_*\delta}{\frac{\gamma_*}{2}p^2+\sqrt{\frac{\gamma_*^2}{4}p^4+t_*^2\delta^2}}
\Big(1+ r(p;\delta)\cdot p+\mathcal{O}(\delta+p^2)\Big),
\end{equation*}
\normalsize
where 
\footnotesize
\begin{equation} \label{eq_sec4_28}
r(p;\delta):=\overline{r_*}-\frac{t_*\text{Re}(r_*)\delta^2}{(\frac{\gamma_*}{2}p^2+\sqrt{\frac{\gamma_*^2}{4}p^4+t_*^2\delta^2})\sqrt{\frac{\gamma_*^2}{4}p^4+t_*^2\delta^2}}
\end{equation}
\normalsize
is uniformly bounded. 
Using the expression $\tilde{v}^{(0)}=a(\kappa_1,\delta)\tilde{v}_1+b(\kappa_1,\delta)\tilde{v}_2$ and the expansion of $\tilde{v}^{(1)}$ in \eqref{eq_sec4_5}, we have 
\begin{equation*}
\begin{aligned}
\tilde{v}=\sum_{k\geq 0}T^{k}(\delta,p,\mu^{(1)}_{-})\tilde{v}_1
+\frac{t_*\delta}{\frac{\gamma_*}{2}p^2+\sqrt{\frac{\gamma_*^2}{4}p^4+t_*^2\delta^2}}
\Big(1+ r(p;\delta)\cdot p+\mathcal{O}(\delta+p^2)\Big)\sum_{k\geq 0}T^{k}(\delta,p,\mu^{(1)}_{-})\tilde{v}_2. 
\end{aligned}
\end{equation*}
Multiplying both sides by $e^{i\kappa_1 x_1}$ yields 
\footnotesize
\begin{equation} \label{eq_sec4_24}
\begin{aligned}
v_{1,\delta}(\bm{x};\kappa_1)&=e^{i\kappa_1 x_1}\sum_{k\geq 0}T^{k}(\delta,\kappa_1-\pi,\mu^{(1)}_{-}(\kappa_1-\pi,\delta))\tilde{v}_1 \\
&\quad +f_*(\kappa_1;\delta)\Big(1+ r(\kappa_1-\pi;\delta)\cdot (\kappa_1-\pi)+\mathcal{O}(\delta+(\kappa_1-\pi)^2)\Big)
\sum_{k\geq 0}T^{k}(\delta,\kappa_1-\pi,\mu^{(1)}_{-}(\kappa_1-\pi,\delta))\tilde{v}_2 \\
&=:v_{1,\delta}^{I}(\bm{x};\kappa_1)+v_{1,\delta}^{II}(\bm{x};\kappa_1).
\end{aligned}
\end{equation}
\normalsize
We note that when $\delta=0$
the first term $v_{1,\delta}^{I}(\bm{x};\kappa_1)$ coincides with the unperturbed Bloch mode $v_{1}(\bm{x};\kappa_1)$ in \eqref{eq_perturb_theory_for_unperturbed_proof_14_v1}. The leading-order perturbation effect is manifested by the second part $v_{1,\delta}^{II}(\bm{x};\kappa_1)$.  

We prove \eqref{eq_asymp_eigenfunction_1} by estimating \eqref{eq_sec4_24}. First we consider $v_{1,\delta}^{I}$. By the uniform convergence of the Neumann series \eqref{eq_sec4_5}, 
\footnotesize
\begin{equation*}
\begin{aligned}
v_{1,\delta}^{I}(\bm{x};\kappa_1) 
&=e^{i\kappa_1 x_1}\sum_{k\geq 0}T^{k}
\left(\delta,\kappa_1-\pi,\mu_{1}(\kappa_1)+(\mu_{1,\delta}(\kappa_1)-\mu_{1}(\kappa_1))\right)\tilde{v}_1 \\
&=e^{i\kappa_1 x_1}\sum_{k\geq 0}T^{k}
\left(0,\kappa_1-\pi,\mu_{1}(\kappa_1)\right)\tilde{v}_1
+\mathcal{O}(\delta+|\mu_{1,\delta}(\kappa_1)-\mu_{1}(\kappa_1)|). 
\end{aligned}
\end{equation*}
\normalsize
The first term is exactly the numerator of \eqref{eq_ansatz_unperturb_eigenfunction_1}. In addition,  $|\mu_{1,\delta}(\kappa_1)-\mu_{1}(\kappa_1)|=\mathcal{O}(\delta)$ by \eqref{eq_asymp_eigenvalue_positive}. It follows that
\begin{equation} \label{eq_sec4_25}
v_{1,\delta}^{I}(\bm{x};\kappa_1)
=v_1+(\kappa_1-\pi)\partial_{\kappa_1}v_1(\bm{x};\pi)
+\sum_{k=2}^{5}v^{(k)}_{1}(\bm{x};\pi)
+\mathcal{O}(\delta+(\kappa_1-\pi)^6). 
\end{equation}
Next, using the estimates
\[
\sum_{k\geq 0}T^{k}(\delta,\kappa_1-\pi,\mu^{(1)}_{-}(\kappa_1-\pi,\delta))\tilde{v}_2= v_2(\bm{x};\pi)+\partial_{\kappa_1}v_2(\bm{x};\pi)(\kappa_1-\pi)+ \mathcal{O}\big((\kappa_1-\pi)^2+ \delta
\big),\]
and 
\begin{equation} \label{eq_sec4_26}
(\kappa_1-\pi)^{k}f_*(\kappa_1;\delta)=\frac{ (\kappa_1-\pi)^{k}t_*\delta}{\frac{\gamma_*}{2}(\kappa_1-\pi)^2+\sqrt{\frac{\gamma_*^2}{4}(\kappa_1-\pi)^4+t_*^2\delta^2}}
=\mathcal{O}(\delta),\quad \forall k\geq 2, 
\end{equation}
we have
\footnotesize
\begin{equation} \label{eq_sec4_27}
v_{1,\delta}^{II}(\bm{x};\kappa_1)
=f_*(\kappa_1;\delta)\big(1+ r(\kappa_1-\pi;\delta)\cdot (\kappa_1-\pi)\big) v_2(\bm{x};\pi)+
f_*(\kappa_1;\delta)(\kappa_1-\pi)\partial_{\kappa_1}v_2(\bm{x};\pi)
+\mathcal{O}(\delta).
\end{equation}
\normalsize
The proof of \eqref{eq_asymp_eigenfunction_1} is completed by combining \eqref{eq_sec4_24}, \eqref{eq_sec4_25} and \eqref{eq_sec4_27}. Finally, the estimation of the norm $N_{1,\delta}(\kappa_1)=\|v_{1,\delta}(\bm{x};\kappa_1)\|=\sqrt{(v_{1,\delta},v_{1,\delta})}$ follows from the expansion of the normalization factor in the unperturbed case \eqref{eq_perturb_theory_for_unperturbed_proof_16} and the estimate \eqref{eq_sec4_26}. 
\normalsize
This completes the proof of \eqref{eq_perturbed_normalization_factor_1}.

\section{Asymtotics of the perturbed Green function}
In this section, we study the Green function $G^{\delta}$ for the perturbed operator $\mathcal{L}^{A+\delta\cdot B}$: 
\begin{equation*} 
    \left\{
    \begin{aligned}
        &\Big(\nabla\cdot(A+\delta \cdot B)\nabla+\lambda\Big)G^{  \delta}(\bm{x},\bm{y};\lambda)=\delta(\bm{x}-\bm{y}),\quad x,y \in \Omega,\\
        &G^{  \delta}(\bm{x},\bm{y};\lambda)\big|_{\Gamma^{+}}=e^{i\pi}G^{  \delta}(\bm{x},\bm{y};\lambda)\big|_{\Gamma^{-}}, \\
        &\frac{\partial G^{  \delta}}{\partial x_2}(\bm{x},\bm{y};\lambda_*)\big|_{\Gamma^{+}}=e^{i\pi}\frac{\partial G^{  \delta}}{\partial x_2}(\bm{x},\bm{y};\lambda)\big|_{\Gamma^{-}}.
    \end{aligned}
    \right.
\end{equation*}
By Floquet transform, $G^{\delta}$ has the spectral representation
\begin{equation} \label{eq_perturbed_Green_function_floquet_expansion}
G^{  \delta}(\bm{x},\bm{y};\lambda)
=\frac{1}{2\pi}\int_{0}^{2\pi}
\sum_{n\geq 1}\frac{v_{n,  \delta}(\bm{x};\kappa_1)\overline{v_{n,  \delta}(\bm{y};\kappa_1)}}{\lambda-\mu_{n,  \delta}(\kappa_1)}d\kappa_1.
\end{equation}

We are concerned with the asymptotics of $G^{\delta}(\bm{x},\bm{y};\lambda)$ when $\lambda$ lies in the band gap, which reveals the perturbation effect. For this purpose, we set $\delta>0$ and reparametrize $\lambda$ by
\begin{equation*}
\lambda=\lambda_*+\delta\cdot h,\quad 
h\in\mathcal{J}:=\{z\in\mathbf{C}:\, |z|<c_0|t_*|\}.
\end{equation*}
As in Section 5, we recast the asymptotics of the Green function in terms of its associated integral operator
\begin{equation*}
\mathcal{G}^{\delta}(\lambda_*+\delta\cdot h)f:=\int_{\Omega}G^{  \delta}(\bm{x},\bm{y};\lambda_*+\delta\cdot h)f(\bm{y})d\bm{y}.
\end{equation*}
The main result of this section is:
\begin{theorem}
\label{thm_asymptotics_perturbed_Green_function}
There exist $f^{prop}_{n,\delta},g^{prop}_{n,\delta}\in H^1_{loc}(\Omega)$ for $n=1,2$ with the estimate $\|f^{prop}_{n,\delta}\|_{H^1(U)},\|g^{prop}_{n,\delta}\|_{H^1(U)}=\mathcal{O}(\delta^{\frac{1}{9}})$ for any compact set $U\subset \Omega$, such that the following asymptotic behavior holds uniformly for each $f\in (H^{1}(\Omega))^*$ with compact support with unit norm:
\footnotesize
\begin{equation*} 
\begin{aligned}
\mathcal{G}^{\pm\delta}(\lambda_*+\delta\cdot h)f
&=\mathcal{G}_0(\lambda_*)f \\
&\quad +\sum_{n=1,2}\Bigg[\delta^{-\frac{1}{2}}k_{n,*}(h)
+o(\delta^{-\frac{1}{2}})\Bigg]v_{n}(\bm{x};\pi) \big(f(\bm{y}),v_{n}(\bm{y};\pi)\big)_{\Omega}  \\
&\quad \pm \Bigg[\delta^{-\frac{1}{2}}k_{cross,*}(h)+o(\delta^{-\frac{1}{2}}) \Bigg]\Big(v_{1}(\bm{x};\pi) \big(f(\bm{y}),v_{2}(\bm{y};\pi)\big)_{\Omega}+v_{2}(\bm{x};\pi) \big(f(\bm{y}),v_{1}(\bm{x};\pi)\big)_{\Omega} \Big) \\
&\quad+ \sum_{n=1,2}\Bigg[\delta^{\frac{1}{2}}e_{n,*}(h)+o(\delta^{\frac{1}{2}}) \Bigg]\partial_{\kappa_1}v_{n}(\bm{x};\pi) \big(f(\bm{y}),\partial_{\kappa_1}v_{n}(\bm{y};\pi)\big)_{\Omega}  \\
&\quad \pm\Bigg[\delta^{\frac{1}{2}}e_{cross,*}(h)+o(\delta^{\frac{1}{2}}) \Bigg]\Big(\partial_{\kappa_1}v_{1}(\bm{x};\pi) \big(f(\bm{y}),\partial_{\kappa_1}v_{2}(\bm{y};\pi)\big)_{\Omega}
+\partial_{\kappa_1}v_{2}(\bm{x};\pi) \big(f(\bm{y}),\partial_{\kappa_1}v_{1}(\bm{y};\pi)\big)_{\Omega} \Big) \\
&\quad +\sum_{n=1,2}v_n(\bm{x};\pi)(f(\bm{y}),f^{prop}_{n,\delta}(\bm{y}))_{\Omega}+g^{prop}_{n,\delta}(\bm{x})(f(\bm{y}),v_n(\bm{y};\pi))_{\Omega} +\mathcal{O}(\delta^{\frac{5}{9}}).
\end{aligned}
\end{equation*}
\normalsize
Here $\mathcal{G}_0(\lambda_*)$ is the integral operator introduced in Theorem \ref{thm_asymp_unperturbed_green}, and
\footnotesize
\begin{equation*}
\begin{aligned}
k_{1}(h)=-\frac{1}{2\gamma_*}
\frac{\frac{h}{\gamma_*}-\sqrt{(\frac{t_*}{\gamma_*})^2-(\frac{h}{\gamma_*})^2}}{((\frac{t_*}{\gamma_*})^2-(\frac{h}{\gamma_*})^2)^{\frac{3}{4}}},\quad
k_{2}(h)=-\frac{1}{2\gamma_*}
\frac{\frac{h}{\gamma_*}+\sqrt{(\frac{t_*}{\gamma_*})^2-(\frac{h}{\gamma_*})^2}}{((\frac{t_*}{\gamma_*})^2-(\frac{h}{\gamma_*})^2)^{\frac{3}{4}}},\quad
k_{3}(h)=\frac{1}{2\gamma_*}
\frac{\frac{t_*}{\gamma_*}}{((\frac{t_*}{\gamma_*})^2-(\frac{h}{\gamma_*})^2)^{\frac{3}{4}}},
\end{aligned}
\end{equation*}
\normalsize
\footnotesize
\begin{equation*}
\begin{aligned}
e_{1}(h)=-\frac{1}{\gamma_*}
\frac{\frac{h}{\gamma_*}+\sqrt{(\frac{t_*}{\gamma_*})^2-(\frac{h}{\gamma_*})^2}}{((\frac{t_*}{\gamma_*})^2-(\frac{h}{\gamma_*})^2)^{\frac{1}{4}}},\quad
e_{2}(h)=-\frac{1}{\gamma_*}
\frac{\frac{h}{\gamma_*}-\sqrt{(\frac{t_*}{\gamma_*})^2-(\frac{h}{\gamma_*})^2}}{((\frac{t_*}{\gamma_*})^2-(\frac{h}{\gamma_*})^2)^{\frac{1}{4}}},\quad
e_{3}(h)=\frac{1}{\gamma_*}
\frac{\frac{t_*}{\gamma_*}}{((\frac{t_*}{\gamma_*})^2-(\frac{h}{\gamma_*})^2)^{\frac{1}{4}}}.
\end{aligned}
\end{equation*}
\normalsize
\end{theorem}
The proof of Theorem \ref{thm_asymptotics_perturbed_Green_function} requires a thorough asymptotic analysis of integrals. Although these integrals are analogous to those encountered in the proof of Theorem \ref{thm_asymp_unperturbed_green}, they are significantly more complex due to the additional terms introduced by the perturbation. The detailed proof is provided in the next section.

\subsection{Proof of Theorem \ref{thm_asymptotics_perturbed_Green_function}}
We only calculate the asymptotics of $\mathcal{G}^{\delta}(\lambda_*+\delta\cdot h)f$. The asymptotics of $\mathcal{G}^{-\delta}(\lambda_*+\delta\cdot h)f$ is obtained similarly.

Similar to the analysis of the unperturbed Green function in Section 5, we decompose $\mathcal{G}^{\delta}$ into four parts $\mathcal{G}^{\delta}=\mathcal{G}_1^{\delta,evan}+\mathcal{G}_{2,1}^{\delta,evan}+ \mathcal{G}_{2,2}^{\delta,evan}+\mathcal{G}^{\delta,prop}$, where
\footnotesize
\begin{equation} \label{eq_G_delta_evan_def}
\begin{aligned}
\mathcal{G}_1^{\delta,evan}(\lambda_*+\delta\cdot h)f
&=\frac{1}{2\pi}\int_{0}^{2\pi}
\sum_{n\geq 3}\frac{v_{n,  \delta}(\bm{x};\kappa_1)(f(\cdot),v_{n,  \delta}(\cdot;\kappa_1))_{\Omega}}{\lambda_*+\delta\cdot h-\mu_{n,  \delta}(\kappa_1)}d\kappa_1 \\
\mathcal{G}_{2,n}^{\delta,evan}(\lambda_*+\delta\cdot h)f=
&\frac{1}{2\pi}\int_{[0,\pi-\delta^{\frac{1}{9}}]\cup [\pi+\delta^{\frac{1}{9}},2\pi]}
\frac{v_{n,  \delta}(\bm{x};\kappa_1)f(\cdot),v_{n,  \delta}(\cdot;\kappa_1))_{\Omega}}{\lambda_*+\delta\cdot h-\mu_{n,\delta}(\kappa_1)}d\kappa_1, \,\, n=1, 2,
\end{aligned}
\end{equation}
\begin{equation*}
\begin{aligned}
\mathcal{G}^{\delta,prop}(\lambda_*+\delta\cdot h)f
=\frac{1}{2\pi}\int_{\pi-\delta^{\frac{1}{9}}}^{\pi+\delta^{\frac{1}{9}}}
\frac{v_{1,  \delta}(\bm{x};\kappa_1)f(\cdot),v_{1,  \delta}(\cdot;\kappa_1))_{\Omega}}{\lambda_*+\delta\cdot h-\mu_{1,\delta}(\kappa_1)}d\kappa_1  +\frac{1}{2\pi}\int_{\pi-\delta^{\frac{1}{9}}}^{\pi+\delta^{\frac{1}{9}}}
\frac{v_{2,  \delta}(\bm{x};\kappa_1)f(\cdot),v_{2,  \delta}(\cdot;\kappa_1))_{\Omega}}{\lambda_*+\delta\cdot h-\mu_{2,\delta}(\kappa_1)}d\kappa_1 .
\end{aligned}
\end{equation*}
\normalsize
Then Theorem \ref{thm_asymptotics_perturbed_Green_function} follows by combining the following three lemmas, whose proofs are given in Sections 8.2, 8.3, and 8.4 respectively. 

\begin{lemma} [Asymptotics of $\mathcal{G}_{1}^{\delta,even}$]\label{lem_app_C_1_evan_ngeq3}
\footnotesize
\begin{equation} \label{eq_app_C_1_ngeq3}
\begin{aligned}
\mathcal{G}_1^{\delta,evan}(\lambda_*+\delta\cdot h)f
=\frac{1}{2\pi}\int_{0}^{2\pi}
\sum_{n\geq 3}\frac{v_{n}(\bm{x};\kappa_1)(f(\cdot),v_{n}(\cdot;\kappa_1))_{\Omega}}{\lambda_*-\mu_{n}(\kappa_1)}d\kappa_1
+\mathcal{O}(\delta).
\end{aligned}
\end{equation}
\normalsize
\end{lemma}

\begin{lemma} [Asymptotics of $\mathcal{G}_{2, 1}^{\delta,even}$ and $\mathcal{G}_{2,2}^{\delta,even}$]\label{lem_app_C_1_evan_n=12}
\footnotesize
\begin{equation} \label{eq_app_C_1_n=1}
\begin{aligned}
&\mathcal{G}_{2,1}^{\delta,evan}(\lambda_*+\delta\cdot h)f \\
&=(\delta^{-\frac{1}{9}}\cdot \frac{2}{\pi\gamma_*}+\mathcal{O}(\delta^{\frac{1}{9}}))\cdot v_1(\bm{x};\pi)(f(\cdot),v_1(\cdot;\pi))_{\Omega}  +\mathcal{G}_0^{(1)}(\lambda_*)f 
+\big(\sum_{j+\ell =1}v_1^{(j)}(\bm{x};\pi)(f(\cdot),v_1^{(\ell)}(\cdot;\pi))_{\Omega} \big)\cdot \mathcal{O}(\delta^{\frac{1}{3}}) \\
&\quad +\big(\sum_{j+\ell =2}v_1^{(j)}(\bm{x};\pi)(f(\cdot),v_1^{(\ell)}(\cdot;\pi))_{\Omega} \big)\cdot\Big(\frac{-2\delta^{\frac{1}{9}}}{\pi\gamma_*}+\frac{2}{3\pi\gamma_*}\big(\frac{2\eta_*}{\gamma_*}+N_1^{(2)}\big)\delta^{\frac{1}{3}} \Big) \\
&\quad +\big(\sum_{j+\ell =4}v_1^{(j)}(\bm{x};\pi)(f(\cdot),v_1^{(\ell)}(\cdot;\pi))_{\Omega} \big)\cdot\big(\frac{-2}{3\pi\gamma_*}\delta^{\frac{1}{3}} \big)
 +\mathcal{O}(\delta^{\frac{5}{9}}),
\end{aligned}
\end{equation}
\normalsize
\footnotesize
\begin{equation} \label{eq_app_C_1_n=2}
\begin{aligned}
&\mathcal{G}_{2,2}^{\delta,evan}(\lambda_*+\delta\cdot h)f  \\
&=(-\delta^{-\frac{1}{9}}\cdot \frac{2}{\pi\gamma_*}+\mathcal{O}(\delta^{\frac{1}{9}}))\cdot v_2(\bm{x};\pi)(f(\cdot),v_2(\cdot;\pi))_{\Omega}  +\mathcal{G}_0^{(2)}(\lambda_*)f 
+\big(\sum_{j+\ell =1}v_2^{(j)}(\bm{x};\pi)(f(\cdot),v_2^{(\ell)}(\cdot;\pi))_{\Omega} \big)\cdot \mathcal{O}(\delta^{\frac{1}{3}}) \\
&\quad +\big(\sum_{j+\ell =2}v_2^{(j)}(\bm{x};\pi)(f(\cdot),v_2^{(\ell)}(\cdot;\pi))_{\Omega} \big)\cdot\Big(\frac{2\delta^{\frac{1}{9}}}{\pi\gamma_*}-\frac{2}{3\pi\gamma_*}\big(\frac{2\eta_*}{\gamma_*}+N_2^{(2)}\big)\delta^{\frac{1}{3}} \Big) \\
&\quad +\big(\sum_{j+\ell =4}v_2^{(j)}(\bm{x};\pi)(f(\cdot),v_2^{(\ell)}(\cdot;\pi))_{\Omega} \big)\cdot\big(\frac{2}{3\pi\gamma_*}\delta^{\frac{1}{3}} \big)
 +\mathcal{O}(\delta^{\frac{5}{9}}).
\end{aligned}
\end{equation}
\normalsize
Here $\mathcal{G}_0^{(1)}$ and $\mathcal{G}_0^{(2)}$ are introduced in Lemma \ref{lem_app_A_2_n=12}. We have included higher-order terms in \eqref{eq_app_C_1_n=1} and \eqref{eq_app_C_1_n=2} compared with Lemma \ref{lem_app_A_2_n=12}. Here $v_i^{(k)}$ are the same as in \eqref{eq_ansatz_unperturb_eigenfunction_1} and \eqref{eq_ansatz_unperturb_eigenfunction_2}, $v_i^{(0)}=v_i(\bm{x};\pi)$, $v_i^{(1)}=(\partial_{\kappa_1}v_i)(\bm{x};\pi)$.
\end{lemma}

\begin{lemma}[Asymptotics of $\mathcal{G}^{\delta,prop}$] \label{lem_app_C_1}
There exist $f^{prop}_{n,\delta},g^{prop}_{n,\delta}\in H_{loc}^1(\Omega)$ ($n=1,2$) such that
\footnotesize
\begin{equation} \label{eq_app_C_2}
\begin{aligned}
\mathcal{G}^{\delta,prop}(\lambda_*+\delta\cdot h)f
&=\sum_{n=1,2}\Big[\delta^{-\frac{1}{2}}k_{n}(h)
+o(\delta^{-\frac{1}{2}})\Big]v^{(0)}_{n}(\bm{x};\pi) \big(f(\bm{y}),v^{(0)}_{n}(\bm{y};\pi)\big)_{\Omega} \\
&\quad +\Big[\delta^{-\frac{1}{2}}k_{3}(h)+o(\delta^{-\frac{1}{2}}) \Big]\Big(v^{(0)}_{1}(\bm{x};\pi) \big(f(\bm{y}),v^{(0)}_{2}(\bm{y};\pi)\big)_{\Omega}+v^{(0)}_{2}(\bm{x};\pi) \big(f(\bm{y}),v^{(0)}_{1}(\bm{y};\pi)\big)_{\Omega} \Big) \\
&\quad +\big(\sum_{j+\ell =2}v_1^{(j)}(\bm{x};\pi)(f(\cdot),v_1^{(\ell)}(\cdot;\pi))_{\Omega} \big)\cdot\Big(\frac{2\delta^{\frac{1}{9}}}{\pi\gamma_*}-\frac{2}{3\pi\gamma_*}\big(\frac{2\eta_*}{\gamma_*}+N_1^{(2)}\big)\delta^{\frac{1}{3}} \Big) \\
&\quad +\big(\sum_{j+\ell =2}v_2^{(j)}(\bm{x};\pi)(f(\cdot),v_2^{(\ell)}(\cdot;\pi))_{\Omega} \big)\cdot\Big(\frac{-2\delta^{\frac{1}{9}}}{\pi\gamma_*}+\frac{2}{3\pi\gamma_*}\big(\frac{2\eta_*}{\gamma_*}+N_2^{(2)}\big)\delta^{\frac{1}{3}} \Big) \\
&\quad+ \sum_{n=1,2}\Big[\delta^{\frac{1}{2}}e_{n}(h)+o(\delta^{\frac{1}{2}}) \Big]v^{(1)}_{n}(\bm{x};\pi)\cdot \big(f(\bm{y}),v^{(1)}_{n}(\bm{y};\pi)\big)_{\Omega} \\
&\quad +\Big[\delta^{\frac{1}{2}}e_{3}(h)+o(\delta^{\frac{1}{2}}) \Big]\Big(v_{1}^{(1)}(\bm{x};\pi)\cdot \big(f(\bm{y}),v_{2}^{(1)}(\bm{y};\pi)\big)_{\Omega}
+v_{2}^{(1)}(\bm{x};\pi)\cdot \big(f(\bm{y}),v_{1}^{(1)}(\bm{y};\pi)\big)_{\Omega} \Big) \\
&\quad +\frac{2}{3\pi\gamma_*}\delta^{\frac{1}{3}} \sum_{j+\ell=4}v^{(j)}_{1}(\bm{x};\pi)\cdot \big(f(\cdot),v^{(\ell)}_{1}(\cdot;\pi)\big)_{\Omega}  -\frac{2}{3\pi\gamma_*}\delta^{\frac{1}{3}}\sum_{j+\ell=4}v^{(j)}_{2}(\bm{x};\pi)\cdot \big(f(\cdot),v^{(\ell)}_{2}(\cdot;\pi)\big)_{\Omega} \\
&\quad +\sum_{n=1,2}v^{(0)}_n(\bm{x};\pi)(f(\cdot),f^{prop}_{n,\delta}(\cdot))_{\Omega}+g^{prop}_{n,\delta}(\bm{x})(f(\cdot),v_n^{(0)}(\cdot;\pi))_{\Omega} +\mathcal{O}(\delta^{\frac{5}{9}}),
\end{aligned}
\end{equation}
\normalsize
with $\|f^{prop}_{n,\delta}\|_{H^1(U)},\|g^{prop}_{n,\delta}\|_{H^1(U)}=\mathcal{O}(\delta^{\frac{1}{9}})$ for any compact set $U\subset \Omega$. The functions $k_{n},e_{n}$ are introduced in Theorem \ref{thm_asymptotics_perturbed_Green_function}.
\end{lemma}

\begin{remark} \label{rmk_1/9_scaling}
The terms in \eqref{eq_app_C_2} involving the summations $\sum_{j+\ell =2}$ and $\sum_{j+\ell =4}$ cancel with the corresponding terms in \eqref{eq_app_C_1_n=1} and \eqref{eq_app_C_1_n=2} when summed together. These terms are larger in amplitude than $\mathcal{O}(\delta^{\frac{1}{2}})$, and hence overshadow the critical terms $v^{(1)}_{n}(\bm{x};\pi)\cdot \big(f(\bm{y}),v^{(1)}_{n}(\bm{y};\pi)\big)_{\Omega}$, which encapsulate the phase transition information and are essential for determining the interface modes (as discussed in Remark \ref{rmk_importance_expansion}). Therefore, it is necessary to compute each of these miscellaneous terms in detail and demonstrate their mutual cancellation. This accounts for the tedious calculations performed in this section.

We include all terms from equations \eqref{eq_app_C_1_ngeq3} to \eqref{eq_app_C_2} up to a remainder of order $\mathcal{O}(\delta^{\frac{5}{9}})$, ensuring that this remainder is smaller than the terms involving $v^{(1)}_{n}(\bm{x};\pi)\cdot \big(f(\bm{y}),v^{(1)}_{n}(\bm{y};\pi)\big)_{\Omega}$ and does not obscure any significant information about the phase transition. In particular, the contribution of $\mathcal{G}^{\delta,evan}$ in this remainder is determined by the domain of integrals for $n=1,2$ in equation \eqref{eq_G_delta_evan_def}. We require the domain to be sufficiently distant from $\kappa_1 = \pi$ to control the remainder. This necessity justifies our choice of the $\delta^{\frac{1}{9}}$ scaling in the integral domains.
\end{remark}


\subsection{Proof of Lemma \ref{lem_app_C_1_evan_ngeq3}}
The proof is similar to Lemma \ref{lem_app_A_2_ngeq3}. We define the following projections
\begin{equation*}
\mathbb{P}_{n,\delta}(\kappa_1)g=\big(g(\cdot),v_{n,\delta}(\cdot;\overline{\kappa_1}) \big)v_{n,\delta}(\cdot;\kappa_1)\,\, (n=1,2)\quad
\mathbb{Q}_{\delta}(\kappa_1)=I-\mathbb{P}_{1,\delta}(\kappa_1)-\mathbb{P}_{2,\delta}(\kappa_1).
\end{equation*}
By Floquet transform, 
\footnotesize
\begin{equation*}
\begin{aligned}
\mathcal{G}_1^{\delta,evan}(\lambda_*+\delta\cdot h)f
&=\frac{1}{2\pi}\int_{0}^{2\pi}
\big(\lambda_*+\delta\cdot h-\mathcal{L}^{A+\delta\cdot B}(\kappa_1,\pi)\mathbb{Q}_{\delta}(\kappa_1)\big)^{-1}\mathbb{Q}_{\delta}(\kappa_1)\hat{f}(\kappa_1)d\kappa_1, \\
\frac{1}{2\pi}\int_{0}^{2\pi}
\sum_{n\geq 3}\frac{v_{n}(\bm{x};\kappa_1)(f(\bm{y}),v_{n}(\bm{y};\kappa_1))_{\Omega}}{\lambda_*-\mu_{n}(\kappa_1)}d\kappa_1 
&=\frac{1}{2\pi}\int_{0}^{2\pi}
\big(\lambda_*-\mathcal{L}^A(\kappa_1,\pi)\mathbb{Q}(\kappa_1)\big)^{-1}\mathbb{Q}(\kappa_1)\hat{f}(\kappa_1)d\kappa_1.
\end{aligned}
\end{equation*}
\normalsize
Then the estimate \eqref{eq_app_C_1_ngeq3} follows in the same way as in Lemma \ref{lem_app_A_2_ngeq3} with the help of the following estimate
\begin{equation*}
\begin{aligned}
\Big\|\Big(\lambda_*-\mathcal{L}^{A+\delta\cdot B}_{\Omega,\pi}(\kappa_1)\mathbb{Q}_{\delta}(\kappa_1)\Big)^{-1} 
-\Big(\lambda_*-\mathcal{L}^{A}_{\Omega,\pi}(\kappa_1)\mathbb{Q}(\kappa_1)\Big)^{-1}\Big\|_{\mathcal{B}((H^{1}(Y))^*, H^{1}(Y))}
=\mathcal{O}(\delta).
\end{aligned}
\end{equation*}

\subsection{Proof of Lemma \ref{lem_app_C_1_evan_n=12}}
We only prove \eqref{eq_app_C_1_n=1}; the proof of \eqref{eq_app_C_1_n=2} is similar. The approach involves first estimating the error introduced by taking the limit $\delta\to 0$ in the integrand, and then applying the argument from the proof of Lemma \ref{lem_app_A_2_n=12}. We claim that
\footnotesize
\begin{equation}
\label{eq_app_C_10}
\begin{aligned}
&\mathcal{G}_{2,1}^{\delta,evan}(\lambda_*+\delta\cdot h)f
=\frac{1}{2\pi}\int_{[0,\pi-\delta^{\frac{1}{9}}]\cup [\pi+\delta^{\frac{1}{9}},2\pi]}\frac{v_{1}(\bm{x};\kappa_1)(f(\bm{y}),v_{1}(\bm{y};\kappa_1))_{\Omega}}{\lambda_*-\mu_{1}(\kappa_1)}d\kappa_1
+\mathcal{O}(\delta^{\frac{5}{9}}).
\end{aligned}
\end{equation}
\normalsize
Using \eqref{eq_app_C_10} and estimating the right side using \eqref{eq_first_branch_evanescent_summary}, we obtain the full expansion \eqref{eq_app_C_1_n=1}.

We now prove \eqref{eq_app_C_10}. Note that Assumption \ref{def_quadratic_degenracy} indicates that for $\delta^{\frac{1}{9}}\leq|\kappa_1-\pi|\leq \pi$
\begin{equation} \label{eq_app_C_12}
|\mu_1(\kappa_1)-\lambda_*|\geq |\mu_1(\pi+\delta^{\frac{1}{9}})-\lambda_*|
\gtrsim \delta^{\frac{2}{9}}.
\end{equation}
By a similar perturbation argument as in Theorem \ref{thm_gap_open}, the following estimates hold for $\delta^{\frac{1}{9}}\leq|\kappa_1-\pi|\leq \pi$: 
\begin{equation} \label{eq_app_C_13}
|\mu_{1,\delta}(\kappa_1)-\mu_1(\kappa_1)|=\mathcal{O}(\delta),\quad
\|v_{1,\delta}(\bm{x};\kappa_1)-v_{1}(\bm{x};\kappa_1)\|_{H^{1}(Y)}=\mathcal{O}(\delta^{\frac{7}{9}}).
\end{equation}
Combining \eqref{eq_app_C_12} and \eqref{eq_app_C_13} yield that 
\begin{equation} \label{eq_app_C_15}
\frac{1}{\lambda_*+\delta\cdot h-\mu_{1,\delta}(\kappa_1)}=\mathcal{O}(\delta^{-\frac{2}{9}}),\quad \forall h\in\mathcal{J}.
\end{equation}
By \eqref{eq_app_C_15} and \eqref{eq_app_C_13}, one can check that for $\delta^{\frac{1}{9}}\leq|\kappa_1-\pi|\leq \pi$
\begin{equation*}
\begin{aligned}
\frac{v_{1,  \delta}(\bm{x};\kappa_1)(f(\bm{y}),v_{1,\delta}(\bm{y};\kappa_1))_{\Omega}}{\lambda_*+\delta\cdot h-\mu_{1,  \delta}(\kappa_1)} 
-\frac{v_{1}(\bm{x};\kappa_1)(f(\bm{y}),v_{1}(\bm{y};\kappa_1))_{\Omega}}{\lambda_*-\mu_{1}(\kappa_1)}
=\mathcal{O}(\delta^{\frac{5}{9}}),
\end{aligned}
\end{equation*}
whence \eqref{eq_app_C_10} follows.

\subsection{Proof of Lemma \ref{lem_app_C_1}}

The proof requires dedicated estimates of integrals using the asymptotics in Theorem \ref{thm_gap_open}. Note that
\begin{equation*}
\begin{aligned}
\mathcal{G}^{\delta,prop}(\lambda_*+\delta\cdot h)f
&=\frac{1}{2\pi}\int_{\pi-\delta^{\frac{1}{9}}}^{\pi+\delta^{\frac{1}{9}}}
\frac{v_{1,  \delta}(\bm{x};\kappa_1)(f(\bm{y}),v_{1,\delta}(\bm{y};\kappa_1))_{\Omega})}{\lambda_*+\delta\cdot h-\mu_{1,\delta}(\kappa_1)}\frac{1}{N^2_{1,\delta}(\kappa_1)}d\kappa_1 \\
&\quad +\frac{1}{2\pi}\int_{\pi-\delta^{\frac{1}{9}}}^{\pi+\delta^{\frac{1}{9}}}
\frac{v_{2,  \delta}(\bm{x};\kappa_1)(f(\bm{y}),v_{2,\delta}(\bm{y};\kappa_1))_{\Omega}}{\lambda_*+\delta\cdot h-\mu_{2,\delta}(\kappa_1)}\frac{1}{N^2_{2,\delta}(\kappa_1)}d\kappa_1 ,
\end{aligned}
\end{equation*}
where $v_{n,\delta}(\bm{x};\kappa_1)$ and their normalization factors $N_{n,\delta}(\kappa_1)$ admit the asymptotic expansions as given in Theorem \ref{thm_gap_open}. A direct calculation yields 
\footnotesize
\begin{equation*}
\begin{aligned}
&(\lambda_*+\delta\cdot h-\mu_{1,\delta}(\kappa_1))N^{2}_{1,\delta}(\kappa_1) 
=\Big(\delta\cdot h+\sqrt{\frac{1}{4}\gamma_*^2(\kappa_1-\pi)^4+t_*^2\delta^2}\Big)\Big[1+\sum_{n=1}^{3}w_1^{(n)}(\kappa_1;\delta)\cdot (\kappa_1-\pi)^n+\mathcal{O}((\kappa_1-\pi)^{4}+|\delta|)\Big],
\end{aligned}
\end{equation*}
\begin{equation*}
\begin{aligned}
&(\lambda_*+\delta\cdot h-\mu_{2,\delta}(\kappa_1))N^{2}_{2,\delta}(\kappa_1) 
=\Big(\delta\cdot h-\sqrt{\frac{1}{4}\gamma_*^2(\kappa_1-\pi)^4+t_*^2\delta^2}\Big)\Big[1+\sum_{n=1}^{3}w_2^{(n)}(\kappa_1;\delta)\cdot (\kappa_1-\pi)^n+\mathcal{O}((\kappa_1-\pi)^{4}+|\delta|)\Big],
\end{aligned}
\end{equation*}
where
\footnotesize
\begin{equation*}
\begin{aligned}
&w_k^{(1)}(\kappa_1;\delta)=\frac{f_*^2(\kappa_1;\delta)\text{Re}(r(\kappa_1-\pi))}{2(1+f_*^2(\kappa_1;\delta))}+ \\
&\frac{2(-1)^{k-1}t_*\delta^2 \text{Re}(r_*)}{\big(\delta\cdot h+(-1)^{k-1}\sqrt{\frac{1}{4}\gamma_*^2(\kappa_1-\pi)^4+t_*^2\delta^2}\big) \big(\sqrt{\big(\frac{\gamma_*}{2}(\kappa_1-\pi)^2+\eta_* (\kappa_1-\pi)^4\big)^2+\delta^2|t_*+r_*(\kappa_1-\pi)|^2} +\sqrt{\frac{1}{4}\gamma_*^2(\kappa_1-\pi)^4+t_*^2\delta^2} \big)},
\end{aligned}
\end{equation*}

\begin{equation*}
\begin{aligned}
&w_k^{(2)}(\kappa_1;\delta)=\frac{N_{k}^{(2)}}{1+f_*^2(\kappa_1;\delta)}+ \\
&\frac{(-1)^{k-1}\gamma_*\eta_* (\kappa_1-\pi)^4}{\big(\delta\cdot h+(-1)^{k-1}\sqrt{\frac{1}{4}\gamma_*^2(\kappa_1-\pi)^4+t_*^2\delta^2}\big) \big(\sqrt{\big(\frac{\gamma_*}{2}(\kappa_1-\pi)^2+\eta_* (\kappa_1-\pi)^4\big)^2+\delta^2|t_*+r_*(\kappa_1-\pi)|^2} +\sqrt{\frac{1}{4}\gamma_*^2(\kappa_1-\pi)^4+t_*^2\delta^2} \big)},\\
&w_k^{(3)}(\kappa_1;\delta) =\frac{N_{k}^{(3)}}{1+f_*^2(\kappa_1;\delta)}.
\end{aligned}
\end{equation*}
\normalsize
Hence $\mathcal{G}^{\delta,prop}(\lambda_*+\delta\cdot h)f$ can be decomposed as
\begin{equation*}
\mathcal{G}^{\delta,prop}(\lambda_*+\delta\cdot h)f
=\sum_{k=0}^{4}I_{k},
\end{equation*}
where
\footnotesize
\begin{equation}
\label{eq_app_C_16}
\begin{aligned}
&I_{0}=\frac{1}{2\pi}\int_{\pi-\delta^{\frac{1}{9}}}^{\pi+\delta^{\frac{1}{9}}}
\frac{v_{1,  \delta}(\bm{x};\kappa_1)(f(\bm{y}),v_{1,\delta}(\bm{y};\kappa_1))_{\Omega}}{\delta\cdot h+\sqrt{\frac{1}{4}\gamma_*^2(\kappa_1-\pi)^4+t_*^2\delta^2}}\frac{1}{1+f_*^2(\kappa_1;\delta)}d\kappa_1  \\
&\quad +\frac{1}{2\pi}\int_{\pi-\delta^{\frac{1}{9}}}^{\pi+\delta^{\frac{1}{9}}}
\frac{v_{2,  \delta}(\bm{x};\kappa_1)(f(\bm{y}),v_{2,\delta}(\bm{y};\kappa_1))_{\Omega}}{\delta\cdot h-\sqrt{\frac{1}{4}\gamma_*^2(\kappa_1-\pi)^4+t_*^2\delta^2}}\frac{1}{1+f_*^2(\kappa_1;\delta)}d\kappa_1,
\end{aligned}
\end{equation}

\begin{equation*}
\begin{aligned}
&I_{k}=\frac{1}{2\pi}\int_{\pi-\delta^{\frac{1}{9}}}^{\pi+\delta^{\frac{1}{9}}}
\frac{v_{1,  \delta}(\bm{x};\kappa_1)(f(\bm{y}),v_{1,\delta}(\bm{y};\kappa_1))_{\Omega}}{\delta\cdot h+\sqrt{\frac{1}{4}\gamma_*^2(\kappa_1-\pi)^4+t_*^2\delta^2}}\frac{w_1^{(k)}(\kappa_1;\delta)(\kappa_1-\pi)^{k}}{1+f_*^2(\kappa_1;\delta)}d\kappa_1  \\
&\quad +\frac{1}{2\pi}\int_{\pi-\delta^{\frac{1}{9}}}^{\pi+\delta^{\frac{1}{9}}}
\frac{v_{2,  \delta}(\bm{x};\kappa_1)(f(\bm{y}),v_{2,\delta}(\bm{y};\kappa_1))_{\Omega}}{\delta\cdot h-\sqrt{\frac{1}{4}\gamma_*^2(\kappa_1-\pi)^4+t_*^2\delta^2}}\frac{w_2^{(k)}(\kappa_1;\delta)(\kappa_1-\pi)^{k}}{1+f_*^2(\kappa_1;\delta)}d\kappa_1,
\end{aligned}
\end{equation*}
\normalsize
for $k=1,2,3$, and
\footnotesize
\begin{equation*}
\begin{aligned}
&I_{4}=\frac{1}{2\pi}\int_{\pi-\delta^{\frac{1}{9}}}^{\pi+\delta^{\frac{1}{9}}}
\frac{v_{1,  \delta}(\bm{x};\kappa_1)(f(\bm{y}),v_{1,\delta}(\bm{y};\kappa_1))_{\Omega}}{\delta\cdot h+\sqrt{\frac{1}{4}\gamma_*^2(\kappa_1-\pi)^4+t_*^2\delta^2}}\frac{\mathcal{O}(|\kappa_1-\pi|^4+\delta)}{1+f_*^2(\kappa_1;\delta)}d\kappa_1  \\
&\quad +\frac{1}{2\pi}\int_{\pi-\delta^{\frac{1}{9}}}^{\pi+\delta^{\frac{1}{9}}}
\frac{v_{2,  \delta}(\bm{x};\kappa_1)(f(\bm{y}),v_{2,\delta}(\bm{y};\kappa_1))_{\Omega}}{\delta\cdot h-\sqrt{\frac{1}{4}\gamma_*^2(\kappa_1-\pi)^4+t_*^2\delta^2}}\frac{\mathcal{O}(|\kappa_1-\pi|^4+\delta)}{1+f_*^2(\kappa_1;\delta)}d\kappa_1 .
\end{aligned}
\end{equation*}
\normalsize
In Section 8.4.1, we estimate $I_0$, the main part of $\mathcal{G}^{\delta,prop}(\lambda_*+\delta\cdot h)f$. We show that it contributes to most of the major terms in \eqref{eq_app_C_2}; see \eqref{eq_app_C_I0_summary}. In Section 8.4.2, we estimate $I_1, I_2, I_3$ and $I_4$. We show that they only contribute to two major terms, as indicated in \eqref{eq_app_C_I134_summary} and \eqref{eq_app_C_I2_summary}. Then the proof of \eqref{eq_app_C_2} is completed by combining \eqref{eq_app_C_I0_summary}, \eqref{eq_app_C_I134_summary} and \eqref{eq_app_C_I2_summary}.

\subsubsection{Estimate of $I_0$}
In this section, we prove that
\footnotesize
\begin{equation} \label{eq_app_C_I0_summary}
\begin{aligned}
I_0&=
\sum_{n=1,2}\Big[\delta^{-\frac{1}{2}}k_{n}(h)
+o(\delta^{-\frac{1}{2}})\Big]v^{(0)}_{n}(\bm{x};\pi) \big(f(\bm{y}),v^{(0)}_{n}(\bm{y};\pi)\big)_{\Omega} \\
&\quad +\Big[\delta^{-\frac{1}{2}}k_{3}(h)+o(\delta^{-\frac{1}{2}}) \Big]\Big(v^{(0)}_{1}(\bm{x};\pi) \big(f(\bm{y}),v^{(0)}_{2}(\bm{y};\pi)\big)_{\Omega}+v^{(0)}_{2}(\bm{x};\pi) \big(f(\bm{y}),v^{(0)}_{1}(\bm{y};\pi)\big)_{\Omega} \Big) \\
&\quad+ \frac{2\delta^{\frac{1}{9}}}{\pi\gamma_*}\sum_{j+\ell =2}v_1^{(j)}(\bm{x};\pi)(f(\cdot),v_1^{(\ell)}(\bm{x};\pi))_{\Omega}
+\frac{-2\delta^{\frac{1}{9}}}{\pi\gamma_*}\sum_{j+\ell =2}v_2^{(j)}(\bm{x};\pi)(f(\cdot),v_2^{(\ell)}(\bm{x};\pi))_{\Omega} \\
&\quad+ \sum_{n=1,2}\Big[\delta^{\frac{1}{2}}e_{n}(h)+o(\delta^{\frac{1}{2}}) \Big]v^{(1)}_{n}(\bm{x};\pi)\cdot \big(f(\bm{y}),v^{(1)}_{n}(\bm{y};\pi)\big)_{\Omega} \\
&\quad +\Big[\delta^{\frac{1}{2}}e_{3}(h)+o(\delta^{\frac{1}{2}}) \Big]\Big(v_{1}^{(1)}(\bm{x};\pi)\cdot \big(f(\bm{y}),v_{2}^{(1)}(\bm{y};\pi)\big)_{\Omega}
+v_{2}^{(1)}(\bm{x};\pi)\cdot \big(f(\bm{y}),v_{1}^{(1)}(\bm{y};\pi)\big)_{\Omega} \Big) \\
&\quad +\frac{2}{3\pi\gamma_*}\delta^{\frac{1}{3}}\sum_{j+\ell=4}v^{(j)}_{1}(\bm{x};\pi)\cdot \big(f(\bm{y}),v^{(\ell)}_{1}(\bm{y};\pi)\big)_{\Omega}  -\frac{2}{3\pi\gamma_*}\delta^{\frac{1}{3}}\sum_{j+\ell=4}v^{(j)}_{2}(\bm{x};\pi)\cdot \big(f(\bm{y}),v^{(\ell)}_{2}(\bm{y};\pi)\big)_{\Omega} \\
&\quad +\sum_{\substack{n=1,2 \\ k=1,2,3,4}}v^{(0)}_n(\bm{x};\pi)(f(\bm{y}),f^{prop,(k)}_{n,\delta}(\bm{y}))_{\Omega}+g^{prop,(k)}_{n,\delta}(\bm{x})(f(\bm{y}),v_n^{(0)}(\bm{y};\pi))_{\Omega} +\mathcal{O}(\delta^{\frac{5}{9}})
\end{aligned}
\end{equation}
\normalsize
with $\|f^{prop,(k)}_{n,\delta}\|,\|g^{prop,(k)}_{n,\delta}\|=\mathcal{O}(\delta^{\frac{1}{9}})$ for $1\leq k\leq 4$. Here $v_i^{(k)}$ are the same as in \eqref{eq_ansatz_unperturb_eigenfunction_1} and \eqref{eq_ansatz_unperturb_eigenfunction_2}, $v_i^{(0)}=v_i(\bm{x};\pi)$, $v_i^{(1)}=(\partial_{\kappa_1}v_i)(\bm{x};\pi)$. This is achieved by further decomposing $I_0$ into several parts and estimating each. 

By the asymptotic expansion \eqref{eq_asymp_eigenfunction_1}, 
\begin{equation} \label{eq_app_C_split_numerator}
\begin{aligned}
v_{1,\delta}(\bm{x};\kappa_1)&=\sum_{k=0}^{5}(\kappa_1-\pi)^{k}v_{1}^{(k)}(\bm{x};\pi)+f_*(\kappa_1;\delta)\sum_{k=0}^{1}(\kappa_1-\pi)^{k}v_{1,t}^{(k)}(\bm{x};\kappa_1)+R_1(\bm{x};\kappa_1,\delta), \\
v_{2,\delta}(\bm{x};\kappa_1)&=\sum_{k=0}^{5}(\kappa_1-\pi)^{k}v_{2}^{(k)}(\bm{x};\pi)+f_*(\kappa_1;\delta)\sum_{k=0}^{1}(\kappa_1-\pi)^{k}v_{2,t}^{(k)}(\bm{x};\kappa_1)+R_2(\bm{x};\kappa_1,\delta),
\end{aligned}
\end{equation}
where 
\begin{equation*}
\begin{aligned}
v_{1,t}^{(0)}(\bm{x};\kappa_1)&=\big(1+ r(\kappa_1-\pi;\delta)\cdot (\kappa_1-\pi)\big) v_2(\bm{x};\pi),\quad v_{1,t}^{(1)}(\bm{x};\kappa_1)=\partial_{\kappa_1}v_2(\bm{x};\pi), \\
v_{2,t}^{(0)}(\bm{x};\kappa_1)&=-\big(1+ \overline{r(\kappa_1-\pi;\delta)}\cdot (\kappa_1-\pi)\big) v_1(\bm{x};\pi),\quad v_{2,t}^{(1)}(\bm{x};\kappa_1)=-\partial_{\kappa_1}v_1(\bm{x};\pi),
\end{aligned}
\end{equation*}
are the phase transition terms induced by the perturbation. Note that the leading terms in $v_{i,t}^{(0)}$ are $\kappa_1-$independent, and $v_{i,t}^{(1)}$ are all $\kappa_1-$independent. We denote $v_{i,t}^{(k)}=0$ for $k\geq 2$ and $i=1,2$ to align the notations. With \eqref{eq_app_C_split_numerator}, we further decompose $I_0$ by arranging the numerator in \eqref{eq_app_C_16} in ascending order. Specifically,
\begin{equation*}
    I_0=\sum_{k=0}^{7}I_{0}^{(k)},
\end{equation*}
with
\footnotesize
\begin{equation*}
\begin{aligned}
I_{0}^{(k)}&:=\frac{1}{2\pi}\int_{\pi-\delta^{\frac{1}{9}}}^{\pi+\delta^{\frac{1}{9}}}
\frac{\sum_{j+\ell=k}\Big(v^{(j)}_{1}(\bm{x};\pi)\cdot \big(f(\bm{y}),v^{(\ell)}_{1}(\bm{y};\pi)\big)_{\Omega}
+f^2_{*}(\kappa_1)v^{(j)}_{1,t}(\bm{x};\kappa_1)\cdot \big(f(\bm{y}),v^{(\ell)}_{1,t}(\bm{y};\kappa_1)\big)_{\Omega}\Big)}{\delta\cdot h+\sqrt{\frac{1}{4}\gamma_*^2(\kappa_1-\pi)^4+t_*^2\delta^2}}\frac{(\kappa_1-\pi)^{k}d\kappa_1}{1+f^2_{*}(\kappa_1)}, \\
&\quad +\frac{1}{2\pi}\int_{\pi-\delta^{\frac{1}{9}}}^{\pi+\delta^{\frac{1}{9}}}
\frac{\sum_{j+\ell=k}\Big(v^{(j)}_{2}(\bm{x};\pi)\cdot \big(f(\bm{y}),v^{(\ell)}_{2}(\bm{y};\pi)\big)_{\Omega}
+f^2_{*}(\kappa_1)v^{(j)}_{2,t}(\bm{x};\kappa_1)\cdot \big(f(\bm{y}),v^{(\ell)}_{2,t}(\bm{y};\kappa_1)\big)_{\Omega}\Big)}{\delta\cdot h-\sqrt{\frac{1}{4}\gamma_*^2(\kappa_1-\pi)^4+t_*^2\delta^2}}\frac{(\kappa_1-\pi)^{k}d\kappa_1}{1+f^2_{*}(\kappa_1)} \\
&\quad +\frac{1}{2\pi}\int_{\pi-\delta^{\frac{1}{9}}}^{\pi+\delta^{\frac{1}{9}}}
\frac{\sum_{j+\ell=k}\Big(v_{1}^{(j)}(\bm{x};\pi)\cdot \big(f(\bm{y}),v_{1,t}^{(\ell)}(\bm{y};\kappa_1)\big)_{\Omega}
+v_{1,t}^{(j)}(\bm{x};\kappa_1)\cdot \big(f(\bm{y}),v_{1}^{(\ell)}(\bm{y};\pi)\big)_{\Omega}\Big)}{\delta\cdot h+\sqrt{\frac{1}{4}\gamma_*^2(\kappa_1-\pi)^4+t_*^2\delta^2}}\frac{(\kappa_1-\pi)^{k}f_{*}(\kappa_1)d\kappa_1}{1+f^2_{*}(\kappa_1)}  \\
&\quad +\frac{1}{2\pi}\int_{\pi-\delta^{\frac{1}{9}}}^{\pi+\delta^{\frac{1}{9}}}
\frac{\sum_{j+\ell=k}\Big(v_{2}^{(j)}(\bm{x};\pi)\cdot \big(f(\bm{y}),v_{2,t}^{(\ell)}(\bm{y};\kappa_1)\big)_{\Omega}
+v_{2,t}^{(j)}(\bm{x};\kappa_1)\cdot \big(f(\bm{y}),v_{2}^{(\ell)}(\bm{y};\kappa_1)\big)_{\Omega}\Big)}{\delta\cdot h-\sqrt{\frac{1}{4}\gamma_*^2(\kappa_1-\pi)^4+t_*^2\delta^2}}\frac{(\kappa_1-\pi)^{k}f_{*}(\kappa_1)d\kappa_1}{1+f^2_{*}(\kappa_1)}
\end{aligned}
\end{equation*}
\normalsize
for $k=0,1,2,3,4,5,6$, and
\footnotesize
\begin{equation*}
\begin{aligned}
I_0^{(7)}&:=\frac{1}{2\pi}\int_{\pi-\delta^{\frac{1}{9}}}^{\pi+\delta^{\frac{1}{9}}}
\frac{v_{1,\delta}(\bm{x};\kappa_1)\cdot \big(f(\bm{y}),R_{1}(\bm{y};\kappa_1,\delta)\big)_{\Omega}
+R_{1}(\bm{x};\kappa_1,\delta)\cdot \big(f(\bm{y}),v_{1,\delta}(\bm{y};\kappa_1)\big)_{\Omega}}{\delta\cdot h+\sqrt{\frac{1}{4}\gamma_*^2(\kappa_1-\pi)^4+t_*^2\delta^2}}\frac{d\kappa_1}{1+f^2_{*}(\kappa_1)} \\
&\quad +\frac{1}{2\pi}\int_{\pi-\delta^{\frac{1}{9}}}^{\pi+\delta^{\frac{1}{9}}}
\frac{v_{2,\delta}(\bm{x};\kappa_1)\cdot \big(f(\bm{y}),R_{2}(\bm{y};\kappa_1,\delta)\big)_{\Omega}
+R_{2}(\bm{x};\kappa_1,\delta)\cdot \big(f(\bm{y}),v_{2,\delta}(\bm{y};\kappa_1)\big)_{\Omega}}{\delta\cdot h-\sqrt{\frac{1}{4}\gamma_*^2(\kappa_1-\pi)^4+t_*^2\delta^2}}\frac{d\kappa_1}{1+f^2_{*}(\kappa_1)}.
\end{aligned}
\end{equation*}
\normalsize

In the following, we calculate the asymptotics of $I_0^{(k)}$ in Step 1 to 4, respectively.

Before we calculate, we observe that the expressions of $I_0^{(k)}$ have a common structure. Specifically, each integral within $I_0^{(k)}$ can be expressed as the product of a vector component and a scalar integral in $\kappa_1$. The vector component, such as 
$v^{(0)}_{1}(\bm{x};\pi)\cdot \big(f(\bm{y}),v^{(0)}_{1}(\bm{x};\pi)\big)_{\Omega}$ is $\kappa_1$-independent. Terms involving  $v^{(k)}_{i,t}(\cdot;\kappa_1)$ maintain this form, modulo smaller terms.  The scalar integral in $\kappa_1$ of the following form
\begin{equation} \label{eq_center_scalar_integral}
\frac{1}{2\pi}\int_{\pi-\delta^{\frac{1}{9}}}^{\pi+\delta^{\frac{1}{9}}}
\frac{(\kappa_1-\pi)^{k}f^{\ell}_{*}(\kappa_1;\delta)}{\delta\cdot h\pm\sqrt{\frac{1}{4}\gamma_*^2(\kappa_1-\pi)^4+t_*^2\delta^2}}\frac{d\kappa_1}{1+f^2_{*}(\kappa_1;\delta)},
\end{equation}
where $k=0,2,4,6$ and $\ell=0,1,2$. It's important to estimate these elementary integrals to obtain the asymptotics of $I_{0}^{(k)}$. We have the following results whose proofs are supplemented in Appendix B.

\begin{lemma} \label{lem_app_C_scalar_integrals_property}
Let $h\in \overline{\mathcal{J}}$. 
\begin{enumerate}
\item 
The functions $f_{*}(\kappa_1;\delta)$ and $r(\kappa_1-\pi;\delta)$ (see \eqref{eq_sec4_28}) are uniformly bounded for all $\delta> 0$ and $\kappa_1\in\mathbf{R}$. Moreover, $f_{*}(\sqrt{\delta}p+\pi;\delta)=\mathcal{O}(p^{-2})$, $r(\sqrt{\delta}p;\delta)=\overline{r}_*+\mathcal{O}(p^{-4})$ as $p\to\infty$.

\item
When $k,\ell \in \mathbf{N}$, \eqref{eq_center_scalar_integral} has the asymptotics
\footnotesize
\begin{equation} \label{eq_app_C_scalar_integrals_property_1}
\frac{1}{2\pi}\int_{\pi-\delta^{\frac{1}{9}}}^{\pi+\delta^{\frac{1}{9}}}
\frac{(\kappa_1-\pi)^{k}f^{\ell}_{*}(\kappa_1;\delta)}{\delta\cdot h\pm\sqrt{\frac{1}{4}\gamma_*^2(\kappa_1-\pi)^4+t_*^2\delta^2}}\frac{d\kappa_1}{1+f^2_{*}(\kappa_1;\delta)}
=\left\{
\begin{aligned}
&\mathcal{O}(\delta^{\frac{k-1}{2}}),\quad \text{for $k-2\ell\leq 0$}, \\
&\mathcal{O}(\delta^{\frac{k-1}{2}}\log\delta),\quad \text{for $k-2\ell=1$}, \\
&\mathcal{O}(\delta^{\frac{k-1}{9}+\frac{7\ell}{9}}),\quad \text{for $k-2\ell\geq 2$}. \\
\end{aligned}
\right.
\end{equation}
\normalsize
\item
\footnotesize
\begin{equation} \label{eq_app_C_scalar_integrals_property_2}
\begin{aligned}
&\frac{1}{2\pi}\int_{\pi-\delta^{\frac{1}{9}}}^{\pi+\delta^{\frac{1}{9}}}
\Big(\frac{1}{\delta\cdot h\pm\sqrt{\frac{1}{4}\gamma_*^2(\kappa_1-\pi)^4+t_*^2\delta^2}}+\frac{f^2_{*}(\kappa_1;\delta)}{\delta\cdot h\mp\sqrt{\frac{1}{4}\gamma_*^2(\kappa_1-\pi)^4+t_*^2\delta^2}} \Big)\frac{1}{1+f^2_{*}(\kappa_1;\delta)}d\kappa_1 \\
&=-\frac{\delta^{-\frac{1}{2}}}{2\gamma_*}\frac{\frac{h}{\gamma_*}\mp\sqrt{(\frac{t_*}{\gamma_*})^2-(\frac{h}{\gamma_*})^2}}{((\frac{t_*}{\gamma_*})^2-(\frac{h}{\gamma_*})^2)^{\frac{3}{4}}}+o(\delta^{-\frac{1}{2}}).
\end{aligned}
\end{equation}
\item 
\begin{equation} \label{eq_app_C_scalar_integrals_property_3}
\begin{aligned}
&\frac{1}{2\pi}\int_{\pi-\delta^{\frac{1}{9}}}^{\pi+\delta^{\frac{1}{9}}}
\Big(\frac{f_{*}(\kappa_1;\delta)}{\delta\cdot h+\sqrt{\frac{1}{4}\gamma_*^2(\kappa_1-\pi)^4+t_*^2\delta^2}}+\frac{-f_{*}(\kappa_1;\delta)}{\delta\cdot h-\sqrt{\frac{1}{4}\gamma_*^2(\kappa_1-\pi)^4+t_*^2\delta^2}} \Big)\frac{1}{1+f^2_{*}(\kappa_1;\delta)}d\kappa_1 \\
&=\frac{\delta^{-\frac{1}{2}}}{2\gamma_*}
\frac{\frac{t_*}{\gamma_*}}{((\frac{t_*}{\gamma_*})^2-(\frac{h}{\gamma_*})^2)^{\frac{3}{4}}}+o(\delta^{-\frac{1}{2}}).
\end{aligned}
\end{equation}
\normalsize
\item
\footnotesize
\begin{equation} \label{eq_app_C_scalar_integrals_property_6}
\begin{aligned}
\frac{1}{2\pi}\int_{\pi-\delta^{\frac{1}{9}}}^{\pi+\delta^{\frac{1}{9}}}
\frac{1}{\delta\cdot h\pm\sqrt{\frac{1}{4}\gamma_*^2(\kappa_1-\pi)^4+t_*^2\delta^2}}\frac{(\kappa_1-\pi)^2}{1+f^2_{*}(\kappa_1;\delta)}d\kappa_1 
=\pm\frac{2}{\pi\gamma_*}\delta^{\frac{1}{9}}+\mathcal{O}(\delta^{\frac{1}{2}}).
\end{aligned}
\end{equation}
\item
\begin{equation} \label{eq_app_C_scalar_integrals_property_7}
\begin{aligned}
\frac{1}{2\pi}\int_{\pi-\delta^{\frac{1}{9}}}^{\pi+\delta^{\frac{1}{9}}}
\frac{1}{\delta\cdot h\pm\sqrt{\frac{1}{4}\gamma_*^2(\kappa_1-\pi)^4+t_*^2\delta^2}}\frac{(\kappa_1-\pi)^4}{1+f^2_{*}(\kappa_1;\delta)}d\kappa_1 
=\pm\frac{2}{3\pi\gamma_*}\delta^{\frac{1}{3}}+\mathcal{O}(\delta^{\frac{10}{9}}).
\end{aligned}
\end{equation}
\item
\begin{equation} \label{eq_app_C_scalar_integrals_property_4}
\begin{aligned}
&\frac{1}{2\pi}\int_{\pi-\delta^{\frac{1}{9}}}^{\pi+\delta^{\frac{1}{9}}}
\Big(\frac{1}{\delta\cdot h\pm\sqrt{\frac{1}{4}\gamma_*^2(\kappa_1-\pi)^4+t_*^2\delta^2}}+\frac{f^2_{*}(\kappa_1;\delta)}{\delta\cdot h\mp\sqrt{\frac{1}{4}\gamma_*^2(\kappa_1-\pi)^4+t_*^2\delta^2}} \Big)\frac{(\kappa_1-\pi)^2}{1+f^2_{*}(\kappa_1;\delta)}d\kappa_1 \\
&=\pm\frac{2}{\pi\gamma_*}\delta^{\frac{1}{9}}
-\frac{\delta^{\frac{1}{2}}}{\gamma_*}\frac{\frac{h}{\gamma_*}\pm\sqrt{(\frac{t_*}{\gamma_*})^2-(\frac{h}{\gamma_*})^2}}{((\frac{t_*}{\gamma_*})^2-(\frac{h}{\gamma_*})^2)^{\frac{1}{4}}}+o(\delta^{\frac{1}{2}}).
\end{aligned}
\end{equation}
\item
\begin{equation} \label{eq_app_C_scalar_integrals_property_5}
\begin{aligned}
&\frac{1}{2\pi}\int_{\pi-\delta^{\frac{1}{9}}}^{\pi+\delta^{\frac{1}{9}}}
\Big(\frac{f_{*}(\kappa_1;\delta)}{\delta\cdot h+\sqrt{\frac{1}{4}\gamma_*^2(\kappa_1-\pi)^4+t_*^2\delta^2}}+\frac{-f_{*}(\kappa_1;\delta)}{\delta\cdot h-\sqrt{\frac{1}{4}\gamma_*^2(\kappa_1-\pi)^4+t_*^2\delta^2}} \Big)\frac{(\kappa_1-\pi)^2}{1+f^2_{*}(\kappa_1;\delta)}d\kappa_1 \\
&=\frac{\delta^{\frac{1}{2}}}{\gamma_*}
\frac{\frac{t_*}{\gamma_*}}{((\frac{t_*}{\gamma_*})^2-(\frac{h}{\gamma_*})^2)^{\frac{1}{4}}}+o(\delta^{\frac{1}{2}}).
\end{aligned}
\end{equation}
\end{enumerate}
\normalsize
\end{lemma}
With these preparations, we are ready to estimate $I_0^{(k)}$ for $0\leq k \leq 7$. 

{\color{blue}Step 1: Estimate of $I_{0}^{(0)}$}. Note that in the contributions to $I_{0}^{(0)}$ from the $v_{i,t}^{(0)}$-terms, the integrals that involve $r(\kappa_1-\pi;\delta)\cdot(\kappa_1-\pi)$ are integrals of odd functions in $\kappa_1-\pi$, and hence equal to zero. Therefore
\footnotesize
\begin{equation*}
\begin{aligned}
I_0^{(0)}&=\frac{1}{2\pi}\int_{\pi-\delta^{\frac{1}{9}}}^{\pi+\delta^{\frac{1}{9}}}
\frac{v^{(0)}_{1}(\bm{x};\pi)\cdot \big(f(\bm{y}),v^{(0)}_{1}(\bm{x};\pi)\big)_{\Omega}
+f^2_{*}(\kappa_1)v^{(0)}_{2}(\bm{x};\pi)\cdot \big(f(\bm{y}),v^{(0)}_{2}(\bm{y};\pi)\big)_{\Omega}}{\delta\cdot h+\sqrt{\frac{1}{4}\gamma_*^2(\kappa_1-\pi)^4+t_*^2\delta^2}}\frac{d\kappa_1}{1+f^2_{*}(\kappa_1)}  \\
&\quad +\frac{1}{2\pi}\int_{\pi-\delta^{\frac{1}{9}}}^{\pi+\delta^{\frac{1}{9}}}
\frac{v^{(0)}_{2}(\bm{x};\pi)\cdot \big(f(\bm{y}),v^{(0)}_{2}(\bm{y};\pi)\big)_{\Omega}
+f^2_{*}(\kappa_1)v^{(0)}_{1}(\bm{x};\pi)\cdot \big(f(\bm{y}),v^{(0)}_{1}(\bm{y};\pi)\big)_{\Omega}}{\delta\cdot h-\sqrt{\frac{1}{4}\gamma_*^2(\kappa_1-\pi)^4+t_*^2\delta^2}}\frac{d\kappa_1}{1+f^2_{*}(\kappa_1)} \\
&\quad +\frac{1}{2\pi}\int_{\pi-\delta^{\frac{1}{9}}}^{\pi+\delta^{\frac{1}{9}}}
\frac{v_{1}^{(0)}(\bm{x};\pi)\cdot \big(f(\bm{y}),v_{2}^{(0)}(\bm{y};\pi)\big)_{\Omega}
+v_{2}^{(0)}(\bm{x};\pi)\cdot \big(f(\bm{y}),v_{1}^{(0)}(\bm{y};\pi)\big)_{\Omega}}{\delta\cdot h+\sqrt{\frac{1}{4}\gamma_*^2(\kappa_1-\pi)^4+t_*^2\delta^2}}\frac{f_{*}(\kappa_1)d\kappa_1}{1+f^2_{*}(\kappa_1)}  \\
&\quad -\frac{1}{2\pi}\int_{\pi-\delta^{\frac{1}{9}}}^{\pi+\delta^{\frac{1}{9}}}
\frac{v_{2}^{(0)}(\bm{x};\pi)\cdot \big(f(\bm{y}),v_{1}^{(0)}(\bm{y};\pi)\big)_{\Omega}
+v_{1}^{(0)}(\bm{x};\pi)\cdot \big(f(\bm{y}),v_{2}^{(0)}(\bm{y};\pi)\big)_{\Omega}}{\delta\cdot h-\sqrt{\frac{1}{4}\gamma_*^2(\kappa_1-\pi)^4+t_*^2\delta^2}}\frac{f_{*}(\kappa_1)d\kappa_1}{1+f^2_{*}(\kappa_1)}.
\end{aligned}
\end{equation*}
\normalsize
By \eqref{eq_app_C_scalar_integrals_property_2}, we obtain
\footnotesize
\begin{equation} \label{eq_app_C_I00_summary}
\begin{aligned}
I_0^{(0)}&=
\sum_{n=1,2}\Big[\delta^{-\frac{1}{2}}k_{n}(h)
+o(\delta^{-\frac{1}{2}})\Big]v^{(0)}_{n}(\bm{x};\pi) \big(f(\bm{y}),v^{(0)}_{n}(\bm{y};\pi)\big)_{\Omega} \\
&\quad +\Big[\delta^{-\frac{1}{2}}k_{3}(h)+o(\delta^{-\frac{1}{2}}) \Big]\Big(v^{(0)}_{1}(\bm{x};\pi) \big(f(\bm{y}),v^{(0)}_{2}(\bm{x};\pi)\big)_{\Omega}+v^{(0)}_{2}(\bm{x};\pi) \big(f(\bm{y}),v^{(0)}_{1}(\bm{x};\pi)\big)_{\Omega} \Big).
\end{aligned}
\end{equation}
\normalsize
Hence $I_{0}^{(0)}$ gives the first three major terms in \eqref{eq_app_C_I0_summary}.

{\color{blue}Step 2: Estimate of $I_{0}^{(1)}$.} Compared with $I_{0}^{(0)}$, most terms in $I_{0}^{(1)}$ vanish since the integrand is odd, except for the ones contributed by the term in $v_{i,t}^{(0)}$ with coefficient $r(\kappa_1-\pi;\delta)\cdot(\kappa_1-\pi)$. Those non-trivial terms take the following form
\footnotesize
\begin{equation*}
\Bigg[\frac{1}{2\pi}\int_{\pi-\delta^{\frac{1}{9}}}^{\pi+\delta^{\frac{1}{9}}}
\frac{(\kappa_1-\pi)^{2}f_{*}^{k}(\kappa_1;\delta)r(\kappa_1-\pi;\delta)}{\delta\cdot h\pm\sqrt{\frac{1}{4}\gamma_*^2(\kappa_1-\pi)^4+t_*^2\delta^2}}\frac{d\kappa_1}{1+f^2_{*}(\kappa_1)}\Bigg]\sum_{j+\ell =1}v_{n}^{(j)}(\bm{x};\pi)(f(\cdot),v_{m}^{(\ell)}(\bm{x};\pi))_{\Omega},
\end{equation*}
\normalsize
or with $r(\kappa_1-\pi;\delta)$ replaced by $\overline{r(\kappa_1-\pi;\delta)}$. Here $k\in\{0,1,2\}$, $n,m\in\{1,2\}$. Since the functions $f_*$ and $r$ are uniformly bounded as shown in Lemma \ref{lem_app_C_scalar_integrals_property},
these scalar integrals are dominated by the one in \eqref{eq_app_C_scalar_integrals_property_1} for $k=2$ and $\ell=0$, and hence are of the order $\mathcal{O}(\delta^{\frac{1}{9}})$. Collecting all terms, $I_{0}^{(1)}$ can be written as
\begin{equation} \label{eq_app_C_I01_summary}
I_{0}^{(1)}=\sum_{n=1,2}v^{(0)}_n(\bm{x};\pi)(f(\bm{y}),f^{prop,(1)}_{n,\delta}(\bm{y}))_{\Omega}+g^{prop,(1)}_{n,\delta}(\bm{x})(f(\bm{y}),v_n^{(0)}(\bm{y};\pi))_{\Omega},
\end{equation}
with $\|f^{prop,(2)}_{n,\delta}\|,\|g^{prop,(2)}_{n,\delta}\|=\mathcal{O}(\delta^{\frac{1}{9}})$. Hence $I_{0}^{(1)}$ is incorporated in the minor terms in \eqref{eq_app_C_I0_summary}.

{\color{blue}Step 3: Estimate of $I_{0}^{(2)}$.} Similar as in $I_0^{(0)}$, since the integrals involving the $(\kappa_1-\pi)$-term in $v_{i,t}^{(0)}$ vanish and $v^{(k)}_{i,t}=0$ for $k\geq 2$, 
\[
I_0^{(2)}= \sum_{1\leq j\leq 6}I_{0,j}^{(2)},
\]
where 
\footnotesize
\begin{equation*} \label{eq_app_C_I02_six_lines}
\begin{aligned}
I_{0,1}^{(2)}&=\frac{1}{2\pi}\int_{\pi-\delta^{\frac{1}{9}}}^{\pi+\delta^{\frac{1}{9}}}
\frac{f^2_{*}(\kappa_1)v^{(1)}_{2}(\bm{x};\pi)\cdot \big(f(\bm{y}),v^{(1)}_{2}(\bm{y};\pi)\big)_{\Omega}+\sum_{j+\ell=2}v^{(j)}_{1}(\bm{x};\pi)\cdot \big(f(\bm{y}),v^{(\ell)}_{1}(\bm{y};\pi)\big)_{\Omega}}{\delta\cdot h+\sqrt{\frac{1}{4}\gamma_*^2(\kappa_1-\pi)^4+t_*^2\delta^2}}\frac{(\kappa_1-\pi)^{2}d\kappa_1}{1+f^2_{*}(\kappa_1)},  \\
I_{0,2}^{(2)}&=
\frac{1}{2\pi}\int_{\pi-\delta^{\frac{1}{9}}}^{\pi+\delta^{\frac{1}{9}}}
\frac{f^2_{*}(\kappa_1)v^{(1)}_{1}(\bm{x};\pi)\cdot \big(f(\bm{y}),v^{(1)}_{1}(\bm{y};\pi)\big)_{\Omega}+\sum_{j+\ell=2}v^{(j)}_{2}(\bm{x};\pi)\cdot \big(f(\bm{y}),v^{(\ell)}_{2}(\bm{y};\pi)\big)_{\Omega}}{\delta\cdot h-\sqrt{\frac{1}{4}\gamma_*^2(\kappa_1-\pi)^4+t_*^2\delta^2}}\frac{(\kappa_1-\pi)^{2}d\kappa_1}{1+f^2_{*}(\kappa_1)}, \\
I_{0,3}^{(2)}&=\frac{1}{2\pi}\int_{\pi-\delta^{\frac{1}{9}}}^{\pi+\delta^{\frac{1}{9}}}
\frac{v_{1}^{(2)}(\bm{x};\pi)\cdot \big(f(\bm{y}),v_{2}^{(0)}(\bm{y};\pi)\big)_{\Omega}
+v_{2}^{(0)}(\bm{x};\pi)\cdot \big(f(\bm{y}),v_{1}^{(2)}(\bm{y};\pi)\big)_{\Omega}}{\delta\cdot h+\sqrt{\frac{1}{4}\gamma_*^2(\kappa_1-\pi)^4+t_*^2\delta^2}}\frac{(\kappa_1-\pi)^{2}f_{*}(\kappa_1)d\kappa_1}{1+f^2_{*}(\kappa_1)},  \\
I_{0,4}^{(2)}&=\frac{1}{2\pi}\int_{\pi-\delta^{\frac{1}{9}}}^{\pi+\delta^{\frac{1}{9}}}
\frac{v_{1}^{(1)}(\bm{x};\pi)\cdot \big(f(\bm{y}),v_{2}^{(1)}(\bm{y};\pi)\big)_{\Omega}
+v_{2}^{(1)}(\bm{x};\pi)\cdot \big(f(\bm{y}),v_{1}^{(1)}(\bm{y};\pi)\big)_{\Omega}}{\delta\cdot h+\sqrt{\frac{1}{4}\gamma_*^2(\kappa_1-\pi)^4+t_*^2\delta^2}}\frac{(\kappa_1-\pi)^{2}f_{*}(\kappa_1)d\kappa_1}{1+f^2_{*}(\kappa_1)},  \\
I_{0,5}^{(2)}&= -\frac{1}{2\pi}\int_{\pi-\delta^{\frac{1}{9}}}^{\pi+\delta^{\frac{1}{9}}}
\frac{v_{2}^{(2)}(\bm{x};\pi)\cdot \big(f(\bm{y}),v_{1}^{(0)}(\bm{y};\pi)\big)_{\Omega}
+v_{1}^{(0)}(\bm{x};\kappa_1)\cdot \big(f(\bm{y}),v_{2}^{(2)}(\bm{y};\kappa_1)\big)_{\Omega}}{\delta\cdot h-\sqrt{\frac{1}{4}\gamma_*^2(\kappa_1-\pi)^4+t_*^2\delta^2}}\frac{(\kappa_1-\pi)^{2}f_{*}(\kappa_1)d\kappa_1}{1+f^2_{*}(\kappa_1)}, \\
I_{0,6}^{(2)}&= -\frac{1}{2\pi}\int_{\pi-\delta^{\frac{1}{9}}}^{\pi+\delta^{\frac{1}{9}}}
\frac{v_{2}^{(1)}(\bm{x};\pi)\cdot \big(f(\bm{y}),v_{1}^{(1)}(\bm{y};\pi)\big)_{\Omega}
+v_{1}^{(1)}(\bm{x};\kappa_1)\cdot \big(f(\bm{y}),v_{2}^{(1)}(\bm{y};\kappa_1)\big)_{\Omega}}{\delta\cdot h-\sqrt{\frac{1}{4}\gamma_*^2(\kappa_1-\pi)^4+t_*^2\delta^2}}\frac{(\kappa_1-\pi)^{2}f_{*}(\kappa_1)d\kappa_1}{1+f^2_{*}(\kappa_1)}.
\end{aligned}
\end{equation*}
\normalsize
By \eqref{eq_app_C_scalar_integrals_property_6} and \eqref{eq_app_C_scalar_integrals_property_4}, 
\footnotesize
\begin{equation*}
\begin{aligned}
I_{0,1}^{(2)}+ I_{0,2}^{(2)}
&=\sum_{n=1,2}\Big[\delta^{\frac{1}{2}}e_{n}(h)+o(\delta^{\frac{1}{2}}) \Big]v^{(1)}_{n}(\bm{x};\pi)\cdot \big(f(\bm{y}),v^{(1)}_{n}(\bm{y};\pi)\big)_{\Omega} \\
&\quad +\frac{2\delta^{\frac{1}{9}}}{\pi\gamma_*}\sum_{j+\ell =2}v_1^{(j)}(\bm{x};\pi)(f(\cdot),v_1^{(\ell)}(\bm{x};\pi))_{\Omega}
+\frac{-2\delta^{\frac{1}{9}}}{\pi\gamma_*}\sum_{j+\ell =2}v_2^{(j)}(\bm{x};\pi)(f(\cdot),v_2^{(\ell)}(\bm{x};\pi))_{\Omega} \\
&\quad +\sum_{n=1,2}v^{(0)}_n(\bm{x};\pi)(f(\bm{y}),f^{prop,(2)}_{n,\delta}(\bm{y}))_{\Omega}+g^{prop,(2)}_{n,\delta}(\bm{x})(f(\bm{y}),v_n^{(0)}(\bm{y};\pi))_{\Omega},
\end{aligned}
\end{equation*}
\normalsize
with $\|f^{prop,(2)}_{n,\delta}\|,\|g^{prop,(2)}_{n,\delta}\|=\mathcal{O}(\delta^{\frac{1}{2}})$. Here each $f^{prop,(1)}_{n,\delta},g^{prop,(1)}_{n,\delta}$ is the product of a vector part and an $\mathcal{O}(\delta^{\frac{1}{2}})$ scalar part, such as $f^{prop,(1)}_{1,\delta}=v_{1}^{(2)}(\bm{x};\pi)\times (\text{remainder of \eqref{eq_app_C_scalar_integrals_property_6}})$, as in Step 2. 

Similarly, by \eqref{eq_app_C_scalar_integrals_property_6}, 
\footnotesize
\begin{equation*}
\begin{aligned}
I_{0,4}^{(2)}+ I_{0,6}^{(2)}&=\Big[\delta^{\frac{1}{2}}e_{3}(h)+o(\delta^{\frac{1}{2}}) \Big]\Big(v_{1}^{(1)}(\bm{x};\pi)\cdot \big(f(\bm{y}),v_{2}^{(1)}(\bm{y};\pi)\big)_{\Omega}
+v_{2}^{(1)}(\bm{x};\pi)\cdot \big(f(\bm{y}),v_{1}^{(1)}(\bm{y};\pi)\big)_{\Omega} \Big), \\
I_{0,3}^{(2)}+I_{0,5}^{(2)}&=\sum_{n=1,2}v^{(0)}_n(\bm{x};\pi)(f(\bm{y}),f^{prop,(3)}_{n,\delta}(\bm{y}))_{\Omega}+g^{prop,(3)}_{n,\delta}(\bm{x})(f(\bm{y}),v_n^{(0)}(\bm{y};\pi))_{\Omega},
\end{aligned}
\end{equation*}
\normalsize
where $\|f^{prop,(3)}_{n,\delta}\|,\|g^{prop,(3)}_{n,\delta}\|=\mathcal{O}(\delta^{\frac{1}{2}})$. In conclusion,
\footnotesize
\begin{equation} \label{eq_app_C_I02_summary}
\begin{aligned}
I_0^{(2)}
&=\frac{2\delta^{\frac{1}{9}}}{\pi\gamma_*}\sum_{j+\ell =2}v_1^{(j)}(\bm{x};\pi)(f(\cdot),v_1^{(\ell)}(\bm{x};\pi))_{\Omega}
+\frac{-2\delta^{\frac{1}{9}}}{\pi\gamma_*}\sum_{j+\ell =2}v_2^{(j)}(\bm{x};\pi)(f(\cdot),v_2^{(\ell)}(\bm{x};\pi))_{\Omega} \\
&\quad+ \sum_{n=1,2}\Big[\delta^{\frac{1}{2}}e_{n}(h)+o(\delta^{\frac{1}{2}}) \Big]v^{(1)}_{n}(\bm{x};\pi)\cdot \big(f(\bm{y}),v^{(1)}_{n}(\bm{y};\pi)\big)_{\Omega} \\
&\quad +\Big[\delta^{\frac{1}{2}}e_{3}(h)+o(\delta^{\frac{1}{2}}) \Big]\Big(v_{1}^{(1)}(\bm{x};\pi)\cdot \big(f(\bm{y}),v_{2}^{(1)}(\bm{y};\pi)\big)_{\Omega}
+v_{2}^{(1)}(\bm{x};\pi)\cdot \big(f(\bm{y}),v_{1}^{(1)}(\bm{y};\pi)\big)_{\Omega} \Big) \\
&\quad +\sum_{n=1,2}v^{(0)}_n(\bm{x};\pi)(f(\bm{y}),f^{prop,(2)}_{n,\delta}(\bm{y})+f^{prop,(3)}_{n,\delta}(\bm{y}))_{\Omega}+\big(g^{prop,(2)}_{n,\delta}(\bm{x})+g^{prop,(3)}_{n,\delta}(\bm{x})\big)(f(\bm{y}),v_n^{(0)}(\bm{y};\pi))_{\Omega}.
\end{aligned}
\end{equation}
\normalsize
We note that all major terms in \eqref{eq_app_C_I0_summary} are accounted for in equations  \eqref{eq_app_C_I00_summary}, \eqref{eq_app_C_I01_summary} and \eqref{eq_app_C_I02_summary}. The exception comprises the terms involving $\sum_{j+\ell =4}v_n^{(j)}(\bm{x};\pi)(f(\cdot),v_n^{(\ell)}(\bm{x};\pi))_{\Omega}$, which is due to $I_{0}^{(4)}$, as shown in the next step.

{\color{blue}Step 4: Estimate of $I_{0}^{(3)}$ to $I_{0}^{(7)}$.} Following the same lines as the previous three steps, 
\footnotesize
\begin{equation} \label{eq_app_C_I0356_summary}
I_{0}^{(3)}+I_{0}^{(5)}+I_{0}^{(6)}+I_{0}^{(7)}
=\sum_{n=1,2}v^{(0)}_n(\bm{x};\pi)(f(\bm{y}),f^{prop,(4)}_{n,\delta}(\bm{y}))_{\Omega}+g^{prop,(4)}_{n,\delta}(\bm{x})(f(\bm{y}),v_n^{(0)}(\bm{y};\pi))_{\Omega} +\mathcal{O}(\delta^{\frac{5}{9}})
\end{equation}
\normalsize
with $\|f^{prop,(4)}_{n,\delta}\|,\|g^{prop,(4)}_{n,\delta}\|=\mathcal{O}(\delta^{\frac{1}{3}})$. On the other hand,  since $v^{(k)}_{i,t}=0$ vanishes for $k\geq 2$, 
\[
I_0^{(4)}= I_{0,1}^{(4)}+I_{0,2}^{(4)}+I_{0,3}^{(4)},
\]
where
\footnotesize
\begin{equation*}
\begin{aligned}
I_{0,1}^{(4)}&=\frac{1}{2\pi}\int_{\pi-\delta^{\frac{1}{9}}}^{\pi+\delta^{\frac{1}{9}}}
\Big[\frac{\sum_{j+\ell=4}v^{(j)}_{1}(\bm{x};\pi)\cdot \big(f(\bm{y}),v^{(\ell)}_{1}(\bm{y};\pi)\big)_{\Omega}}{\delta\cdot h+\sqrt{\frac{1}{4}\gamma_*^2(\kappa_1-\pi)^4+t_*^2\delta^2}} 
+\frac{\sum_{j+\ell=4}v^{(j)}_{2}(\bm{x};\pi)\cdot \big(f(\bm{y}),v^{(\ell)}_{2}(\bm{y};\pi)\big)_{\Omega}}{\delta\cdot h-\sqrt{\frac{1}{4}\gamma_*^2(\kappa_1-\pi)^4+t_*^2\delta^2}}
\Big]\frac{(\kappa_1-\pi)^{4}d\kappa_1}{1+f^2_{*}(\kappa_1)},  \\
I_{0,2}^{(4)}&=\frac{1}{2\pi}\int_{\pi-\delta^{\frac{1}{9}}}^{\pi+\delta^{\frac{1}{9}}}
\frac{\sum_{j+\ell=4}\Big(v_{1}^{(j)}(\bm{x};\pi)\cdot \big(f(\bm{y}),v_{1,t}^{(\ell)}(\bm{y};\kappa_1)\big)_{\Omega}
+v_{1,t}^{(j)}(\bm{x};\kappa_1)\cdot \big(f(\bm{y}),v_{1}^{(\ell)}(\bm{y};\pi)\big)_{\Omega}\Big)}{\delta\cdot h+\sqrt{\frac{1}{4}\gamma_*^2(\kappa_1-\pi)^4+t_*^2\delta^2}}\frac{(\kappa_1-\pi)^{4}f_{*}(\kappa_1)d\kappa_1}{1+f^2_{*}(\kappa_1)},  \\
I_{0,3}^{(4)}&=\frac{1}{2\pi}\int_{\pi-\delta^{\frac{1}{9}}}^{\pi+\delta^{\frac{1}{9}}}
\frac{\sum_{j+\ell=4}\Big(v_{2}^{(j)}(\bm{x};\pi)\cdot \big(f(\bm{y}),v_{2,t}^{(\ell)}(\bm{y};\kappa_1)\big)_{\Omega}
+v_{2,t}^{(j)}(\bm{x};\kappa_1)\cdot \big(f(\bm{y}),v_{2}^{(\ell)}(\bm{y};\kappa_1)\big)_{\Omega}\Big)}{\delta\cdot h-\sqrt{\frac{1}{4}\gamma_*^2(\kappa_1-\pi)^4+t_*^2\delta^2}}\frac{(\kappa_1-\pi)^{4}f_{*}(\kappa_1)d\kappa_1}{1+f^2_{*}(\kappa_1)}.
\end{aligned}
\end{equation*}
\normalsize
By \eqref{eq_app_C_scalar_integrals_property_1},  $I_{0,2}^{(4)} + I_{0,3}^{(4)} =\mathcal{O}(\delta^{\frac{10}{9}})$. In addition, $I_{0,2}^{(4)}$ can be estimated by \eqref{eq_app_C_scalar_integrals_property_7}. Therefore 
\footnotesize
\begin{equation} \label{eq_app_C_I04_summary}
\begin{aligned}
I_0^{(4)}=\frac{2}{3\pi\gamma_*}\delta^{\frac{1}{3}}\sum_{j+\ell=4}v^{(j)}_{1}(\bm{x};\pi)\cdot \big(f(\bm{y}),v^{(\ell)}_{1}(\bm{y};\pi)\big)_{\Omega}-\frac{2}{3\pi\gamma_*}\delta^{\frac{1}{3}}\sum_{j+\ell=4}v^{(j)}_{1}(\bm{x};\pi)\cdot \big(f(\bm{y}),v^{(\ell)}_{1}(\bm{y};\pi)\big)_{\Omega} +\mathcal{O}(\delta^{\frac{10}{9}}).
\end{aligned}
\end{equation}
\normalsize
With \eqref{eq_app_C_I00_summary} to \eqref{eq_app_C_I04_summary}, the proof of \eqref{eq_app_C_I0_summary} is complete.

\subsubsection{Estimation of $I_{k}$ for $k\geq 1$}
We proceed to estimate $I_{k}$ for $k\geq 1$ following the methodology outlined in Section 8.4.1, which involves calculating the corresponding scalar integrals. In this section, we present the results without delving into the detailed calculations.

For $k=1,3,4$, $I_{k}$ only contribute minor terms to \eqref{eq_app_C_2}. Specifically, the estimate of $I_1$ relies on the rapid decay of $w_i^{(1)}(\kappa_1;\delta)$, similar to Lemma \ref{lem_app_C_scalar_integrals_property}), and the estimates $I_3$ and $I_4$ are obtained by using \eqref{eq_app_C_scalar_integrals_property_1}. The cumulative result is
\footnotesize
\begin{equation} \label{eq_app_C_I134_summary}
I_{1}+I_{3}+I_{4}
=\sum_{n=1,2}v^{(0)}_n(\bm{x};\pi)(f(\bm{y}),f^{prop,(5)}_{n,\delta}(\bm{y}))_{\Omega}+g^{prop,(5)}_{n,\delta}(\bm{x})(f(\bm{y}),v_n^{(0)}(\bm{y};\pi))_{\Omega} +\mathcal{O}(\delta^{\frac{5}{9}}),
\end{equation}
\normalsize
with $\|f^{prop,(4)}_{n,\delta}\|,\|g^{prop,(4)}_{n,\delta}\|=\mathcal{O}(\delta^{\frac{1}{3}})$. On the other hand, $w_i^{(2)}(\kappa_1;\delta)(\kappa_1-\pi)^{2}$ are asymptotically constant in the following sense: 
\begin{equation*}
w_i^{(2)}(\sqrt{\delta}p;\delta)=\frac{2\eta_*}{\gamma_*}+N_i^{(2)}+\mathcal{O}(p^{-2})
\end{equation*}
as $p\to\infty$. This asymptotic behavior is the reason why $I_2$ contributes two major terms to \eqref{eq_app_C_2}. By Lemma \ref{lem_app_C_scalar_integrals_property}, particularly equation \eqref{eq_app_C_scalar_integrals_property_7}, we obtain
\footnotesize
\begin{equation} \label{eq_app_C_I2_summary}
\begin{aligned}
I_{2}
&=-\frac{2}{3\pi\gamma_*}\big(\frac{2\eta_*}{\gamma_*}+N_1^{(2)}\big)\delta^{\frac{1}{3}}\sum_{j+\ell =2}v_1^{(j)}(\bm{x};\pi)(f(\cdot),v_1^{(\ell)}(\bm{x};\pi))_{\Omega}   +\frac{2}{3\pi\gamma_*}\big(\frac{2\eta_*}{\gamma_*}+N_2^{(2)}\big)\delta^{\frac{1}{3}}\sum_{j+\ell =2}v_2^{(j)}(\bm{x};\pi)(f(\cdot),v_2^{(\ell)}(\bm{x};\pi))_{\Omega} \\
&\quad +\sum_{n=1,2}v^{(0)}_n(\bm{x};\pi)(f(\bm{y}),f^{prop,(6)}_{n,\delta}(\bm{y}))_{\Omega}+g^{prop,(6)}_{n,\delta}(\bm{x})(f(\bm{y}),v_n^{(0)}(\bm{y};\pi))_{\Omega} +\mathcal{O}(\delta^{\frac{5}{9}}),
\end{aligned}
\end{equation}
\normalsize
with $\|f^{prop,(6)}_{n,\delta}\|,\|g^{prop,(6)}_{n,\delta}\|=\mathcal{O}(\delta^{\frac{1}{9}})$.

\section{Proof of Theorem \ref{thm_interface_mode}}

In this section, we prove Theorem \ref{thm_interface_mode} on the existence of the interface modes at $\kappa_2=\pi$. 

\subsection{Boundary integral equation formulation of interface modes}
We start with formulating the interface eigenvalue problem in Theorem \ref{thm_interface_mode} using boundary integral equations. To this end, we introduce the following layer potential operators associated with the perturbed Green function $G^{\delta}(\bm{x},\bm{y};\lambda)$ in \eqref{eq_perturbed_Green_function_floquet_expansion}
\footnotesize
\begin{equation} \label{eq_layer_potential_operators}
\begin{aligned}
&\mathcal{S}(\lambda;G^{\delta}):\tilde{H}^{-\frac{1}{2}}(\Gamma)
\to H_{loc}^{1}(\Omega),\quad
\varphi\mapsto \int_{\Gamma}G^{  \delta}(\bm{x},\bm{y};\lambda)\varphi(\bm{y})ds(\bm{y}), \\
&\mathcal{D}(\lambda;G^{\delta}):H^{\frac{1}{2}}(\Gamma)
\to H_{loc}^{1}(\Omega),\quad
\phi\mapsto \int_{\Gamma}\frac{\partial G^{  \delta}}{\partial y_1}(\bm{x},\bm{y};\lambda)\phi(\bm{y})ds(\bm{y}), \\
&\mathcal{K}^*(\lambda;G^{\delta}):\tilde{H}^{-\frac{1}{2}}(\Gamma)
\to \tilde{H}^{-\frac{1}{2}}(\Gamma),\quad
\varphi\mapsto p.v.\int_{\Gamma}\frac{\partial G^{  \delta}}{\partial x_1}(\bm{x},\bm{y};\lambda)\varphi(\bm{y})ds(\bm{y}), \\
&\mathcal{K}(\lambda;G^{\delta}):H^{\frac{1}{2}}(\Gamma)
\to H^{\frac{1}{2}}(\Gamma),\quad
\phi\mapsto p.v.\int_{\Gamma}\frac{\partial G^{  \delta}}{\partial y_1}(\bm{x},\bm{y};\lambda)\phi(\bm{y})ds(\bm{y}), \\
&\mathcal{N}(\lambda;G^{\delta}):H^{\frac{1}{2}}(\Gamma)
\to \tilde{H}^{-\frac{1}{2}}(\Gamma),\quad
\phi\mapsto \int_{\Gamma}\frac{\partial^2 G^{  \delta}}{\partial x_1\partial y_1}(\bm{x},\bm{y};\lambda)\phi(\bm{y})ds(\bm{y}).
\end{aligned}
\end{equation}
\normalsize
We have the following jump formulas, which are analogous to those in Proposition \ref{prop_G0_jump_formula}.
\footnotesize
\begin{equation} \label{eq_layer_potential_operators_jump_1}
\lim_{t\to 0}\Big(\mathcal{S}(\lambda;G^{\delta})[\varphi]\Big)(\bm{x}+t\bm{e}_1)=\mathcal{S}(\lambda;G^{\delta})[\varphi](\bm{x}),
\end{equation}
\begin{equation} \label{eq_layer_potential_operators_jump_2}
\lim_{t\to 0}\frac{\partial}{\partial x_1}\Big(\mathcal{D}(\lambda;G^{\delta})[\phi]\Big)(\bm{x}+t\bm{e}_1)=\mathcal{N}(\lambda;G^{\delta})[\phi](\bm{x}),
\end{equation}
\begin{equation} \label{eq_layer_potential_operators_jump_3}
\lim_{t\to 0^{\pm}}\frac{\partial}{\partial x_1}\Big(\mathcal{S}(\lambda;G^{\delta})[\varphi]\Big)(\bm{x}+t\bm{e}_1)=\Big(\pm\frac{1}{2}+\mathcal{K}^*(\lambda;G^{\delta}) \Big)\varphi(\bm{x}),
\end{equation}
\begin{equation} \label{eq_layer_potential_operators_jump_4}
\lim_{t\to 0^{\pm}}\Big(\mathcal{D}(\lambda;G^{\delta})[\phi]\Big)(\bm{x}+t\bm{e}_1)=\Big(\mp\frac{1}{2}+\mathcal{K}(\lambda;G^{\delta}) \Big) \phi(\bm{x}),
\end{equation}
\normalsize
where $\bm{x}\in\Gamma$. Now we fix $\delta>0$. We construct an interface mode of \eqref{eq-interface} in the following form
\begin{equation} \label{eq_joint_solution}
u(\bm{x};\lambda)=
\left\{
\begin{aligned}
&-\mathcal{D}(\lambda;G^{\delta})[\phi](\bm{x})+\mathcal{S}(\lambda;G^{\delta})[\varphi](\bm{x}),\quad x_1>0, \\
&\mathcal{D}(\lambda;G^{-\delta})[\phi](\bm{x})-\mathcal{S}(\lambda;G^{-\delta})[\varphi](\bm{x}),\quad x_1<0.
\end{aligned}
\right.
\end{equation}
By the jump formulas, for $\bm{x}\in\Gamma$, we have
\begin{equation*}
\lim_{t\to 0^+}
\begin{pmatrix}
u(\bm{x}+t\bm{e}_1) \\
\frac{\partial u}{\partial x_1}(\bm{x}+t\bm{e}_1)
\end{pmatrix}
=
\begin{pmatrix}
\frac{1}{2}-\mathcal{K}(\lambda;G^{\delta})
& \mathcal{S}(\lambda;G^{\delta})\\
-\mathcal{N}(\lambda;G^{\delta}) &
\frac{1}{2}+\mathcal{K}^*(\lambda;G^{\delta})
\end{pmatrix}
\begin{pmatrix}
\phi \\ \varphi
\end{pmatrix},
\end{equation*}
and
\begin{equation*}
\lim_{t\to 0^-}
\begin{pmatrix}
u(\bm{x}+t\bm{e}_1) \\
\frac{\partial u}{\partial x_1}(\bm{x}+t\bm{e}_1)
\end{pmatrix}
=
\begin{pmatrix}
\frac{1}{2}+\mathcal{K}(\lambda;G^{-\delta})
& -\mathcal{S}(\lambda;G^{-\delta}) \\
\mathcal{N}(\lambda;G^{-\delta}) &
\frac{1}{2}-\mathcal{K}^*(\lambda;G^{-\delta})
\end{pmatrix}
\begin{pmatrix}
\phi \\ \varphi
\end{pmatrix}.
\end{equation*}
Therefore, \eqref{eq_joint_solution} is an interface mode if and only if
\begin{equation*}
\lim_{t\to 0^+}
\begin{pmatrix}
u(\bm{x}+t\bm{e}_1) \\
\frac{\partial u}{\partial x_1}(\bm{x}+t\bm{e}_1)
\end{pmatrix}
=\lim_{t\to 0^-}
\begin{pmatrix}
u(\bm{x}+t\bm{e}_1) \\
\frac{\partial u}{\partial x_1}(\bm{x}+t\bm{e}_1)
\end{pmatrix}.
\end{equation*}
This leads to the following system of equations
\begin{equation} \label{eq_existence_condition_1}
\Big(\mathbb{T}^{\delta}(\lambda)+\mathbb{T}^{-\delta}(\lambda)\Big)
\begin{pmatrix}
\phi \\ \varphi
\end{pmatrix}
=0,
\end{equation}
where
\begin{equation*}
\mathbb{T}^{\delta}\in
\mathcal{B}(H^{\frac{1}{2}}(\Gamma)\times
\tilde{H}^{-\frac{1}{2}}(\Gamma)),\quad
\mathbb{T}^{\delta}
\begin{pmatrix}
\phi \\ \varphi
\end{pmatrix}
:=
\begin{pmatrix}
-\mathcal{K}(\lambda;G^{\delta})
& \mathcal{S}(\lambda;G^{\delta}) \\
-\mathcal{N}(\lambda;G^{\delta}) &
\mathcal{K}^*(\lambda;G^{\delta})
\end{pmatrix}
\begin{pmatrix}
\phi \\ \varphi
\end{pmatrix}.
\end{equation*}
We point out that any interface mode can be expressed in the form \eqref{eq_joint_solution}. Hence \eqref{eq_existence_condition_1} gives an equivalent formulation of interface modes. To determine the number of interface modes once their existence is established, it is important to recognize that different potentials $(\phi,\varphi)^{T}$ can yield the same interface mode through equation \eqref{eq_joint_solution}. This ambiguity is resolved by noting that each interface mode is uniquely determined by its interface data $(u|_{\Gamma},\frac{\partial u}{\partial n}\big|_{\Gamma})^T$. By setting $(\phi,\varphi)^{T}=(u|_{\Gamma},\frac{\partial u}{\partial n}\big|_{\Gamma})^T$ and applying the jump formulas, we have the following equations which complement \eqref{eq_existence_condition_1}: 
\begin{equation} \label{eq_existence_condition_3}
\Big(\frac{1}{2}-\mathbb{T}^{\delta}(\lambda)\Big)
\begin{pmatrix}
\phi \\ \varphi
\end{pmatrix}
= 0,\quad
\Big(\frac{1}{2}+\mathbb{T}^{-\delta}(\lambda)\Big)
\begin{pmatrix}
\phi \\ \varphi
\end{pmatrix}
= 0.
\end{equation}
We will utilize these equations to precisely determine the number of interface modes after establishing their existence.

We next introduce the following projections on $H^{\frac{1}{2}}(\Gamma)\times
\tilde{H}^{-\frac{1}{2}}(\Gamma)$
\begin{equation*}
\begin{aligned}
\Pi_1:
\begin{pmatrix}
\phi \\ \varphi
\end{pmatrix}
\mapsto\,\,
\frac{\int_{\Gamma}\phi(\cdot)\overline{\frac{\partial v_1}{\partial x_1}(\cdot;\pi)}}{i\gamma_*/2}
\begin{pmatrix}
\partial_{\kappa_1}v_1(\bm{x};\pi) \\ 0
\end{pmatrix}
+\frac{\int_{\Gamma}\varphi(\cdot)\overline{v_2(\cdot;\pi)}}{i\gamma_*/2}
\begin{pmatrix}
0 \\ \frac{\partial}{\partial x_1}(\partial_{\kappa_1}v_2)(\bm{x};\pi)
\end{pmatrix},
\end{aligned}
\end{equation*}
\begin{equation*}
\begin{aligned}
\Pi_2:
\begin{pmatrix}
\phi \\ \varphi
\end{pmatrix}
\mapsto\,\,
&\frac{\int_{\Gamma}\varphi(\cdot)\overline{(\partial_{\kappa_1}v_1)(\cdot;\pi)}}{-i\gamma_*/2}
\begin{pmatrix}
0 \\ \frac{\partial v_1}{\partial x_1}(\bm{x};\pi)
\end{pmatrix}
+\frac{\int_{\Gamma}\phi(\cdot)\overline{\frac{\partial}{\partial x_1}(\partial_{\kappa_1}v_2)(\cdot;\pi)}}{-i\gamma_*/2}
\begin{pmatrix}
v_2(\bm{x};\pi) \\ 0
\end{pmatrix},
\end{aligned}
\end{equation*}
and
\[
\mathbb{Q}=\mathbb{I}_{2\times 2}-\Pi_1-\Pi_2.
\]
Here the Bloch modes $v_n(\bm{x};\kappa_1)$ are constructed in Theorem \ref{thm_gap_open}. By Proposition \ref{prop_12flux_alternative}, $\Pi_1$ and $\Pi_2$ are orthogonal, i.e. 
\begin{equation*}
\Pi_1\cdot \Pi_2
=\Pi_2\cdot \Pi_1=0.
\end{equation*}
This results in the following orthogonal decomposition
\begin{equation}
\label{eq_interface_decomposition}
H^{\frac{1}{2}}(\Gamma)\times
\tilde{H}^{-\frac{1}{2}}(\Gamma)=
X\oplus Y_1\oplus Y_2,\quad
X=\text{Ran} (\mathbb{Q}),\,
Y_1=\text{Ran}(\Pi_1) ,\,
Y_2=\text{Ran}(\Pi_2),
\end{equation}
which will be used for analyzing the behavior of the integral operator $\mathbb{T}^{\delta}$ on each subspace of $H^{\frac{1}{2}}(\Gamma)\times
\tilde{H}^{-\frac{1}{2}}(\Gamma)$. This orthogonal decomposition provides a perfect framework to study the behavior of the integral operator $\mathbb{T}^{\delta}$ on each subspace of $H^{\frac{1}{2}}(\Gamma)\times
\tilde{H}^{-\frac{1}{2}}(\Gamma)$. That's the most important technical reason we devote great efforts in Section 6 to study the energy flux carried by the Bloch modes and their momentum-derivatives.

Using the decomposition \eqref{eq_interface_decomposition}, $(\phi,\varphi)$ solves \eqref{eq_existence_condition_1} if and only if
\footnotesize
\begin{equation} \label{eq_matrix_equation_1}
\begin{pmatrix}
\mathbb{Q}\Big(\mathbb{T}^{\delta}(\lambda)+\mathbb{T}^{-\delta}(\lambda)\Big)\mathbb{Q} &
\mathbb{Q}\Big(\mathbb{T}^{\delta}(\lambda)+\mathbb{T}^{-\delta}(\lambda)\Big)\Pi_1 &
\mathbb{Q}\Big(\mathbb{T}^{\delta}(\lambda)+\mathbb{T}^{-\delta}(\lambda)\Big)\Pi_2 \\
\Pi_2\Big(\mathbb{T}^{\delta}(\lambda)+\mathbb{T}^{-\delta}(\lambda)\Big)\mathbb{Q} &
\Pi_2\Big(\mathbb{T}^{\delta}(\lambda)+\mathbb{T}^{-\delta}(\lambda)\Big)\Pi_1 &
\Pi_2\Big(\mathbb{T}^{\delta}(\lambda)+\mathbb{T}^{-\delta}(\lambda)\Big)\Pi_2 \\
\Pi_1\Big(\mathbb{T}^{\delta}(\lambda)+\mathbb{T}^{-\delta}(\lambda)\Big)\mathbb{Q} &
\Pi_1\Big(\mathbb{T}^{\delta}(\lambda)+\mathbb{T}^{-\delta}(\lambda)\Big)\Pi_1 &
\Pi_1\Big(\mathbb{T}^{\delta}(\lambda)+\mathbb{T}^{-\delta}(\lambda)\Big)\Pi_2
\end{pmatrix}
\begin{pmatrix}
{\Psi} \\ {\Phi}^{(1)} \\ {\Phi}^{(1)}
\end{pmatrix}
=0,
\end{equation}
\normalsize
where ${\Psi}=\mathbb{Q}\begin{pmatrix}
    \phi \\ \varphi
\end{pmatrix}$, ${\Phi}^{(1)}=\Pi_1\begin{pmatrix}
    \phi \\ \varphi
\end{pmatrix}$, ${\Phi}^{(2)}=\Pi_2\begin{pmatrix}
    \phi \\ \varphi
\end{pmatrix}$. 
We observe that the matrix in \eqref{eq_matrix_equation_1} diverges as $\delta\to 0$. To obtain a uniform limit that facilitates solving \eqref{eq_matrix_equation_1}, we normalize the equation by multiplying both sides by the diagonal matrix $\begin{pmatrix}
1 & 0 & 0 \\ 0 & \delta^{\frac{1}{4}} & 0 \\
0 & 0 & \delta^{-\frac{1}{4}}
\end{pmatrix}$. This normalization yields
\begin{equation} \label{eq_matrix_equation_2}
\begin{aligned}
\Big(\mathbb{M}^{\delta}(\lambda)
+\mathbb{M}^{-\delta}(\lambda)\Big)
\begin{pmatrix}
{\Psi} \\ {\Phi}^{(1)} \\ {\Phi}^{(2)}
\end{pmatrix}
=0,
\end{aligned}
\end{equation}
where the operator $\mathbb{M}^{\delta}(\lambda)$ is defined as
\begin{equation} \label{eq_matrix_operator_def}
\mathbb{M}^{\delta}(\lambda)
:=\begin{pmatrix}
\mathbb{Q}\mathbb{T}^{\delta}(\lambda)\mathbb{Q} &
\delta^{\frac{1}{4}}\mathbb{Q}\mathbb{T}^{\delta}(\lambda)\Pi_1 &
\delta^{-\frac{1}{4}}\mathbb{Q}\mathbb{T}^{\delta}(\lambda)\Pi_2 \\
\delta^{\frac{1}{4}}\Pi_2\mathbb{T}^{\delta}(\lambda)\mathbb{Q} &
\delta^{\frac{1}{2}}\Pi_2\mathbb{T}^{\delta}(\lambda)\Pi_1 &
\Pi_2\mathbb{T}^{\delta}(\lambda)\Pi_2 \\
\delta^{-\frac{1}{4}}\Pi_1\mathbb{T}^{\delta}(\lambda)\mathbb{Q} &
\Pi_1\mathbb{T}^{\delta}(\lambda)\Pi_1 &
\delta^{-\frac{1}{2}}\Pi_1\mathbb{T}^{\delta}(\lambda)\Pi_2
\end{pmatrix}.
\end{equation}
The introduced scaling in $\mathbb{M}^{\delta}(\lambda)$ effectively balances the orders of the operators across different subspaces, ensuring a uniform limit as $\delta \to 0$, which is crucial for solving \eqref{eq_matrix_equation_1}. We will address the normalized equation \eqref{eq_matrix_equation_2} in Section 9.3. After that, we use the following normalized version of \eqref{eq_existence_condition_3} to study the precise number of interface modes
\begin{equation}
\label{eq_existence_criterion_3_scaled}
\Big(
\begin{pmatrix}
\frac{1}{2} & 0 & 0 \\
0 & 0 & \frac{1}{2} \\
0 & \frac{1}{2} & 0
\end{pmatrix}
-\mathbb{M}^{\delta}(\lambda)
\Big)
\begin{pmatrix}
{\Psi} \\ {\Phi}^{(1)} \\ {\Phi}^{(2)}
\end{pmatrix}=0,\quad
\Big(\begin{pmatrix}
\frac{1}{2} & 0 & 0 \\
0 & 0 & \frac{1}{2} \\
0 & \frac{1}{2} & 0
\end{pmatrix}
+\mathbb{M}^{-\delta}(\lambda)
\Big)
\begin{pmatrix}
{\Psi} \\ {\Phi}^{(1)} \\ {\Phi}^{(2)}
\end{pmatrix}=0.
\end{equation}

\subsection{Properties of $\mathbb{M}^{\delta}(\lambda)$}
In this subsection, we show that $\mathbb{M}^{\delta}(\lambda_*+\delta\cdot h)$ converges uniformly to a diagonal operator as $\delta\to 0$, which is crucial for solving \eqref{eq_matrix_equation_2}. By employing the orthogonal decomposition \eqref{eq_interface_decomposition} and applying distinct scalings to each subspace, we fully characterize the behavior of $\mathbb{M}^{\delta}(\lambda_*+\delta\cdot h)$ in the perturbative regime. In particular, the scaling applied to the subspaces $Y_1$ and $Y_2$ reveals the phase transition phenomena in the Bloch modes $v_1(\bm{x};\pi), v_2(\bm{x};\pi)$ and their momentum derivatives $\partial_{\kappa_1}v_1(\bm{x};\pi),\partial_{\kappa_1}v_2(\bm{x};\pi)$. This scaling effectively distinguishes these modes from other components, highlighting their unique roles.  
As a result, this comprehensive analysis provides all the necessary information to determine the interface modes that bifurcate from the quadratic degenerate point.

\begin{proposition} \label{prop_matrix_convergence}
The following convergence holds uniformly for $h\in \overline{\mathcal{J}}$ as $\delta\to 0^+$:
\begin{equation} \label{eq_matrix_con_1}
\mathbb{M}^{\pm\delta}(\lambda_*+\delta \cdot h)\overset{\|\cdot\|_{\mathcal{B}(X\oplus Y_1\oplus Y_2,X\oplus Y_2\oplus Y_1)}}{\longrightarrow}
\begin{pmatrix}
\mathbb{Q}\mathbb{T}^{0}(\lambda_*)\mathbb{Q} & 0 & 0 \\
 0 & \mathbb{E}(h) & 0 \\
0 & 0 & \mathbb{F}(h)
\end{pmatrix}
\pm\begin{pmatrix}
0 & 0 & 0 \\
 0 & \mathbb{E}^{\times}(h) & 0 \\
0 & 0 & \mathbb{F}^{\times}(h)
\end{pmatrix},
\end{equation}
where
\footnotesize
\begin{equation*}
\mathbb{T}^{0}(\lambda_*)\in
\mathcal{B}(H^{\frac{1}{2}}(\Gamma)\times
\tilde{H}^{-\frac{1}{2}}(\Gamma)),\quad
\mathbb{T}^{0}(\lambda_*)
\begin{pmatrix}
\phi \\ \varphi
\end{pmatrix}
:=
\begin{pmatrix}
0 & \mathcal{S}(\lambda_*;G_0) \\
-\mathcal{N}(\lambda_*;G_0) & 0
\end{pmatrix}
\begin{pmatrix}
\phi \\ \varphi
\end{pmatrix},
\end{equation*}
\begin{equation*}
\begin{aligned}
\mathbb{E}(h)
\begin{pmatrix}
\phi \\ \varphi
\end{pmatrix}
:=&k_{2}(h)\big(\int_{\Gamma}\varphi(\cdot)\overline{v_2(\cdot;\pi)}\big)
\begin{pmatrix}
v_2(\bm{x};\pi) \\ 0
\end{pmatrix}
-k_{1}(h)\big(\int_{\Gamma}\phi(\cdot)\overline{\frac{\partial v_1}{\partial x_1}(\cdot;\pi)}\big)
\begin{pmatrix}
0 \\ \frac{\partial v_1}{\partial x_1}(\cdot;\pi)
\end{pmatrix},
\end{aligned}
\end{equation*}

\begin{equation*}
\begin{aligned}
\mathbb{F}(h)
\begin{pmatrix}
\phi \\ \varphi
\end{pmatrix}
:=&e_{1}(h)
\big(\int_{\Gamma}\varphi(\cdot)\overline{(\partial_{\kappa_1}v_1)(\cdot;\pi)}\big)
\begin{pmatrix}
(\partial_{\kappa_1}v_1)(\bm{x};\pi) \\ 0
\end{pmatrix}
-e_{2}(h)\big(\int_{\Gamma}\phi(\cdot)\overline{\frac{\partial }{\partial x_1}(\partial_{\kappa_1}v_2)(\cdot;\pi)}\big)
\begin{pmatrix}
0 \\ \frac{\partial}{\partial x_1}(\partial_{\kappa_1}v_2)(\cdot;\pi)
\end{pmatrix}.
\end{aligned}
\end{equation*}

\begin{equation*}
\begin{aligned}
\mathbb{E}^{\times}(h)
\begin{pmatrix}
\phi \\ \varphi
\end{pmatrix}
:=&-k_{3}(h)
\Bigg[\big(\int_{\Gamma}\phi(\cdot)\overline{\frac{\partial v_1}{\partial x_1}(\cdot;\pi)}\big)
\begin{pmatrix}
v_2(\bm{x};\pi) \\ 0
\end{pmatrix} 
-\big(\int_{\Gamma}\varphi(\cdot)\overline{v_2(\cdot;\pi)}\big)
\begin{pmatrix}
0 \\ \frac{\partial v_1}{\partial x_1}(\cdot;\pi)
\end{pmatrix}\Bigg],
\end{aligned}
\end{equation*}

\begin{equation*}
\begin{aligned}
\mathbb{F}^{\times}(h)
\begin{pmatrix}
\phi \\ \varphi
\end{pmatrix}
:=-e_{3}(h)
\Bigg[\big(\int_{\Gamma}\phi(\cdot)\overline{\frac{\partial }{\partial x_1}(\partial_{\kappa_1}v_2)(\cdot;\pi)}\big)
\begin{pmatrix}
(\partial_{\kappa_1}v_1)(\bm{x};\pi) \\ 0
\end{pmatrix} 
-\big(\int_{\Gamma}\varphi(\cdot)\overline{(\partial_{\kappa_1}v_1)(\cdot;\pi)}\big)
\begin{pmatrix}
0 \\ \frac{\partial }{\partial x_1}(\partial_{\kappa_1}v_2)(\cdot;\pi)
\end{pmatrix}\Bigg],
\end{aligned}
\end{equation*}
\normalsize
\end{proposition}
\begin{proof}

The convergence of $\mathbb{M}^{\pm\delta}(\lambda_*+\delta\cdot h)$ follows from Theorem \ref{thm_asymptotics_perturbed_Green_function} and Lemma \ref{lem_boundary_convergence_to_domain} in the Appendix.  Note that the scaling in $\mathbb{M}^{\pm\delta}(\lambda_*+\delta\cdot h)$ for each subspace of $H^{\frac{1}{2}}(\Gamma)\times \tilde{H}^{-\frac{1}{2}}(\Gamma)$ is critical for the convergence. As demonstrated in Theorem \ref{thm_asymptotics_perturbed_Green_function}, this scaling effectively balances the $\delta$-order terms in the corresponding operators, thereby ensuring the uniform convergence of each element of $\mathbb{M}^{\pm\delta}(\lambda_*+\delta\cdot h)$. One can examine this convergence by noting that each major term, which we give detailed expressions in Theorem \ref{thm_asymptotics_perturbed_Green_function}, corresponds to a nonzero term in $\lim_{\delta\to 0^+}\mathbb{M}(\lambda_*+\delta\cdot h)$, while the minor terms, whose orders are estimated rather than explicitly detailed, vanish in the limit.
\end{proof}

The limiting operator $\mathbb{T}^{0}$
in \eqref{eq_matrix_con_1} has the following properties that are crucial to solve the integral equation \eqref{eq_matrix_equation_2} and subsequently to determine the interface modes.


\begin{proposition}
\label{prop_T0_operator_kernel}
$\mathbb{T}^{0}(\lambda_*)\in
\mathcal{B}(H^{\frac{1}{2}}(\Gamma)\times
\tilde{H}^{-\frac{1}{2}}(\Gamma))$ is a Fredholm operator with zero index. Moreover,
\begin{equation} \label{eq_T0_operator_kernel}
\ker\mathbb{T}^0(\lambda_*)=\text{span}
\big\{
\begin{pmatrix}
v_2(\bm{x};\pi) \\ 0
\end{pmatrix},\,
\begin{pmatrix}
\partial_{\kappa_1}v_1(\bm{x};\pi) \\ 0
\end{pmatrix},\,
\begin{pmatrix}
0 \\ \frac{\partial v_1}{\partial x_1}(\bm{x};\pi)
\end{pmatrix},\,
\begin{pmatrix}
0 \\ \frac{\partial}{\partial x_1}(\partial_{\kappa_1}v_2)(\bm{x};\pi)
\end{pmatrix}
\big\}.
\end{equation}
\end{proposition}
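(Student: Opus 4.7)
\medskip

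\noindent\textbf{Proof plan.} My plan proceeds in three stages. The first step is to exploit the block anti-diagonal structure of $\mathbb{T}^0(\lambda_*)$, which sends $(\phi,\varphi)^T$ to $(\mathcal{S}(\lambda_*;G_0)\varphi,-\mathcal{N}(\lambda_*;G_0)\phi)^T$; this reduces the Fredholm/kernel analysis to that of $\mathcal{S}(\lambda_*;G_0):\tilde H^{-1/2}(\Gamma)\to H^{1/2}(\Gamma)$ and $\mathcal{N}(\lambda_*;G_0):H^{1/2}(\Gamma)\to \tilde H^{-1/2}(\Gamma)$ separately, with $\ker\mathbb{T}^0(\lambda_*)=\ker\mathcal{N}\oplus\ker\mathcal{S}$ (first and second slot, respectively). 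I would then obtain Fredholmness with vanishing index in the standard way: since $A\equiv I$ near $\Gamma$, the singularity of $G_0(\bm{x},\bm{y};\lambda_*)$ as $\bm{y}\to\bm{x}$ coincides with that of the free-space fundamental solution of $-\Delta$, so $\mathcal{S}$ and $\mathcal{N}$ differ from the classical logarithmic single layer and the classical hypersingular operator on $\Gamma$ (which are elliptic pseudodifferential of orders $-1$ and $+1$, hence Fredholm of index zero) by compact operators with continuous kernels.

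The second step is to verify that each of the four vectors listed in \eqref{eq_T0_operator_kernel} belongs to $\ker\mathbb{T}^0(\lambda_*)$, by constructing an explicit comparison solution in each case. For the vector $(v_2(\cdot;\pi)|_\Gamma,0)^T$, I would use the candidate $w(\bm{x}):=\mathrm{sgn}(x_1)\cdot v_2(\bm{x};\pi)$: by \eqref{eq_sec3_10} this function solves the homogeneous equation on $\Omega\setminus\Gamma$, jumps by $2\,v_2(\cdot;\pi)|_\Gamma$ across $\Gamma$, and has no jump in its normal derivative. Expanding $\mathcal{D}(\lambda_*;G_0)[v_2(\cdot;\pi)|_\Gamma]$ on $\Omega^{\mathrm{right}}$ and $\Omega^{\mathrm{left}}$ via \eqref{eq_G0_decay_1}--\eqref{eq_G0_decay_2}, dropping the terms whose $y$-factor vanishes on $\Gamma$, and evaluating the surviving flux pairings via Corollary \ref{prop_12flux}, one checks that $w+2\mathcal{D}(\lambda_*;G_0)[v_2(\cdot;\pi)|_\Gamma]$ is exponentially decaying on both halves. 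Since $\lambda_*$ is not an $L^2$-eigenvalue of $\mathcal{L}^A_{\Omega,\pi}$ (its spectrum is purely absolutely continuous), this difference vanishes; differentiating in $x_1$ and invoking \eqref{eq_G0_jump_2} together with $\partial_{x_1}w|_\Gamma=0$ then yields $\mathcal{N}[v_2(\cdot;\pi)|_\Gamma]=0$. The same template --- replacing $v_2$ by $\partial_{\kappa_1}v_1$ for the second double-layer case, and using single-layer comparisons with $\mathrm{sgn}(x_1)\,v_1(\bm{x};\pi)$ and $\mathrm{sgn}(x_1)\,\partial_{\kappa_1}v_2(\bm{x};\pi)$ for the remaining two vectors --- handles the other three; the relevant flux pairings are exactly \eqref{eq_12flux_1}--\eqref{eq_12flux_4}.

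The third step proves the kernel is no larger. For $\varphi\in\ker\mathcal{S}(\lambda_*;G_0)$, restricting \eqref{eq_G0_decay_1} to $y\in\Gamma$ and using \eqref{eq_sec3_10} to kill the $\overline{v_1(y;\pi)}$ and $\overline{\partial_{\kappa_1}v_2(y;\pi)}$ terms leaves, on $\Omega^{\mathrm{right}}$,
\[
\mathcal{S}\varphi(\bm{x}) \,=\, u^+(\bm{x})+c_1\, v_1(\bm{x};\pi)+c_2\,\partial_{\kappa_1}v_2(\bm{x};\pi),
\]
with $u^+$ exponentially decaying. Because $v_1(\cdot;\pi)$ and $\partial_{\kappa_1}v_2(\cdot;\pi)$ vanish on $\Gamma$, the condition $\mathcal{S}\varphi|_\Gamma=0$ forces $u^+|_\Gamma=0$. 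I would then extend $u^+$ from $\Omega^{\mathrm{right}}$ to $\Omega$ by odd reflection across $\Gamma$, i.e., set $\tilde u(\bm{x})=-u^+(\mathcal{M}_1\bm{x})$ for $x_1<0$: since $u^+|_\Gamma=0$, both the trace and the normal derivative of the extension match across $\Gamma$, producing $\tilde u\in H^1(\Omega)$, and $[\mathcal{L}^A,\mathcal{M}_1]=0$ ensures $\tilde u$ solves the homogeneous equation on all of $\Omega$. Exponential decay of $u^+$ on $\Omega^{\mathrm{right}}$ together with the odd reflection places $\tilde u\in L^2(\Omega)$, contradicting the absence of $L^2$-eigenvalues of $\mathcal{L}^A_{\Omega,\pi}$; hence $u^+\equiv 0$, and the analogous conclusion holds for the decaying part on $\Omega^{\mathrm{left}}$. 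Applying \eqref{eq_G0_jump_3} then writes $\varphi$ as a linear combination of $\partial_{x_1}v_1(\cdot;\pi)|_\Gamma$ and $\partial_{x_1}(\partial_{\kappa_1}v_2)(\cdot;\pi)|_\Gamma$. A parallel argument based on the double-layer representation $\mathcal{D}\phi$ and even reflection (applicable since $\partial_{x_1}u^\pm|_\Gamma=0$ under the Neumann condition $\mathcal{N}\phi=0$) gives $\ker\mathcal{N}(\lambda_*;G_0)=\mathrm{span}\{v_2(\cdot;\pi)|_\Gamma,\,\partial_{\kappa_1}v_1(\cdot;\pi)|_\Gamma\}$. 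Combining with the previous step produces \eqref{eq_T0_operator_kernel}.

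The main obstacle I anticipate is the coefficient bookkeeping in the layer-potential decompositions: the success of the comparison arguments in the second step, and of matching constants in the third, hinges on correctly pairing each of $v_1,v_2,\partial_{\kappa_1}v_1,\partial_{\kappa_1}v_2$ against $\varphi$ or $\phi$ on $\Gamma$, dropping the pairings that vanish by \eqref{eq_sec3_10} and evaluating the surviving ones via the flux normalisations of Corollary \ref{prop_12flux}. Obtaining the precise factor $\tfrac{1}{2}$ in front of each retained non-decaying mode is exactly what enables the $H^1$-cancellation with the explicit comparison solutions; any slip in those flux identities would break the reduction to the purely absolutely continuous spectrum of $\mathcal{L}^A_{\Omega,\pi}$ on which the uniqueness step relies.
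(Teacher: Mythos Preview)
Your proposal is correct and complete, but it takes a genuinely different route from the paper in both the inclusion and the upper-bound steps.

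For the inclusion $\supset$ in \eqref{eq_T0_operator_kernel}, the paper does \emph{not} use absolute continuity: it applies Green's formula directly between $v_2$ (resp.\ $\partial_{\kappa_1}v_1$, $v_1$, $\partial_{\kappa_1}v_2$) and $G_0$ on the box $(0,N)\times(-\tfrac12,\tfrac12)$, passes $N\to\infty$ via \eqref{eq_G0_decay_1}, evaluates the resulting fluxes by Corollary~\ref{prop_12flux}, and then reads off $\mathcal{N}[v_2]=0$, $\mathcal{S}[\partial_{x_1}v_1]=0$, etc.\ from the jump relations. Your comparison-solution argument (building $\mathrm{sgn}(x_1)\,v_2$ etc., cancelling the non-decaying parts of the layer potential, and invoking absolute continuity to kill the remainder) reaches the same conclusion but trades a direct Green's-formula identity for a uniqueness argument; the paper's route is slightly more economical here since it avoids an appeal to absolute continuity in this step.

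For the inclusion $\subset$, the paper works on the complement $X$ of the four-dimensional span in the decomposition \eqref{eq_interface_decomposition}: given $(\phi,\varphi)\in X\cap\ker\mathbb{T}^0(\lambda_*)$, it forms the combined potential $u=\mp\mathcal{D}[\phi]\pm\mathcal{S}[\varphi]$ on $\Omega^{\mathrm{right/left}}$, observes that membership in $X$ kills all four non-decaying coefficients in \eqref{eq_G0_decay_1}--\eqref{eq_G0_decay_2} so that $u$ decays on both sides, and then checks via the jump relations that $u$ matches across $\Gamma$ with nontrivial trace, yielding an $L^2$-eigenfunction of $\mathcal{L}^A_{\Omega,\pi}$ and a contradiction with \cite{Sobolev02absolute}. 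Your route treats $\ker\mathcal{S}$ and $\ker\mathcal{N}$ separately, strips off the non-decaying parts explicitly, and then uses an odd (resp.\ even) reflection in $x_1$ --- enabled by $u^+|_\Gamma=0$ (resp.\ $\partial_{x_1}u^+|_\Gamma=0$) together with $[\mathcal{L}^A,\mathcal{M}_1]=0$ --- to manufacture the $L^2$-mode. Both arguments ultimately rest on the same spectral fact; the paper's version is more streamlined because the projection $\mathbb{Q}$ onto $X$ does the work of your reflection, while your version has the pleasant side effect of producing the explicit coefficients $c_1,c_2,d_1,d_2$ and hence the kernel vectors directly from the jump formulas.
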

\begin{proof}
Since the integral kernel of $\mathcal{S}(\lambda_*;G_0)$ and $\mathcal{N}(\lambda_*;G_0)$ is the fundamental solution $G_0(\bm{x},\bm{y};\lambda_*)$ of the elliptic operator $\mathcal{L}^{A}-\lambda_*$ in $\Omega$ by Proposition \ref{prop_G0_fundamental_solution}, the Fredholmness of $\mathbb{T}^0(\lambda_*)$ follows from Theorem 7.6 of \cite{mclean2000strongly}. See also Proposition 4.4 of \cite{qiu2023mathematical} for another proof.

We continue to prove \eqref{eq_T0_operator_kernel}. We first show that
\begin{equation}
\label{eq_app_D_1}
\begin{pmatrix}
v_2(\bm{x};\pi) \\ 0
\end{pmatrix}
\in\ker\mathbb{T}^0(\lambda_*).
\end{equation}
By applying the Green's formula between $v_2(\bm{x})$ and $G_0(\bm{x},\bm{y};\lambda_*)$ inside $(0,N)\times (-\frac{1}{2},\frac{1}{2})$, as we did in the proof of Proposition \ref{prop_12flux_alternative}, we have
\begin{equation}
\label{eq_app_D_2}
\begin{aligned}
\overline{v_2(\bm{y})}
&=\frac{i}{\gamma_*}\mathfrak{q}(v_1(\cdot;\pi),v_2(\cdot;\pi);\Gamma) \cdot \overline{\partial_{\kappa_1} v_1(\bm{y};\pi)}  +\frac{i}{\gamma_*}\mathfrak{q}(\partial_{\kappa_1} v_1(\cdot;\pi),v_2(\cdot;\pi);\Gamma) \cdot \overline{v_1(\bm{y};\pi)} \\
&\quad -\frac{i}{\gamma_*}\mathfrak{q}(v_2(\cdot;\pi),v_2(\cdot;\pi);\Gamma) \cdot \overline{\partial_{\kappa_1} v_2(\bm{y};\pi)}
-\frac{i}{\gamma_*}\mathfrak{q}(\partial_{\kappa_1} v_2(\cdot;\pi),v_2(\cdot;\pi);\Gamma) \cdot \overline{v_2(\bm{y};\pi)} \\
&\quad -\int_{\Gamma}\frac{\partial G_0}{\partial x_1}(\cdot,\bm{y};\lambda_*)\overline{v_2(\cdot;\pi)}-G_0(\cdot,\bm{y};\lambda_*)\overline{\frac{\partial v_2}{\partial x_1}(\bm{x};\pi)}.
\end{aligned}
\end{equation}
By Corollary \ref{prop_12flux}, \eqref{eq_app_D_2} becomes
\begin{equation*}
\begin{aligned}
\overline{v_2(\bm{y})}
=\frac{1}{2} \overline{v_2(\bm{y};\pi)}
-\int_{\Gamma}\frac{\partial G_0}{\partial x_1}(\cdot,\bm{y};\lambda_*)\overline{v_2(\cdot;\pi)}-G_0(\cdot,\bm{y};\lambda_*)\overline{\frac{\partial v_2}{\partial x_1}(\bm{x};\pi)}.
\end{aligned}
\end{equation*}
Taking the normal derivative for $\bm{y}\in\Gamma$, the jump formula \eqref{eq_G0_jump_2}, \eqref{eq_G0_jump_3} and the symmetry \eqref{eq_G0_argument_symmetry} show that
\begin{equation*}
\begin{aligned}
\overline{\frac{\partial v_2}{\partial y_1}(\bm{y})}
=\frac{1}{2} \overline{\frac{\partial v_2}{\partial y_1}(\bm{y})}
-\overline{\mathcal{N}(\lambda_*;G_0)[v_2(\cdot;\pi)](\bm{y})}
+\frac{1}{2} \overline{\frac{\partial v_2}{\partial y_1}(\bm{y})},
\end{aligned}
\end{equation*}
which implies $\mathcal{N}(\lambda_*;G_0)[v_2(\cdot;\pi)]=0$, and \eqref{eq_app_D_1} follows. Similarly, we can prove
\begin{equation}
\label{eq_app_D_3}
\begin{pmatrix}
\partial_{\kappa_1}v_1(\bm{x};\pi) \\ 0
\end{pmatrix}
\in\ker\mathbb{T}^0(\lambda_*).
\end{equation}
Now we show that
\begin{equation}
\label{eq_app_D_4}
\begin{pmatrix}
0 \\ \frac{\partial v_1}{\partial x_1}(\bm{x})
\end{pmatrix}
\in\ker\mathbb{T}^0(\lambda_*).
\end{equation}
By applying the Green's formula to $v_1(\bm{x})$ and $G_0(\bm{x},\bm{y};\lambda_*)$, \eqref{eq_app_D_2} becomes
\begin{equation*}
\begin{aligned}
\overline{v_1(\bm{y})}
&=\frac{i}{\gamma_*}\mathfrak{q}(v_1(\cdot;\pi),v_1(\cdot;\pi);\Gamma) \cdot \overline{\partial_{\kappa_1} v_1(\bm{y};\pi)}  +\frac{i}{\gamma_*}\mathfrak{q}(\partial_{\kappa_1} v_1(\cdot;\pi),v_1(\cdot;\pi);\Gamma) \cdot \overline{v_1(\bm{y};\pi)} \\
&\quad -\frac{i}{\gamma_*}\mathfrak{q}(v_2(\cdot;\pi),v_1(\cdot;\pi);\Gamma) \cdot \overline{\partial_{\kappa_1} v_2(\bm{y};\pi)}
-\frac{i}{\gamma_*}\mathfrak{q}(\partial_{\kappa_1} v_2(\cdot;\pi),v_1(\cdot;\pi);\Gamma) \cdot \overline{v_2(\bm{y};\pi)} \\
&\quad -\int_{\Gamma}\frac{\partial G_0}{\partial x_1}(\cdot,\bm{y};\lambda_*)\overline{v_1(\cdot;\pi)}-G_0(\cdot,\bm{y};\lambda_*)\overline{\frac{\partial v_1}{\partial x_1}(\bm{x};\pi)}.
\end{aligned}
\end{equation*}
With \eqref{prop_12flux}, we obtain
\begin{equation*}
\begin{aligned}
\overline{v_1(\bm{y})}
=\frac{1}{2} \overline{v_1(\bm{y};\pi)}
-\int_{\Gamma}\frac{\partial G_0}{\partial x_1}(\cdot,\bm{y};\lambda_*)\overline{v_1(\cdot;\pi)}-G_0(\cdot,\bm{y};\lambda_*)\overline{\frac{\partial v_1}{\partial x_1}(\bm{x};\pi)}.
\end{aligned}
\end{equation*}
By letting $\bm{y}\to\Gamma$, the jump formula \eqref{eq_G0_jump_1}, \eqref{eq_G0_jump_4} and the symmetry \eqref{eq_G0_argument_symmetry} imply that
\begin{equation*}
\begin{aligned}
\overline{v_1(\bm{y})}
=\frac{1}{2} \overline{v_1(\bm{y};\pi)}
-\big(-\frac{1}{2} \overline{v_1(\bm{y};\pi)}\big)
+\overline{\mathcal{S}(\lambda_*;G_0)[\frac{\partial v_1}{\partial x_1}(\cdot;\pi)](\bm{y})},
\end{aligned}
\end{equation*}
whence $\mathcal{S}(\lambda_*;G_0)[\frac{\partial v_1}{\partial x_1}(\cdot;\pi)]=0$, and \eqref{eq_app_D_4} follows. Similarly, we can prove
\begin{equation}
\label{eq_app_D_5}
\begin{pmatrix}
0 \\ \frac{\partial }{\partial x_1}(\partial_{\kappa_1} v_2)(\bm{x})
\end{pmatrix}
\in\ker\mathbb{T}^0(\lambda_*).
\end{equation}
Thus, from \eqref{eq_app_D_1} and \eqref{eq_app_D_3}-\eqref{eq_app_D_4}, we conclude that
\begin{equation}
\label{eq_app_D_6}
\ker\mathbb{T}^0(\lambda_*)\supset\text{span}
\big\{
\begin{pmatrix}
v_2(\bm{x};\pi) \\ 0
\end{pmatrix},\,
\begin{pmatrix}
\partial_{\kappa_1}v_1(\bm{x};\pi) \\ 0
\end{pmatrix},\,
\begin{pmatrix}
0 \\ \frac{\partial v_1}{\partial x_1}(\bm{x};\pi)
\end{pmatrix},\,
\begin{pmatrix}
0 \\ \frac{\partial}{\partial x_1}(\partial_{\kappa_1}v_2)(\bm{x};\pi)
\end{pmatrix}
\big\}.
\end{equation}
We next prove
\begin{equation}
\label{eq_app_D_7}
\ker\mathbb{T}^0(\lambda_*)\subset\text{span}
\big\{
\begin{pmatrix}
v_2(\bm{x};\pi) \\ 0
\end{pmatrix},\,
\begin{pmatrix}
\partial_{\kappa_1}v_1(\bm{x};\pi) \\ 0
\end{pmatrix},\,
\begin{pmatrix}
0 \\ \frac{\partial v_1}{\partial x_1}(\bm{x};\pi)
\end{pmatrix},\,
\begin{pmatrix}
0 \\ \frac{\partial}{\partial x_1}(\partial_{\kappa_1}v_2)(\bm{x};\pi)
\end{pmatrix}
\big\}.
\end{equation}
By the decomposition \eqref{eq_interface_decomposition}, in order to prove \eqref{eq_app_D_7}, it's sufficient to show that $\begin{pmatrix}
    \phi \\ \varphi
\end{pmatrix}=0$ for any $\begin{pmatrix}
    \phi \\ \varphi
\end{pmatrix}\in X$ which satisfies $\mathbb{T}^0(\lambda_*)\begin{pmatrix}
    \phi \\ \varphi
\end{pmatrix}=0$. Suppose the contrary that $\begin{pmatrix}
    \phi \\ \varphi
\end{pmatrix}\neq 0$. We define the following function in $\Omega$
\begin{equation*}
u(\bm{x};\lambda_*):=
\left\{
\begin{aligned}
&-\mathcal{D}(\lambda_*;G_0)[\phi](\bm{x})+\mathcal{S}(\lambda_*;G_0)[\varphi](\bm{x}),\quad x_1>0, \\
&\mathcal{D}(\lambda_*;G_0)[\phi](\bm{x})-\mathcal{S}(\lambda_*;G_0)[\varphi](\bm{x}),\quad x_1<0.
\end{aligned}
\right.
\end{equation*}
Since $\begin{pmatrix}
    \phi \\ \varphi
\end{pmatrix}\in X$, the decomposition \eqref{eq_G0_decay_1} and \eqref{eq_G0_decay_2} indicate that $u(\bm{x};\lambda_*)$ exponentially decays as $|x_1|\to\infty$. On the other hand, the jump formula \eqref{eq_G0_jump_1}-\eqref{eq_G0_jump_4} and the assumption $\mathbb{T}^0(\lambda_*)\begin{pmatrix}
    \phi \\ \varphi
\end{pmatrix}=0$ show that for $\bm{x}\in\Gamma$,
\begin{equation*}
\begin{pmatrix}
\lim_{t\to 0_+}u(\bm{x};\lambda_*) \\ \lim_{t\to 0_+}\frac{\partial u}{\partial x_1}(\bm{x};\lambda_*)
\end{pmatrix}
=\begin{pmatrix}
    \frac{1}{2}\phi \\ \frac{1}{2}\varphi
\end{pmatrix}
=\begin{pmatrix}
\lim_{t\to 0_-}u(\bm{x};\lambda_*) \\ \lim_{t\to 0_-}\frac{\partial u}{\partial x_1}(\bm{x};\lambda_*)
\end{pmatrix}
\neq\bm{0}.
\end{equation*}
Finally, \eqref{eq_G0_fund} indicates that
\begin{equation*}
(\mathcal{L}^A-\lambda_*)u=0\quad \text{in}\,\,\Omega .
\end{equation*}
Hence, $u(\bm{x};\lambda_*)$ is a localized mode for the periodic operator $\mathcal{L}^A$ in $\Omega$, which contradicts to the absolute continuity of the spectrum $\mathcal{L}^A$ \cite{Sobolev02absolute}. This concludes the proof of \eqref{eq_app_D_7}.
\end{proof}

\subsection{Proof of Theorem \ref{thm_interface_mode}}
{\color{blue}Step 1: Existence of interface value.} We first show the existence of interface value by solving \eqref{eq_matrix_equation_2}. First, using \eqref{eq_matrix_con_1}, the characteristic values of the operator $\lim_{\delta\to 0}\Big(\mathbb{M}^{\delta}(\lambda_*+\delta \cdot h)+\mathbb{M}^{-\delta}(\lambda_*+\delta \cdot h)\Big)$ 
in $\mathcal{J}$ can be obtained by solving $\mathbb{E}(h){\Phi}^{(1)}=0$ or $\mathbb{F}(h){\Phi}^{(2)}=0$. This yields the following two characteristic values
\begin{equation*}
h_1=\frac{1}{\sqrt{2}}|t_*|,\quad
h_2=-\frac{1}{\sqrt{2}}|t_*|.
\end{equation*}
The geometric multiplicity of $h_1$ and $h_2$ both equal to two. The eigenvectors associated with $h_1$ are
\begin{equation} \label{eq_limit_root_function_1}
\begin{pmatrix}
{\Psi}_1 \\ {\Phi}_1^{(1)} \\ {\Phi}_1^{(2)}
\end{pmatrix}
=
\begin{pmatrix}
0 \\ (\partial_{\kappa_1}v_1(\bm{x};\pi),0)^T \\ 0
\end{pmatrix}
\quad\text{or}\quad
\begin{pmatrix}
0 \\ (v_2(\bm{x};\pi),0)^{T} \\ 0
\end{pmatrix}
\end{equation}
The eigenvectors associated with $h_2$ are
\begin{equation} \label{eq_limit_root_function_2}
\begin{pmatrix}
{\Psi}_2 \\ {\Phi}_1^{(2)} \\ {\Phi}_2^{(2)}
\end{pmatrix}
=\begin{pmatrix}
0 \\ 0 \\ (0,\frac{\partial}{\partial x_1}(\partial_{\kappa_1}v_2)(\bm{x};\pi))^T
\end{pmatrix}
\quad\text{or}\quad
\begin{pmatrix}
0 \\ 0 \\ (0,\frac{\partial v_1}{\partial x_1}(\bm{x};\pi))^T
\end{pmatrix}.
\end{equation}
Next, observe the following: 

(i) $\mathbb{M}^{\delta}(\lambda_*+\delta \cdot h)+\mathbb{M}^{-\delta}(\lambda_*+\delta \cdot h)$ is Fredholm with zero index for any $h\in\overline{\mathcal{J}}$. Indeed, since $\mathbb{T}^0(\lambda_*)$ is Fredholm, the limiting operator of $\mathbb{M}^{\delta}(\lambda_*+\delta \cdot h)+\mathbb{M}^{-\delta}(\lambda_*+\delta \cdot h)$ is Fredholm with zero index because it differs from $2\mathbb{T}^0(\lambda_*)$ by finite-rank operators. Consequently, $\mathbb{M}^{\delta}(\lambda_*+\delta \cdot h)+\mathbb{M}^{-\delta}(\lambda_*+\delta \cdot h)$ is Fredholm by \eqref{eq_matrix_con_1};

(ii) $\mathbb{M}^{\pm\delta}(\lambda_*+\delta \cdot h)$ are analytic for $h\in\mathcal{J}$ because the Green function $G^{\delta}(\bm{x},\bm{y};\lambda)$ is analytici when $\lambda$ lies in the band gap; 

(iii) $\lim_{\delta\to 0}\Big(\mathbb{M}^{\delta}(\lambda_*+\delta \cdot h)+\mathbb{M}^{-\delta}(\lambda_*+\delta \cdot h)\Big)$ is invertible for $h\in \overline{\mathcal{J}}\backslash \{h_1,h_2\}$. This follows from the facts that $\lim_{\delta\to 0}\Big(\mathbb{M}^{\delta}(\lambda_*+\delta \cdot h)+\mathbb{M}^{-\delta}(\lambda_*+\delta \cdot h)\Big)$ is Fredholm and is injective for $h\neq h_1$ and $h\neq h_2$ as proved previously.

Hence, the generalized Rouché theorem (cf. Chapter 1 of \cite{ammari2018mathematical}) concludes that $\mathbb{M}^{\delta}(\lambda_*+\delta \cdot h)+\mathbb{M}^{-\delta}(\lambda_*+\delta \cdot h)$ has four characteristic values $h_k(\delta)$, $1\leq k\leq 4$, counting the geometric multiplicity. The first two bifurcate from $\frac{1}{\sqrt{2}}|t_*|$, and the last two bifurcate from $-\frac{1}{\sqrt{2}}|t_*|$:
\begin{equation} \label{eq_sec5_1}
h_k(\delta)=\frac{1}{\sqrt{2}}|t_*|+o(1)\quad k=1,2; \quad
h_k(\delta)=-\frac{1}{\sqrt{2}}|t_*|+o(1)\quad k=3,4. 
\end{equation}
This completes the proof of Theorem \ref{thm_interface_mode}.

{\color{blue}Step 2: Number of interface eigenvalue.} As shown in \eqref{eq_sec5_1}, there are between two and four interface interface eigenvalues when $\delta$ is small. The additional one or two eigenvalues are consequences of the geometry multiplicity observed in \eqref{eq_limit_root_function_1} and \eqref{eq_limit_root_function_2}. This is due to the multiple choices of layer potential expressions of a single interface mode, as discussed in Section 9.1, and necessitates the additional condition imposed by equation \eqref{eq_existence_criterion_3_scaled}. Now we show how \eqref{eq_existence_criterion_3_scaled} eliminate the redundant multiplicity of the interface eigenvalue bifurcating from $\frac{1}{\sqrt{2}}|t_*|$. The discussion is similar for $-\frac{1}{\sqrt{2}}|t_*|$ and is therefore omitted.

Combining the two equations in \eqref{eq_existence_criterion_3_scaled} gives
\begin{equation} \label{eq_number_inteface_eigenvalue_1}
\big(1
-(\mathbb{M}^{\delta}(\lambda_*+\delta h)-\mathbb{M}^{-\delta}(\lambda_*+\delta h))
\big)
\begin{pmatrix}
{\Psi} \\ {\Phi}^{(1)} \\ {\Phi}^{(2)}
\end{pmatrix}=0.
\end{equation}
By Proposition \ref{prop_matrix_convergence}, the leading order equation of \eqref{eq_number_inteface_eigenvalue_1} is
\begin{equation*}
\begin{pmatrix}
1 & 0 & 0 \\
0 & -2\mathbb{F}^{\times}(h) & 1 \\
0 & 1 & -2\mathbb{E}^{\times}(h)
\end{pmatrix}
\begin{pmatrix}
{\Psi} \\ {\Phi}^{(1)} \\ {\Phi}^{(2)}
\end{pmatrix}=0,
\end{equation*}
which has a \textit{unique} solution $({\Psi}^{mode},{\Phi}^{(1),mode},{\Phi}^{(2),mode})$ at $h=\frac{1}{\sqrt{2}}|t_*|$ with
\begin{equation*}
{\Psi}^{mode}=0,\quad
{\Phi}^{(1),mode}=
\begin{pmatrix}
2^{\frac{1}{4}}\text{sgn}(t_*)\sqrt{\frac{|t_*|}{\gamma_*}}i\partial_{\kappa_1}v_1(\bm{x};\pi) \\ -2^{\frac{1}{4}}\text{sgn}(t_*)\sqrt{\frac{|t_*|}{\gamma_*}}i\frac{\partial}{\partial x_1}(\partial_{\kappa_1}v_2)(\bm{x};\pi)
\end{pmatrix},\quad
{\Phi}^{(2),mode}=\begin{pmatrix}
v_2(\bm{x};\pi) \\ \frac{\partial v_1}{\partial x_1}(\bm{x};\pi)
\end{pmatrix}.
\end{equation*}
This confirms there is a unique interface mode with its characteristic value near $\frac{1}{\sqrt{2}}|t_*|$. In fact, the complete solution (including the remainder) of \eqref{eq_number_inteface_eigenvalue_1} for small $\delta$ can be constructed by the Lyapunov-Schmidt argument, as detailed in Proposition 7.11 of \cite{li2024interface}.  We omit this construction here.

\appendix
\setcounter{secnumdepth}{0}
\section{Appendix A}
\setcounter{equation}{0}
\setcounter{subsection}{0}
\setcounter{theorem}{0}
\renewcommand{\theequation}{A.\arabic{equation}}
\renewcommand{\thesubsection}{A.\arabic{subsection}}
\renewcommand{\thetheorem}{A.\arabic{theorem}}
\begin{lemma}
\label{lem_boundary_convergence_to_domain}
Let $K_n(\bm{x},\bm{y})$ ($n\geq 1$) and $K(x,y)$ be integral kernels such that $K_n(\bm{x},\bm{y})=\overline{K_n(\bm{y},\bm{x})}$ and the layer potential operators $\mathcal{S}(K_n),\mathcal{D}(K_n)$ and $\mathcal{S}(K),\mathcal{D}(K)$ (defined similarly as in \eqref{eq_layer_potential_operators}) satisfy the jump conditions \eqref{eq_layer_potential_operators_jump_1}-\eqref{eq_layer_potential_operators_jump_4}. Let $U\subset \Omega$ be a neighborhood of $\Gamma$. Suppose that the following holds uniformly for all $f\in (H^1(U))^*$ with $\|f\|_{(H^1(U))^*}=1$ and $\text{supp} f\subset\subset U$
\begin{equation}
\label{eq_app_C_4}
\int_{U}K_n(\bm{x},\bm{y})f(\bm{y})d\bm{y}
\overset{H^1(U)}{\longrightarrow}
\int_{U}K(\bm{x},\bm{y})f(\bm{y})d\bm{y}.
\end{equation}
Then the integral operators $\mathcal{S}(K_n)\in\mathcal{B}(\tilde{H}^{-\frac{1}{2}}(\Gamma),H^{\frac{1}{2}}(\Gamma))$, $\mathcal{N}(K_n)\in\mathcal{B}(H^{\frac{1}{2}}(\Gamma),\tilde{H}^{-\frac{1}{2}}(\Gamma))$, $\mathcal{K}(K_n)\in\mathcal{B}(H^{\frac{1}{2}}(\Gamma),H^{\frac{1}{2}}(\Gamma))$ and $\mathcal{K}^*(K_n)\in\mathcal{B}(\tilde{H}^{-\frac{1}{2}}(\Gamma),\tilde{H}^{-\frac{1}{2}}(\Gamma))$ converge to $\mathcal{S}(K),\mathcal{N}(K),\mathcal{K}(K)$ and $\mathcal{K}^*(K)$ in operator norm, respectively.
\end{lemma}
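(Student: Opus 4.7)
The strategy is to factor each of the four boundary integral operators through the volume integral operator $V_n : (H^1(U))^* \to H^1(U)$ defined by $V_n f := \int_U K_n(\cdot, \bm{y}) f(\bm{y}) d\bm{y}$, together with trace and extension maps that are independent of $n$. The hypothesis provides operator-norm convergence $V_n \to V$, which then transfers to each boundary operator via the factorization. For the single layer, the dual of the trace $\tau : H^1(U) \to H^{1/2}(\Gamma)$ gives a bounded embedding $\iota := \tau^* : \tilde H^{-1/2}(\Gamma) \hookrightarrow (H^1(U))^*$ represented concretely by $\varphi \mapsto \varphi \otimes \delta_\Gamma$, whose support sits in $\overline\Gamma \subset\subset U$. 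One checks directly that $\mathcal{S}(K_n)[\varphi] = \tau(V_n(\iota\varphi))$, so the operator-norm convergence of $\mathcal{S}(K_n)$ in $\mathcal{B}(\tilde H^{-1/2}(\Gamma), H^{1/2}(\Gamma))$ is immediate.

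For $\mathcal{K}^*(K_n)$, $\mathcal{K}(K_n)$ and $\mathcal{N}(K_n)$, I would use the jump formulas \eqref{eq_layer_potential_operators_jump_2}--\eqref{eq_layer_potential_operators_jump_4} to express each as a one-sided trace of a first-order normal derivative of either $V_n(\iota\varphi)$ or of a double-layer potential. The double layer is handled by choosing an $H^1$-extension $\Phi$ of $\phi$ supported in $\overline{U^+}$ and invoking Green's identity to rewrite, for $\bm{x} \in U^- := U \cap \Omega^{\text{left}}$,
\[
\mathcal{D}(K_n)[\phi](\bm{x}) = V_n(L_y \Phi)(\bm{x}) + \mathcal{S}(K_n)[\partial_{y_1}\Phi|_\Gamma](\bm{x}),
\]
with a symmetric identity on the other side; this exhibits $\mathcal{D}(K_n)[\phi]$ as a bounded combination of $V_n$ and $\mathcal{S}(K_n)$ with an extension map $\phi \mapsto \Phi$ independent of $n$. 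Convergence of the one-sided traces of the first-order normal derivatives is then obtained by upgrading $H^1$-convergence to $H^{3/2}$-convergence on each half $U^\pm$ via interior-boundary elliptic regularity, noting that $V_n f$ satisfies the elliptic PDE associated with $K_n$ away from $\operatorname{supp} f$.

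The hardest step will be this elliptic-regularity upgrade, needed to pass from uniform $H^1(U)$-convergence to convergence of one-sided first-order normal traces in $\tilde H^{-1/2}(\Gamma)$, since in general such traces are \emph{not} continuous under mere $H^1$-convergence. To close the argument one must exploit that the $K_n$ are Green functions of second-order elliptic operators with coefficients bounded and coercive uniformly in $n$, so that the elliptic-regularity constants on each $U^\pm$ are independent of $n$; this uniform structural hypothesis is implicit in the fact that the jump constants $\pm\tfrac12$ in \eqref{eq_layer_potential_operators_jump_3}--\eqref{eq_layer_potential_operators_jump_4} are the same for all $n$ and for the limit $K$.
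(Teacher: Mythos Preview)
Your factorization $\mathcal{S}(K_n)=\tau\circ V_n\circ\tau^*$ matches the paper exactly, and for $\mathcal{K}^*(K_n)$ both you and the paper reduce via the jump relation \eqref{eq_layer_potential_operators_jump_3} to the one-sided normal trace of $V_n(\tau^*\varphi)$; your concern that this step needs extra structure beyond $H^1(U)$-convergence is valid---the paper simply asserts it, relying implicitly on the fact that in the application $U$ is chosen away from the inclusions so the operator equals $-\Delta$ on $U$, whence the conormal trace on $\Gamma$ from each side is controlled by $\|w\|_{H^1(U^\pm)}+\|\Delta w\|_{L^2(U^\pm)}$ via Green's formula (lighter than your $H^{3/2}$-upgrade, though that route works too). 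The real point of departure is $\mathcal{K}(K_n)$: you plan to rewrite the double layer via Green's identity applied to an $H^1$-extension of $\phi$, which is correct but heavier machinery, whereas the paper exploits the Hermitian-symmetry hypothesis $K_n(\bm{x},\bm{y})=\overline{K_n(\bm{y},\bm{x})}$---this is precisely why it appears in the statement---to obtain $\partial_{y_1}K_n(\bm{x},\bm{y})=\overline{\partial_{x_1}K_n(\bm{y},\bm{x})}$, so that the pairing $\langle\varphi,\mathcal{K}(K_n)\phi\rangle$ rewrites in terms of $\mathcal{K}^*(K_n)$ acting on conjugated data and operator-norm convergence of $\mathcal{K}(K_n)$ follows by duality from that of $\mathcal{K}^*(K_n)$ already established, bypassing the double layer entirely. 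For $\mathcal{N}(K_n)$ the paper only says the argument is similar, so your plan is actually more explicit on that operator.
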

\begin{proof}
Here we prove $\mathcal{S}(K_n)\to \mathcal{S}(K)$, $\mathcal{K}^*(K_n)\to \mathcal{K}^*(K)$ and $\mathcal{K}(K_n)\to \mathcal{K}(K)$. One can prove $\mathcal{N}(K_n)\to \mathcal{N}(K)$ by a similar argument.

We first prove $\mathcal{S}(K_n)\to \mathcal{S}(K)$. For each $\varphi \in \tilde{H}^{-\frac{1}{2}}(\Gamma)$, \eqref{eq_app_C_4} implies that
\begin{equation} \label{eq_app_C_5}
Tr\int_{U}K_n(\bm{x},\bm{y})(Tr^*\varphi)(\bm{y})d\bm{y}
\overset{H^{\frac{1}{2}}(\Gamma)}{\longrightarrow}
Tr\int_{U}K(\bm{x},\bm{y})(Tr^*\varphi)(\bm{y})d\bm{y},
\end{equation}
where $Tr:H^{1}(U)\to H^{\frac{1}{2}}(\Gamma)$ is the trace operator, and $Tr^*$ denotes its adjoint. Note that $\mathcal{S}(K_n)[\varphi](\bm{x})=Tr\int_{U}K_n(\bm{x},\bm{y})(Tr^*\varphi)(\bm{y})d\bm{y}$ and $\mathcal{S}(K)[\varphi](\bm{x})=Tr\int_{U}K(\bm{x},\bm{y})(Tr^*\varphi)(\bm{y})d\bm{y}$. Thus it's clear that $\mathcal{S}(K_n)\to \mathcal{S}(K)$ by \eqref{eq_app_C_5}.

We next prove $\mathcal{K}^*(K_n)\to \mathcal{K}^*(K)$. For each $\varphi \in \tilde{H}^{-\frac{1}{2}}(\Gamma)$, \eqref{eq_app_C_4} implies that
\begin{equation*}
\frac{\partial}{\partial x_1}\int_{U}K_n(\bm{x},\bm{y})(Tr^*\varphi)(\bm{y})d\bm{y}
\overset{\tilde{H}^{-\frac{1}{2}}(\Gamma)}{\longrightarrow}
\frac{\partial}{\partial x_1}\int_{U}K(\bm{x},\bm{y})(Tr^*\varphi)(\bm{y})d\bm{y},\quad \bm{x}\in\Gamma .
\end{equation*}
By the jump condition \eqref{eq_layer_potential_operators_jump_3}, it is equivalent to
\begin{equation} \label{eq_app_C_6}
p.v.\int_{\Gamma}\frac{\partial}{\partial x_1}K_n(\bm{x},\bm{y})(Tr^*\varphi)(\bm{y})d\bm{y}
\overset{\tilde{H}^{-\frac{1}{2}}(\Gamma)}{\longrightarrow}
p.v.\int_{\Gamma}\frac{\partial}{\partial x_1}K(\bm{x},\bm{y})(Tr^*\varphi)(\bm{y})d\bm{y},
\end{equation}
which gives $\mathcal{K}^*(K_n)\to \mathcal{K}^*(K)$.

We finally prove $\mathcal{K}(K_n)\to \mathcal{K}(K)$. The condition $K_n(\bm{x},\bm{y})=\overline{K_n(\bm{y},\bm{x})}$ and \eqref{eq_app_C_6} implies that the following convergence holds uniformly for all $\phi\in H^{\frac{1}{2}}(\Gamma)$ and $\varphi\in \tilde{H}^{-\frac{1}{2}}(\Gamma)$ with unit norm
\footnotesize
\begin{equation*}
\begin{aligned}
\langle \varphi(\bm{y}),p.v.\int_{\Gamma}\frac{\partial}{\partial x_1}K_n(\bm{y},\bm{x})\phi(\bm{x})d\bm{x}  \rangle
=\overline{\langle \overline{\phi(\bm{x})},p.v.\int_{\Gamma}\frac{\partial}{\partial x_1}K_n(\bm{x},\bm{y})\overline{\varphi(\bm{y})}d\bm{y}  \rangle} 
&\longrightarrow \overline{\langle \overline{\phi(\bm{x})},p.v.\int_{\Gamma}\frac{\partial}{\partial x_1}K(\bm{x},\bm{y})\overline{\varphi(\bm{y})}d\bm{y}  \rangle} \\
&=\langle \varphi(\bm{y}),p.v.\int_{\Gamma}\frac{\partial}{\partial x_1}K(\bm{y},\bm{x})\phi(\bm{x})d\bm{x}  \rangle,
\end{aligned}
\end{equation*}
\normalsize
where $\langle\cdot,\cdot\rangle$ denotes the $\tilde{H}^{-\frac{1}{2}}-H^{\frac{1}{2}}$ pairing. Thus $\mathcal{K}(K_n)\to \mathcal{K}(K)$ is proved.
\end{proof}

\section{Appendix B}
\setcounter{equation}{0}
\setcounter{subsection}{0}
\setcounter{theorem}{0}
\renewcommand{\theequation}{A.\arabic{equation}}
\renewcommand{\thesubsection}{A.\arabic{subsection}}
\renewcommand{\thetheorem}{A.\arabic{theorem}}
Proof of Lemma \ref{lem_app_C_scalar_integrals_property}: The properties of the functions $f_*(\kappa_1, \delta)$ and $r(p, \delta)$
follow directly from their definitions.

We now prove \eqref{eq_app_C_scalar_integrals_property_1}. Note that each integral in \eqref{eq_center_scalar_integral} depends crucially on the integrand's behavior at infinity. Under a change of variable, the integrand is asymptotic of the order $\mathcal{O}(p^{k-2(\ell+1)})$ as $p\to\infty$ and the integral interval becomes $(-\delta^{1/9-1/2},\delta^{1/9-1/2})$, which yield the estimate by an elementary calculation. Specifically, 
by a change of variable $\kappa_1-\pi=\sqrt{\frac{2\delta}{\gamma_*}}p$, 
\footnotesize
\begin{equation*}
\begin{aligned}
&\int_{\pi-\delta^{\frac{1}{9}}}^{\pi+\delta^{\frac{1}{9}}}
\frac{(\kappa_1-\pi)^{k}f^{\ell}_{*}(\kappa_1;\delta)}{\delta\cdot h\pm\sqrt{\frac{1}{4}\gamma_*^2(\kappa_1-\pi)^4+t_*^2\delta^2}}\frac{d\kappa_1}{1+f^2_{*}(\kappa_1;\delta)} \\
&=(\frac{2}{\gamma_*})^{\frac{k+1}{2}}\delta^{\frac{k-1}{2}}
\int_{-\delta^{1/9-1/2}}^{\delta^{1/9-1/2}}
\frac{p^k}{h\pm\sqrt{p^4+t_*^2}}
\Big(\frac{t_*}{p^2+\sqrt{p^4+t_*^2}}\Big)^{\ell}
\frac{(p^2+\sqrt{p^4+t_*^2})^2}{(p^2+\sqrt{p^4+t_*^2})^2+t_*^2}dp.
\end{aligned}
\end{equation*}
\normalsize
The integrand is uniform bounded for $|p|\leq 1$ and of the order $O(p^{k-2(\ell+1)})$ as $p\to\infty$. Hence the integral is $\mathcal{O}(1)$ when $k\leq 2\ell$, or dominated by the by the polynomial integral $\int_{-\delta^{1/9-1/2}}^{\delta^{1/9-1/2}}p^{k-2(\ell+1)}$ when $k>2\ell$. Combining the $\delta^{\frac{k-1}{2}}$ factor, one obtains \eqref{eq_app_C_scalar_integrals_property_1}.

We next prove  \eqref{eq_app_C_scalar_integrals_property_2}. Direct calculation yields
\footnotesize
\begin{equation*}
\begin{aligned}
LHS=&\frac{1}{2\pi}\int_{\pi-\delta^{\frac{1}{9}}}^{\pi+\delta^{\frac{1}{9}}}
\frac{\delta\cdot h}{\delta^2\cdot h^2-(\frac{1}{4}\gamma_*^2(\kappa_1-\pi)^4+t_*^2\delta^2)}d\kappa_1  
\mp\frac{1}{2\pi}\int_{\pi-\delta^{\frac{1}{6}}}^{\pi+\delta^{\frac{1}{6}}}\frac{\sqrt{\frac{1}{4}\gamma_*^2(\kappa_1-\pi)^4+t_*^2\delta^2}}{\delta^2\cdot h^2-(\frac{1}{4}\gamma_*^2(\kappa_1-\pi)^4+t_*^2\delta^2)}\frac{1-f_{\delta}(\kappa_1)^2}{1+f_{\delta}(\kappa_1)^2}d\kappa_1.
\end{aligned}
\end{equation*}
\normalsize
Note that
\begin{equation*}
\frac{1-f_{\delta}(\kappa_1)^2}{1+f_{\delta}(\kappa_1)^2}
=\frac{\gamma_*(\kappa_1-\pi)^2}{2\sqrt{\frac{1}{4}\gamma_*^2(\kappa_1-\pi)^4+t_*^2\delta^2}}.
\end{equation*}

Thus
\footnotesize
\begin{equation*}
\begin{aligned}
LHS=&\frac{1}{2\pi}\int_{\pi-\delta^{\frac{1}{9}}}^{\pi+\delta^{\frac{1}{9}}}
\frac{\delta\cdot h}{\delta^2\cdot h^2-(\frac{1}{4}\gamma_*^2(\kappa_1-\pi)^4+t_*^2\delta^2)}d\kappa_1  
\mp\frac{1}{2\pi}\int_{\pi-\delta^{\frac{1}{9}}}^{\pi+\delta^{\frac{1}{9}}}\frac{\frac{1}{2}\gamma_*(\kappa_1-\pi)^2}{\delta^2\cdot h^2-(\frac{1}{4}\gamma_*^2(\kappa_1-\pi)^4+t_*^2\delta^2)}d\kappa_1 .
\end{aligned}
\end{equation*}
\normalsize
By direct calculation again, we have
\begin{equation*}
\begin{aligned}
&\frac{\delta^{\frac{1}{2}}}{2\pi}\int_{\pi-\delta^{\frac{1}{9}}}^{\pi+\delta^{\frac{1}{9}}}
\frac{\delta\cdot h}{\delta^2\cdot h^2-(\frac{1}{4}\gamma_*^2(\kappa_1-\pi)^4+t_*^2\delta^2)}d\kappa_1
\overset{\delta\to 0}{\longrightarrow}
-\frac{1}{2\gamma_*}\frac{\frac{h}{\gamma_*}}{((\frac{t_*}{\gamma_*})^2-(\frac{h}{\gamma_*})^2)^{\frac{3}{4}}}, \\
&\frac{\delta^{\frac{1}{2}}}{2\pi}\int_{\pi-\delta^{\frac{1}{9}}}^{\pi+\delta^{\frac{1}{9}}}
\frac{\frac{1}{2}\gamma_*(\kappa_1-\pi)^2}{\delta^2\cdot h^2-(\frac{1}{4}\gamma_*^2(\kappa_1-\pi)^4+t_*^2\delta^2)}d\kappa_1
\overset{\delta\to 0}{\longrightarrow}
-\frac{1}{2\gamma_*}\frac{1}{((\frac{t_*}{\gamma_*})^2-(\frac{h}{\gamma_*})^2)^{\frac{1}{4}}},
\end{aligned}
\end{equation*}
whence \eqref{eq_app_C_scalar_integrals_property_2} follows. The other estimates follow in a similar manner.

\footnotesize
\bibliographystyle{plain}
\bibliography{ref}

\begin{thebibliography}{10}

\bibitem{ammari2020topological}
Habib Ammari, Bryn Davies, Erik~Orvehed Hiltunen, and Sanghyeon Yu.
\newblock Topologically protected edge modes in one-dimensional chains of subwavelength resonators.
\newblock {\em Journal de Mathématiques Pures et Appliquées}, 144:17--49, 2020.

\bibitem{Ammari-F-H-L-Y-2020}
Habib Ammari, Brian Fitzpatrick, Erik~Orvehed Hiltunen, Hyundae Lee, and Sanghyeon Yu.
\newblock Honeycomb-lattice minnaert bubbles.
\newblock {\em SIAM Journal on Mathematical Analysis}, 52(6):5441--5466, 2020.

\bibitem{ammari2018mathematical}
Habib Ammari, Brian Fitzpatrick, Hyeonbae Kang, Matias Ruiz, Sanghyeon Yu, and Hai Zhang.
\newblock {\em Mathematical and computational methods in photonics and phononics}, volume 235.
\newblock American Mathematical Society, 2018.

\bibitem{Ammari2020high_freq}
Habib Ammari, Erik~Orvehed Hiltunen, and Sanghyeon Yu.
\newblock A high-frequency homogenization approach near the dirac points in bubbly honeycomb crystals.
\newblock {\em Archive for Rational Mechanics and Analysis}, 238(3):1559--1583, Dec 2020.

\bibitem{Avila2013topological_invariants}
Julio~Cesar Avila, Hermann Schulz-Baldes, and Carlos Villegas-Blas.
\newblock Topological invariants of edge states for periodic two-dimensional models.
\newblock {\em Mathematical Physics, Analysis and Geometry}, 16(2):137--170, Jun 2013.

\bibitem{bal22topo_invariant}
Guillaume Bal.
\newblock Topological invariants for interface modes.
\newblock {\em Communications in Partial Differential Equations}, 47(8):1636--1679, 2022.

\bibitem{bal2023topological_charge}
Guillaume Bal.
\newblock Topological charge conservation for continuous insulators.
\newblock {\em Journal of Mathematical Physics}, 64(3), 2023.

\bibitem{berkolaiko2018symmetry}
Gregory Berkolaiko and Andrew Comech.
\newblock Symmetry and dirac points in graphene spectrum.
\newblock {\em Journal of Spectral Theory}, 8(3):1099--1147, 2018.

\bibitem{bourne2017k-theory}
Chris Bourne, Johannes Kellendonk, and Adam Rennie.
\newblock The k-theoretic bulk--edge correspondence for topological insulators.
\newblock In {\em Annales Henri Poincar{\'e}}, volume~18, pages 1833--1866. Springer, 2017.

\bibitem{Bourne2016ChernNL}
Chris Bourne and Adam~Graham Rennie.
\newblock Chern numbers, localisation and the bulk-edge correspondence for continuous models of topological phases.
\newblock {\em Mathematical Physics, Analysis and Geometry}, 21:1--62, 2016.

\bibitem{braverman2019spectral}
Maxim Braverman.
\newblock Spectral flows of toeplitz operators and bulk-edge correspondence.
\newblock {\em Letters in Mathematical Physics}, 109:2271--2289, 2019.

\bibitem{chaban2024instability}
Jonah Chaban and Michael~I. Weinstein.
\newblock Instability of quadratic band degeneracies and the emergence of dirac points, 2024.

\bibitem{colton2013integral}
David Colton and Rainer Kress.
\newblock {\em Integral equation methods in scattering theory}.
\newblock SIAM, 2013.

\bibitem{Combes2005edge_impurity}
Jean-Michel Combes and Fran{\c{c}}ois Germinet.
\newblock Edge and impurity effects on quantization of hall currents.
\newblock {\em Communications in Mathematical Physics}, 256(1):159--180, May 2005.

\bibitem{Drouot2021microlocal}
Alexis Drouot.
\newblock Microlocal analysis of the bulk-edge correspondence.
\newblock {\em Communications in Mathematical Physics}, 383(3):2069--2112, May 2021.

\bibitem{drouot2020edge}
Alexis Drouot and Micheal~I Weinstein.
\newblock Edge states and the valley hall effect.
\newblock {\em Advances in Mathematics}, 368:107142, 2020.

\bibitem{Elbau2002EqualityOB}
Peter Elbau and Gian~Michele Graf.
\newblock Equality of bulk and edge hall conductance revisited.
\newblock {\em Communications in Mathematical Physics}, 229:415--432, 2002.

\bibitem{Elgart2005equality}
A.~Elgart, G.~M. Graf, and J.~H. Schenker.
\newblock Equality of the bulk and edge hall conductances in a mobility gap.
\newblock {\em Communications in Mathematical Physics}, 259(1):185--221, Oct 2005.

\bibitem{Fang2012}
Kejie Fang, Zongfu Yu, and Shanhui Fan.
\newblock Realizing effective magnetic field for photons by controlling the phase of dynamic modulation.
\newblock {\em Nature Photonics}, 6:782 -- 787, 2012.

\bibitem{fefferman2017topologically}
Charles Fefferman, J~Lee-Thorp, and Michael~I Weinstein.
\newblock {\em Topologically protected states in one-dimensional systems}, volume 247.
\newblock American Mathematical Society, 2017.

\bibitem{fefferman2016honeycomb_edge}
Charles~L Fefferman, James~P Lee-Thorp, and Michael~I Weinstein.
\newblock Edge states in honeycomb structures.
\newblock {\em Annals of PDE}, 2(2):1--80, 2016.

\bibitem{fefferman12}
Charles.~L. Fefferman and Michael.~I. Weinstein.
\newblock Honeycomb lattice potentials and dirac points.
\newblock {\em Journal of the American Mathematical Society}, 25(4):1169--1220, 2012.

\bibitem{joly2016solutions}
Sonia Fliss and Patrick Joly.
\newblock Solutions of the time-harmonic wave equation in periodic waveguides: asymptotic behaviour and radiation condition.
\newblock {\em Archive for Rational Mechanics and Analysis}, 219(1):349--386, 2016.

\bibitem{Graf2012BulkEdgeCF}
Gian~Michele Graf and Marcello Porta.
\newblock Bulk-edge correspondence for two-dimensional topological insulators.
\newblock {\em Communications in Mathematical Physics}, 324:851--895, 2012.

\bibitem{Haldane08IQHE_photonic}
F.~D.~M. Haldane and S.~Raghu.
\newblock Possible realization of directional optical waveguides in photonic crystals with broken time-reversal symmetry.
\newblock {\em Phys. Rev. Lett.}, 100:013904, Jan 2008.

\bibitem{Kane10insulators}
M.~Z. Hasan and C.~L. Kane.
\newblock Colloquium: Topological insulators.
\newblock {\em Rev. Mod. Phys.}, 82:3045--3067, Nov 2010.

\bibitem{hatsugai93BEC_2}
Yasuhiro Hatsugai.
\newblock Chern number and edge states in the integer quantum hall effect.
\newblock {\em Phys. Rev. Lett.}, 71:3697--3700, Nov 1993.

\bibitem{hatsugai93BEC_1}
Yasuhiro Hatsugai.
\newblock Edge states in the integer quantum hall effect and the riemann surface of the bloch function.
\newblock {\em Phys. Rev. B}, 48:11851--11862, Oct 1993.

\bibitem{Rebbi76soliton}
R.~Jackiw and C.~Rebbi.
\newblock Solitons with fermion number \textonehalf{}.
\newblock {\em Phys. Rev. D}, 13:3398--3409, Jun 1976.

\bibitem{kato2013perturbation}
Tosio Kato.
\newblock {\em Perturbation theory for linear operators}, volume 132.
\newblock Springer Science \& Business Media, 2013.

\bibitem{KELLENDONK02QHE}
J.~KELLENDONK, T.~RICHTER, and H.~SCHULZ-BALDES.
\newblock Edge current channels and chern numbers in the integer quantum hall effect.
\newblock {\em Reviews in Mathematical Physics}, 14(01):87--119, 2002.

\bibitem{kellendonk2004quantization}
Johannes Kellendonk and Hermann Schulz-Baldes.
\newblock Quantization of edge currents for continuous magnetic operators.
\newblock {\em Journal of Functional Analysis}, 209(2):388--413, 2004.

\bibitem{weinstein18quadratic}
R.~T. Keller, J.~L. Marzuola, B.~Osting, and M.~I. Weinstein.
\newblock Spectral band degeneracies of $\frac{\pi}{2}$-rotationally invariant periodic schrödinger operators.
\newblock {\em Multiscale Modeling \& Simulation}, 16(4):1684--1731, 2018.

\bibitem{kuchment17greenfunction}
Minh Kha, Peter Kuchment, and Andrew Raich.
\newblock Green's function asymptotics near the internal edges of spectra of periodic elliptic operators. {S}pectral gap interior.
\newblock {\em J. Spectr. Theory}, 7(4):1171--1233, 2017.

\bibitem{Khanikaev2017twod_topological_photonics}
Alexander~B. Khanikaev and Gennady Shvets.
\newblock Two-dimensional topological photonics.
\newblock {\em Nature Photonics}, 11(12):763--773, Dec 2017.

\bibitem{klitzing80IQHE}
K.~v. Klitzing, G.~Dorda, and M.~Pepper.
\newblock New method for high-accuracy determination of the fine-structure constant based on quantized hall resistance.
\newblock {\em Phys. Rev. Lett.}, 45:494--497, Aug 1980.

\bibitem{kubota2017controlled}
Yosuke Kubota.
\newblock Controlled topological phases and bulk-edge correspondence.
\newblock {\em Communications in Mathematical Physics}, 349(2):493--525, 2017.

\bibitem{kuchment2016overview}
Peter Kuchment.
\newblock An overview of periodic elliptic operators.
\newblock {\em Bulletin of the American Mathematical Society}, 53(3):343--414, 2016.

\bibitem{kuchment12greenfunction}
Peter Kuchment and Andrew Raich.
\newblock Green's function asymptotics near the internal edges of spectra of periodic elliptic operators. {S}pectral edge case.
\newblock {\em Math. Nachr.}, 285(14-15):1880--1894, 2012.

\bibitem{lee2019elliptic}
James~P Lee-Thorp, Michael~I Weinstein, and Yi~Zhu.
\newblock Elliptic operators with honeycomb symmetry: Dirac points, edge states and applications to photonic graphene.
\newblock {\em Archive for Rational Mechanics and Analysis}, 232(1):1--63, 2019.

\bibitem{li2024interface}
Wei Li, Junshan Lin, Jiayu Qiu, and Hai Zhang.
\newblock Interface modes in honeycomb topological photonic structures with broken reflection symmetry, 2024.

\bibitem{Li23}
Wei Li, Junshan Lin, and Hai Zhang.
\newblock Dirac points for the honeycomb lattice with impenetrable obstacles.
\newblock {\em SIAM Journal on Applied Mathematics}, 83(4):1546--1571, 2023.

\bibitem{lu2014topological}
Ling Lu, John~D Joannopoulos, and Marin Solja{\v{c}}i{\'c}.
\newblock Topological photonics.
\newblock {\em Nature photonics}, 8(11):821--829, 2014.

\bibitem{mclean2000strongly}
William Charles~Hector McLean.
\newblock {\em Strongly elliptic systems and boundary integral equations}.
\newblock Cambridge university press, 2000.

\bibitem{MURATA203greenfunction}
Minoru Murata and Tetsuo Tsuchida.
\newblock Asymptotics of green functions and martin boundaries for elliptic operators with periodic coefficients.
\newblock {\em Journal of Differential Equations}, 195(1):82--118, 2003.

\bibitem{Qi06BEC}
Xiao-Liang Qi, Yong-Shi Wu, and Shou-Cheng Zhang.
\newblock Topological quantization of the spin hall effect in two-dimensional paramagnetic semiconductors.
\newblock {\em Phys. Rev. B}, 74:085308, Aug 2006.

\bibitem{qiu2023mathematical}
Jiayu Qiu, Junshan Lin, Peng Xie, and Hai Zhang.
\newblock Mathematical theory for the interface mode in a waveguide bifurcated from a dirac point, 2023.

\bibitem{qiu2023bifurcationdiracpointphotonic}
Jiayu Qiu and Hai Zhang.
\newblock On the bifurcation of a dirac point in a photonic waveguide without band gap openning, 2023.

\bibitem{quinn2024approximations}
Solomon Quinn and Guillaume Bal.
\newblock Approximations of interface topological invariants.
\newblock {\em SIAM Journal on Mathematical Analysis}, 56(4):5521--5582, 2024.

\bibitem{Rechtsman2013}
Mikael~C. Rechtsman, Julia~M. Zeuner, Yonatan Plotnik, Yaakov Lumer, Daniel Podolsky, Felix Dreisow, Stefan Nolte, Mordechai Segev, and Alexander Szameit.
\newblock Photonic floquet topological insulators.
\newblock {\em Nature}, 496(7444):196--200, Apr 2013.

\bibitem{Sobolev02absolute}
Alexander~V. Sobolev and Jonathan Walthoe.
\newblock Absolute continuity in periodic waveguides.
\newblock {\em Proceedings of the London Mathematical Society}, 85(3):717--741, 2002.

\bibitem{taarabt2014equality}
Amal Taarabt.
\newblock Equality of bulk and edge hall conductances for continuous magnetic random schr\"odinger operators, 2014.

\bibitem{thiang23bec}
Guo~Chuan Thiang and Hai Zhang.
\newblock Bulk-interface correspondences for one-dimensional topological materials with inversion symmetry.
\newblock {\em Proceedings of the Royal Society A: Mathematical, Physical and Engineering Sciences}, 479(2270):20220675, 2023.

\bibitem{thouless82QHC}
D.~J. Thouless, M.~Kohmoto, M.~P. Nightingale, and M.~den Nijs.
\newblock Quantized hall conductance in a two-dimensional periodic potential.
\newblock {\em Phys. Rev. Lett.}, 49:405--408, Aug 1982.

\bibitem{vanderbilt2018berry}
David Vanderbilt.
\newblock {\em Berry phases in electronic structure theory: electric polarization, orbital magnetization and topological insulators}.
\newblock Cambridge University Press, 2018.

\bibitem{Klitzing86QHE}
Klaus von Klitzing.
\newblock The quantized hall effect.
\newblock {\em Rev. Mod. Phys.}, 58:519--531, Jul 1986.

\bibitem{wang09observation}
Zheng Wang, Yidong Chong, John Joannopoulos, and Marin Soljacić.
\newblock Observation of unidirectional backscattering-immune topological electromagnetic states.
\newblock {\em Nature}, 461:772--5, 10 2009.

\end{thebibliography}

\end{document}